\def\independenT#1#2{\mathrel{\rlap{$#1#2$}\mkern2mu{#1#2}}}
\newcommand\independent{\protect\mathpalette{\protect\independenT}{\perp}}
\DeclareMathOperator*{\argmin}{arg\,min}
\def\R{\mathbb{R}}
\newcommand{\One}[1]{\mathbbm{1}\left\{#1\right\}}
\newcommand{\eps}{\varepsilon}
\newcommand{\eqd}{\overset{d.}{=}}
\newcommand{\mc}{\mathcal}
\newcommand{\mb}{\mathbf}
\newcommand{\bs}{\boldsymbol}
\newcommand{\bmu}{\bs{\mu}}
\newcommand{\bOmega}{\bs{\Omega}}
\newcommand{\mX}{\mb{X}}
\newcommand{\mx}{\mb{x}}
\renewcommand{\P}{\mathbf{P}}
\newcommand{\E}{\mathbf{E}}
\newcommand{\Cor}{\operatorname{Cor}}
\newcommand{\tp}{^{\top}}
\newcommand{\indp}{\independent}
\newcommand{\nameVV}{$M^1 P_1$\xspace}
\newcommand{\nameDP}{Bonf\xspace}
\newcommand{\nameERG}{Erd\H{o}s–R\'{e}nyi-Gilbert}
\theoremstyle{thmstyleone}%
\newtheorem{condition}{Condition}
\numberwithin{equation}{section}
\newtheorem{theorem}{Theorem}
\newtheorem{lemma}{Lemma}
\newtheorem{remark}{Remark}
\newtheorem{corollary}{Corollary}
\newtheorem{proposition}{Proposition}
\newtheorem{definition}{Definition}
\numberwithin{equation}{section}
 \title{
Goodness-of-Fit Tests for High-Dimensional Gaussian Graphical Models via Exchangeable Sampling
}
\author[1]{Xiaotong Lin}
\author[1]{Weihao Li}
\author[2]{Fangqiao Tian}
\author[1]{Dongming Huang}
\affil[1]{Department of Statistics and Data Science, National University of Singapore}
\affil[2]{Department of Mathematics, National University of Singapore}
\date{}
\begin{document}
\maketitle


\begin{abstract}
We introduce a general framework for testing goodness-of-fit for Gaussian graphical models in both the low- and high-dimensional settings. 
This framework is based on a novel algorithm for generating exchangeable copies by conditioning on sufficient statistics. 
This framework provides exact finite-sample error control regardless of the dimension and allows flexible choices of test statistics to improve power. 
We explore several candidate test statistics and conduct extensive simulation studies to demonstrate their finite-sample performance compared to existing methods. 
The proposed tests exhibit superior power, particularly in cases where the true precision matrix deviates from the null hypothesis due to many small nonzero entries.
To justify theoretically, we consider a high-dimensional setting where the proposed test achieves rate-optimality under two distinct signal patterns in the precision matrix: (1) dense patterns with many small nonzero entries and (2) strong patterns with at least one large entry. 
Finally, we illustrate the usefulness of the proposed test through real-world applications.

\end{abstract}

\noindent{\bf Keywords}: Co-sufficient sampling, Gaussian graphical model, goodness-of-fit, high-dimensional inference, minimax hypothesis testing.



\section{Introduction}\label{sec: intro}
\label{sec:Introduction}
Gaussian graphical models (GGMs) have been widely used for analysing complex dependencies among high-dimensional variables in various fields. 
By using graphs to represent conditional independence in multivariate Gaussian distributions, GGMs provide visual interpretations for relationships among variables. For example, GGMs have been applied to recover gene association networks \citep{schafer_empirical_2005, dobra_variable_2009, castelletti_bayesian_2021}, to model functional brain networks
\citep{belilovsky_testing_2015, smith_network_2011, varoquaux_brain_2010}, and to model cross-sectional and time-series data in psychological studies \citep{epskamp_gaussian_2018}.

Consider a $p$-dimensional random vector $X=(X_{1}, X_{2}, \cdots, X_{p})\tp$ following a multivariate normal distribution $\mathbf{N}_p(\bs{\mu}, \bs{\Sigma})$ with mean $\bs{\mu}$ and covariance matrix $\mathbf{\Sigma}$. 
The GGM concerns the precision matrix $\mathbf{\Omega}=\mathbf{\Sigma}^{-1}$,  where an off-diagonal entry $\omega_{ij} = 0$ implies $X_i$ and $X_j$ are conditionally independent given the remaining variables \citep{Dempster1972}.
Using this property, the GGM imposes constraints on the pattern of zero entries of $\mathbf{\Omega}$ to satisfy its conditional independence structure of the distribution. 
The structure is represented by a graph $G=(\mathcal{V}, \mathcal{E})$, where $\mathcal{V}=\left\{1, 2, \cdots, p\right\}$ is the node set and $\mathcal{E}$ is 
the edge set that include $(i,j)$ if $i\neq j$ and $\omega_{ij}$ is allowed to be nonzero. Formally, the $p$-dimensional GGM with graph $G=(\mc{V},\mc{E})$ is defined as: 
\begin{equation}\label{eq: model MG}
\mc{M}_{G}=\left\{ \mathbf{N}_p(\bmu, \bOmega^{-1}) : \bmu \in \mathbb{R}^p, \bOmega\succ \mathbf{0}, \bOmega_{i,j}=0 \text{ if } i\neq j \text{ and } (i,j) \notin \mc{E} \right\}. 
\end{equation}
Additional details on GGMs can be found in 
\citet {lauritzen_graphical_1996} and \citet{anderson_introduction_2007}.

This paper considers goodness-of-fit testing in GGMs, which tests whether the data-generating distribution belongs to the GGM with a given candidate graph.

\subsection{Motivation}\label{sec: motivation}

To apply a GGM for data analysis, it is important to determine the graph. 
This involves estimating the graph structure and assessing the adequacy of the GGM. 
Graph selection has been studied extensively, with important advancements such as the Graphical Lasso (GLasso) \citep{yuan_model_2007} and Neighbourhood Lasso \citep{meinshausen_high-dimensional_2006}. 
However, goodness-of-fit (GoF) testing for GGMs has received comparatively less attention.

GoF testing evaluates whether a prespecified graph sufficiently captures all true edges. 
Unlike graph selection, which aims to \textit{identify as many true edges as possible while avoiding false ones}, GoF testing 
aims to determine whether \textit{a specific candidate graph includes all true edges even if false edges are present}. 
The definition of Type-I error also differs: in graph selection, it means \textit{selecting any false edge}, while in GoF testing, it means \textit{rejecting the candidate graph when it already contains all true edges}. These conceptual differences lead to distinct methodologies, which are summarised in Table~\ref{tab:comparison}.
Appendix~A.1 provides a literature review of both problems. 

\begin{table}[!t]
    \centering 
    \caption{Comparison of goodness-of-fit (GoF) testing and graph selection for graphical models}\label{tab:comparison}
     \begin{tabular}{@{}p{3cm}@{\hspace{0.5cm}}p{5.5cm}@{\hspace{0.5cm}}p{5.5cm}@{}}
    \toprule
        \hline
        \textbf{Aspect} & \textbf{GoF Testing} & \textbf{Graph Selection} \\
       \addlinespace
        Goal & Assess a candidate graph $G_0$ before downstream analyses & Identify the underlying graph $G$  to represent conditional independence\\
        \addlinespace
       Focus & Test if $G_0$ includes all true edges, regardless of false edges & Select as many true edges as possible while avoiding false edges \\
       \addlinespace
        Null Hypothesis & Single hypothesis that $G_0$ includes all true edges & Multiple individual hypotheses about the absence of each edge \\
        \addlinespace
      Output & A testing decision or a p-value & A graph with selected edges \\
         \addlinespace
        Error Control & Rejection probability when $G_0$ includes all true edges & False discovery rate of selected edges \\ 
  \bottomrule
    \end{tabular}
\end{table}

In GoF testing, a candidate graph may be derived from domain expertise or historical data. 
Validating a candidate graph is important before using it for subsequent inference. 
For example, in analysing U.S. precipitation data, assessing whether the GGM based on geographic adjacency is appropriate can enhance the credibility of downstream analyses. 
In the model-X framework, if we are sure that the distribution of predictors belongs to a GGM, 
then we can reliably implement a conditional knockoff filter to select important variables for a response of interest with false discovery control. Appendix~A.2 provides more details on this matter.

\subsection{Problem formulation}
Suppose the rows of the observed data $\mathbf{X}$ are independent and identically distributed (i.i.d.) samples from some population $P$. 
Let $[p]:=\{1,2,\ldots, p\}$. 
Given a graph $G_0=([p], \mc{E}_0)$, the goal is to test the null hypothesis that
\begin{eqnarray}\label{eq: GoF null}
H_0: P\in \mc{M}_{G_0}, 
\end{eqnarray}
where $\mc{M}_{G_0}$ is the GGM w.r.t. $G_0$, as defined in \eqref{eq: model MG}.

The statement $P\in \mc{M}_{G_0}$ allows $G_0$ to include additional false edges, which is more flexible than the \textit{faithfulness} that requires each edge in $G_0$ corresponds to an off-diagonal nonzero entry of the precision matrix. 
In addition, the statement does not concern combinatorial properties of the graph, such as connectivity and the maximum degree. 
Both faithfulness testing and combinatorial inference are very different from GoF testing, and Appendices~A.3 and A.4 discuss these differences.

While several GoF tests for GGMs exist, none are fully satisfactory. 
Classical tests like the generalised likelihood ratio test are inapplicable in high dimensions. 
The multiplicity adjustment of pairwise tests by \citet{drton_multiple_2007} and the procedures by \citet{verzelen_tests_2009} remain suboptimal in certain situations, especially when signals are \textit{dense but weak}---that is, when the true precision matrix has many small nonzero entries absent from the null model. Furthermore, they cannot incorporate prior structural information. These limitations motivate the development of more powerful and flexible GoF tests.

To overcome the aforementioned challenges, we construct a valid randomisation test using arbitrary test statistics. Suppose we can sample $M$ copies $\widetilde{\mathbf{X}}^{(m)}$ of the observed data $\mathbf{X}$ so that $\left\{ \mathbf{X}, \widetilde{\mathbf{X}}^{(1)} , \cdots, \widetilde{\mathbf{X}}^{(M)} \right\}$ are exchangeable under the null hypothesis. 
Let $T(\mb{X})$ be a statistic chosen such that larger (positive) values are regarded as evidence against the null, and $T(\widetilde{\mathbf{X}}^{(m)})$ represents the corresponding test statistic of each copy. 
If we define a p-value by
\begin{equation}\label{eqn:pval-1}
\textnormal{pVal}(\mathbf{X},\widetilde{\mathbf{X}}^{(1)},\dots,\widetilde{\mathbf{X}}^{(M)}) = \frac{1}{M+1}\left(1+\sum_{m=1}^M \One{T(\widetilde{\mathbf{X}}^{(m)})\ge T(\mathbf{X})}\right),
\end{equation}
then it is guaranteed that $\mathbf{P}\left(\textnormal{pVal}\leq \alpha \right)\leq \alpha$ under the null hypothesis, where $\alpha$ is any predefined significance level. 
The general idea of this approach has been well-explored and is known as Monte Carlo testing (see Appendix A.5 for a review). 
The validity of the p-value defined in \eqref{eqn:pval-1} does not rely on the choice of $T(\cdot)$, and the computation does not require any knowledge about the sampling distribution of $T(\mb{X})$. 
Therefore, users have the freedom to choose any test statistic to improve power, such as by incorporating prior knowledge about potential alternatives.

The main requirement for this approach is the exchangeability among the observed data $\mb{X}$ and the generated copies $\widetilde{\mb{X}}^{(m)}$. 
This could be achieved by \textit{co-sufficient sampling}, which generates i.i.d. Monte Carlo samples from the conditional distribution given the value of a sufficient statistic; see Appendix~A.6 for a review. 
Although co-sufficient sampling has been applied to many parametric models, it is not straightforward to implement co-sufficient sampling for GGMs. 
The idea of conditioning on sufficient statistics has also been explored for model-X knockoff generation \citep{huang_relaxing_2020}. However, their methods cannot satisfy the exchangeability requirement due to the fundamental differences between knockoff and exchangeable copies; see Appendix A.5 for a detailed discussion.
Moreover, it remains unclear how to choose the test statistic to achieve high power.

\subsection{Main contributions}

To fill the aforementioned gaps, we provide the following contributions: 
\begin{enumerate}
\item In Section \ref{sec: Sampling}, we propose an efficient algorithm for generating exchangeable copies of the observed data under GGMs, which leverages the geometry of Gaussian conditional distributions and the properties of Markov chains.

\item In Section~\ref{sec: Gof}, we propose the MC-GoF test for \eqref{eq: GoF null} using the exchangeable copies.  
This method yields a valid p-value with any test statistic function and can be applied with any dimension. 
Furthermore, we propose several test statistic functions and extend the method to test the local Markov property.

\item 
Section~\ref{sec: power theory} presents a theoretical framework for the power properties of the MC-GoF test in high dimensions.
We show that the MC-GoF test can achieve rate-optimality under two distinct alternative hypotheses about how the precision matrix violates the null hypothesis:  
(1) a dense alternative with many small nonzero entries, and (2) a strong alternative with at least one large entry. 
In both cases, the signal strength required by the MC-GoF test to have full power matches the lower bound on the separation rate for any consistent test.

\item 
We conduct extensive simulations to investigate the finite-sample performance of the MC-GoF test across various graph structures in Section~\ref{sec: Simulation}. 
Our proposed method is shown to be competitive in general, and significantly outperforms existing methods when the signal pattern is dense but weak. 
\end{enumerate} 

We exemplify the usefulness of our methods through their applications to real-world datasets in Section~\ref{sec: Application}.
Section~\ref{sec: Discussion} discusses future directions, including adapting our methodologies to other graphical models. 
The supplementary material provides complete details of the literature review, algorithms, theoretical derivations, proofs, simulation results, and data analyses.


\section{Sampling Exchangeable Copies}\label{sec: Sampling}
Suppose $G=(\mc{V}, \mc{E})$  is a given graph and the rows of $\mb{X}$ are $n$ i.i.d. samples from a distribution in $\mc{M}_{G}$. 
We propose an algorithm to generate copies $\widetilde{\mathbf{X}}^{(m)}$ ($1\leq m\leq M$) so that $\mb{X},\widetilde{\mathbf{X}}^{(1)}, \ldots, \widetilde{\mathbf{X}}^{(M)}$ are exchangeable. 
Our method is based on the idea of conditioning on sufficient statistics and the properties of Markov chains. 

Let $\mathbf{x}_{i\cdot}\tp $ be the $i$-th row of $\mathbf{x}$ and $f_{\bmu, \bOmega}(\mathbf{x})$ be the density function for $\mb{X}$. Since $\bOmega_{i,j}=0$ for $(i,j)\notin \mc{E}$, a sufficient statistic is given by $\psi_{G}(\mX)$, where $\psi_{G}(\cdot)$ is defined as $\psi_{G}(\mx):=\left( \sum_{i=1}^{n}\mx_{i\cdot}, ~~ (\mx^{\top}\mx)_{i,j}: i=j \text{ or } (i,j)\in \mc{E} \right)$ for $\mx \in \mathbb{R}^{n\times p}$.

Let $\Psi=\psi_{G}(\mX)$ be the observed sufficient statistic. 
If the value of $\psi_{G}(\mx)$ is fixed to be $\Psi$, the density function $f_{\bmu, \bOmega}(\mathbf{x})$ is free of $\mx$. 
Therefore, the conditional distribution of $\mX$ given $\Psi$ is uniform on a support defined as $\mc{X}_{\Psi} : = \{ 
\mx\in \mathbb{R}^{n\times p}: \psi_{G}(\mx)=\Psi 
\}$.
Note that $\mc{X}_{\Psi}$ is an algebraic variety in a high-dimensional space and, except for special cases such as the one presented in Appendix~B.1, it generally lacks a simple structure. As a result, it is difficult to sample directly from the uniform distribution on $\mc{X}_{\Psi}$. Instead, we construct Markov chains to generate exchangeable copies using the idea in \citet{besag_generalized_1989}, where we sequentially update the columns of the data matrix one at a time. 
It should be clarified that we are not using the Markov Chain Monte Carlo method to sample approximately from the conditional distribution; instead, we construct Markov chains to generate exchangeable copies.

Our sampling method will be presented in two steps: we first introduce the one-step update in Section~\ref{sec: residual rotation}, and then introduce the construction of the Markov chains in Section~\ref{sec: sample MC}.

\subsection{Residual Rotation}\label{sec: residual rotation}

In our method, we update each column $\mathbf{X}_i$ by sampling from its conditional distribution given the other columns and the sufficient statistic $\Psi$. 
Under the Gaussian graphical model, this conditional distribution is uniform over a certain linear constraint set in $\mathbb{R}^{n}$. 
To sample from this uniform distribution, we hold the projection of $\mathbf{X}_i$ onto the column space of $[\mathbf{1}_n, \mathbf{X}_{N_i}]$ fixed and then rotate the residual component uniformly at random. 
We name this procedure the \textit{residual rotation} and summarise it in Algorithm~\ref{alg: residual rotation}, with derivations provided in Appendix~B.2. 
Proposition~\ref{prop: residual rotation} states the desired properties, namely, the output $\widetilde{\mathbf{X}}_{i}$ and the input $\mathbf{X}_{i}$ are conditionally exchangeable given $\mathbf{X}_{-i}$.

\begin{algorithm}[H]
\caption{Sampling one column via residual rotation}\label{alg: residual rotation}
\hspace*{\algorithmicindent} \textbf{Input:}  $n\times p$ data matrix $\mathbf{X}$, index $i$ of the variable to sample, neighbourhood $N_i$ of $i$\\
\hspace*{\algorithmicindent} If $n\leq |N_i|+1$, output  $\widetilde{\mathbf{X}}_{i}=\mathbf{X}_{i}$; otherwise, proceed to the following steps. \\
\hspace*{\algorithmicindent} \textbf{Step 1:} Apply least squares linear regression to $\mathbf{X}_{i}$ on $\left[\bs{1}_{n}, \mathbf{X}_{N_i}\right]$ \\
\hspace*{\algorithmicindent} \textbf{Step 2:} Obtain the fitted vector $\bs{F}$ and the residual vector $\bs{R}$ from the regression \\
\hspace*{\algorithmicindent} \textbf{Step 3:} 
Apply linear regression to a standard normal $n$-vector  on $\left[\bs{1}_{n}, \mathbf{X}_{N_i}\right]$ and obtain the residual $\widetilde{\bs{R}}$ \\ 
\hspace*{\algorithmicindent} \textbf{Step 4:}  Compute and output  $\widetilde{\mathbf{X}}_{i}=\bs{F}+\widetilde{\bs{R}}\frac{\|\bs{R}\|}{\|\widetilde{\bs{R}}\| }$
\end{algorithm}

\begin{proposition}\label{prop: residual rotation}
    Suppose the rows of $\mathbf{X}$ are i.i.d. samples from a distribution in $\mc{M}_{G}$. 
    Let $\widetilde{\mathbf{X}}_{i}$ be the output of Algorithm~\ref{alg: residual rotation} with index $i$, $N_i$ be its neighbourhood, and  $\widetilde{\mathbf{X}}$ be a matrix formed by replacing the $i$-th column of $\mathbf{X}$ with $\widetilde{\mathbf{X}}_{i}$.
    Then, $\psi_{G}(\mathbf{X})=\psi_{G}(\widetilde{\mathbf{X}})$ and the conditional distribution of $\widetilde{\mathbf{X}}$ given $\mathbf{X}=\mathbf{x}$  is the same as that of $\mathbf{X}$ for almost every  $\mathbf{x}\in \mathbb{R}^{n\times p}$.  
Furthermore, if $n\geq 3+|N_i|$, then $\widetilde{\mathbf{X}}_{i}\neq \mathbf{X}_{i}$, a.s.
\end{proposition}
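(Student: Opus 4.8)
The plan is to dispatch the three assertions in turn, using elementary linear algebra for the first and the rotational invariance of isotropic Gaussians for the other two. Throughout, write $A=[\mathbf{1}_n,\mathbf{X}_{N_i}]$, let $V=\mathrm{col}(A)$ with orthogonal projections $\Pi_V,\Pi_{V^\perp}$, and record that $\bs{F}=\Pi_V\mathbf{X}_i$, $\bs{R}=\Pi_{V^\perp}\mathbf{X}_i$, $\widetilde{\bs{R}}=\Pi_{V^\perp}Z$ for the auxiliary standard normal vector $Z$ drawn in Step~3, and $\widetilde{\mathbf{X}}_i=\bs{F}+(\|\bs{R}\|/\|\widetilde{\bs{R}}\|)\,\widetilde{\bs{R}}$. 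A.s.\ full rank of $A$ (valid for almost every $\mathbf{x}$) gives $\dim V=|N_i|+1$.

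For the first assertion, note that only coordinates of $\psi_G$ involving column $i$ can change, namely the $i$-th column sum $\mathbf{1}_n^\top\mathbf{X}_i$, the diagonal $\|\mathbf{X}_i\|^2$, and $\mathbf{X}_j^\top\mathbf{X}_i$ for $j\in N_i$. Since $\mathbf{1}_n$ and each $\mathbf{X}_j$ ($j\in N_i$) lie in $V$ whereas $\widetilde{\bs{R}}\in V^\perp$, their inner products with $\widetilde{\bs{R}}$ vanish, giving $\mathbf{1}_n^\top\widetilde{\mathbf{X}}_i=\mathbf{1}_n^\top\mathbf{X}_i$ and $\mathbf{X}_j^\top\widetilde{\mathbf{X}}_i=\mathbf{X}_j^\top\mathbf{X}_i$; and since $\bs{F}\perp\widetilde{\bs{R}}$, Pythagoras yields $\|\widetilde{\mathbf{X}}_i\|^2=\|\bs{F}\|^2+\|\bs{R}\|^2=\|\mathbf{X}_i\|^2$. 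Hence $\psi_G(\widetilde{\mathbf{X}})=\psi_G(\mathbf{X})$; the early-exit case $n\le|N_i|+1$ is trivial.

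For the second assertion I would first show that the update resamples column $i$ uniformly on a sphere: as $\widetilde{\bs{R}}=\Pi_{V^\perp}Z$ with $Z$ independent standard normal, $\widetilde{\bs{R}}$ is a centred isotropic Gaussian on $V^\perp$, so $\widetilde{\bs{R}}/\|\widetilde{\bs{R}}\|$ is uniform on the unit sphere of $V^\perp$ and independent of $\mathbf{X}$; thus, given $\mathbf{X}=\mathbf{x}$, the output $\widetilde{\mathbf{X}}_i$ is uniform on the sphere $S=\{\bs{F}+\bs{r}:\bs{r}\in V^\perp,\ \|\bs{r}\|=\|\bs{R}\|\}$ of centre $\bs{F}$ and radius $\|\bs{R}\|$. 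I would then bring in the model: under $\mc{M}_G$, the regression form of Gaussian conditionals together with $\bOmega_{ij}=0$ for $j\notin N_i$ shows $\mathbf{X}_i\mid\mathbf{X}_{-i}\sim\mathbf{N}(a\mathbf{1}_n+\mathbf{X}_{N_i}b,\ \sigma_i^2\mathbf{I}_n)$ for some $a,b,\sigma_i^2$, an isotropic Gaussian whose mean lies in $V$. Its $V^\perp$-component is therefore centred isotropic on $V^\perp$, so conditioning on $(\bs{F},\|\bs{R}\|)$ makes $\mathbf{X}_i$ itself uniform on the same sphere $S$. Consequently, given $(\mathbf{X}_{-i},\bs{F},\|\bs{R}\|)$, the columns $\mathbf{X}_i$ and $\widetilde{\mathbf{X}}_i$ are two conditionally independent uniform draws from $S$, hence exchangeable; because this conditioning statistic is invariant under swapping the two (by the first assertion) and $\widetilde{\mathbf{X}}_{-i}=\mathbf{X}_{-i}$, averaging over it yields the conditional exchangeability of input and output given $\mathbf{X}_{-i}$, which is the stated identity between the conditional law of $\widetilde{\mathbf{X}}$ given $\mathbf{X}$ and the law of $\mathbf{X}$.

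For the third assertion, when $n\ge3+|N_i|$ we have $\dim V^\perp=n-|N_i|-1\ge2$, so the uniform law on $S$ is non-atomic; since $\bs{R}\ne\bs{0}$ a.s.\ (as $\mathbf{X}_i\notin V$ a.s.) the set $S$ is a genuine sphere and $\mathbf{X}_i$ is a single fixed point of it, so $\widetilde{\mathbf{X}}_i=\mathbf{X}_i$ --- equivalent to $\widetilde{\bs{R}}/\|\widetilde{\bs{R}}\|=\bs{R}/\|\bs{R}\|$ --- has probability zero. The boundary case $n=|N_i|+2$, where $\dim V^\perp=1$ and $S$ degenerates to an antipodal pair, makes clear why the strict margin is required. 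I expect the main obstacle to be the second assertion: verifying cleanly that the GGM conditional of a column given the rest is isotropic with mean confined to $V$, and then converting the two ``uniform on $S$'' statements into exchangeability through a swap-invariant conditioning statistic, while tracking the genericity conditions (full rank of $A$, $\bs{R}\ne\bs{0}$, $\widetilde{\bs{R}}\ne\bs{0}$) that underlie the ``almost every $\mathbf{x}$'' qualifier.
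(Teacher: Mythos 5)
Your proof is correct and follows essentially the same route as the paper's: decompose column $i$ into its projection onto the column space of $[\mathbf{1}_n,\mathbf{X}_{N_i}]$ and the orthogonal residual, use the fact that under $\mc{M}_{G}$ the conditional law of $\mathbf{X}_i$ given $\mathbf{X}_{-i}$ is isotropic with mean in that column space to show the observed and resampled columns are conditionally i.i.d. uniform on the same sphere, and then convert conditional i.i.d.-ness into the swap/exchangeability identity, with the $\dim V^\perp\geq 2$ non-atomicity argument matching the paper's $n\geq 3+|N_i|$ condition exactly. The only difference is cosmetic: where the paper invokes Basu's theorem to decouple the residual direction from the sufficient statistic, you get the same conditional independence directly from the spherical symmetry of the isotropic Gaussian, which is slightly more elementary.
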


\subsection{Sampling Exchangeable Copies from Markov Chains}\label{sec: sample MC}
To generate exchangeable copies of $\mathbf{X}$, we construct forward and backward Markov chains using the residual rotation method. First, we run the residual rotation across all variables in turn to create a forward Markov chain $\mc{C}_1=(\mathbf{X}, \overrightarrow{\mathbf{X}}^{(1)}, \ldots, \overrightarrow{\mathbf{X}}^{(p)})$. Then, starting from this endpoint, we run a similar procedure backward with the 
reversed order of variables to obtain a backward Markov chain $\mc{C}_2=(\overrightarrow{\mathbf{X}}^{(p)}, \overleftarrow{\mathbf{X}}^{(p-1)}, \ldots, \overleftarrow{\mathbf{X}}^{(0)})$. 
If the distribution of $\mathbf{X}$ belongs to $\mc{M}_{G}$, Proposition~\ref{prop: residual rotation} implies that
$$
(\overrightarrow{\mathbf{X}}^{(p)}, \overleftarrow{\mathbf{X}}^{(p-1)}, \ldots, \overleftarrow{\mathbf{X}}^{(0)})\eqd
(\overrightarrow{\mathbf{X}}^{(p)}, \overrightarrow{\mathbf{X}}^{(p-1)}, \ldots,\overrightarrow{\mathbf{X}}^{(1)}, \mathbf{X} ).
$$
Therefore, conditioning on $\overrightarrow{\mathbf{X}}^{(p)}$, we have that $\mathbf{X}$ and  $\overleftarrow{\mathbf{X}}^{(0)}$ are i.i.d. 
We can independently conduct the backward sampling multiple times to obtain multiple copies $\widetilde{\mathbf{X}}^{(m)}$ for $m = 1, 2, \dots, M $ that are exchangeable jointly with $\mathbf{X}$. Derivations and illustrations are provided in Appendix~B.3. 

Our approach is inspired by the \textit{parallel} method introduced by \citet{besag_generalized_1989}. 
The starting point of the backward chains is called the \textit{hub} since it is the common centre of all chains. 
When generating each chain, we can repeat the iteration for $L$ times to prolong the chain, but $L=1$ already suffices to provide satisfactory results. More discussions on the choice of $L$ can be found in Appendix~B.3.2. 
The order $\mc{I}$ of variables to apply residual rotation can be customised as long as the backward chain uses the exact reverse order of updates.

The general procedure is summarised in Algorithm~\ref{alg: exchangeable}. It is well-suited to high-dimensional settings where $p\gg n$. 
First, the output copies have columns that differ from those of $\mathbf{X}$ if $n-2$ exceeds the degree of variables, regardless of $p$. In addition, it is computationally efficient because it supports independent sampling of $M$ Markov chains via parallel computing, and each chain has computational complexity $O(L p n d^2)$ scaling linearly in $p$.

\begin{algorithm}[hbtp]
\caption{Sampling exchangeable copies for a GGM}\label{alg: exchangeable}
\hspace*{\algorithmicindent} \textbf{Input:}  $n\times p$ data matrix $\mathbf{X}$, graph $G$, number of copies $M$,  number of iterations $L$ (set to 1 by default), permutation $\mc{I}$ of a subset $\mc{T}\subset [p]$. \\
 \hspace*{\algorithmicindent} \textbf{Step 1:} Start from $\mathbf{X}$ and run Algorithm~\ref{alg: residual rotation} according to the order $\mc{I}$ for $L$ times to generate $\mathbf{X}^{(hub)}$. \\ 
 \hspace*{\algorithmicindent} \textbf{Step 2:} For $m = 1,2, \cdots, M$, independently start from $\mathbf{X}^{(hub)}$ and run Algorithm~\ref{alg: residual rotation} according to the reversed order of $\mc{I}$ for $L$ times to generate $\widetilde{\mathbf{X}}^{(m)}$. \\
\hspace*{\algorithmicindent} \textbf{Output:} $\widetilde{\mathbf{X}}^{(1)}, \ldots, \widetilde{\mathbf{X}}^{(M)}$. 
\end{algorithm}

Proposition~\ref{prop: exchangeable} summarises the desirable exchangeability of the output of Algorithm~\ref{alg: exchangeable}. 
It serves as the foundation for the test in Section~\ref{sec: Gof}, as exchangeability can be used to construct exactly valid randomisation tests (see \citet[Chapter 4]{davison_hinkley_1997}).

\begin{proposition}\label{prop: exchangeable}
Let $G$ be a graph and suppose the rows of $\mathbf{X}$ are i.i.d. samples from a distribution in $\mc{M}_{G}$. 
Let $\{\widetilde{\mathbf{X}}^{(m)}\}_{m=1}^{M}$ be generated by Algorithm~\ref{alg: exchangeable} with $\mc{T}\subseteq [p]$. 
Then (1) $\mathbf{X}, \widetilde{\mathbf{X}}^{(1)}, \ldots, \widetilde{\mathbf{X}}^{(M)}$ are exchangeable, (2) $\widetilde{\mathbf{X}}_{-\mc{T}}^{(m)}=\mathbf{X}_{-\mc{T}}$ for all $m$, and (3) $\mathbf{X}_{\mc{T}}, \widetilde{\mathbf{X}}_{\mc{T}}^{(1)}, \ldots, \widetilde{\mathbf{X}}_{\mc{T}}^{(M)}$ are conditionally exchangeable given $\mathbf{X}_{-\mc{T}}$.

\end{proposition}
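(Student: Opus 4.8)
The plan is to prove the three claims in the order (2), (1), (3), since claim (1) is the crux and claim (3) is a short consequence of (1) and (2). Claim (2) is immediate from the structure of Algorithm~\ref{alg: residual rotation}: one residual rotation on column $i$ alters only $\mathbf{X}_i$ and leaves $\mathbf{X}_{-i}$ untouched. Because the permutation $\mc{I}$ ranges only over columns in $\mc{T}$, neither the forward pass producing $\mathbf{X}^{(hub)}$ nor any backward pass ever modifies a column outside $\mc{T}$, so $\widetilde{\mathbf{X}}^{(m)}_{-\mc{T}}=\mathbf{X}^{(hub)}_{-\mc{T}}=\mathbf{X}_{-\mc{T}}$ for every $m$. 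The engine for (1) is Proposition~\ref{prop: residual rotation}, which I read as stating that a single residual-rotation step on column $i$ defines a Markov kernel $K_i$ that is reversible with respect to the law $\mu$ of $\mathbf{X}$; equivalently, $(\mathbf{X},\widetilde{\mathbf{X}})$ is an exchangeable pair for that one step.

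For claim (1) the main work is a time-reversal identity at the level of the whole chain. List the updates in $\mc{I}$, repeated $L$ times, as single-step kernels $K_1,\dots,K_k$ with $k=L|\mc{T}|$, so the forward chain is $X_0=\mathbf{X}\to X_1\to\cdots\to X_k=\mathbf{X}^{(hub)}$ with $X_t\giv X_{t-1}\sim K_t$. I would establish the displayed identity $(\overrightarrow{\mathbf{X}}^{(p)},\dots,\overleftarrow{\mathbf{X}}^{(0)})\eqd(\overrightarrow{\mathbf{X}}^{(p)},\dots,\mathbf{X})$ by a direct reversal computation: since each $K_t$ is $\mu$-reversible, $\mu(dx_{t-1})K_t(x_{t-1},dx_t)=\mu(dx_t)K_t(x_t,dx_{t-1})$, and applying this repeatedly to the forward joint law while commuting kernels that act on disjoint coordinates gives
\begin{equation*}
\mu(dx_0)\prod_{t=1}^{k}K_t(x_{t-1},dx_t)=\mu(dx_k)\prod_{t=1}^{k}K_{k-t+1}(x_{k-t+1},dx_{k-t}).
\end{equation*}
The right-hand side is exactly a $\mu$-started chain run with the kernels in reversed order, i.e. the backward chain of Algorithm~\ref{alg: exchangeable}; hence the reversed forward sequence and the backward sequence agree in law.

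Reading off the conditional laws given the first coordinate $\mathbf{X}^{(hub)}$, this identity says $\overleftarrow{\mathbf{X}}^{(0)}$ and $\mathbf{X}$ share the same conditional law given the hub. Each copy $\widetilde{\mathbf{X}}^{(m)}$ is produced from the hub using independent backward randomness $V_m$, so the copies are conditionally i.i.d.\ given the hub and each carries this same conditional law. Moreover, writing the hub as $\phi(\mathbf{X},U)$ with forward randomness $U$ and noting $(V_1,\dots,V_M)\indp(\mathbf{X},U)$, we get $\mathbf{X}\indp(V_1,\dots,V_M)$ given the hub, hence $\mathbf{X}$ is conditionally independent of the copies given the hub. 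Therefore $\mathbf{X},\widetilde{\mathbf{X}}^{(1)},\dots,\widetilde{\mathbf{X}}^{(M)}$ are conditionally i.i.d.\ given $\mathbf{X}^{(hub)}$, and integrating out the hub yields their unconditional exchangeability, proving (1). Claim (3) then follows by combining (1) and (2): since all of $\mathbf{X},\widetilde{\mathbf{X}}^{(1)},\dots,\widetilde{\mathbf{X}}^{(M)}$ share the identical block $\mathbf{X}_{-\mc{T}}$, permuting the full matrices is the same as permuting the $\mc{T}$-blocks with $\mathbf{X}_{-\mc{T}}$ held fixed, so the exchangeability from (1) reads as conditional exchangeability of $\mathbf{X}_{\mc{T}},\widetilde{\mathbf{X}}^{(1)}_{\mc{T}},\dots,\widetilde{\mathbf{X}}^{(M)}_{\mc{T}}$ given $\mathbf{X}_{-\mc{T}}$.

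The main obstacle is the reversal step: converting the single-step reversibility supplied by Proposition~\ref{prop: residual rotation} into the chain-level identity while correctly tracking the inhomogeneous kernel order and justifying the kernel/measure manipulations. Care is also needed with the almost-sure caveats of Algorithm~\ref{alg: residual rotation} (e.g.\ the degenerate case $n\le|N_i|+1$, where the step is the identity), but these only affect null sets and do not disturb the reversibility used above. Everything else is bookkeeping.
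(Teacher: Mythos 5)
Your proposal is correct and follows essentially the same route as the paper's own proof: it reads Proposition~\ref{prop: residual rotation} as single-step reversibility (detailed balance) of each residual-rotation kernel, iterates this into the chain-level time-reversal identity, concludes that $\mathbf{X}$ and the copies are conditionally i.i.d.\ given the hub, and integrates out the hub, with claim (2) read off from the algorithm. The only cosmetic difference is in claim (3), which you derive directly from (1) and (2) via the shared block $\mathbf{X}_{-\mc{T}}$, whereas the paper takes a conditional expectation given $\mathbf{X}_{-\mc{T}}^{(hub)}$ in the hub-conditional permutation identity; both steps are valid and essentially equivalent.
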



\section{Goodness-of-Fit Test} \label{sec: Gof}
We will exploit the exchangeable copies to construct goodness-of-fit tests for Gaussian graphical models in Section~\ref{sec: GoF pvalue} and then propose several test statistics in Section~\ref{sec: GoF test statistic}. In Section~\ref{sec: GoF local}, we adapt the method for testing the local Markov property.

\subsection{Construction of p-values}\label{sec: GoF pvalue}

To test the null hypothesis $H_0$ that $\mathbf{X}$ follows a distribution in $\mc{M}_{G_0}$ for a graph $G_0$, we generate $\widetilde{\mathbf{X}}^{(1)}, \ldots, \widetilde{\mathbf{X}}^{(M)}$ using Algorithm~\ref{alg: exchangeable}. 
Under $H_0$, Proposition~\ref{prop: exchangeable} ensures $T(\mathbf{X}), T(\widetilde{\mathbf{X}}^{(1)}), \ldots, T(\widetilde{\mathbf{X}}^{(M)})$ are exchangeable for any chosen test statistic $T$. As a result, the p-values computed as in \eqref{eqn:pval-1} are exactly valid at any significance level $\alpha$.

\begin{algorithm}[htbp]
\caption{Monte Carlo goodness-of-fit (MC-GoF) test for GGMs}\label{alg:gof}
 \hspace*{\algorithmicindent} \textbf{Input:} $n\times p$ data matrix $\mathbf{X}$, graph $G_0$, test statistic $T(\cdot)$, number of copies $M$ \\
 \hspace*{\algorithmicindent} \textbf{Step 1:} Apply Algorithm~\ref{alg: exchangeable} with $G=G_0$ and $\mc{I}=(1,2,\ldots,p)$ to obtain $\widetilde{\mathbf{X}}^{(1)}, \ldots, \widetilde{\mathbf{X}}^{(M)}$ \\
 \hspace*{\algorithmicindent} \textbf{Step 2:} Calculate the statistics  $T(\mathbf{X}), T(\widetilde{\mathbf{X}}^{(1)}), T(\widetilde{\mathbf{X}}^{(2)}), \cdots, T(\widetilde{\mathbf{X}}^{(M)}) $ \\
 \hspace*{\algorithmicindent} \textbf{Step 3:} Compute and output the (one-sided) p-value in \eqref{eqn:pval-1}

 \end{algorithm}

We summarise the testing procedure in Algorithm~\ref{alg:gof} and call it the \textit{Monte Carlo goodness-of-fit} (MC-GoF) test for GGMs. 
The test statistic $T(\cdot)$ can be any function where larger values provide stronger evidence against $H_0$; extensions to tie-breaking and two-sided rules can be found in Appendix~B.3.1. Proposition~\ref{prop: GoF valid} established the validity of the MC-GoF test for $H_0$. 
\begin{proposition}\label{prop: GoF valid}
For any test statistic $T(\cdot)$, the p-value obtained by Algorithm~\ref{alg:gof} is valid:
for any level $\alpha \in (0,1)$, we have  $\P(\textnormal{pVal}_{T}\leq \alpha)\leq \alpha$ if $P\in \mc{M}_{G_0}$. 
\end{proposition}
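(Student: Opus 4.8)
The plan is to reduce the validity of the p-value in \eqref{eqn:pval-1} to a standard exchangeability argument, using Proposition~\ref{prop: exchangeable} as the key input. First I would invoke Proposition~\ref{prop: exchangeable}, which guarantees that under $H_0$ (i.e.\ when $P\in \mc{M}_{G_0}$, so the rows of $\mathbf{X}$ are i.i.d.\ samples from a distribution in $\mc{M}_{G_0}$) the collection $\mathbf{X}, \widetilde{\mathbf{X}}^{(1)}, \ldots, \widetilde{\mathbf{X}}^{(M)}$ is exchangeable. Since any fixed measurable function applied coordinatewise preserves exchangeability, the scalars $T(\mathbf{X}), T(\widetilde{\mathbf{X}}^{(1)}), \ldots, T(\widetilde{\mathbf{X}}^{(M)})$ are themselves exchangeable for any choice of test statistic $T(\cdot)$. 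This is the crucial step that frees the argument from any distributional assumptions on $T$.

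Next I would translate the event $\{\textnormal{pVal}_T\le \alpha\}$ into a statement about the rank of $T(\mathbf{X})$ among the $M+1$ exchangeable values. From \eqref{eqn:pval-1}, $\textnormal{pVal}_T = \frac{1}{M+1}\bigl(1+\sum_{m=1}^M \One{T(\widetilde{\mathbf{X}}^{(m)})\ge T(\mathbf{X})}\bigr)$, so $\textnormal{pVal}_T \le \alpha$ is equivalent to $1+\sum_{m=1}^M \One{T(\widetilde{\mathbf{X}}^{(m)})\ge T(\mathbf{X})} \le \alpha(M+1)$, i.e.\ the number of copies that tie or exceed $\mathbf{X}$ is small. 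The standard way to bound the probability of this event is to note that for exchangeable variables the (appropriately defined) rank of any one of them is approximately uniform on $\{1,\ldots,M+1\}$, which yields $\P(\textnormal{pVal}_T\le \alpha)\le \alpha$ for every $\alpha\in(0,1)$.

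The one technical point requiring care is the handling of ties, since the indicator uses $\ge$ rather than $>$, and ties can occur on a set of positive probability (or must be ruled out). The cleanest route is to prove the generic lemma that if $V_0, V_1, \ldots, V_M$ are exchangeable real-valued random variables, then the quantity $\frac{1}{M+1}(1+\sum_{m=1}^M \One{V_m\ge V_0})$ stochastically dominates the uniform distribution on $\{1/(M+1),\ldots,1\}$ and hence is a valid p-value; this is essentially \citet[Chapter~4]{davison_hinkley_1997}, cited in the excerpt. Here the conservative direction of the inequality (counting ties as exceedances) works in our favour, so no genericity assumption on $T$ is needed. I would apply this lemma with $V_0=T(\mathbf{X})$ and $V_m=T(\widetilde{\mathbf{X}}^{(m)})$ to conclude.

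The main obstacle is conceptual rather than computational: one must be precise that exchangeability of the full data matrices descends to exchangeability of the scalar statistics, and that the $\ge$ convention in the definition gives the conservative (valid) bound even in the presence of ties. Once the exchangeability lemma is stated correctly, the conclusion $\P(\textnormal{pVal}_T\le\alpha)\le\alpha$ under $P\in\mc{M}_{G_0}$ follows immediately for arbitrary $T(\cdot)$, which is exactly the claimed validity. The proof is therefore short, with all the substantive work already carried out in Proposition~\ref{prop: exchangeable}.
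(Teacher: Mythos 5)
Your proposal is correct and follows essentially the same route as the paper: invoke Proposition~\ref{prop: exchangeable} to get exchangeability of $\mathbf{X}, \widetilde{\mathbf{X}}^{(1)}, \ldots, \widetilde{\mathbf{X}}^{(M)}$, deduce exchangeability of the scalar statistics, and then apply the classical result that the rank-based p-value with ties counted as exceedances (the $\ge$ convention) is conservative, which the paper handles by citing standard references exactly as you do. The only difference is cosmetic: the paper defers the explicit tie-handling computation to its treatment of the randomized and two-sided variants, whereas you sketch it directly for the basic p-value.
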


Based on this result, the MC-GoF test guarantees finite-sample Type-I error control with no constraint on the dimension $p$ and no need to know about the sampling distribution of the test statistic. 
It also bypasses the need to estimate any unknown parameters so $\mc{M}_{G_0}$ is allowed to have much more unknown parameters than the sample size $n$, which is particularly advantageous for high-dimensional settings where $p\gg n$. 
Specifically, the MC-GoF test remains applicable even when all degrees of $G_0$ reach $n/2$, while many existing methods impose sparsity assumptions that the maximum degree is at the order of $o(\sqrt{n/\log p})$.

Although the MC-GoF test does not require $G_0$ to be sparse, we focus on graphs with maximum degree $d$ at most moderate relative to $n$ (e.g., $d \leq n/2$) for practical reasons: 
extremely dense graphs are difficult to estimate with limited data, and they are rarely encountered in applications of graphical models.

The number of copies $M$ does not affect the validity of the test, but it should not be too small as this may reduce the power. 
For significance level $\alpha=0.05$, setting $M=100$, or more generally $M\geq 2/\alpha$, is sufficient, as further increasing $M$ yields minimal benefit. 
Setting the number of the iteration $L=1$ already achieves desirable power properties in simulations, even for moderately dense null graphs. Detailed discussions on the choice of $M$ and $L$ are provided in Appendix~B.3.2. 

\subsection{Choice of test statistic} \label{sec: GoF test statistic}

The MC-GoF testing framework (Algorithm~\ref{alg:gof}) ensures valid Type-I error control for any chosen test statistic $T(\cdot)$. However, to achieve high power, $T(\cdot)$ should effectively capture deviations from the null hypothesis. 
This section introduces three classes of test statistics and a fourth principle to incorporate prior knowledge. 
We provide an overview of the motivations and aggregation schemes. Detailed derivations are deferred to Appendix~B.4.

Consider alternatives where the population belongs to a GGM, but the true graph $G$ includes some edges absent from the null graph $G_0=(\mc{V}, \mc{E}_0)$ in \eqref{eq: GoF null}. 
The problem reduces to testing  
$H_0: \forall (i,j)\notin \mc{E}_0, ~\omega_{ij}=0$ versus $H_a: \exists (i,j)\notin \mc{E}_0, ~\omega_{ij}\neq 0$. 
We propose three strategies to measure the ``distributional difference'' between $H_0$ and $H_a$: 
\begin{enumerate}
    \item Partial-Correlation-Based Statistics: These statistics measure the conditional dependence between pairs of variables not connected in $G_0$. The motivation is that if $G$ contains edges that $G_0$ omits, some partial correlations will be nonzero. See more details in Appendix~B.4.1. 
    \item Conditional-Regression-Based Statistics: 
 Based on the idea of nodewise regressions, these statistics evaluate the regression of each variable on its neighbours in $G_0$ and test whether adding any omitted variable improves the fit. See more details in Appendix~B.4.2. 
 \item Likelihood-Ratio-Based Statistics: 
 These statistics compare the likelihoods under the null model with $G_0$ and a full model fitted using regularisation (e.g., GLasso). See more details in Appendix~B.4.3. 
\end{enumerate}

For partial-correlation-based and conditional-regression-based approaches, we first derive pairwise or nodewise measures that reflect potential departures from $G_0$ and then combine these local measures into one statistic. Two aggregation schemes are:
\begin{enumerate}
    \item[(i)] Sum-Type Aggregation: 
Summing up local measures is effective if
we consider the \textit{dense-but-weak} alternative, where there are many small deviations from $G_0$. 
 For example, from partial correlation measures, we define the SRC statistic as 
  $T_{\text{SRC}}(\mathbf{X}) = \sum_{(i,j)\notin \mc{E}_0} \hat{\gamma}_{ij}^2$, 
  where $\hat{\gamma}_{ij}$ is a residual correlation measure derived from regressing $i$ and $j$ on the union of their neighbours in $G_0$. 
 Other statistics of this type are PRC and ERC defined in Appendix~B.4.1, and F$_{\Sigma}$ in Appendix~B.4.2. 
 \item[(ii)] Max-Type Aggregation:
 Taking the maximum of local measures is effective if we consider the \textit{strong-but-sparse} alternative, where there are one or a few large deviations from $G_0$. 
For example, from partial correlation measures, the MRC statistic is defined as
$T_{\text{MRC}}(\mathbf{X}) = \max_{(i,j)\notin \mc{E}_0} \hat{\gamma}_{ij}^2$ in Appendix~B.4.1. 
Another example of this type is F$_{\max}$ defined in Appendix~B.4.2, where its relation to the $M^1 P_1$ and $M^1 P_2$ procedures proposed in \cite{verzelen_tests_2009} is further discussed. 
\end{enumerate}

The likelihood ratio approach inherently provides a single global statistic, so it does not require an additional aggregation step. 
For example, in Appendix~B.4.3, the GLR-$\ell_1$ statistic is defined as 
$T_{\text{GLR}-\ell_1}= 2 \log \ell(\widehat{\bOmega}_{ML})$, 
where $\widehat{\bOmega}_{GL}$ is a modified GLasso estimator that penalises only edges absent in $G_0$ and is tailored for high-dimensional settings.

When additional knowledge suggests that certain edges not in $G_0$ are more likely to exist, one can modify the test statistic to reflect this information. For example, if $w_{ij}\geq 0$ encodes prior beliefs about the existence of an edge $(i,j)$, then a weighted version of the SRC statistic is defined as
$T_{\text{SRC}-w}(\mathbf{X}) = \sum_{(i,j)\notin \mc{E}_0} w_{ij}\hat{\gamma}_{ij}^2$.
Appendix~B.4.4 discusses this idea in detail, and Appendix~F.4 illustrates the advantage of this approach with numerical examples.
More advanced methods, such as Bayesian approaches, demand future exploration.

\subsection{Testing local Markov property}\label{sec: GoF local}

Instead of a global test, some applications of GGMs may require a localised evaluation, such as testing the \textit{local Markov property} for a subset of nodes. 
For node $i$, this property states that $X_i$ is independent of other variables outside its neighbourhood $N_i$ given $X_{N_i}$. 
Testing the local Markov property for a subset $\mathcal{T}\subseteq [p]$ involves verifying 
\begin{equation}\label{eq: local null GoF}
    H_0: X \text{ is normal and } X_i\indp X_{B_i} \mid X_{N_i}, 
\end{equation}
where $B_i$ excludes $i$ and $N_i$. 
When $\mc{T}=[p]$, \eqref{eq: local null GoF} becomes the global hypothesis \eqref{eq: GoF null}. 

To obtain a valid p-value for such a local test, we modify Step 1 of Algorithm~\ref{alg:gof} by using a permutation of $\mathcal{T}$ only (rather than a permutation of $[p]$). 
We call this modified algorithm the \textit{local MC-GoF test} for GGMs and prove its validity for testing \eqref{eq: local null GoF} in Appendix~E.4. 
We can also adjust test statistics to focus on nodes in $\mathcal{T}$; for example, a localised version of 
SRC can be defined as $T_{\text{SRC}}^{(loc)}(\mathbf{X}) = \sum_{i\in \mc{T}}\sum_{j\in B_{i}} \hat{\gamma}_{ij}^2$.

This localised framework is more efficient in terms of both computation and power when only a part of the graph is of interest. 
It also serves as a diagnostic tool for identifying  nodes that contribute to a global model misspecification.


\section{Power Analysis}\label{sec: power theory}
This section establishes a theoretical framework for analysing the power properties of the MC-GoF test in high-dimensional settings, where both $p$ and $n$ grow to infinity, with $p$ potentially much larger than $n$. 
We will quantify when the MC-GoF test can distinguish between the null hypothesis and structured alternatives, and will establish rate-optimal detection boundaries.

\subsection{Setup and Framework}
The power analysis involves understanding the following two aspects regarding the observed statistic $T(\mathbf{X})$ and the Monte Carlo copies $T(\widetilde{\mathbf{X}}^{(m)})$: \\
(A) the empirical distribution of $T(\widetilde{\mathbf{X}}^{(m)})$, $m\in [M]$; \\
(B) the distribution of $T(\mathbf{X})$ under a meaningful alternative hypothesis. \\
It is challenging to analyse Aspect (A) under general null graphs due to the complexity of Markov chains involved in the generation of $\widetilde{\mathbf{X}}^{(m)}$. 
To achieve a tractable theory, we assume the null graph $G_0=([p], \mc{E}_0)$ belongs to the class of \textit{clique-star shaped graphs}: the graph is a split graph with a clique $\mc{H}$ and an independent set $\mc{C}$ such that every node in $\mc{C}$ is connected to all nodes in $\mc{H}$. 
Such structures simplify the null model so we can explicitly characterise the distributions of the generated copies. Remark~\ref{rem:clique-star-shaped} discusses the necessity and novelty of this simplification.

To study Aspect (B), we focus on normal populations and 
consider two distinct signal patterns:  (1) the \textbf{dense-but-weak} pattern, where many off-diagonal entries of $\{\mathbf{\Omega}_{ij}:(i,j)\notin \mc{E}_0\}$ are nonzero but each is too small to detect individually, and (2) the \textbf{strong-but-sparse} pattern, where most off-diagonal entries of $\{\mathbf{\Omega}_{ij}:(i,j)\notin \mc{E}_0\}$ are zero except for at least one significant.  
These alternatives bridge between Aspect (B) and recent advances in correlation matrix inference.

Concretely, suppose the $n$ rows of $\mathbf{X}$ are i.i.d. sampled from $P=\mathbf{N}_p(\bs{\mu}, \mathbf{\Omega}^{-1})$. 
For a set $\mathbf{\Theta}$ of precision matrices, we write $\mathbf{\Omega}\in \mathbf{\Theta}$ to indicate $P\in \{\mathbf{N}_p(\bs{\mu}, \mathbf{\Omega}^{-1}): \bs{\mu}\in \mathbb{R}^{p}, \mathbf{\Omega}\in \mathbf{\Theta}\}$. 
Consider the null $H_0: P\in \mc{M}_{G_0}$ under the following condition. 
\vspace{-1em}
\begin{condition}\label{cond: G0 clique-star}
$G_0=([p], \mc{E}_0)$ is clique-star shaped with clique $\mc{H}$ and independent set $\mc{C}$ such that $|\mc{H}|\leq n-3$ and $q=|\mc{C}|$. 
Furthermore, either $p\geq n$ or $p=q$ holds. 
\end{condition}
Under Condition~\ref{cond: G0 clique-star}, $P\in \mc{M}_{G_0} \Leftrightarrow$ the submatrix $\mathbf{\Omega}_{\mc{C}}$ is a diagonal matrix, and so does its inverse.  
We measure deviations from the null using the following quantity: 
$$
D(\mathbf{\Omega}_{\mc{C}}, s) = \|\mathbf{R}(\mathbf{\Omega}_{\mc{C}}^{-1}) - \mathbf{I}_q\|_s,
$$
where $\mathbf{R}(\mathbf{M})=\text{diag}(\mathbf{M})^{-1/2}\mathbf{M}\text{diag}(\mathbf{M})^{-1/2}$ is the diagonal normalisation and the symbol $s\in \{F, \infty\}$ denotes the Frobenius norm or the maximum norm of a matrix, respectively. For both $s=F$ and $s=\infty$, we have $D(\mathbf{\Omega}_{\mc{C}}, s)=0 \Leftrightarrow P\in \mc{M}_{G_0}$. 
Remark~\ref{rem:power-metric} provides detailed discussions and concrete examples for this metric. 
As we will see, the power of the MC-GoF test depends on the magnitude of $D(\mathbf{\Omega}_{\mc{C}}, s)$. 

\subsection{Theoretical Results}
For dense alternatives where many of the off-diagonal entries of $\mathbf{\Omega}_{\mc{C}}$ are nonzero, we use $D(\mathbf{\Omega}_{\mc{C}}, F)$ to quantify the deviation from the null. 
For strong alternatives where some of the off-diagonal entries of $\mathbf{\Omega}_{\mc{C}}$ are separated from zero, we use $D(\mathbf{\Omega}_{\mc{C}}, \infty)$ to quantify the deviation. 
Our goal is to characterise how large these deviations must be (relative to $n$, $p$, and $q$) for the MC-GoF test to achieve high power, and to show that the rates of these deviations cannot be improved by any other test.
For any $\alpha\in (0,1)$ and any statistic $T$, 
we denote the power of the $\alpha$-level MC-GoF test with statistic $T$ as 
$$
\beta_{n}(P; \alpha, T):=\mathbb{P}\left( \frac{1}{M+1}\left[1+\sum_{m=1}^M \mathbf{1}\left\{T\left(\widetilde{\mathbf{X}}^{(m)}\right) \geq T(\mathbf{X})\right\}\right] \leq \alpha \right),
$$
where $\widetilde{\mathbf{X}}^{(m)}$ are generated by Algorithm~\ref{alg:gof} with $G_0$. 
We now present the main results. All formal statements and further discussion are provided in Appendix~C.

\textbf{Dense alternatives.} Consider the alternative  
$H_a: \mathbf{\Omega}\in \mathbf{\Theta}_{n1}(b)$,  
where $\mathbf{\Theta}_{n1}(b) := \left\{\mathbf{A}\in \mathbb{S}_+^p: D(\mathbf{A}_C, F) \geq b\sqrt{\frac{q}{n}}, \quad \lambda_{\max}(\mathbf{A}_C)/\lambda_{\min}(\mathbf{A}_C)\leq b_0 \right\}$ for a fixed $b_0\geq 1$. 
In this case, the SRC statistic is suitable and can be expressed as $T_1(\mathbf X)= \sum_{i,j \in \mc{C}, i\neq j}\hat{\gamma}_{ij}^2$. 

\begin{theorem}\label{thm:simple-dense-power}
    Suppose Condition~\ref{cond: G0 clique-star} holds and $\lim \frac{q}{n}=\gamma>0$. Then, the followings hold:
    \begin{enumerate}
        \item [(a)]For any $\epsilon>0$, there exists a constant $b$ such that if $\mathbf{\Omega}\in \mathbf{\Theta}_{n1}(b)$, it holds $\liminf_{n} \beta_{n}(P;\alpha, T_1)\geq 1-\epsilon$, 
provided that $M\geq \max(2\alpha^{-1},\log(2\epsilon^{-1}))$. 
        \item [(b)]
Suppose the condition number of $\mathbf{\Omega}_{\mc{C}}$ is bounded and $D(\mathbf{\Omega}_{\mc{C}}, F)\to\infty$. 
If $\alpha_n$ and $M$ scale appropriately, then $\lim_{n\rightarrow\infty}\beta_{n}(P; \alpha_n, T_1) = 1$.
        \item [(c)]
Let $0<\alpha<\beta<1$. Suppose $q / n \leq \kappa$ for some constant $\kappa<\infty$. 
Then, there exists a constant $b$ such that
$$
\limsup _{n \rightarrow \infty}~~\left\{ \sup_{\phi} \inf_{\mathbf{\Omega} \in \mathbf{\Theta}_{n1}(b)} \mathbb{E}(\phi) \right\}< \beta.
$$
where $\sup_{\phi}$ is taken over any $\alpha$-level test $\phi$ for $H_0: P\in \mc{M}_{G_0}$. 
    \end{enumerate}
\end{theorem}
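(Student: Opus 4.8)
The three parts split into a positive result for the MC-GoF test with $T_1$ (parts (a)--(b)) and a minimax impossibility result (part (c)). The common engine is the clique-star structure: for $i,j\in\mathcal{C}$ the union of their $G_0$-neighbourhoods is exactly $\mathcal{H}$, so each $\hat\gamma_{ij}$ is the sample partial correlation of $X_i,X_j$ given $X_{\mathcal{H}}$, i.e.\ the cosine between the residual vectors $\mathbf{R}_i,\mathbf{R}_j$ obtained by projecting $\mathbf{X}_i,\mathbf{X}_j$ off $[\mathbf{1}_n,\mathbf{X}_{\mathcal{H}}]$. Consequently $T_1(\mathbf{X})=\sum_{i\neq j\in\mathcal{C}}\hat\gamma_{ij}^2=\|\widehat{\mathbf{R}}_{\mathcal{C}\cdot\mathcal{H}}\|_F^2-q$ is the classical sum-of-squared-correlations statistic applied to $q$ residual directions living in the $\nu$-dimensional orthocomplement of $[\mathbf{1}_n,\mathbf{X}_{\mathcal{H}}]$, with effective sample size $\nu=n-|\mathcal{H}|-1$. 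This reduces both aspects (A) and (B) to a complete-independence testing problem on $q$ coordinates, connecting the analysis to high-dimensional correlation-matrix inference.

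\textbf{Upper bound (parts (a)--(b)).} First I would settle Aspect (A). Because residual rotation always replaces $\mathbf{R}_i$ by a uniformly random direction of the same length inside the common orthocomplement, and the $\mathcal{C}$--$\mathcal{C}$ cross products are \emph{not} part of $\psi_{G_0}$, the copies' $\mathcal{C}$-residual directions are i.i.d.\ uniform on $S^{\nu-1}$ \emph{regardless of whether $\mathbf{X}$ obeys $H_0$}; hence each $T_1(\widetilde{\mathbf{X}}^{(m)})$ follows the null law of the statistic. I would then establish its mean $\mu_0\asymp q(q-1)/\nu$ and standard deviation $\sigma_0\asymp q/\nu$ with a concentration/CLT bound valid when $q/n\to\gamma$. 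For Aspect (B), under $\mathbf{\Omega}\in\mathbf{\Theta}_{n1}(b)$ the $\mathcal{C}$-residuals have population partial-correlation matrix $\mathbf{R}(\mathbf{\Omega}_{\mathcal{C}}^{-1})$, so $T_1(\mathbf{X})$ concentrates around $\mu_0+D(\mathbf{\Omega}_{\mathcal{C}},F)^2$ with fluctuations controlled by $\sigma_0$ and by the bounded condition number $b_0$. Since $D^2\geq b^2 q/n$ and $\sigma_0\asymp q/\nu$, the standardised gap between $T_1(\mathbf{X})$ and the copies is of order $b^2\nu/n$, a constant multiple of $b^2$ in the proportional regime; choosing $b$ large makes the probability $\pi$ that a single copy exceeds $T_1(\mathbf{X})$ tiny with high probability over $\mathbf{X}$. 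Conditioning on that event, the number of copies exceeding $T_1(\mathbf{X})$ is stochastically dominated by a $\mathrm{Binomial}(M,\pi)$, whose tail is controlled once $M\geq\log(2\epsilon^{-1})$; together with $M\geq 2\alpha^{-1}$ (which guarantees the attainable p-value $1/(M+1)\leq\alpha$), a union bound yields $\beta_n\geq1-\epsilon$, proving (a). Part (b) is the same argument with $b=D/\sqrt{q/n}\to\infty$, letting $\alpha_n\to0$ and $M\to\infty$ slowly so that the growing gap still overwhelms the shrinking threshold.

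\textbf{Lower bound (part (c)).} I would use the Ingster--Suslina / Le Cam two-point mixture method with the nuisance parameters pinned down for cleanliness: take the null $\mathbf{\Omega}=\mathbf{I}_p$ (which lies in $\mc{M}_{G_0}$ for any $G_0$) and alternatives $\mathbf{\Omega}_\theta=\mathbf{I}_p+\mathbf{\Delta}_\theta$ with $\mathbf{\Delta}_\theta$ supported on the off-diagonal of the $\mathcal{C}$-block, entries $\tau\theta_{ij}$ for i.i.d.\ signs $\theta_{ij}\in\{\pm1\}$. Taking $\tau\asymp b/\sqrt{nq}$ makes $D(\mathbf{\Omega}_{\mathcal{C}},F)\asymp\|\mathbf{\Delta}_{\theta,\mathcal{C}}\|_F\asymp b\sqrt{q/n}$, and for small $\tau$ each $\mathbf{\Omega}_\theta$ is positive definite with condition number $\leq b_0$, so $\mathbf{\Omega}_\theta\in\mathbf{\Theta}_{n1}(b)$. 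The Gaussian chi-square affinity between the $n$-sample laws reduces to $\det(\mathbf{I}+\mathbf{\Delta}_\theta\mathbf{\Delta}_{\theta'})^{n/2}$ up to higher-order corrections, so $\chi^2+1=\mathbb{E}_{\theta,\theta'}\det(\mathbf{I}+\mathbf{\Delta}_\theta\mathbf{\Delta}_{\theta'})^{n/2}(1+o(1))$; expanding $\tfrac{n}{2}\log\det(\mathbf{I}+\mathbf{A})=\tfrac{n}{2}\sum_{k\geq1}\tfrac{(-1)^{k+1}}{k}\mathrm{tr}(\mathbf{A}^k)$ with $\mathbf{A}=\mathbf{\Delta}_\theta\mathbf{\Delta}_{\theta'}$, the leading term $\tfrac{n}{2}\mathrm{tr}(\mathbf{\Delta}_\theta\mathbf{\Delta}_{\theta'})=n\tau^2\sum_{i<j}\theta_{ij}\theta'_{ij}$ averages over the independent signs to $\cosh(n\tau^2)^{\binom{q}{2}}=\exp(\Theta(n^2\tau^4q^2))=\exp(\Theta(b^4))$. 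Hence $\chi^2\leq g(b)$ with $g(b)\to0$ as $b\to0$, and the bound $\mathrm{TV}\leq\tfrac12\sqrt{\chi^2}$ shows that any $\alpha$-level test has worst-case power over $\mathbf{\Theta}_{n1}(b)$ at most $\alpha+\tfrac12\sqrt{g(b)}<\beta$ once $b$ is small, which is exactly (c).

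\textbf{Main obstacle.} The crux for the upper bound is Aspect (A): rigorously proving that the Markov-chain copies induce i.i.d.\ uniform $\mathcal{C}$-residual directions even though Algorithm~\ref{alg: exchangeable} also updates the clique nodes and thereby perturbs the common regression target $[\mathbf{1}_n,\mathbf{X}_{\mathcal{H}}]$. The clique-star geometry is what makes the fibre of $\psi_{G_0}$ factorise, but the joint law of the forward-then-backward updates must be tracked carefully, and I expect this bookkeeping, together with the CLT/variance estimate for the sum-of-squares statistic in the regime where $q$ and $\nu$ are comparable, to be the hardest step. For the lower bound, the delicate points are bounding the higher-order terms ($k\geq2$) in the log-det expansion uniformly and verifying that the random perturbations remain in $\mathbf{\Theta}_{n1}(b)$ (positive definiteness and the condition-number constraint) with high probability under the sign prior.
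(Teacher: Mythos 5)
Your proposal is correct in substance, and for parts (a)--(b) it follows essentially the same route as the paper: reduce via the clique-star geometry to the statement that the copies' $\mc{C}$-residual directions are i.i.d.\ uniform on the sphere while the observed residual directions are normalized Gaussians with correlation $\mathbf{R}(\mathbf{\Omega}_{\mc{C}}^{-1})$ (the paper's Lemma on the joint distribution of the $\mathbf{W}$-ratios), then separate the two with a deterministic threshold using a CLT and moment bounds for the sum-of-squared-correlations statistic (the paper imports these from Zheng et al.\ 2019, obtaining $\mu_n \geq q(q-1)/n + (1-4/n)\|\mathbf{R}_n-\mathbf{I}_q\|_F^2$ and $\sigma_n^2 \leq b_0^4[(2q/n)^2 + 8n^{-1}\|\mathbf{R}_n-\mathbf{I}_q\|_F^2]$), and finally control the count of exceeding copies by a Bernoulli concentration inequality under $M\geq\max(2\alpha^{-1},\log(2\epsilon^{-1}))$. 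One remark on your ``main obstacle'': the bookkeeping you fear for the clique updates is dissolved by Condition~\ref{cond: G0 clique-star} itself --- either $p\geq n$, so every $j\in\mc{H}$ has $|N_j|\geq n-1$ and Algorithm~\ref{alg: residual rotation} returns $\widetilde{\mathbf{X}}_j=\mathbf{X}_j$ unchanged, or $\mc{H}=\emptyset$; in both cases the regression target $[\mathbf{1}_n,\mathbf{X}_{\mc{H}}]$ is frozen and the factorisation you want is immediate, so the genuinely hard step is the CLT/variance analysis, not the Markov-chain tracking. Where you genuinely diverge from the paper is part (c): you propose a dense i.i.d.\ Rademacher perturbation $\tau\theta_{ij}$ of every off-diagonal entry of the $\mc{C}$-block of the precision matrix, whereas the paper uses the rank-one equicorrelation prior $[\mathbf{A}^{(\bs{v})}_{\mc{C},\mc{C}}]^{-1}=(1-\rho)\mathbf{I}_q+\rho\bs{v}\bs{v}^{\top}$ over $\bs{v}\in\{-1,1\}^q$ adapted from Cai and Ma (2013). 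Both priors attain the $\sqrt{q/n}$ Frobenius separation and both give $\chi^2$-affinity $1+O(b^4)$ at leading order; the trade-off is that the paper's rank-one structure has an explicit eigendecomposition, so positive definiteness, the condition-number bound, and the exact $\chi^2$ computation (a one-dimensional average over $\mathbf{1}^{\top}V/q$) come for free, while your dense prior requires an operator-norm bound $\|\mathbf{\Delta}_\theta\|\lesssim\tau\sqrt{q}\asymp b/\sqrt{n}$ to control the $k\geq 2$ terms of the $\log\det$ expansion and to verify membership in $\mathbf{\Theta}_{n1}(b)$ after inverting and diagonally normalising (your perturbation sits on $\mathbf{\Omega}_{\mc{C}}$ but $D$ is defined through $\mathbf{R}(\mathbf{\Omega}_{\mc{C}}^{-1})$, costing a benign $(1+o(1))$ in the separation constant). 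Your route is workable and arguably more portable to non-rank-one alternatives, but it buys nothing here that the paper's cleaner construction does not already deliver.
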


\textbf{Strong alternatives.} 
Consider the alternative  
$H_a: \mathbf{\Omega}\in \mathbf{\Theta}_{n2}(b), \quad \text{where } \mathbf{\Theta}_{n2}(b)=\{\mathbf{A}\in \mathbb{S}_+^p: D(\mathbf{A}_C, \infty) \ge b\sqrt{\log(q)/n}\}.$ 
In this case, the MRC statistic is suitable and can be expressed as $T_2(\mathbf X)= \max_{i,j\in \mc{C},i\neq j} \hat{\gamma}_{ij}^2$.

\begin{theorem}
    Suppose Condition~\ref{cond: G0 clique-star} holds and that $\log q/n \to 0$. Then, the followings hold: 
    \begin{enumerate}
        \item [(a)]
    For any $\epsilon>0$, if $n>\max(10/\alpha, 32\log(16/\eps),8/\eps)$ and 
$M>\max(2\alpha^{-1},\log(2\epsilon^{-1})$, then $
  \underset{\mathbf{\Omega}\in \mathbf{\Theta}_{n2}(16)} {\inf}\beta_{n}(P;\alpha, T_2)\geq 1-\epsilon$. 
        \item [(b)] If $\alpha_n$ and $M$ scale appropriately, then $\lim_{n\rightarrow\infty}\inf_{\mathbf{\Omega}\in \mathbf{\Theta}_{n2}(16)} \beta_{n}(P; \alpha_n , T_2) = 1$. 
        \item [(c)]Suppose $\log (q) / n \leq \kappa$ for some constant $\kappa<\infty$. 
Then, for any constant $b\in (0,  \min(1,\kappa^{-1}))$, the following holds:
$$
\limsup_{n \rightarrow \infty}~~ \left\{ \sup_{\phi} ~~ \inf_{\mathbf{\Omega} \in \mathbf{\Theta}_{n2}(b)} \mathbb{E}(\phi) \right\}\leq \alpha, 
$$
where $\sup_{\phi}$ is taken over all $\alpha$-level test for $H_0: P\in \mc{M}_{G_0}$. 

    \end{enumerate}

\end{theorem}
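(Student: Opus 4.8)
The plan is to reduce the max-type statistic to a family of residual sample correlations and to control them separately on the observed data (under $H_a$) and on the Monte Carlo copies. First I would record the geometric consequence of Condition~\ref{cond: G0 clique-star}: for each $i \in \mathcal{C}$ the neighbourhood is exactly the clique, $N_i = \mathcal{H}$, so $\hat\gamma_{ij}$ is the sample correlation between the least-squares residuals of $X_i$ and $X_j$ after regressing each on $[\mathbf{1}_n, \mathbf{X}_{\mathcal{H}}]$. At the population level these residuals are $X_{\mathcal{C}}$ conditioned on $X_{\mathcal{H}}$, whose covariance is $\mathbf{\Omega}_{\mathcal{C}}^{-1}$; hence the population residual correlation between $i,j$ equals the $(i,j)$ entry of $\mathbf{R}(\mathbf{\Omega}_{\mathcal{C}}^{-1})$, and $D(\mathbf{\Omega}_{\mathcal{C}}, \infty) = \max_{i\neq j \in \mathcal{C}} |\gamma_{ij}|$ is exactly the population version of $T_2$. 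Since $|\mathcal{H}| \leq n-3$, the residual space has dimension $\nu := n - |\mathcal{H}| - 1 \geq 2$, so every $\hat\gamma_{ij}$ is a genuine correlation of $\nu$-dimensional vectors.

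The crucial step, which I expect to be the main obstacle, is to describe the law of $T_2(\widetilde{\mathbf{X}}^{(m)})$ under $H_a$. Here I would use the clique-star characterization of the copy distribution (Aspect A): residual rotation on each column $i \in \mathcal{C}$ holds the fitted part on $[\mathbf{1}_n, \mathbf{X}_{\mathcal{H}}]$ fixed and replaces the residual by an independent, uniformly random vector of the same norm in the $\nu$-dimensional orthogonal complement, while the clique columns are held fixed because their degree exceeds the update threshold. Consequently, on a copy the residuals of the $\mathcal{C}$-columns are mutually independent isotropic vectors, so the copy correlations $\hat\gamma_{ij}(\widetilde{\mathbf{X}}^{(m)})$ are distributed exactly as sample correlations of independent Gaussians---the null law---irrespective of the true $\mathbf{\Omega}$. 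I would make this rigorous by invoking the explicit distributional lemma for clique-star copies and then bounding the upper tail: each $\hat\gamma_{ij}^2(\widetilde{\mathbf{X}}^{(m)})$ obeys a Beta$(1/2,(\nu-1)/2)$-type tail, so $\mathbb{P}(\hat\gamma_{ij}^2 > t) \le (1-t)^{(\nu-1)/2}$, and a union bound over the $M\binom{q}{2}$ copy-correlations gives $\max_{m}T_2(\widetilde{\mathbf{X}}^{(m)}) \le \tau_n$ with probability at least $1-\epsilon/2$, where $\tau_n \asymp \log(q)/n$ and the explicit constants produce the stated thresholds $n > 32\log(16/\epsilon)$ and $M > \max(2\alpha^{-1}, \log(2\epsilon^{-1}))$.

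On the observed data I would isolate the pair $(i^\ast, j^\ast)$ achieving $|\gamma_{i^\ast j^\ast}| = D(\mathbf{\Omega}_{\mathcal{C}},\infty) \ge 16\sqrt{\log(q)/n}$ and show that $\hat\gamma_{i^\ast j^\ast}$ concentrates around $\gamma_{i^\ast j^\ast}$: as a sample correlation from $\nu$ effective samples with small nonzero population value, a deviation bound gives $\hat\gamma_{i^\ast j^\ast}^2 \ge \tfrac14 \gamma_{i^\ast j^\ast}^2 \ge 64\log(q)/n > \tau_n$ with probability $\ge 1-\epsilon/2$ once $n > \max(10/\alpha, 8/\epsilon)$. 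On this event $T_2(\mathbf{X}) \ge \hat\gamma_{i^\ast j^\ast}^2 > \max_m T_2(\widetilde{\mathbf{X}}^{(m)})$, so the p-value equals $1/(M+1) \le \alpha$ and the test rejects; combining the two events yields $\beta_n \ge 1-\epsilon$ uniformly over $\mathbf{\Theta}_{n2}(16)$, proving (a). For (b) I would let $\alpha_n \to 0$ and $M_n \to \infty$ slowly: because the signal level $64\log q/n$ exceeds the copy threshold $\tau_n \asymp 4\log q/n$ by a constant factor, the fraction of copies below $T_2(\mathbf{X})$ still tends to one, so the p-value can be driven below any $\alpha_n$ that decays slower than the tail bounds, giving power $\to 1$.

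For the converse (c) I would use the Ingster--Le Cam mixture method. Fix the clique block trivially and define a prior $\pi$ on $\mathbf{\Theta}_{n2}(b)$ supported on precision matrices whose $\mathcal{C}$-block differs from diagonal in a single off-diagonal pair $(i,j)$, chosen uniformly among the $\binom{q}{2}$ pairs, with conditional correlation $\rho = b\sqrt{\log(q)/n}$; the constraint $b < \min(1,\kappa^{-1})$ guarantees $|\rho| < 1$, so each such matrix is positive definite and lies in the alternative. Writing $\bar P = \int P_{\mathbf{\Omega}}\,d\pi$ for the mixture and $P_0$ for the diagonal null, I would bound the $\chi^2$-divergence: by the Gaussian correlation second-moment identity only pairs sharing the same location contribute the large factor $(1-\rho^2)^{-n}$, giving $\chi^2(\bar P \Vert P_0) \lesssim \binom{q}{2}^{-1}\big[(1-\rho^2)^{-n}-1\big] \asymp q^{\,b^2-2}$, which tends to $0$ because $b < 1$. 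Hence $\mathrm{TV}(\bar P, P_0) \to 0$, and for any $\alpha$-level test $\phi$ we have $\inf_{\mathbf{\Omega}\in\mathbf{\Theta}_{n2}(b)} \mathbb{E}_{\mathbf{\Omega}}(\phi) \le \mathbb{E}_{\bar P}(\phi) \le \mathbb{E}_{P_0}(\phi) + \mathrm{TV}(\bar P, P_0) \le \alpha + o(1)$, which after taking $\sup_\phi$ and $\limsup_n$ yields the claim. The remaining delicacy is verifying positive-definiteness and membership of the prior support and checking that the off-diagonal cross terms in the $\chi^2$ sum stay negligible; these are precisely where the constant $\min(1,\kappa^{-1})$ enters.
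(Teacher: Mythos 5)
Your overall architecture matches the paper's: reduce to residual correlations in the $\nu$-dimensional orthogonal complement of $[\mathbf{1}_n,\mathbf{X}_{\mathcal H}]$, use the clique-star structure to show that the copies' normalized residuals are exactly null-distributed and mutually independent (the paper's lemma on the joint distribution of the $\mathbf{W}$-ratios), isolate the maximal pair $(i^\ast,j^\ast)$ on the observed data and show $\hat\gamma_{i^\ast j^\ast}^2$ concentrates near $\gamma_{i^\ast j^\ast}^2$, and prove (c) by the same single-spike mixture with $\rho=b\sqrt{\log q/n}$ and a $\chi^2$ computation in which only coinciding pairs contribute the factor $(1-\rho^2)^{-n}$. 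Parts (c) and the signal side of (a) are essentially the paper's argument.

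The genuine gap is in how you control the copies. You propose to show $\max_{m} T_2(\widetilde{\mathbf X}^{(m)})\le\tau_n$ with $\tau_n\asymp\log(q)/n$ via a union bound over all $M\binom{q}{2}$ copy-correlations. But the theorem must hold for \emph{every} $M$ exceeding the stated lower bound, and for a threshold $\tau_n$ whose per-copy exceedance probability $\theta=\mathbb P\bigl(T_2(\widetilde{\mathbf X}^{(m)})>\tau_n\bigr)$ is bounded away from zero (the paper calibrates $\theta\approx\alpha^2/16$), the probability that \emph{some} copy exceeds $\tau_n$ tends to one as $M\to\infty$; inflating $\tau_n$ to $\asymp\log(Mq^2)/n$ to rescue the union bound would in turn break the separation from the signal level $64\log q/n$ for large $M$. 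The rejection event does not require all copies to fall below $T_2(\mathbf X)$ --- only that fewer than $\alpha(M+1)$ of them tie or exceed it. The paper exploits exactly this: it bounds $\mathbb P\bigl(\frac{1}{M+1}[1+\sum_{m}\mathbf 1\{T_2(\widetilde{\mathbf U}^{(m)})\ge t_n\}]>\alpha\bigr)\le e^{-M}$ via a Kearns--Saul concentration inequality for the Binomial count of exceedances, a bound that improves rather than degrades with $M$. Replacing your max-of-copies step with this fraction-of-copies argument closes the gap; the rest of your outline goes through.
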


\textbf{Union Alternative.}
We finally consider the scenario where the true precision $\mathbf{\Omega}$ lies in a union of dense and strong alternative sets. Consider the alternative 
$H_a: \mathbf{\Omega}\in \mathbf{\Theta}_{n3}(b)=\mathbf{\Theta}_{n1}(b)\cup \mathbf{\Theta}_{n2}(b)$. 
What makes this interesting is that $\mathbf{\Omega}$ can deviate from the null either via many small entries or at least one large entry, but not both simultaneously. Accordingly, we define a combined statistic $T_3(\mathbf{X})=\max\left( \frac{n}{2q}[T_1(\mathbf{X})-q(q-1)/n],~nT_2(\mathbf{X})-4\log q+\log\log q \right)$.

\begin{theorem}
     
Suppose Condition~\ref{cond: G0 clique-star} holds and $q/n \to \gamma \in (0,\infty)$. Then, the followings hold:
    \begin{enumerate}
        \item [(a)]For any $\epsilon>0$, there exists a constant $b$ such that if $\mathbf{\Omega}\in \mathbf{\Theta}_{n3}(b)$, it holds $\liminf_{n} \beta_{n}(P;\alpha, T_3)\geq 1-\epsilon$, 
provided that $M\geq \max(2\alpha^{-1},\log(2\epsilon^{-1}))$. 

        \item [(b)]Let $0<\alpha<\beta<1$. Suppose $q / n \leq \kappa$ for some constant $\kappa<\infty$. Then, there exists a constant $b$ such that
$$
\limsup _{n \rightarrow \infty}~~\left\{ \sup_{\phi} \inf_{\mathbf{\Omega} \in \mathbf{\Theta}_{n3}(b)} \mathbb{E}(\phi) \right\}< \beta.
$$
where $\sup_{\phi}$ is taken over any $\alpha$-level test $\phi$ for $H_0: P\in \mc{M}_{G_0}$.

    \end{enumerate}

\end{theorem}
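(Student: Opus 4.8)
The plan is to handle the two parts asymmetrically: part~(b) is an immediate corollary of the dense lower bound already in hand, whereas part~(a) is where the real work lies and requires fusing the two single-pattern power arguments through a joint control of the reference statistics. For part~(b), note that $\mathbf{\Theta}_{n3}(b)=\mathbf{\Theta}_{n1}(b)\cup\mathbf{\Theta}_{n2}(b)\supseteq\mathbf{\Theta}_{n1}(b)$, and an infimum over a larger set can only be smaller, so for every $\alpha$-level test $\phi$ we have $\inf_{\mathbf{\Omega}\in\mathbf{\Theta}_{n3}(b)}\mathbb{E}(\phi)\le\inf_{\mathbf{\Omega}\in\mathbf{\Theta}_{n1}(b)}\mathbb{E}(\phi)$, hence $\sup_{\phi}\inf_{\mathbf{\Omega}\in\mathbf{\Theta}_{n3}(b)}\mathbb{E}(\phi)\le\sup_{\phi}\inf_{\mathbf{\Omega}\in\mathbf{\Theta}_{n1}(b)}\mathbb{E}(\phi)$. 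The hypothesis $q/n\le\kappa$ matches that of Theorem~\ref{thm:simple-dense-power}(c), so for the constant $b$ provided there the right-hand side has $\limsup$ strictly below $\beta$, and the same $b$ works for $\mathbf{\Theta}_{n3}$. In words, the dense sub-family by itself already makes the union hard, and no extra argument is needed.

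For part~(a) I would argue through a single deterministic threshold $\tau=\tau(\alpha)$. Write $A(\mathbf{Y})=\tfrac{n}{2q}[T_1(\mathbf{Y})-q(q-1)/n]$ and $B(\mathbf{Y})=nT_2(\mathbf{Y})-4\log q+\log\log q$, so that $T_3=\max(A,B)$. The event $\{\mathrm{pVal}\le\alpha\}$ contains the intersection of $\{T_3(\mathbf{X})>\tau\}$ with $\{\#\{m:T_3(\widetilde{\mathbf{X}}^{(m)})>\tau\}\le(M+1)\alpha-1\}$, because on this intersection every copy that beats $\mathbf{X}$ must itself exceed $\tau$, so at most $(M+1)\alpha-1$ copies can do so. The first task is to pick $\tau$ making the copies rarely exceed it. Since $T_3(\widetilde{\mathbf{X}}^{(m)})>\tau$ forces $A(\widetilde{\mathbf{X}}^{(m)})>\tau$ or $B(\widetilde{\mathbf{X}}^{(m)})>\tau$, a union bound gives $\mathbb{P}(T_3(\widetilde{\mathbf{X}}^{(m)})>\tau)\le\mathbb{P}(A(\widetilde{\mathbf{X}}^{(m)})>\tau)+\mathbb{P}(B(\widetilde{\mathbf{X}}^{(m)})>\tau)$. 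The two marginal upper-tail bounds are precisely what the proofs of the dense and strong theorems already supply: the standardised sum $A$ has a Gaussian-type tail, while the centred maximum $B$ has an extreme-value tail obtained from a union bound over pairs in $\mc{C}$; and both bounds reflect the \emph{null} behaviour of the copies regardless of whether $\mathbf{X}$ is drawn from $H_0$ or $H_a$, because residual rotation with $G_0$ randomises the off-diagonal structure of $\mathbf{\Omega}_{\mc{C}}$. Thus $\tau=\tau(\alpha)$ can be chosen independently of $b$ so that each copy exceeds it with probability below $\alpha/2$, and a Chernoff bound for the (conditionally i.i.d.) copy count then controls $\#\{m:T_3(\widetilde{\mathbf{X}}^{(m)})>\tau\}$ below $(M+1)\alpha-1$ with probability at least $1-\epsilon/2$ under $M\ge\max(2\alpha^{-1},\log(2\epsilon^{-1}))$.

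It then remains to force $T_3(\mathbf{X})>\tau$ with high probability, which I would do by splitting on which sub-family contains $\mathbf{\Omega}$. If $\mathbf{\Omega}\in\mathbf{\Theta}_{n1}(b)$, the concentration estimates behind Theorem~\ref{thm:simple-dense-power}(a) show $A(\mathbf{X})$ is pushed up by an amount growing in $b$, so for $b$ large (depending on $\epsilon,\alpha$) one has $A(\mathbf{X})>\tau$ with probability at least $1-\epsilon/2$; if instead $\mathbf{\Omega}\in\mathbf{\Theta}_{n2}(b)$, the strong-alternative estimates show $B(\mathbf{X})$ diverges, again exceeding $\tau$ for $b$ large. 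In either case $T_3(\mathbf{X})=\max(A(\mathbf{X}),B(\mathbf{X}))>\tau$, and choosing $b$ to satisfy both requirements simultaneously, a final union bound over the two bad events delivers $\liminf_n\beta_n(P;\alpha,T_3)\ge1-\epsilon$.

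The main obstacle is the joint control of the reference distribution in the second paragraph. Unlike in the single-statistic theorems, $T_3(\mathbf{X})$ may be large in only one component while a copy $T_3(\widetilde{\mathbf{X}}^{(m)})$ is large in the \emph{other}, so one cannot simply quote the marginal quantiles of $T_1$ or $T_2$. The two components moreover live on different scales — $A$ has $O(1)$ Gaussian fluctuations whereas $B$ sits near a location of order $\log q$ that has been subtracted off — and the delicate point is checking that, after the prescribed centring and scaling, both components have $O(1)$ upper tails at once, so that a single $b$-free threshold $\tau$ suffices. Using the union bound here deliberately sidesteps proving asymptotic independence of the sum- and max-type statistics, which would otherwise be the technical heart of the argument.
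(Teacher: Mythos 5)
Your part (b) is correct and is essentially the paper's own argument: since $\mathbf{\Theta}_{n3}(b)\supseteq\mathbf{\Theta}_{n1}(b)$, the worst-case power of any test over the union is at most its worst-case power over the dense subset, so the dense lower bound (Theorem~\ref{thm:simple-dense-power}(c), i.e.\ Theorem~\ref{thm: lower bound dense}) supplies the constant $b$; the paper states this as a corollary of the dense and strong lower bounds, and your observation that the dense one alone suffices is accurate. Your part (a) also follows the paper's architecture: a single constant threshold for $T_3=\max(A,B)$, a union bound over the two components for the copies' upper tail (which, as you note, deliberately avoids any joint analysis of the sum- and max-type statistics — the paper does exactly the same), the fact that the copies behave as under the null regardless of the alternative (the paper's Lemma~\ref{lemma: joint distribution W ratio}), and a case split over $\mathbf{\Theta}_{n1}(b)$ versus $\mathbf{\Theta}_{n2}(b)$ to push $T_3(\mathbf{X})$ above the threshold.

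There is, however, a genuine quantitative gap in your calibration of $\tau$. You ask only that each copy exceed $\tau$ with probability below $\alpha/2$ and then claim a Chernoff bound controls the count below $(M+1)\alpha-1$ with probability $1-\epsilon/2$ under $M\ge\max(2\alpha^{-1},\log(2\epsilon^{-1}))$. This fails when $M$ is at its permitted minimum $M\approx 2/\alpha$: the number of copies allowed to exceed $\tau$ is $\alpha(M+1)-1\approx 1$, while with per-copy exceedance probability $\theta=\alpha/2$ the count is Binomial$(M,\theta)$ with mean $\approx 1$. Concretely, for $\alpha=0.05$, $M=40$, $\theta=0.025$, the probability of two or more exceedances is about $0.26$ — a constant, not $\le\epsilon/2$. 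With only $O(\alpha^{-1})$ copies, the per-copy null exceedance must be $O(\alpha^2)$, not $O(\alpha)$. The paper handles this by calibrating each component at level $\alpha^2/16$ (taking $t_n=\max\{z,u_0\}$ with $z$ the $(\alpha^2/16)$-upper normal quantile and $u_0$ the corresponding extreme-value quantile), so the union-bounded per-copy probability is $\approx\alpha^2/8$, which satisfies the hypothesis $2(\theta+\sqrt{\theta})<\alpha$ of its concentration result (Lemma~\ref{lem: null tail bound}, via the Kearns--Saul inequality) and yields failure probability $\le e^{-M}\le\epsilon/2$. The fix to your argument is easy — replace the $\alpha/2$ quantiles by $\alpha^2/16$ quantiles — after which the rest goes through, including the strong-alternative case, where one must also check (as the paper does via Lemma~\ref{lemma: order of z and u_0}) that the guaranteed level $\tfrac{16}{n}\log(96q^2/\alpha^2)$ of $T_2(\mathbf{X})$ indeed places $B(\mathbf{X})$ above the constant threshold; this holds because $B$ then sits near $28\log q$ plus constants.
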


For each alternative, the theorem first shows that the MC-GoF test, with a suitable statistic, achieves high power when the signal is sufficiently large, and then provides lower bounds on the detection limit that match the required signal levels up to a constant. 
Hence, the MC-GoF test is both powerful and rate-optimal.

Figure~\ref{fig:phase_transition} summarises the feasibility of distinguishing the null from the alternative based on $D(\mathbf{\Omega}_{\mc{C}}, F)$ and $D(\mathbf{\Omega}_{\mc{C}}, \infty)$. It illustrates the regimes where the MC-GoF test achieves rate-optimal detection.

\begin{figure}[hbtp]
    \centering
     \caption{
   Goodness-of-fit detection boundaries and powerful tests. The four quadrants represent different signal regimes: top-left (dense-but-weak), top-right (dense-and-strong), bottom-right (strong-but-sparse), and bottom-left (weak-and-sparse).
    }
    \includegraphics[width=0.8\linewidth]{phase_transition.pdf}
    \label{fig:phase_transition}
\end{figure}

Our theoretical analysis focuses on the large-sample power with clique-star-shaped null graphs. 
To complement the theory, we conduct extensive simulation studies in Section~\ref{sec: Simulation} and Appendix~F to evaluate the finite-sample performance of the MC-GoF test under general null graphs. 
These simulations demonstrate that the MC-GoF test remains robust and powerful across diverse graph structures. This suggests that the theoretical insights obtained for clique-star-shaped null graphs apply more broadly.

\begin{remark}\label{rem:clique-star-shaped}
The reduction to clique-star-shaped null graphs is a necessary and strategic choice for developing a rigorous theory for the MC-GoF test.
This reduction makes the characterisation of the MC-GoF test analytically tractable, which allows us to derive results on the power properties of the test that would otherwise remain inaccessible.
As the analysis of alternative hypotheses is clear by focusing on these null graphs, we established the rate-optimal results for two distinct and meaningful signal patterns that, to the best of our knowledge, have not previously been examined in the literature on goodness-of-fit testing for GGMs.
Furthermore, deriving these results for clique-star-shaped null graphs already demands substantial theoretical effort, which suggests that extending the theory to general null graph structures would be even more challenging. 
Nonetheless, the insights and benchmarks gained here provide a foundation for future extensions. 
\end{remark}
\begin{remark}\label{rem:power-metric}

We provide some intuitions about the metric $D(\mathbf{\Omega}_{\mc{C}}, s)$, which is the key quantity for characterising the power of the MC-GoF test. 
Suppose $X=(X_1,\ldots, X_p)^\top \sim \mathbf{N}_p(\mathbf{0}, \mathbf{\Omega}^{-1})$ and $\mc{H}=[p]\setminus\mc{C}$. We have
\begin{align*}
D(\mathbf{\Omega}_{\mc{C}}, \infty)&=\max_{i,j\in \mc{C}} ~~ \left|\operatorname{cor}\left(X_i, X_j\mid X_{\mc{H}}\right)\right|,\\
D(\mathbf{\Omega}_{\mc{C}}, F)^2 & \geq  \frac{1}{\lambda} \sum_{i,j\in \mc{C}}\mathbf{\tilde{\Omega}}_{i,j}^2, 
\end{align*}
where $\lambda$ is the condition number of $\mathbf{\Omega}_{\mc{C}}$ and 
$\mathbf{\tilde{\Omega}}_{\mc{C}}=\mathbf{R}(\mathbf{\Omega}_C)$. 
These suggest that $D(\mathbf{\Omega}_{\mc{C}}, \infty)$ measures the largest magnitude of the partial correlations that violate the null hypothesis, and the square of $D(\mathbf{\Omega}_{\mc{C}}, F)$ serves as a proxy for the sum of squared entries of the normalised $\mathbf{\Omega}_{\mc{C}}$. 
To illustrate the difference between $D(\mathbf{\Omega}_{\mc{C}}, \infty)$ and $D(\mathbf{\Omega}_{\mc{C}}, F)$, we provide two simple examples. 
\paragraph{Example (Dense but weak).}
Let $b$ be a positive constant and $\rho=b/\sqrt{n(q-1)}$. Define $\mathbf{J}=(1-\rho)\mathbf{I}_q+\rho \mathbf{1}_q\mathbf{1}_q^\top$ where $\mathbf{1}_q$ is the $q$-vector with all entries being $1$. 
Suppose $\mathbf{\Omega}_{\mc{C}}=\mathbf{J}^{-1}$, then we have $D(\mathbf{\Omega}_{\mc{C}},F)=b\sqrt{q/n}$ while $D(\mathbf{\Omega}_{\mc{C}},\infty)=\rho=b/\sqrt{n(q-1)}\ll \sqrt{\log(q)/n}$. 
According to Figure~\ref{fig:phase_transition}, 
the precision matrix in this example lies in the regime where the signals are dense but weak (the bottom right region). 

\paragraph{Example (Strong but sparse).}
Let $\rho=b\sqrt{\log(q)/n}$ and define $\mathbf{J}=\mathbf{I}_q+\rho \mathbf{E}$ where $\mathbf{E}=1_{\{i=1,j=2\}}+1_{\{i=2,j=1\}}$. 
Suppose $\mathbf{\Omega}_{\mc{C}}=\mathbf{J}^{-1}$, then we have
$D(\mathbf{\Omega}_{\mc{C}},\infty)=b\sqrt{\log(q)/n}$ while 
$D(\mathbf{\Omega}_{\mc{C}},F)=b\sqrt{2\log(q)/n}\ll \sqrt{q/n}$. 
According to Figure~\ref{fig:phase_transition}, 
the precision matrix in this example lies in the regime where the signals are strong but sparse  (the upper left region). 

\end{remark}


\section{Simulation Studies}\label{sec: Simulation}
We conduct numerical experiments to evaluate the finite-sample performance of the proposed MC-GoF test compared to two baseline methods: the \nameVV procedure by \citet{verzelen_goodness--fit_2010} and the Bonferroni-adjusted approach (\nameDP) from \citet{drton_multiple_2007}. 
We implement the MC-GoF test with four test statistic functions: $\text{PRC}$, $\text{ERC}$, $\text{F}_{\Sigma}$, and GLR-$\ell_1$ defined in Appendix~B.4. 
To conserve space, we present an overview here and refer readers to Appendix~D for full simulation details, along with an illustration of valid Type I error control of all methods and an examination of the $M^1P_2$ procedure. 

We first consider three representative classes of graphs: band graphs, hub graphs, and \nameERG~random graphs. 
In each setting, we define 
a true distribution with precision matrix $\boldsymbol{\Omega}$ and a null graph $G_0$ with different configurations of population parameters. 
To clarify the impact of signals, we focus on scenarios where $G_0$ is a subgraph of the true graph so that any violation of the null reveals itself clearly through additional edges with varying signal strengths. 
We measure these signals by parameters $s$ or $\xi$ that control the Kullback–Leibler divergence between the true distribution and the null model.  Each scenario isolates specific aspects of graph structure, signal strength, dimension, and sample size.

Simulation results show that the MC-GoF test consistently outperforms or matches the power of \nameVV and \nameDP. 
Some more observations are 

(1) The MC-GoF test with F$_{\Sigma}$ is consistently among the top performers, making it our recommended default.  

(2) The power of the MC-GoF test tends to increase as  the dimension $p$ increases. In contrast, the baseline methods \nameVV and \nameDP suffer from a loss of power in the high-dimensional cases. This indicates that our proposed method is more applicable to high-dimensional problems than existing methods.

(3) When many edges are absent from the null but each is relatively weak, baseline methods struggle to detect deviations from the null, while the MC-GoF test achieves high power. 
This observation matches our theoretical results in Theorem~\ref{thm:simple-dense-power}, which shows the rate-optimality of the MC-GoF test for detecting dense-but-weak signals. 

(4) Additional simulations in Appendix~F.3 demonstrate that our tests are highly competitive with $M^1P_1$ for detecting strong-but-sparse signals.

We also extend the experiments to more complex graph structures. 
Appendix~F.2 presents experiments with tree, spatial, small-world, and scale-free graphs, which may be more commonly seen in real-world applications. 
Appendix~F.1 includes an example with moderately dense null graphs, where the minimum degree is not small relative to the sample size. 
These additional experiments consider cases where the null graph is not a subgraph of the true graph and can be relatively dense. 
Across all these challenging scenarios, MC-GoF tests, especially with $F_{\Sigma}$, significantly outperform the baseline methods.
These findings reinforce the superior performance and broad applicability of our proposed method.


\section{Real-World Examples}
\label{sec: Application}
This section empirically illustrates the effectiveness of our proposed methods through case studies in environmental science and finance. 

\subsection{Average Daily Precipitation in the United States}

Modelling daily precipitation is essential in various fields, such as agriculture \citep{naylor2007assessing}, hydrology \citep{hamududu2017assessing}, and climate\citep{cloke2009ensemble}. 
Here, we intend to model the average daily precipitation in each year across the contiguous United States. 
We collected data from 48 states from 1979 to 2011, excluding several states due to their geographical isolation, from the North American Land Data Assimilation System. 
This dataset has $n=33$ observations, and the dimension of variables is $p=48$. 
Appendix~G.1 reports details of data preprocessing as well as data analyses, for which we provide a summary below.

The null hypothesis states that the precipitation across states can be modelled by a GGM with $G_0$, the graph representing geographic adjacency. The maximum and median degrees of $G_0$ are 8 and 4 respectively. The testing procedures are identical to those in Section~\ref{sec: Simulation}, namely, the MC-GoF tests with statistics F$_{\Sigma}$, GLR-${\ell_1}$, PRC, and ERC, as well as the benchmark methods \nameVV and \nameDP. 

We first conducted a simulation study to confirm that all methods control the Type-I error, and then reported the p-values of the GoF test applied to real data. The MC-GoF test using the F$_{\Sigma}$ and the GLR-${\ell_1}$ statistic reject the null hypothesis with p-values smaller than 0.05, which suggests that a GGM defined by the geographic adjacency of the states falls short of modelling the precipitation. 
A plausible explanation is that the data are strongly associated with precipitation in neighbouring regions outside the U.S. (Canada, Mexico, and the oceans), and the absence of the precipitation data in these regions breaks the conditional independence between non-adjacent states. 
Besides, the other tests fail to reject the null hypothesis.

We further examine how sample size affects the GoF tests by analysing random subsamples. 
In summary, the MC-GoF test with the GLR-${\ell_1}$ statistic remains robust performance across varying subsample sizes, while the F$_{\Sigma}$ statistic rejects the null more often as the sample size increases. 
Other methods show little improvement as more data are included.
This example demonstrates the effectiveness of our MC-GoF test in detecting the deficiency of GGMs for modelling high-dimensional data.

\subsection{Modelling stock returns}

In stock market analysis, building a graphical model for stock returns can uncover dependencies among stocks and identify clusters or sector-specific interactions. 
We apply our MC-GoF test to assess the GGM for modelling 103 large-cap U.S. stocks spanning 11 industry sectors. 
The data consists of 92 weekly average returns from September 1, 2020, to June 30, 2022, obtained from Yahoo Finance. 
The dataset has $n = 92$ observations, and the dimension is $p=103$. 
Appendix~G.2 reports technical details of data preprocessing and results of statistical analyses, for which we provide a summary below.

To estimate a suitable graph $G$, we use a modified GLasso that leaves within-sector edges unpenalised. This approach incorporates industry knowledge to produce a super-graph $\widehat{G}$ that may include extra edges but still yields an adequate GGM. The resulting graph has a maximum node degree of 30, a median degree of 19, and about 18.4\% of node pairs connected.

We validate the adequacy of the GGM with $\widehat{G}$ using the MC-GoF test with statistic $\mathrm{F}_{\Sigma}$ (the overall winner in the simulation studies), as well as two benchmark methods ($M^1P_1$ and Bonf). 
All the resulting p-values exceed 0.3, which confirms that the GGM with $\widehat{G}$ is sufficient for modelling the stock data. 
This validation supports the suitability of the estimated graph for downstream tasks such as model-X inference for portfolio analysis or risk assessment.

\section{Discussion}
\label{sec: Discussion}
In this work, we introduced the MC-GoF test for Gaussian graphical models (GGMs), which provides finite-sample Type-I error control in any dimension. 
This test is a randomisation test based on a novel algorithm for generating exchangeable copies alongside the observation under the GGM. 
Additionally, the test is flexible in choosing the test statistic function to accommodate different alternatives and to incorporate prior information. 
These advantages make the MC-GoF test especially effective in high-dimensional inference.

We developed a theoretical framework to analyse the power of the MC-GoF test in high-dimensional settings. 
We considered the \textit{dense-but-weak} alternative and the \textit{strong-but-sparse} alternative separately, where the MC-GoF test is proved to be not only powerful but also rate-optimal. 
Numerical studies further demonstrate the superior performance of the MC-GoF test compared to existing GoF tests, especially  for dense-but-weak signals.

Our work raises several future research directions. 
First, while we focused on GGMs, similar techniques may extend to other first-order Markov random fields with suitable sufficient statistics. For instance, discrete undirected graphical models might be handled by conditional sampling schemes inspired by our approach. 
Future research may extend these ideas to exponential family graphical models \citep{wainwright2008graphical}.
Second, while Bayesian inference offers a general framework for embedding prior information into test statistics in our method, as illustrated in Appendix~B.4.4, 
efficient implementations remain open challenges. 
Third, while our theory focuses on clique-star-shaped null graphs, extending these results to more general null graphs would be a natural next step. Relaxing these structural constraints introduces complex Markov chain dynamics and intricate dependency patterns, which require the development of new theoretical tools. 
Finally, while the MC-GoF test can validate assumptions of downstream tasks such as model-X variable selection, it would be valuable to develop a unified framework that integrates the MC-GoF test with model-X inference.

\section*{Acknowledgements}
D. Huang was partially supported by NUS Start-up Grant A-0004824-00-0 and Singapore Ministry of Education AcRF Tier 1 Grant A-8000466-00-00.


\bibliography{ref}
\bibliographystyle{unsrtnat}

\bigskip

\appendix

\noindent\textbf{\LARGE Appendix}
\smallskip
\newline

\section*{Notation}


We use the following notations in the main text and this supplementary material. 
 For any set $U$, let $|U|$ be its cardinality. 
 Define $[p]:=\{1,\dots,p\}$ for any positive integer $p$. 
 For $S\subseteq [p]$ and a vector $X$, let $X_S$ be the sub-vector $(X_{j})_{j \in S}$;
 and for a matrix $\mathbf{X}$, let $\mathbf{X}_j$ be the $j$th column and $\mathbf{X}_S$ the submatrix formed by the columns corresponding to $S$. 
 Also, let $X_{-S}:=(X_j)_{j\notin S}$ and $\mathbf{X}_{-S}:=\mathbf{X}_{S^c}$. 
 For a square matrix $\mathbf{A}$, $\text{diag}(\mathbf{A})$ is the diagonal matrix with the same diagonal entries as $\mathbf{A}$.

We write $\mathbf{A}\succ\mathbf{0}$ if $\mathbf{A}$ is positive definite. 
Let $\textbf{N}_p(\bmu,\mathbf{\Sigma})$ denote a $p$-dimensional normal distribution with mean $\bmu$ and covariance $\mathbf{\Sigma}$, and define $\mathbf{\Omega}:=\mathbf{\Sigma}^{-1}$ with entries $\omega_{ij}$.
We consider undirected graphs without loops or multiple edges. 
For a graph $G=(\mc{V}, \mc{E})$ with node set $\mc{V}$ and edge set $\mc{E}$, denote the neighborhood of node $i$ by $N_i=\{j\in  \mc{V}: j\neq i, ~~ (i,j)\in \mc{E} \}$. 
We use the terms `node' $(j \in[p])$ and `variable' $\left(X_j\right)$ interchangeably. We write $\mc{M}_{G}$ for the GGM w.r.t. $G$.

\section{Connections with existing works}\label{sec: existing work}

\subsection{Comparison between GoF testing and graph selection} \label{rem: graph selection}

In the past two decades, high-dimensional GGM estimation has gained a significant amount of attention. 
The current primary methodologies can be categorized into two types. 
The first involves utilizing penalized likelihood with an $\ell_1$ penalty on entries of the precision matrix for variable selection \citep{yuan_model_2007,dAspremont2008,friedman_sparse_2008}. 
In particular, \citet{friedman_sparse_2008} propose an efficient algorithm called Graphical Lasso (GLasso) for solving the $\ell_1$ penalized likelihood estimation. Rates of convergence and model selection consistency have then been developed \citep{rothman_sparse_2008,lam2009,ravikumar_high-dimensional_2011}. 
In addition to the $\ell_1$ penalized likelihood, several related methods have also been introduced; for example, the neighborhood selection \citep{meinshausen_high-dimensional_2006}, graphical Dantzig selector \citep{yuan2010high}, and CLIME \citep{cai_constrained_2011}. 
The second branch of research in GGM estimation focuses on edge selection through multiple-testing procedures to control either the family-wise error rate or the false discovery rate; see, for example, \citet{schafer_empirical_2005,wille_low-order_2006,drton_model_2004,drton_multiple_2007,drton_sinful_2008,liu_gaussian_2013,li2021ggm}.

While the estimation of precision matrices has been well-explored, research on goodness-of-fit hypothesis testing within high-dimensional GGMs is relatively less developed. 
Nonetheless, several advancements have been made in this area. 
For instance, edge selection methods with family-wise error control, such as those in \citet{drton_multiple_2007}, can be adapted for GoF testing. 
\citet{verzelen_tests_2009} introduced a general GoF test by connecting the local Markov property and conditional regression of a Gaussian random variable.
However, this test is unable to achieve high power when the signals are dense but weak.  
\citet{bodnar_exact_2023} proposed finite-sample tests for particular precision matrix structures (e.g., block-diagonal, AR(1), factor structures), but their approach requires that $n > p$. 
These above tests focus on specific test statistics, while our method is flexible to choose any test statistic.

There are fundamental distinctions between GoF testing and graph selection. GoF testing aims to determine whether a specific candidate graph includes all true edges, regardless of the presence of additional false edges. In contrast, graph selection aims to identify as many true edges as possible while minimizing the inclusion of false ones. These differences extend to the definition of Type-I error: in GoF testing, it corresponds to rejecting the null hypothesis when the candidate graph already contains all true edges, whereas in graph selection, it corresponds to the inclusion of any false edge in the estimated graph. Consequently, the methodologies and theoretical frameworks for GoF testing are very different from those for graph selection. These distinctions, along with their practical implications, are summarized in Table~\ref{tab:comparison} in the main text.

\subsection{Connection between GoF testing and model-X inference} \label{rem: modelX}



GoF testing is important in the \textit{model-X framework} for high-dimensional inference.
Unlike traditional methods that model the conditional distribution of a response $Y$ given the predictor $X$, the model-X framework shifts the focus to modeling the distribution $F_X$ of $X$. 
For methods like the model-X knockoff filter \citep{candes_panning_2017}, which identifies important predictor variables for $Y$ while controlling the false discovery rate, accurate modeling of $F_X$ is essential. 
In particular, when $F_X$ is modeled by a GGM and the underlying graph is a subgraph of a known graph $G$, the model-X knockoff filter can be efficiently implemented \citep{huang_relaxing_2020}. 
The validity of this approach relies on the assumption that $F_X\in \mc{M}_{G}$, which can be assessed statistically through GoF testing. 

\subsection{Comparison between GoF and faithfulness}\label{rem: faithfulness}

In GGMs, the population $P$ is said to be \textit{faithful} to a graph if $\omega_{ij}\neq 0\Leftrightarrow $ $i$ and $j$ are connected. 
It is associated with the following model
$$
\overline{\mc{M}}_{G}=\left\{ \mathbf{N}_p(\bmu, \bOmega^{-1}) : \bmu \in \mathbb{R}^p, \bOmega\succ \mathbf{0}, \bOmega_{i,j}=0 \text{ if and only} i\neq j \text{ and } (i,j) \notin \mc{E} \right\}. 
$$
Compared to \eqref{eq: model MG}, we have $\overline{\mc{M}}_{G}\subsetneq \mc{M}_{G}$. 
Therefore, 
faithfulness testing is different from GoF testing. 
Recently, \citet{le_testing_2022} introduced asymptotic tests for testing the faithfulness to a prespecified graph, which is more restricted than the hypothesis for GoF testing in \eqref{eq: GoF null}. 
In addition, their theory requires the sparsity assumption that the maximum degree is $o(\sqrt{n/\log p})$, and thus, their tests do not apply in the settings of this paper.

We adopt the hypothesis \eqref{eq: GoF null} as the null hypothesis for GoF testing in GGMs because our focus lies in validating the adequacy of the GGM $\mc{M}_{G_0}$. 
Once validated, the model $\mc{M}_{G_0}$ can be used in subsequent analyses, such as implementing model-X knockoff filter to identify important predictor variables, even if $G_0$ may include false edges.

\subsection{Comparison between GoF testing and combinatorial testing} \label{rem: combtest}

\citet{neykov_combinatorial_2018} studied testing for some characteristics of the graph structure such as the connectivity, the presence of a cycle, and the maximum degree. 
These tests target specific global characteristics of the underlying graph, rather than assessing the goodness-of-fit for a prespecified graph, where the mapping between node and variable is fixed. 
In principle one could formulate a  structural property as a composite null hypothesis that includes a large set of graphs, where these graphs are invariant to changes in the mapping between nodes and variables. 
However, since this composite null hypothesis could contain an exponentially growing number of graphs, how to extend a GoF test to handle such a hypothesis remains an open challenge. 

\subsection{Review of Monte Carlo tests} \label{rem:MCtest}

Our GoF test belongs to the class of Monte Carlo tests. 
For problems where sampling from the null distribution is impracticable, 
\citet{besag_generalized_1989} proposed two methods to generate exchangeable copies of the data using Markov chains. 
A related line of research lies in the use of approximate sufficiency for asymptotic GoF testing, which could be expedient if finding an appropriate sufficient statistic is hard. 
\citet{barber_testing_2021} considered sampling data conditional on an asymptotically efficient estimator and derived bounds on the finite-sample Type-I error inflation of their method. 
Recently, \citet{zhu2023approximate} developed an extension to constrained and penalized maximum likelihood estimation.
These methods are approximative in nature and require additional assumptions about the distribution, the dimension of variables, and the sample size, and thus are not directly comparable to our method.

\subsection{Review of co-sufficient sampling}\label{rem:css literature}


Our GoF test for GGMs is closely related to the framework of co-sufficient sampling.  
Under this framework, GoF testing is achieved by sampling copies from the conditional distribution of the data given a sufficient statistic. This idea can be traced back at least to \citet{bartlett_properties_1937}, and the general problem of Monte Carlo computation of the conditional expectation given a sufficient statistic has been explored by various researchers \citep{agresti1992survey,engen_stochastic_1997,lindqvist_counterexample_2003,lindqvist_monte_2005,lindqvist2007conditional}. 
\citet{diaconis1998algebraic} proposed Markov chain algorithms for sampling from discrete exponential families conditional on a sufficient statistic. 
\citet{lockhart_use_2007} and \citet{lockhart2009exact} considered using the Gibbs sampler for co-sufficient sampling for the Gamma distribution and the von Mises distribution. \citet{gracia2005transformations} and \citet{o2006conditional} studied the conditional sampling for inverse-Gaussian distributions, and 
\citet{santos2019metropolis} proposed a Metropolis-Hasting algorithm for exponential families with doubly transitive sufficient statistics.
To the best of our knowledge, existing literature has not considered co-sufficient sampling for GGMs with dimensions of variables larger than sample sizes, as considered by the current paper. 
While \cite{huang_relaxing_2020} studied conditioning on sufficient statistics to construct high-dimensional model-X knockoff variables, their construction is limited to knockoff generation and does not extend to co-sufficient sampling. We provide a further discussion on the comparison between knockoff and exchangeable samples in Appendix~\ref{rem:knockoff}.

\subsection{Comparison between knockoffs and exchangeable copies} \label{rem:knockoff}

Algorithm~\ref{alg: exchangeable} generates copies $\widetilde{\mathbf{X}}^{(m)}$ of $\mathbf{X}$ such that $\{\mathbf{X}, \widetilde{\mathbf{X}}^{(1)}, \ldots, \widetilde{\mathbf{X}}^{(M)}\}$ are exchangeable, that is, 
the joint distribution is invariant to permuting any subset of these datasets.
Sampling these copies $\widetilde{\mathbf{X}}^{(m)}$ may appear similar to the knockoff construction \citep{barber2015controlling,candes_panning_2017}, but there is a crucial difference. 
We use the model-X knockoff framework proposed in \cite{candes_panning_2017} as an example, and the discussion below easily extends to the fixed-X knockoff framework used in \cite{barber2015controlling} and \cite{li2021ggm}.

In the model-X knockoff framework, a single dataset $\tilde{\mathbf{X}}$ is constructed to replicate the dependencies in the original dataset $\mathbf{X}$. 
The key property of the knockoff construction, known as the swap invariance property, requires that for each variable $j$, swapping the $j$-th column of $\mathbf{X}$ with the corresponding column of $\tilde{\mathbf{X}}$ leaves their joint distribution unchanged. 
Formally, for any $j \in \{1,2,\ldots,p\}$, we have the distributional equivalence  
$$[\mathbf{X}, \tilde{\mathbf{X}}]_{\text{swap}(j)} \stackrel{d}{=} [\mathbf{X}, \tilde{\mathbf{X}}].
$$  
The swap invariance property is the foundation of the FDR control of the knockoff filter in variable selection. 

To connect with the focus of the current paper, the swap invariance property also guarantees that the original and knockoff datasets, as a pair, are exchangeable. 
However, when multiple knockoffs $\{\widetilde{\mathbf{X}}^{(1)}, \ldots, \widetilde{\mathbf{X}}^{(M)}\}$ are generated, the pairwise exchangeability does not automatically extend to the joint exchangeability. 
Since each knockoff dataset is generated by conditioning on the observed data $\mathbf{X}$, $\mathbf{X}$ plays a distinguished role that the knockoffs do not share. 
This distinction is critical: while the swap invariance property ensures pairwise exchangeability between  $\mathbf{X}$ and $\widetilde{\mathbf{X}}^{(m)}$, it does not guarantee joint exchangeability across multiple knockoffs and $\mathbf{X}$ together.

In fact, exchanging the position of $\mathbf{X}$ with any of the knockoffs $\widetilde{\mathbf{X}}^{(m)}$ will change the joint distribution.
For instance, 
$$
[\mb{X},\widetilde{\mathbf{X}}^{(1)}, \ldots, \widetilde{\mathbf{X}}^{(M)}] \text{ and } 
[\widetilde{\mathbf{X}}^{(1)},\mb{X}, \ldots, \widetilde{\mathbf{X}}^{(M)}]
$$
do not follow the same distribution in general.

\section{Details on Algorithms}\label{app: full algorithm}

\subsection{Special sampling with complete graphs}\label{rem:complete-graph}
    There is a special case where sampling from the uniform distribution of $\mc{X}_{\Psi}$ is simple: when $G$ is a complete graph, which means that every node is connected to all other nodes. In this case, $\mc{E}=\{(i,j):\forall i, j \in [p], i\neq j\}$, and the minimal sufficient statistic $\psi_{G}$ consists of the sample mean and the sample Gram matrix. 
    The conditional distribution of $\mathbf{X}$ is the uniform distribution on all $n\times p$ matrices with the same sample mean and sample Gram matrix as the observed ones. 
    We can sample from this conditional distribution using the construction  $\widetilde{\mathbf{X}}=\mathbf{1}\mathbf{1}^\top \mathbf{X}+\mathbf{\Gamma}\mathbf{H}\mathbf{\Gamma}^\top\mathbf{X}$, where $\mathbf{\Gamma}$ is an $n\times (n-1)$ orthonormal matrix that is orthogonal to the vector $\mathbf{1}$ and $\mathbf{H}$ is a uniform random $(n-1)\times (n-1)$ orthonormal matrix.
    The astute reader may notice a connection between this simple generation and the fixed-X knockoff construction in \cite{barber2015controlling}. 
    The difference is that the generated copy $\widetilde{\mathbf{X}}$ is not a knockoff matrix and the matrix $\mathbf{H}$ needs to be random. 
 The case where $G$ is the complete graph only serves for pedagogical purposes,  since the model $\mc{M}_{G}$ includes all $p$-dimensional multivariate normal distributions and is not interesting in graphical modeling.

\subsection{Derivation of Residual Rotation}\label{rem:rr}

The Lebesgue density of the joint distribution of $\mathbf{X}$ is given by
\begin{eqnarray}\label{eq: pdf GGM}
f_{\bmu, \bOmega}(\mathbf{x})=(2 \pi)^{- np / 2} \operatorname{det}(\bOmega)^{n/2} \exp \left(-\frac{1}{2}\sum_{i=1}^{n}(\mathbf{x}_{i}-\bs{\mu})\tp \bOmega(\mathbf{x}_{i}-\bs{\mu})\right), ~~\forall \mx\in \mathbb{R}^{n\times p}, 
\end{eqnarray} 
where $\mathbf{x}_{i}\tp $ is the $i$th row of $\mathbf{x}$.

Consider updating one column $\mathbf{X}_{i}$ according to its conditional distribution given all other columns $\mathbf{X}_{-i}$ as well as the sufficient statistic $\Psi$. 
We first study the conditional distribution of $\mathbf{X}_{i}$ given $\mathbf{X}_{-i}$ and then the conditional distribution given $(\mathbf{X}_{-i},\Psi)$. 

By the properties of the multivariate normal distribution\citep{anderson_introduction_2007}, 
the conditional distribution of $\mathbf{X}_{i}$ given $\mathbf{X}_{-i}$ is $\mathbf{N}_{n}(\mu_{i} \bs{1}_n + \left(\mathbf{X}_{-i} - \bs{1}_{n} \bs{\mu}_{-i}^\top \right) \bOmega_{-i, i}  \bOmega_{i,i}^{-1}, ~~\bOmega_{i,i}^{-1} \mathbf{I}_{n} )$, where $\mu_i$ is the $i$-th entry of $\bs{\mu}$. Under the model $\mc{M}_{G}$, if $j$ is not a neighbor of $i$, then $\bOmega_{j, i}=0$. Therefore, $\mathbf{X}_{-i} \bOmega_{-i, i}\bOmega_{i,i}^{-1}=\mathbf{X}_{N_i} \bs{\alpha} $ for some $|N_i|$-dimensional parameter $\bs{\alpha}$ that depends on $\bOmega$.

For this conditional distribution, a sufficient statistic is given by $$\psi_{G}^{i}(\mathbf{X}):=\left[ \mathbf{X}_{i}^{\top} \bs{1}_{n}, \mathbf{X}_{i}^{\top} \mathbf{X}_{i}, \mathbf{X}_{i}^{\top} \mathbf{X}_{N_i}\right].$$ 
Denote by $\Psi^{i}$ the observed value of this statistic.  
Recall the definition of $\psi_{G}$ that
\begin{eqnarray}\label{eq: suff stat}
    \psi_{G}(\mx):=\left( \sum_{i=1}^{n}\mx_{i}, (\mx^{T}\mx)_{i,j}: i=j \text{ or } (i,j)\in \mc{E} \right),  ~~\forall  \mx \in \mathbb{R}^{n\times p},
\end{eqnarray}
and note that $\psi_{G}^{i}(\mathbf{x})$ is contained by $\psi_{G}(\mathbf{x})$. Therefore, according to the Lebesgue density of the joint distribution in 
\eqref{eq: pdf GGM}, the conditional distribution of $\mathbf{X}_{i}$ given $(\mathbf{X}_{-i}, \Psi)$ is the uniform distribution on 
\begin{equation}\label{eq: conditional support}
\mc{X}_{\Psi}^{i}: = \{ 
\mx_{i}\in \mathbb{R}^{n}: \psi_{G,i}(\mx)=\Psi^{i} 
\}. 
\end{equation}

Algorithm~\ref{alg: residual rotation} in the main text provides a simple way to sample from this uniform distribution. 
The idea of this algorithm is to fix the projection of $\mathbf{X}_{i}$ onto the column space of $\left[\bs{1}_{n}, \mathbf{X}_{N_i}\right]$ and to rotate the residual uniformly and independently. 
We name this procedure the \textit{residual rotation}.
Given $(\mathbf{X}_{-i}, \Psi)$, the output $\widetilde{\mathbf{X}}_{i}$ is a conditionally i.i.d. copy of $\mathbf{X}_{i}$. 
As long as $n\geq |N_i|+3$, the output $\widetilde{\mathbf{X}}_{i}$ is different from $\mathbf{X}_{i}$ almost surely.  

\subsection{Details on Algorithm~\ref{alg: exchangeable}}\label{rem:alg2}

To generate exchangeable copies of $\mathbf{X}$, we introduce a method that updates the individual columns one at a time to generate Markov chains. 
This method combines the residual rotation (Algorithm~\ref{alg: residual rotation}) in Section~\ref{sec: residual rotation} of the main paper and the parallel method introduced in \citet{besag_generalized_1989}.

When Algorithm~\ref{alg: residual rotation} is executed repeatedly with index running from $1$ to $p$, we obtain a Markov chain $\mc{C}_1=(\mathbf{X}, \overrightarrow{\mathbf{X}}^{(1)}, \ldots, \overrightarrow{\mathbf{X}}^{(p)})$, where the endpoint $\overrightarrow{\mathbf{X}}^{(p)}$ is an entirely new sample and the arrow pointing right overhead in the notation $\overrightarrow{\mathbf{X}}^{(i)}$ emphasizes that this Markov chain is going forward. 
 
This is illustrated in Figure~\ref{fig:rotation}, where each row shows one step of residual rotation.

\begin{figure}[!ht]
    \centering
    \includegraphics[width=1\textwidth]{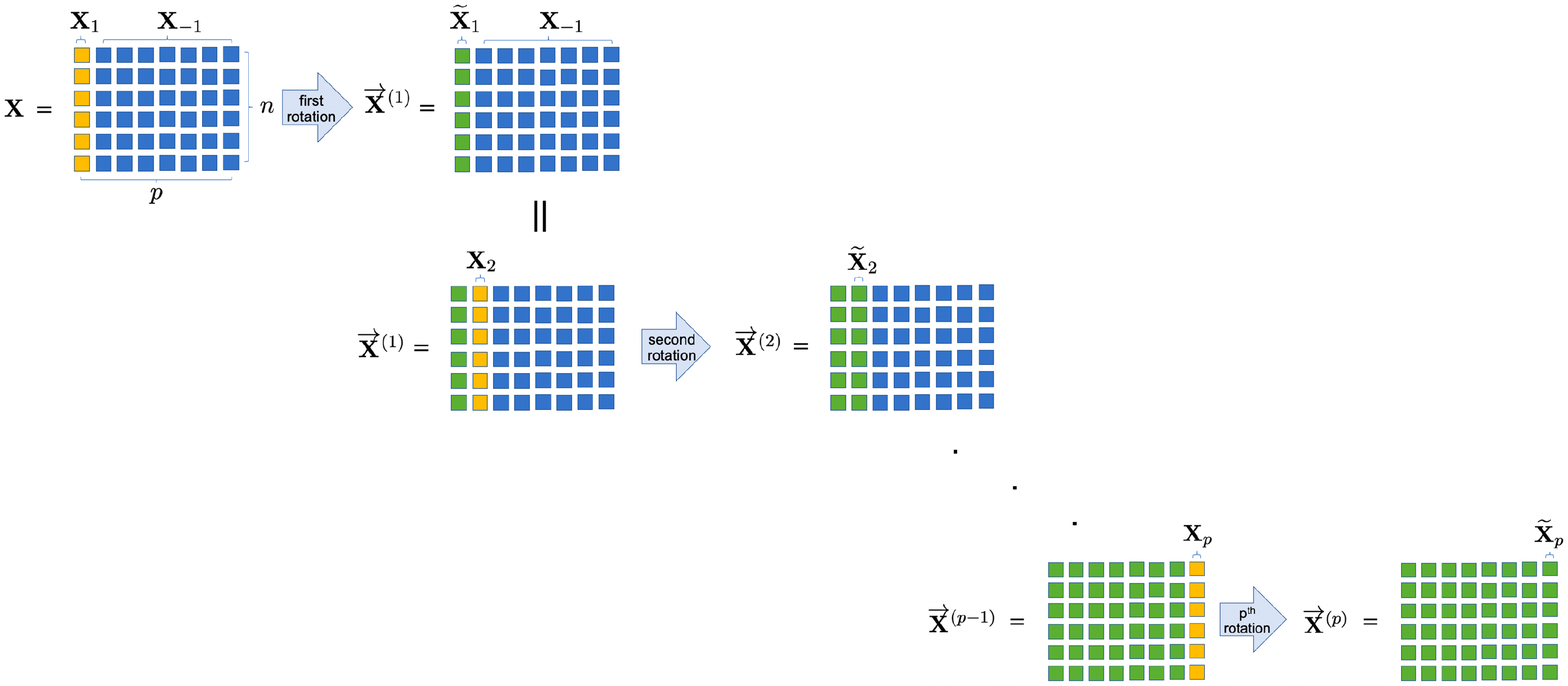}
    \caption{Forward sampling process. 
    Each mini square represents an entry of the data matrix. 
    At each step, Algorithm~\ref{alg: residual rotation} is applied to the yellow column $\mathbf{X}_i$, producing the green column. 
    The updated matrix is used as input for the next step. 
    Iterating column by column yields a fully updated matrix.  
    }
    \label{fig:rotation}
\end{figure}

If we start from $\overrightarrow{\mathbf{X}}^{(p)}$ and independently run Algorithm~\ref{alg: residual rotation} backward along the reversed order (from $p$ to $1$), we obtain a second Markov chain $\mc{C}_2=(\overrightarrow{\mathbf{X}}^{(p)}, \overleftarrow{\mathbf{X}}^{(p-1)}, \ldots, \overleftarrow{\mathbf{X}}^{(0)})$, where the arrow pointing left overhead in the notation $\overleftarrow{\mathbf{X}}^{(i)}$ emphasizes that this chain is going backward.

If the distribution of $\mathbf{X}$ belongs to $\mc{M}_{G}$, the backward Markov chain $\mc{C}_2$ has the same distribution as the reversed chain of $\mc{C}_1$, as seen by applying Proposition~\ref{prop: residual rotation} at each residual rotation. 
In other words, we have 
$$\label{eqref}
(\overrightarrow{\mathbf{X}}^{(p)}, \overleftarrow{\mathbf{X}}^{(p-1)}, \ldots, \overleftarrow{\mathbf{X}}^{(0)})\eqd
(\overrightarrow{\mathbf{X}}^{(p)}, \overrightarrow{\mathbf{X}}^{(p-1)}, \ldots,\overrightarrow{\mathbf{X}}^{(1)}, \mathbf{X} ). 
$$
Therefore, conditioning on $\overrightarrow{\mathbf{X}}^{(p)}$, we can conclude that $\mathbf{X}$ and  $\overleftarrow{\mathbf{X}}^{(0)}$ are independent and identically distributed, and thus $\mathbf{X}$ and  $\overleftarrow{\mathbf{X}}^{(0)}$ are conditionally exchangeable. Since the backward sampling can be conducted independently multiple times, we can obtain multiple versions of $\overleftarrow{\mathbf{X}}^{(0)}$ (denoted as $\widetilde{\mathbf{X}}^{(m)}$ for $m = 1, 2, \dots, M $) that are exchangeable jointly with $\mathbf{X}$.

In the generation of each Markov chain, we can repeat the residual rotation with the index running from $1$ to $p$ (or from $p$ to $1$) for $L$ iterations so that each Markov chain is lengthened from $p$ to $p L$. 
Unlike traditional MCMC, which requires sufficiently long chains to approximate target distributions, our algorithm generates exchangeable copies even with $L=1$. 
Although increasing $L$ seems beneficial for reducing the correlation between the endpoint and the starting point of the chain, further analyses in Appendix~\ref{app: L} show that choosing $L=1$  suffices to provide satisfactory results.

The sampling procedure is illustrated in Figure \ref{fig:MC} and 
is summarized in Algorithm~\ref{alg: exchangeable}. 
In Algorithm~\ref{alg: exchangeable} in the main paper, the residual rotation is applied following any order $\mc{I}$, and the endpoint of the forward Markov chain $\mc{C}_1$ is called the \textit{hub} since it is the common center of all Markov chains. 
Our Markov chain construction is based on the \textit{parallel} method introduced by \citet{besag_generalized_1989}, which has been revisited recently by \citet{berrett_conditional_2020,barber_testing_2021,howes2023markov}.

\begin{figure}[hbtp]
    \centering
    \includegraphics[width=.7\textwidth]{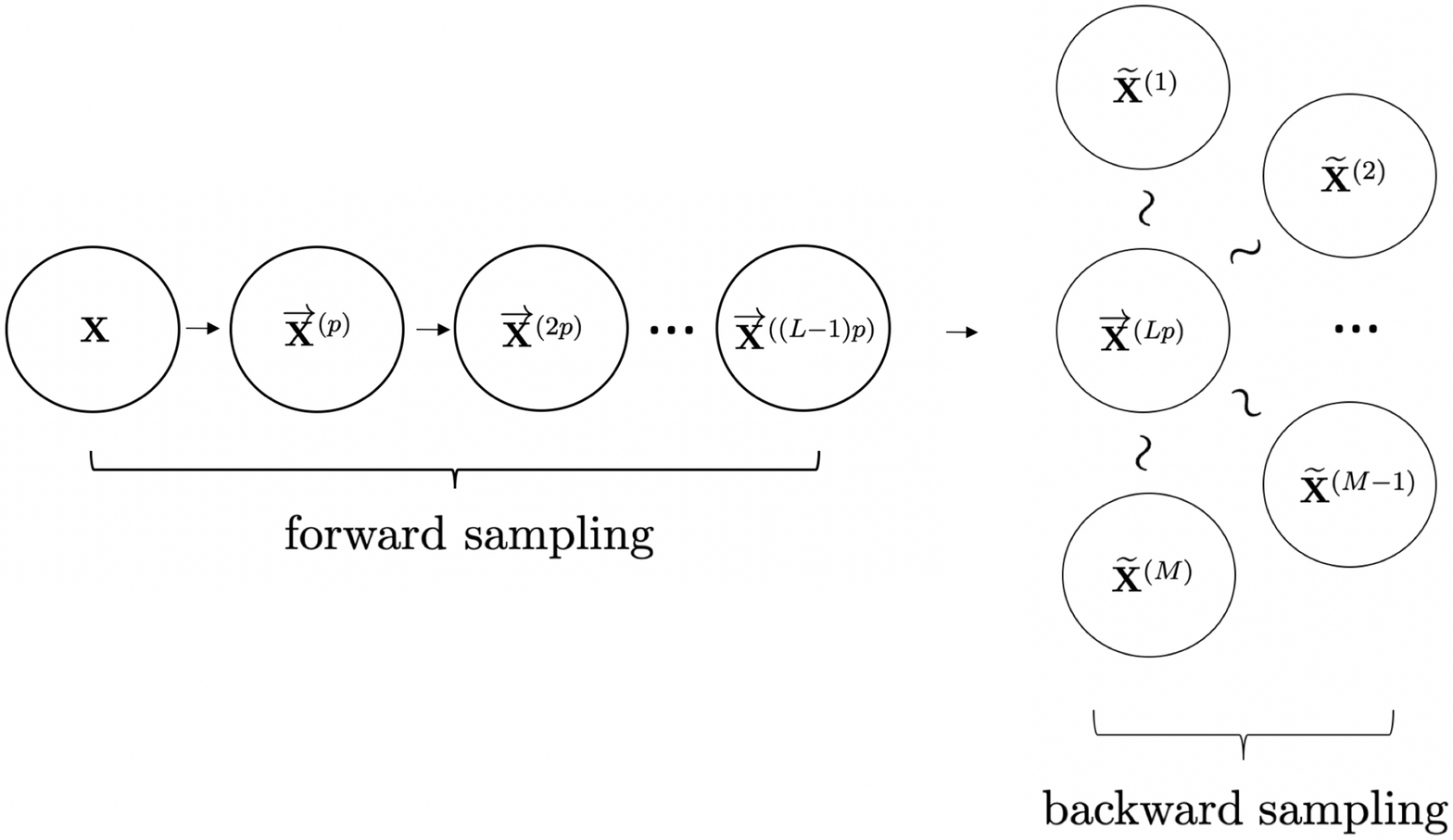}
    \caption{
    Sampling exchangeable copies of $\mathbf{X}$. 
    Left: Forward sampling from $\mathbf{X}$, and each right arrow represents L-time forward residual rotation. 
    Right: Backward sampling from the hub $\overrightarrow{\mathbf{X}}^{(Lp)}$, and each wavy line represents a backward sampling of a Markov chain of length $Lp$.}
    \label{fig:MC}
\end{figure}

By setting the argument $\mc{I}$ as a permutation of $[p]$, 
Algorithm~\ref{alg: exchangeable} allows updating the columns in any chosen order. 
Exchangeability is maintained as long as the backward sampling in Step 2 follows the reverse sequence of $\mc{I}$. 
For example, we can arrange the nodes in the graph based on their degrees such that  $d_{j_1}\leq d_{j_{2}}\leq \ldots, \leq d_{j_p}$, and we set $\mc{I}=(j_1, j_2, \ldots, j_p)$. 
Besides, $\mc{I}$ can be chosen as a permutation of a subset of $[p]$ so that some columns of the output remain unchanged. 
This approach is useful for constructing the test for the local Markov property in Section~\ref{sec: GoF local} of the main paper. 

\subsubsection{Extensions to randomized and two-sided tests}\label{remark:p-value}

In Algorithm~\ref{alg:gof}, the function $T(\cdot)$ can be any test statistic function such that larger values are regarded as evidence against the null hypothesis. 
Here, we discuss how we can use randomized tests to boost power and two-sided tests for more general choices of test statistics. 
The p-values defined below are both valid under the condition of Proposition~\ref{prop: GoF valid}. We include the justification for completeness in Appendix~\ref{app:Proof of Lemma 3}. 
\begin{enumerate}
    \item 
The p-value defined in Algorithm~\ref{alg:gof} is conservative when there are ties. Alternatively, one can break the ties randomly. 
We simplify the notation $ T(\mathbf{X})$ and $ T(\widetilde{\mathbf{X}}^{(m)})$   as $T_0$ and $T_m$.  
Let $\kappa=\sum_{i=1}^{M} \One{T_i =T_0}$, and let $S$ be a random integer from $\left\{1,2, \cdots, \kappa+1 \right\}$. 
We define the randomized p-value as follows:
$$
\frac{1}{1+M}\left(S  + \sum_{i=1}^{M} \One{T_i >T_0} \right). $$

\item 
    Algorithm~\ref{alg:gof} is stated in terms of a one-sided test, which treats larger values of $T$ as evidence for rejecting the null hypothesis. 
    We can also consider a two-sided test that rejects the null when $T$ is either too large or too small. 
    With $S$ and $\kappa$ defined in the first point, we define the two-sided p-value as follows:
\[
\frac{2}{1+M}\left[ S+\min\left(  \sum_{i=1}^{M} \One{T_i >T_0},\sum_{i=1}^{M} \One{T_i <T_0}  \right)   \right].
\]
\end{enumerate}

\subsubsection{Choices of parameters in Algorithm~\ref{alg:gof} }\label{rem:ml}

The number $M$ of copies in Algorithm~\ref{alg: exchangeable} has a minor impact on the test. 
Proposition~\ref{prop: GoF valid} guarantees that the validity of the p-value is unaffected by $M$. 
However, $M$ should not be too small in order to achieve good power. 
Given a significance level $\alpha$, setting $M \geq 2/\alpha$ ensures that the computed p-values can be smaller than $\alpha/2$. 
Both theory and empirical results suggest that as long as $M$ exceeds a moderate threshold, further increases have little impact. Specifically, the theoretical results in Section~\ref{sec: power theory} of the main paper show that taking $M = 2 \max(\alpha^{-1}, \log n)$ is sufficient to achieve essentially full power when the alternative is well separated from the null (see Corollaries~\ref{cor:dense consistent} and \ref{cor:sparse-consistent}). 
In practice, $M=100$ already suffices for commonly used significance levels such as $\alpha = 0.05$, and increasing $M$ beyond that offers minimal further benefit.

The number of iterations $L$ in Algorithm~\ref{alg: exchangeable} controls the length of the generated Markov chains. 
As discussed in Section~\ref{sec: sample MC} in the main paper and Appendix~\ref{app: L}, 
$L=1$ is sufficient in most practical scenarios even when the graph is moderately dense. 

Moreover, the asymptotic theory on the power property in Section~\ref{sec: power theory} of the main paper holds for any $L$, including $L=1$. 
In our numerical studies, choosing $L=3$ achieves desirable power performance, with no noticeable improvement observed for larger values of $L$. 
Hence, we recommend choosing $L$ between 1 and 3. 

\subsection{Derivations of Test Statistics} \label{rem: stat}

Algorithm~\ref{alg:gof} is valid in controlling the Type-I error with any test statistic function, but to achieve high power, we aim to choose a function that is likely to be large when the null hypothesis is false. 
The general principle is to find a statistic that captures the ``distributional difference" between possible alternative hypotheses and the null hypothesis. 
Motivated by the classical inference methods such as the likelihood ratio test, we propose three types of test statistics in Appendix~\ref{sec: GoF test partial cor}, \ref{sec: gof test F}, and \ref{sec: GoF test glr}. 
We further discuss in Appendix~\ref{sec: GoF test general} how a practitioner can construct problem-specific test statistics when prior knowledge about plausible alternative hypotheses is available.

To address the most common scenarios in GoF testing for GGMs, we focus on alternatives where the population belongs to a GGM, but the underlying graph differs from $G_0=(\mc{V}, \mc{E}_0)$, the graph stipulated by the null hypothesis. 
In the following, we denote by $X=(X_{1}, X_{2}, \cdots, X_{p})\tp \in \mathbb{R}^p$ a multivariate normal random vector from $\mathbf{N}_{p}(\bs{\mu}, \bs{\Sigma})$ and the precision matrix is  $\bs{\Omega}=\left( \omega_{ij}\right)_{1\leq i,j\leq p}=\bs{\Sigma}^{-1}$. 
Suppose that the distribution of $X$ is \textit{pairwise faithful} to the graph $G=(\mc{V}, \mc{E})$ in the sense that for any distinct pairs of $i$ and $j$, $\omega_{ij}=0 \Leftrightarrow (i,j)\notin \mc{E}_{}$.
Under this assumption, the testing problem for the hypothesis in \eqref{eq: GoF null} is reduced to testing $H_0: \mc{E}\subset \mc{E}_{0} \mbox{~~versus~~} H_a: \mc{E}\not\subset \mc{E}_{0}$, 
which is equivalent to testing
\begin{eqnarray}\label{eq: hypotheses omega}
    H_0: \forall (i,j)\notin \mc{E}_{0} , ~~\omega_{i,j}=0 ~~ \mbox{~~versus~~} H_a: \exists (i,j) \notin \mc{E}_{0},~~ \omega_{i,j}\neq 0. 
\end{eqnarray}
We propose several choices of test statistics that aim to estimate some measures of ``distributional difference" between the null hypothesis and the alternative hypothesis in \eqref{eq: hypotheses omega}. 
Although these statistics are rooted in standard practices for GoF testing, they are particularly suited for high-dimensional problems. 
The simulation studies in Section~\ref{sec: Simulation} of the main paper demonstrate that our MC-GoF testing procedure using any of these test statistics has competitive finite-sample performance. 
It should be emphasized that the testing problem in \eqref{eq: hypotheses omega} merely serves as a motivation for deriving our test statistics, and the MC-GoF test continues to be valid for testing the null hypothesis stated in \eqref{eq: GoF null}.

In the simulations of the main paper, we examined the following four test statistic functions: $\text{PRC}$ \eqref{eq: partial cor ss}, $\text{ERC}$ \eqref{eq: efficient partial cor} in Appendix~\ref{sec: GoF test partial cor}, $\text{F}_{\Sigma}$ \eqref{eq: F sum} in Appendix~\ref{sec: gof test F}, GLR-$\ell_1$ \eqref{eq: GLR-l1} in Appendix~\ref{sec: GoF test glr}.

\subsubsection{Pairwise partial correlation}\label{sec: GoF test partial cor}
Our first test statistic is based on the estimation of partial correlations. 
The $ij$-th partial correlation $\rho_{ij}$ is the correlation between $X_i$ and $X_j$ in their conditional distribution given other variables. 
By the properties of the normal distribution, the partial correlation can be expressed as 
\begin{equation}\label{eq: pc}
\rho_{ij}=\frac{-\omega_{ij}}{\sqrt{\omega_{ii}\omega_{jj}}}, 
\end{equation}
and thus $\omega_{ij}=0$ if and only if $\rho_{ij}=0$. Therefore, the testing problem in \eqref{eq: hypotheses omega} can be converted to the testing of 
$$ 
H_{0}: \forall (i,j)\notin \mc{E}_0,  ~~\rho_{ij}=0 \mbox{~~versus~~} H_a: \exists (i,j) \notin \mc{E}_0,~~ \rho_{i,j}\neq 0. $$
This null hypothesis is a composition of $H_{0}^{ij}: \rho_{ij}=0$ for all $(i,j)$ not in $\mc{E}_0$. 
It is natural to consider using the $ij$-th sample partial correlation as the test statistic for each $H_0^{ij}$. 
When the sample size $n$ is larger than the dimension $p$, sample partial correlations can be computed by replacing the elements of $\mathbf{\Omega}$ in \eqref{eq: pc} with the corresponding elements of the inverse of the sample covariance matrix. 
See \citet{drton_multiple_2007} for an application of sample partial correlations for GGM selection.

Instead of using the sample partial correlations, we utilize the graph $G_0$ stipulated in the null hypothesis to formulate a statistic that can be computed even when $p\gg n$. 

By the properties of the normal distribution \citep{anderson_introduction_2007}, $X_i$ can be expressed as
\begin{equation}\label{eq: locl regression}
X_{i}= X_{-i}^{\top} \theta^{(i)} + \epsilon_{i}, 
\end{equation}
where $\epsilon_{i}\sim N(0,\omega_{ii}^{-1})$ 
 is independent of $X_{-i}$, and  
$\theta^{(i)}\in \mathbb{R}^{p-1}$ is a vector such that $\theta^{(i)}=-\bs{\Omega}_{-i,i}/\omega_{ii}$, where $\bs{\Omega}_{-i,i}$ is the $i$-th column of $\bs{\Omega}$ excluding the $i$-th entry.
For any pairs of distinct $i$ and $j$, $(\epsilon_i, \epsilon_j)$ is normally distributed and the $ij$-th partial correlation satisfies that  
$$\Cor(\epsilon_i, \epsilon_j)=-\rho_{ij}.$$
This suggests that we can first perform local linear regression to $X_{i}$ and $X_{j}$ respectively, and then consider the correlation between the residuals as a test statistic for the hypothesis $H_0^{ij}$. 

The detailed derivation is as follows. 
Suppose $(i,j)\notin \mc{E}_0$. Let $U$ be the union of the neighborhoods $N_i$ and $N_j$, and denote by $u$ the cardinality of $U$. 
If $u+2<n$, $\left[\mathbf{1}_n,\mathbf{X}_{U}\right]$ has full column rank almost surely. 
In this case, we can regress $\mathbf{X}_{i}$ on $\mathbf{X}_{U}$ (with the intercept term) using least squares estimation and obtain the residual vector $\mathbf{r}_{i}$. 
Similar estimation can be applied for $\mathbf{X}_{j}$ to obtain $\mathbf{r}_{j}$. 
Define the pairwise residual correlation $\hat{\gamma}_{ij}$ as 
\begin{equation}\label{residual correlation}
  \hat{\gamma}_{ij} :=  \mathbf{r}_{i}\tp \mathbf{r}_{j}/ \left(\|\mathbf{r}_{i}\| \|\mathbf{r}_{j}\|  \right). 
\end{equation} 
For simplicity, we define $\hat{\gamma}_{ij}:=1$ if $n\leq u+2$.
A simple test statistic can be defined as 
\begin{align}
    T_{\text{SRC}}(\mathbf{X})&=\sum_{ \left( i,j \right) \notin \mathcal{E}_0 } \hat{\gamma}_{ij}^2, \label{eq: sum residual correlation}
\end{align}
where SRC stands for the sum of squared pairwise residual correlations. Aggregation via summation effectively captures dense (but potentially weak) signals, which means the true precision matrix has many small nonzero entries not indicated by the null graph $G_0$. 
Another simple choice of test statistic is 
\begin{align}
    T_{\text{MRC}}(\mathbf{X})&=\max_{ \left( i,j \right) \notin \mathcal{E}_0 } \hat{\gamma}_{ij}^2,   \label{eq: maximum residual correlation}
\end{align}
where MRC stands for the maximum of pairwise residual correlations. 
Aggregation by taking the maximum effectively captures strong (but potentially sparse) signals, which means that the true precision matrix has at least one nonzero entry not indicated by the null graph $G_0$. 

Although simple, the power of the MC-GoF test with SRC or MRC can be theoretically studied. In Section~\ref{sec: power theory} of the main paper, we prove that when the alternative hypothesis has a dense signal (or a strong signal), the MC-GoF with $T_{\text{SRC}}$ (or with $T_{\text{SRC}}$) can achieve high power asymptotically with the optimal rate. 

By considering the distribution of $\hat{\gamma}_{ij}$ under the null hypothesis if $n>2+u$, we transform it as $\hat{t}_{ij}=\sqrt{(n-2-u)}\hat{\gamma}_{ij}/\sqrt{1-\hat{\gamma}_{ij}^2}$ and define 
\begin{equation}\label{eq:pcr pvalue exact}
p_{ij}:=2 F_{\text{t},n-2-u_{ij}}(- |\hat{t}_{ij}|),
\end{equation}
where $F_{\text{t},a}$ is the cumulative distribution function (CDF) of the t-distribution with $a$ degrees of freedom. 
If $n\leq u+2$, we define $p_{ij}:=1$
We define the z-score for $p_{ij}$ as
\begin{equation*}\label{eq:pcr z score}
z_{ij}=-\Phi^{-1}(p_{ij}/2), 
\end{equation*} 
where $\Phi^{-1}$ is the inverse CDF of the standard normal distribution. 
To aggregate $p_{ij}$'s in \eqref{eq:pcr pvalue exact} while preventing the signals from being overwhelmed by noises, we filter out large $p_{ij}$'s. 
Specifically, let $\delta$ be a pre-specified constant (0.05 by default). We define $\Xi =\left\{\left( i,j \right)\notin \mc{E}_0: p_{ij}\leqslant \delta\right\}$ and define 
\begin{align}
    T_{\text{PRC}}(\mathbf{X})&=\sum_{ \left( i,j \right) \in \Xi } z_{ij}^2. \label{eq: partial cor ss}
\end{align}
Since this statistic is obtained from the aggregation of \underline{p}airwise \underline{r}esidual \underline{c}orrelations, we call it the $\text{PRC}$ statistic.

The computation of the PRC statistic could become burdensome when $p$ is large, since it requires the computation of all pairwise residual correlations $\hat{\gamma}_{ij}$, each of which involves fitting two local linear regressions of $\mathbf{X}_{i}$ and $\mathbf{X}_{j}$ on variables in the union of their neighborhoods. 
We can improve the computational efficiency by fitting a local linear regression for each $\mathbf{X}_{i}$ on its own neighborhood $N_i$ to obtain the residual $\mathbf{r}_i$. 
We then define the residual correlation $\tilde{\gamma}_{ij}$ as in \eqref{residual correlation}. 
Unlike $\hat{\gamma}_{ij}$, the exact distribution of $\tilde{\gamma}_{ij}$ is not easy to obtain. 
Inspired by the Fisher's variance stabilizing transformation \citep[Proposition 2]{drton_multiple_2007}, we define 
\begin{equation}
\tilde{\xi}_{ij} :=\frac{\sqrt{n-2-\tilde{u}}}{2} \log \left(\frac{1+\tilde{\gamma}_{ij}}{1-\tilde{\gamma}_{ij}}\right),
\end{equation}
where $\tilde{u}=\min(|N_i|, |N_j|)$. 
If $n\leq \tilde{u}+2$, we define $\tilde{\xi}_{ij} :=0$. 
Define $\tilde{p}_{ij} =2 \Phi(- | \tilde{\xi}_{ij}|)$ and $\tilde{\Xi} =\left\{\left( i,j \right)\notin \mc{E}_0: \tilde{p}_{ij}\leqslant \delta\right\}$. 
A computationally efficient statistic is defined as
\begin{equation}\label{eq: efficient partial cor}
T_{\text{ERC}}=\sum_{(i,j)\in \Xi} \tilde{\xi}_{ij}^2.
\end{equation}
Since this statistic is obtained from the aggregation of an \underline{e}fficient computation of pairwise \underline{r}esidual \underline{c}orrelations, we call it the $\text{ERC}$ statistic.

\subsubsection{Conditional regression}\label{sec: gof test F}
We propose a powerful test statistic by taking advantage of a method originated in \citet{verzelen_tests_2009}. Based on the conditional regression of $X_i$ on the other variables, as expressed in \eqref{eq: locl regression}, the null hypothesis in \eqref{eq: hypotheses omega} can be decomposed into the following hypotheses for all $i\in [p]$:
$$
H_0^{(i)}: \forall k \notin N_{i}, ~~\theta^{(i)}_{k}=0, 
$$
where $N_i$ is the neighborhood of $i$ in the graph $G$. 
Following the Bonferroni adjustment, the overall significance level $\alpha$ is divided among the tests as $\alpha = \sum_{i \in [p]} \alpha_i$.
Given a node $i$ and any nonempty index set $A \subset \left( \{i\}\cup N_i \right)^{c}$, an alternative hypothesis can be $H_{A}^{(i)}: \theta^{(i)}_{A} \neq \bs{0}$. 
The Fisher test statistic for testing $H_0^{(i)}$ versus $H_{A}^{(i)}$ for the normal linear model in \eqref{eq: locl regression} is given by 
\begin{equation}\label{eq: phi}
\phi_{A}(\mathbf{X}_i, \mathbf{X}_{-i}) := \frac{ \| \Pi_{ N_i \cup A} \mathbf{X}_i - \Pi_{N_{i}}\mathbf{X}_i \|^2 / |A| }{  \| \mathbf{X}_i - \Pi_{N_i \cup A} \mathbf{X}_i \|^2/\left( n-h_i- |A|   \right)}, 
\end{equation}
where $|A|$ is the size of $A$, $h_i$ equals to $|N_i|+1$, $\Pi_{S}$ is the orthogonal projection from $\mathbb{R}^{n}$ onto the column spaces of $\left[\bs{1}_{n}, \mathbf{X}_{S}\right]$. 
If $n<h_i+ |A|$, we define $\phi_{A}(\mathbf{X}_i, \mathbf{X}_{-i}):=0$ for simplicity. 
In practice, the statistic in \eqref{eq: phi} can be computed by fitting linear regressions via standard statistical software, such as the \texttt{anova} function in the \textbf{R} language. 

For simplicity, we define the collection of all singleton subsets of $\left( \{i\}\cup N_i \right)^{c}$ as 
\begin{equation}\label{eq: vv collection Ai}
    \mathcal{A}_{i}=\left\{~ \{a\} ~: a\in\left( \{i\}\cup N_i \right)^{c} ~ \right\}, 
\end{equation}
and we consider $A\in  \mathcal{A}_{i}$. 
To aggregate the statistics $\phi_{A}(\mathbf{X}_i, \mathbf{X}_{-i})$ for all $A\in \mathcal{A}_{i}$, \citet{verzelen_tests_2009} proposed to use a Bonferroni correction with $\sum_{A\in \mathcal{A}_i } \alpha_A^{(i)} = \alpha_{i}$ and define the test as 
$$T_{\alpha_{i}}=\sup _{ A \in \mathcal{A}_i }\left\{\phi_A\left(\mathbf{X}_i, \mathbf{X}_{-i}\right)-\bar{F}_{ 1, n-h_i-1}^{-1}\left(\alpha_A^{(i)}\right)\right\}, 
$$
where $\bar{F}_{a,b}(u)$ is the probability for a Fisher variable with $a$ and $b$ degrees of freedom to be larger than $u$.   
$\alpha_A^{(i)}$ may be set equal to $\alpha_{i}/ \left( p-h_i \right)$ for all $A\in \mathcal{A}_{i}$. 
In the numerical experiments in \citet{verzelen_tests_2009}, this testing procedure with $\alpha_{i}=0.05/p$ proved nearly as powerful as the more complicated alternative methods in most cases. 
We refer to this testing procedure as $M^1 P_1$ and will use it as one of the benchmarks for comparison in our numerical studies in Section \ref{sec: Simulation} of the main paper.

Motivated by the $M^1 P_1$ procedure, we propose to compute either the maximal value or the summation of $\phi_{A}$. 
More concretely, we define the following test statistics 
\begin{equation}\label{eq: F max}
       T_{F_{\max}}(\mathbf{X}) =\max_{i\in [p]}\max_{A\in \mathcal{A}_{i}}\phi_A\left(\mathbf{X}_i, \mathbf{X}_{-i}\right)
\end{equation}
and
\begin{equation}\label{eq: F sum}
    T_{F_\Sigma}(\mathbf{X}) =\sum_{i\in [p]}\sum_{A\in \mathcal{A}_{i}} \phi_A\left(\mathbf{X}_i, \mathbf{X}_{-i}\right), 
\end{equation}
and call them the F$_{\max}$ statistic and the F$_{\Sigma}$ statistic, respectively. 
In the simulation studies in Section \ref{sec: Simulation} of the main paper, we demonstrate that our MC-GoF test with the F$_{\Sigma}$ statistic consistently outperforms most competitors in terms of power, and thus we recommend it as the default choice. 
Besides, we expect that the performance of F$_{\max}$ is similar to the $M^1 P_1$ procedure in \citet{verzelen_tests_2009} and confirm this in an additional simulation in Appendix~\ref{app: simulation VV}. 

\begin{remark}\label{rem: m1p2}
   \citet{verzelen_tests_2009} also proposed a procedure called $M^1 P_2$, which adaptively selects the quantity $\alpha_A^{(i)}$ using a Monte Carlo method. 
   Specifically, it computes the $\alpha_i$-quantile of the following random variable conditional on $\mathbf{X}_{-i}$:
\begin{equation}\label{eq:M1P2-quantile}
    \inf_{A\in \mathcal{A}_i} \bar{F}_{1, n-h_i-1}(\phi_A(\mathbf{Z}, \mathbf{X}_{-i})),
\end{equation}
where $\mathbf{Z}$ is an independent standard normal vector of dimension $n$. The procedure then sets $\alpha_A^{(i)}$ to this quantile for all $A \in \mathcal{A}_i$.

Compared to the $M^1P_1$ procedure, $M^1P_2$ avoids splitting $\alpha_i$ into smaller components for subsets $A$.
However, the improvement in power is marginal, as shown in the simulation studies in \cite{verzelen_tests_2009}. 
Both $M^1P_2$ and $M^1P_1$ are designed to detect strong signals rather than to aggregate weak signals. Consequently, they lose power in dense-but-weak signal scenarios compared to the MC-GoF test with the F$_\Sigma$ statistic.
We examine the power properties of the $M^1 P_2$ in Appendix~\ref{app: M1P2} and confirm that its performance is only slightly better than that of $M^1P_1$ and does not compete with the MC-GoF test.

Another critical limitation of the $M^1 P_2$ procedure lies in its computational scalability. 
Computing the quantile in \eqref{eq:M1P2-quantile} requires at least $\alpha_i^{-1}$ Monte Carlo replications for each variable $i$. 
For example, with $p = 120$ and $\alpha = 5\%$, at least 2400 replications are needed per variable. 
The total number of replications grows as $\sum_{i} \alpha_i^{-1} \geq p^2 / (\sum_{i} \alpha_i) = p^2 / \alpha$ (by the generalized mean inequality).
This computational burden makes the $M^1 P_2$ procedure impractical for high-dimensional inference. 
\end{remark}

\begin{remark}
The power of the MC-GoF test is maximized when the chosen test statistic aligns with the signal pattern of the alternative hypothesis. 
We offer practical guidance to help practitioners select an appropriate test statistic based on the expected structure of the alternative hypothesis.

Motivated by the theoretical results in Section~\ref{sec: power theory} and the experimental findings in Section~\ref{sec: Simulation} of the main paper, we identify two distinct and meaningful types of alternative hypotheses:
(a) the dense-but-weak alternative, where many small nonzero entries in the precision matrix violate the null hypothesis, and
(b) the strong-but-sparse alternative, where at least one large entry in the precision matrix violates the null hypothesis.
The test statistics introduced so far can be broadly categorized into two types: 
\begin{enumerate}
    \item[(1)] \textbf{sum-type statistics} including SRC, PRC, ERC, and F$_{\Sigma}$; 
    \item[(2)] \textbf{max-type statistics} including MRC and F$_{\max}$. 
\end{enumerate}
For alternatives expected to be strong but sparse, we recommend the use of max-type statistics. 
Conversely, for alternatives expected to be dense but weak, sum-type statistics are more effective. In the case of strong and dense alternatives---a comparatively easier scenario---any of these statistics is appropriate.

\end{remark}

\subsubsection{Generalized likelihood ratio test} \label{sec: GoF test glr}

Our third strategy to construct a GoF test statistic is based on the generalized likelihood ratio (GLR) test for comparing two nested models: a null model and a more complex model (full model). 
For the GoF testing problem, we consider the full model $\mc{M}_{full}$ that comprises all $p$-dimensional normal distributions and consider testing for 
$$
H_0: P\in \mc{M}_{G_0} \mbox{~~versus~~} H_a: P\in \mc{M}_{full}. 
$$
Suppose $\ell(\bOmega)$ is the profile likelihood function that replaced the unknown mean by the sample mean. 
The GLR test statistic depends on $\ell(\bOmega)$, the maximum likelihood estimator (MLE) $\widehat{\bOmega}_{ML}$ under the full model, and the restricted MLE $\widehat{\bOmega}_{ML}^{(G_0)}$ under $\mc{M}_{G_0}$. 
The test statistic is expressed as \begin{equation}\label{eq: GLR stat}
T_{\text{GLR}}= 2 \log \ell(\widehat{\bOmega}_{ML}) - 2\log \ell(\widehat{\bOmega}_{ML}^{(G_0)}). 
\end{equation}
When $T_{\text{GLR}}$ is used in Algorithm~\ref{alg:gof}, the second term in \eqref{eq: GLR stat} can be eliminated. This is because every $\widetilde{\mathbf{X}}^{(m)}$ and $\mathbf{X}$ share the same value of the sufficient statistic for the model $\mc{M}_{G_0}$, and consequently, the same value of $\ell(\widehat{\bOmega}_{ML}^{(G_0)})$.

The computation for $\ell(\widehat{\bOmega}_{ML})$ can be challenging, especially in high-dimensional problems where the sample sizes are not sufficiently large relative to the dimensions.
In this case, one should instead use a regularized estimator. 
For example, $\widehat{\bOmega}_{ML}$ in \eqref{eq: GLR stat} can be replaced by the following modified GLasso estimator \citep{friedman_sparse_2008}:
\begin{eqnarray}\label{modified glasso}
\widehat{\bOmega}_{GL}=\argmin_{\bOmega \succ 0} \left\{  \operatorname{tr}(\hat{\bs{\Sigma}} \bOmega )-\log \operatorname{det}(\bOmega) + \lambda \sum_{(i,j)\notin \mc{E}_0, i\neq j }\sqrt{ \hat{\bs{\Sigma}}_{ii} \hat{\bs{\Sigma}}_{jj} }\left| \Omega_{ij} \right| \right\}, 
\end{eqnarray}
where the minimization is taken over all positive definite matrices with dimension $p\times p$, and $\hat{\bs{\Sigma}}$ is the sample covariance matrix. The estimator in \eqref{modified glasso} takes into account the scaling of different coordinates, which is also considered by \citet{jankova_inference_2018}. In addition, the penalty is imposed only on edges not included in the null graph $G_0$. 
For implementation in the \textbf{R} language, one can compute $\widehat{\bOmega}_{GL}$ by properly setting the argument ``rho" as a matrix in the \texttt{glasso} function. 
The tuning parameter $\lambda$ is often selected by cross-validation,  but to speed up the computation, we may simply set $\lambda=2\sqrt{\log p /n}$ based on existing theory on GLasso; see \citet[Theorem 1]{rothman_sparse_2008} and \citet[Proposition 1]{jankova_inference_2018}. 
To this end, we propose the following GLR-${\ell_1}$ test statistic:
\begin{equation}\label{eq: GLR-l1}
T_{\text{GLR}-\ell_1}= 2 \log \ell(\widehat{\bOmega}_{ML}). 
\end{equation}

\subsubsection{Incorporating prior information}\label{sec: GoF test general}

In the presence of prior information about the underlying graph, we could construct test statistic functions that incorporate this information. Although a comprehensive discussion is beyond our scope, we provide three examples of such constructions.

We begin with a scenario where we have prior information that specific pairs of variables, unconnected in the null graph $G_0=(\mc{V},\mc{E}_0)$, have high (or low) chances of connection in the actual graph. 
To incorporate this prior information quantitatively, we introduce a non-negative weighting factor $w_{ij}$ for each pair $(i,j)\notin\mc{E}_0$. 
By default, $w_{ij}$ is set to 1 and can be adjusted according to the prior information about the chance that an edge exists between nodes $i$ and $j$. 
we can then modify the PRC statistic in \eqref{eq: partial cor ss} into the following statistic:
\begin{align}
    T_{\text{PRC-w}}(\mathbf{X})&=\sum_{ \left( i,j \right) \in \Xi} w_{ij} \cdot z_{ij}^2,  \label{eq: partial cor ss weight}
\end{align}
which is called the PRC-w statistic since it is a weighted version of the PRC statistic. 
Similarly, we can define the weighted version of the ERC statistic and call it the ERC-w statistic. 
In Appendix~\ref{app: PCR_W}, we present a numerical example where these weighted statistics achieve higher power compared to the other statistics that do not incorporate prior information. 

We next consider a general Bayesian approach to construct GoF test statistics for GGMs. 
Using the framework of Bayesian inference, beliefs about which edges are likely to exist can be directly integrated into the prior distribution and will be reflected in the posterior distribution. 
A test statistic can then be formulated based on the posterior distribution, such as the posterior predictive p-value, which measures how well the Bayesian model predicts new data. See \citet{williams2020comparing} for more details about model comparison methods using posterior predictive distributions and Bayes factors. 

To illustrate how the Bayesian inference can be used to construct test statistic functions for our MC-GoF test, we consider a situation where there is prior information about some possible alternative graphs (denoted by $G^{(i)}$) that are suspected to be the true graph.  
We can specify a hyper-prior distribution $\pi(\mc{M})$ on the null model $\mc{M}_{G_0}$ and the alternative models $\mc{M}_{G^{(i)}}$. 
For each of these models, we place a prior $\gamma_{i}(\bOmega)$ on the precision matrix $\bOmega$ so that the constraints of the model $\mc{M}_{G^{(i)}}$ are fulfilled. 
Subsequently, the reciprocal of the posterior probability of the null model $\pi\left( \mc{M}_{G^{(0)}} \mid \mathbf{X}  \right)$ can serve as a test statistic.

The Bayesian approach is flexible in the incorporation of prior knowledge. 
We point out that a promising choice for the prior on the precision matrix w.r.t. a graph is the G-Wishart prior \citep{roverato2002hyper}, which is a conjugate prior and has been studied extensively in Bayesian literature; see for example \citet{atay2005monte,dobra_bayesian_2010,Mohammadi2015}. 
However, the Bayesian approach requires the posterior computation to be conducted for not only $\mathbf{X}$ but also all the $M$ copies $\widetilde{\mathbf{X}}^{(m)}$, which can be intensive in computation. We left the exploration of this method for future investigations.

\section{Details on power theory}\label{app: full theory}

The power analysis is based on contrasting the distribution of the observed statistic $T(\mathbf{X})$ against the empirical distribution of $T(\widetilde{\mathbf{X}}^{(m)})$, $m\in [M]$, computed from the generated copies. 
This analysis involves understanding the following two aspects: \\
(A) the distribution of copies $\widetilde{\mathbf{X}}^{(m)}$ and statistics statistics $T(\widetilde{\mathbf{X}}^{(m)})$; \\
(B) the distribution of $T(\mathbf{X})$ under a meaningful alternative hypothesis. \\
Since Algorithm~\ref{alg: exchangeable} is a Monte Carlo procedure based on Markov chains, the distribution of its output is not always analytically tractable for a general graph. To make tangible progress, we focus on the following class of null graphs whose structure enables an explicit and tractable analysis.

\begin{definition}
A graph $G=([p], \mc{E})$ is said to be clique-star shaped if there exists a subset $\mc{H}\subsetneq [p]$ such that 
$\mathcal{E}=\{(i,j): \forall  i\in [p], j\in \mc{H}\}$. 
\end{definition}
If $P\in \mc{M}_{G_0}$ and $G_0$ is clique-star shaped, then every node in the clique set $\mc{H}$ is connected to all other variables, and the variables in $\mc{H}^c$ are conditionally independent given the variables in $\mc{H}$. 
Using these two properties, we can explicitly characterize the distribution of the output of Algorithm~\ref{alg: exchangeable} and thus address the challenge in studying Aspect (A). 


Under Condition~\ref{cond: G0 clique-star} in the main paper, we denote $\mc{C}=\mc{H}^{c}$, $q=|\mc{C}|$,  and denote by $\mathbf{\Omega}_{\mc{C}}$ the submatrix of $\mathbf{\Omega}$ consisting the rows and columns indexed by $\mc{C}$. 
For normal populations, we have
$$P\in \mc{M}_{G_0} \Leftrightarrow\mathbf{\Omega}_{\mc{C}}\text{ is a diagonal matrix}.$$ 
This equivalence motivates the following metric for quantifying the deviation of the population $P$ from the null hypothesis:
$$
D(\mathbf{\Omega}_{\mc{C}}, s )= \|\mathbf{R}(\mathbf{\Omega}_{\mc{C}}^{-1})-\mathbf I_q\|_s,
$$ 
where $\mathbf{R}(\mathbf M)=\text{diag}(\mathbf M)^{-1/2}\mathbf{\mathbf M}\text{ diag}(\mathbf M)^{-1/2}$ is the diagonal normalization and the symbol $s$ can be either $F$ or $\infty$, corresponding to the Frobenius norm or the maximum norm of a matrix, respectively.
The metric $D(\mathbf{\Omega}_{\mc{C}}, s)$ serves as a measure of signal strength under the alternative hypothesis. 
In particular, for both $s=F$ and $s=\infty$, we have $D(\mathbf{\Omega}_{\mc{C}}, s)=0$ if and only if $\mathbf{\Omega}_{\mc{C}}$ is a diagonal matrix, which indicates no deviation from the null hypothesis. 
Appendix~\ref{sec:power-metric} provides a detailed discussion on this metric. 
As we will see, the power of the MC-GoF test depends on the magnitude of $D(\mathbf{\Omega}_{\mc{C}}, s)$.

We now provide an overview of our theory as follows:
\begin{itemize}
\item  
Dense alternative:  
When many of the off-diagonal entries of $\mathbf{\Omega}_{\mc{C}}$ are nonzero, we use $D(\mathbf{\Omega}_{\mc{C}}, F)$ to quantify the deviation from the null hypothesis. 
In Appendix~\ref{sec:dense alternative}, we show that when both $n$ and $p$ diverge to infinity proportionally, if $D(\mathbf{\Omega}_{\mc{C}}, F)\gg \sqrt{q/n}$, then the asymptotic power of the MC-GoF test with some statistic $T_1$ converges to $1$. 
Additionally, we show that no $\alpha$-level test can distinguish between the null and alternative hypotheses with power tending to one when the separation rate in terms of $D(\mathbf{\Omega}_{\mc{C}}, F)$ is of order $\sqrt{q/n}$. 
This result implies that the MC-GoF test with $T_1$ is rate-optimal for dense alternatives.

\item Strong alternative: 
When some of the off-diagonal entries of $\mathbf{\Omega}_{\mc{C}}$ are separated from zero, we use $D(\mathbf{\Omega}_{\mc{C}}, \infty)$ to quantify the deviation from the null hypothesis.
In Appendix~\ref{sec:strong alternative},  we show that when $D(\mathbf{\Omega}_{\mc{C}}, \infty)=16\sqrt{\log q/n}$, the power of the MC-GoF test with some statistic $T_2$ converges to 1 as $n\rightarrow\infty$. Additionally, we show that any test will be powerless if the separation rate in terms of $D(\mathbf{\Omega}_{\mc{C}}, \infty)$ is of order smaller than $\sqrt{\log q /n}$. 
This result implies that the MC-GoF test with $T_2$ is rate-optimal for strong alternatives. 

\item 
Dense or strong alternative: 
The union of the aforementioned alternative hypotheses can also be tested using the MC-GoF test with a properly designed test statistic $T_3$. 
In Section~\ref{sec:union-alternative}, 
we show that when either $D(\mathbf{\Omega}_{\mc{C}}, F)\gg \sqrt{q/n}$ or $D(\mathbf{\Omega}_{\mc{C}}, \infty)\geq 16\sqrt{\log q/n}$, the power of the MC-GoF test with $T_3$ converges to 1. Again, this test is rate-optimal for the union alternative. 
\end{itemize}
We provide a proof sketch in Section~\ref{sec:power-proof-sketch} and place the complete proof in Appendix~\ref{app: power proof scheme}. 

\subsection{Metric of deviation}\label{sec:power-metric}

Before we study the power, we provide some intuitions about the metric $D(\mathbf{\Omega}_{\mc{C}}, s)$, which is the key quantity for characterizing the power of the MC-GoF test as illustrate in Figure~\ref{fig:phase_transition} in the main text. 
To connect this metric with some common population quantities, we provide the following result. 

\begin{proposition}\label{prop: Omega related to R}
    Suppose $X=(X_1,\ldots, X_p)^\top \sim \mathbf{N}_p(\mathbf{0}, \mathbf{\Omega}^{-1})$. 
    Let $\mc{C}$ be a nonempty subset of $[p]$ and let  $\mc{H}=[p]\setminus\mc{C}$. Then we have
    $$
    D(\mathbf{\Omega}_{\mc{C}}, \infty)=\max_{i,j\in \mc{C}} ~~ \left|\operatorname{cor}\left(X_i, X_j\mid X_{\mc{H}}\right)\right|. 
    $$
    Furthermore, $$
   D(\mathbf{\Omega}_{\mc{C}}, F)^2\geq  \frac{1}{\lambda} \sum_{i,j\in \mc{C}}\mathbf{\tilde{\Omega}}_{i,j}^2, 
    $$
    where $\lambda$ is the condition number of $\mathbf{\Omega}_{\mc{C}}$ and $\mathbf{\tilde{\Omega}}_{\mc{C}}=\mathbf{R}(\mathbf{\Omega}_C)$. 
\end{proposition}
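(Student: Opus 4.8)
The plan is to prove the two displays separately: the maximum-norm identity is essentially a restatement of Gaussian conditioning, whereas the Frobenius lower bound is the substantive claim.

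For the first identity, I would start from the standard conditioning formula for the Gaussian law: since $X\sim\mathbf{N}_p(\mathbf{0},\bOmega^{-1})$ and $\mc{H}=[p]\setminus\mc{C}$, the conditional distribution of $X_{\mc{C}}$ given $X_{\mc{H}}$ is Gaussian with covariance $(\bOmega_{\mc{C}})^{-1}$; that is, its precision matrix is exactly the principal submatrix $\mathbf{M}:=\bOmega_{\mc{C}}$. Hence for $i,j\in\mc{C}$ the conditional correlation $\operatorname{cor}(X_i,X_j\mid X_{\mc{H}})$ is precisely the $(i,j)$ entry of the normalized conditional covariance $\mathbf{R}(\mathbf{M}^{-1})$. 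Because $\mathbf{R}(\mathbf{M}^{-1})$ has unit diagonal, subtracting $\mathbf{I}_q$ removes the diagonal, so the entrywise maximum norm $D(\bOmega_{\mc{C}},\infty)=\|\mathbf{R}(\mathbf{M}^{-1})-\mathbf{I}_q\|_\infty$ equals $\max_{i\neq j}\lvert[\mathbf{R}(\mathbf{M}^{-1})]_{ij}\rvert=\max_{i\neq j}\lvert\operatorname{cor}(X_i,X_j\mid X_{\mc{H}})\rvert$, which is the asserted quantity. This step is routine.

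For the Frobenius bound I would first recast the claim as a linear-algebra inequality about $\mathbf{M}$. Using the invariance of $\mathbf{R}(\cdot)$ under positive diagonal rescaling, one verifies the identity $\mathbf{R}(\mathbf{M}^{-1})=\mathbf{R}(\mathbf{C}^{-1})$, where $\mathbf{C}:=\mathbf{R}(\mathbf{M})=\tilde{\bOmega}_{\mc{C}}$ is a correlation matrix; indeed $\mathbf{C}=\mathbf{D}_1^{-1/2}\mathbf{M}\mathbf{D}_1^{-1/2}$ with $\mathbf{D}_1=\operatorname{diag}(\mathbf{M})$ gives $\mathbf{M}^{-1}=\mathbf{D}_1^{-1/2}\mathbf{C}^{-1}\mathbf{D}_1^{-1/2}$, and $\mathbf{R}$ ignores the scaling. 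Reading the right-hand sum over $i\neq j$ (so that both sides vanish under the null), the target becomes
\[
\|\mathbf{R}(\mathbf{C}^{-1})-\mathbf{I}_q\|_F^2\ \ge\ \tfrac{1}{\lambda}\,\|\mathbf{C}-\mathbf{I}_q\|_F^2 ,
\]
i.e. the sum of squared partial correlations of $\mathbf{C}$ dominates, up to the factor $1/\lambda$, the sum of its squared correlations. Since the diagonal entries of both $\mathbf{M}$ and $\mathbf{M}^{-1}$ lie between $\lambda_{\min}(\mathbf{M})$ and $\lambda_{\max}(\mathbf{M})$, the spectrum of $\mathbf{C}$ is sandwiched in $[1/\lambda,\lambda]$, where $\lambda$ is the condition number of $\bOmega_{\mc{C}}$; this is the only place the condition number enters.

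The main obstacle is that the off-diagonal ``mass'' of a matrix and that of its inverse are not directly comparable, and any crude bound that treats the numerator $(\mathbf{C}^{-1})_{ij}^2$ and the normalizers $(\mathbf{C}^{-1})_{ii}(\mathbf{C}^{-1})_{jj}$ separately is doomed: both blow up together when $\mathbf{C}$ is ill-conditioned, and only their ratio (the squared partial correlation) stays bounded. I would therefore keep the normalization intact. A promising device is the pairing $\langle \mathbf{C}^{-1}-\mathbf{I}_q,\ \mathbf{C}-\mathbf{I}_q\rangle = q-\operatorname{tr}(\mathbf{C}^{-1})$, which by Cauchy--Schwarz (retaining the weights $(\mathbf{C}^{-1})_{ii}$) links $\|\mathbf{R}(\mathbf{C}^{-1})-\mathbf{I}_q\|_F$ with $\|\mathbf{C}-\mathbf{I}_q\|_F$, combined with a spectral estimate built from $\mathbf{C}^{-1}-\mathbf{I}_q=-\mathbf{C}^{-1}(\mathbf{C}-\mathbf{I}_q)$ and the sandwich $[1/\lambda,\lambda]$. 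Confirming that this bookkeeping delivers exactly the constant $1/\lambda$, rather than a worse power of the condition number, is the delicate point and would demand the most care.
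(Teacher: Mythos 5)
Your treatment of the first identity is correct and is exactly the paper's argument: the conditional law of $X_{\mc{C}}$ given $X_{\mc{H}}$ has precision matrix $\bOmega_{\mc{C}}$, hence its correlation matrix is $\mathbf{R}(\bOmega_{\mc{C}}^{-1})$, and subtracting $\mathbf{I}_q$ kills the diagonal. Your reduction of the Frobenius claim, via scale-invariance of $\mathbf{R}(\cdot)$, to the inequality $\|\mathbf{R}(\mathbf{C}^{-1})-\mathbf{I}_q\|_F^2\geq \lambda^{-1}\|\mathbf{C}-\mathbf{I}_q\|_F^2$ with $\mathbf{C}=\mathbf{R}(\bOmega_{\mc{C}})$ is also sound, and your identity $\mathbf{C}^{-1}-\mathbf{I}_q=-\mathbf{C}^{-1}(\mathbf{C}-\mathbf{I}_q)$ is the same device the paper uses (there written as $(\mathbf{K}+\mathbf{I}_q)^{-1}-\mathbf{I}_q=-\mathbf{K}(\mathbf{K}+\mathbf{I}_q)^{-1}$).

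The gap is that your proof stops exactly at the step you flag as delicate, namely obtaining the constant $\lambda^{-1}$ rather than a worse power of the condition number, and that step cannot be completed: the inequality with $\lambda^{-1}$ is false. Take $q=10$ and $\bOmega_{\mc{C}}=\tfrac12\bigl(\mathbf{I}_{10}+\mathbf{1}\mathbf{1}^{\top}\bigr)$, embedded block-diagonally in any larger $\bOmega$. It has unit diagonal, so $\mathbf{C}=\mathbf{R}(\bOmega_{\mc{C}})=\bOmega_{\mc{C}}$ and $\sum_{i\neq j}\mathbf{C}_{ij}^2=90\cdot(1/2)^2=22.5$; its eigenvalues are $1/2$ and $11/2$, so $\lambda=11$. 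By Sherman--Morrison, $\bOmega_{\mc{C}}^{-1}=2\mathbf{I}_{10}-\tfrac{2}{11}\mathbf{1}\mathbf{1}^{\top}$, whose diagonal normalization has off-diagonal entries $-1/10$, so $D(\bOmega_{\mc{C}},F)^2=90\cdot(1/10)^2=0.9$, whereas $\lambda^{-1}\sum_{i\neq j}\mathbf{C}_{ij}^2\approx 2.05$. What your tools do deliver is the exponent-two bound $D(\bOmega_{\mc{C}},F)^2\geq\lambda^{-2}\sum_{i\neq j}\mathbf{C}_{ij}^2$: from your identity, $\sum_{i\neq j}\mathbf{C}_{ij}^2\leq\|\mathbf{R}(\bOmega_{\mc{C}}^{-1})^{-1}\|^2\,\|\mathbf{R}(\bOmega_{\mc{C}}^{-1})-\mathbf{I}_q\|_F^2$ (the diagonal weights $[\mathbf{C}^{-1}]_{ii}\geq 1$ can be discarded), and $\lambda_{\min}\bigl(\mathbf{R}(\bOmega_{\mc{C}}^{-1})\bigr)\geq \lambda_{\min}(\bOmega_{\mc{C}}^{-1})/\max_i[\bOmega_{\mc{C}}^{-1}]_{ii}\geq 1/\lambda$. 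The same equicorrelation family (fixed $\rho$, $q\to\infty$) shows the exponent two cannot be improved: both sides of the corrected bound stay within a constant factor of each other.

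You should also know that the paper's own proof reaches $\lambda^{-1}$ only through an algebraic slip, so your caution was warranted. The paper correctly shows $\|(\mathbf{K}+\mathbf{I}_q)^{-1}\|\leq\|\mathbf{D}\|\,\|\bOmega_{\mc{C}}\|\leq c_1c_2$, but then substitutes $c_1c_2$ as a bound for $\|(\mathbf{K}+\mathbf{I}_q)^{-1}\|^2$, silently dropping a square; carried out correctly, its lemma gives $D(\bOmega_{\mc{C}},F)^2\geq \tfrac{c_0^2}{c_1^2c_2^2}\sum_{i\neq j}(\bOmega_{\mc{C}})_{ij}^2$, not $\tfrac{c_0^2}{c_1c_2}$. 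Indeed the example above also violates the lemma as stated: with $c_0=c_1=20/11$ and $c_2=11/2$ it claims $D^2\geq\tfrac{40}{121}\cdot 22.5\approx 7.4$, against the true value $0.9$. So the right resolution here is not a cleverer pairing argument but a correction of the statement: the proposition (and the paper's lemma) holds with $\lambda^{-2}$ in place of $\lambda^{-1}$, and no proof strategy can recover $\lambda^{-1}$.
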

Proposition~\ref{prop: Omega related to R} suggests that $D(\mathbf{\Omega}_{\mc{C}}, \infty)$ measures the largest magnitude of the partial correlations that violate the null hypothesis, and the square of $D(\mathbf{\Omega}_{\mc{C}}, F)$ serves as a proxy for the sum of squared entries of the normalized $\mathbf{\Omega}_{\mc{C}}$.



\subsection{Dense alternative}\label{sec:dense alternative}
We first consider the case where the true precision $\mathbf{\Omega}$ deviates from the model assumption of the null hypothesis with many small entries. 
In such a case, we consider the SRC statistic in \eqref{eq: sum residual correlation}. 
Under Condition~\ref{cond: G0 clique-star}, this statistic is expressed as 
$$T_1(\mathbf X)= \sum_{i,j \in \mc{C}, i\neq j}\hat{\gamma}_{ij}^2,$$
where $\hat{\gamma}_{ij}$ is the pairwise residual correlation statistic defined in \eqref{residual correlation}. 
To examine the power, we use the metric $D(\cdot, F)$ to quantify the deviation from the null hypothesis and define 
$$
\mathbf{\Theta}_{n1}(b):=\left\{\mathbf{A}\in \mathbb{S}_+^p: D(\mathbf{A}_C, F) \geq b\sqrt{\frac{q}{n}},\quad  \lambda_{\max}(\mathbf{A}_C)/\lambda_{\min}(\mathbf{A}_C)\leq b_0\right\},  \quad \forall b>0, 
$$
where $b_0\geq 1$ can be any fixed constant. 
We are interested in the dense alternative hypothesis defined as
\begin{equation}\label{eq:dense alternative}
    H_a: \mathbf{\Omega}\in \mathbf{\Theta}_{n1}(b). 
\end{equation}

\begin{theorem}\label{thm:power_dense}
Suppose that Condition~\ref{cond: G0 clique-star} holds and that $q/ n  \rightarrow \gamma \in(0, \infty)$. 
For any $\epsilon>0$, there exists a positive constant $b$ such that for any sequence of populations with $\mathbf{\Omega}\in \mathbf{\Theta}_{n1}(b)$, it holds 
   $$
  \liminf_{n} \beta_{n}(P;\alpha, T_1)\geq 1-\epsilon,
   $$ 
provided that the number of copies $M$ is at least $\max(2\alpha^{-1},\log(2\epsilon^{-1}))$. 
\end{theorem}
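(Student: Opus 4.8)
The plan is to condition on the sufficient statistic, use the clique-star geometry to reduce both the copies and the observed statistic to functions of residual \emph{directions}, and then separate signal from noise with a two-sided second-moment argument. Throughout write $\mc{C}=\mc{H}^c$, $|\mc{C}|=q$, $h=|\mc{H}|$, and $\nu=n-h-1$ (so $\nu\ge 2$ by Condition~\ref{cond: G0 clique-star}), and abbreviate $D^2:=D(\mathbf{\Omega}_{\mc{C}},F)^2=\sum_{i\neq j}r_{ij}^2$, where $r_{ij}=[\mathbf R(\mathbf{\Omega}_{\mc{C}}^{-1})]_{ij}=\operatorname{cor}(X_i,X_j\mid X_{\mc{H}})$.

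First I would establish the residual representation. For every $i\in\mc{C}$ the neighbourhood is exactly $\mc{H}$, so the pairs absent from $\mc{E}_0$ are precisely the pairs inside $\mc{C}$ and $\hat\gamma_{ij}$ is the correlation of the residuals $\mathbf{r}_i,\mathbf{r}_j$ of $\mathbf{X}_i,\mathbf{X}_j$ after projecting off $[\mathbf{1}_n,\mathbf{X}_{\mc{H}}]$. Under Condition~\ref{cond: G0 clique-star} (either $p\ge n$, whence $n\le |N_j|+1$ for $j\in\mc{H}$, or $p=q$, whence $\mc{H}=\emptyset$), Algorithm~\ref{alg: residual rotation} leaves every $\mc{H}$-column unchanged; hence the common residual space $V=\operatorname{span}[\mathbf{1}_n,\mathbf{X}_{\mc{H}}]^\perp$, of dimension $\nu$, is fixed along the whole chain, and each residual rotation merely replaces the direction $u_i=\mathbf{r}_i/\|\mathbf{r}_i\|$ by an independent uniform point of $\mathbb{S}^{\nu-1}\subset V$ while preserving $\|\mathbf{r}_i\|$. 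Consequently, conditionally on the $\sigma$-field $\mathcal F$ generated by $\mathbf{X}_{\mc{H}}$ and $\{\|\mathbf{r}_i\|\}_{i\in\mc{C}}$, the copies $T_1(\widetilde{\mathbf X}^{(m)})=\sum_{i\neq j}\langle u_i^{(m)},u_j^{(m)}\rangle^2$ are i.i.d.\ with $u_i^{(m)}$ i.i.d.\ uniform on $\mathbb{S}^{\nu-1}$. This gives the explicit handle on Aspect~(A) promised by the clique-star reduction, and rests on Proposition~\ref{prop: residual rotation}.

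Next I would pin down the two one-dimensional laws. For the copies, a direct moment computation using $\mathbb{E}\langle u_i,u_j\rangle^2=1/\nu$, together with the vanishing of the index-sharing covariances $\operatorname{Cov}(\langle u_i,u_j\rangle^2,\langle u_i,u_k\rangle^2)$ (condition on $u_i$ and invoke rotational invariance), yields $\mathbb{E}[T_1(\widetilde{\mathbf X}^{(m)})]=q(q-1)/\nu$ and $\operatorname{Var}[T_1(\widetilde{\mathbf X}^{(m)})]=\Theta(q^2/\nu^2)$; crucially this law is norm-free, so the number $p^\star:=\mathbb{P}(T_1(\widetilde{\mathbf X}^{(m)})\ge t^\star)$ is a single constant for any deterministic $t^\star$. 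For the observed statistic, conditionally on $\mathbf{X}_{\mc{H}}$ the vectors $(\mathbf{X}_i)_{i\in\mc{C}}$ are jointly Gaussian with conditional covariance $\mathbf{\Omega}_{\mc{C}}^{-1}$, so $\hat\gamma_{ij}$ is a sample correlation on $\nu$ effective observations with population value $r_{ij}$, giving $\mathbb{E}[T_1(\mathbf X)]=q(q-1)/\nu+D^2(1+o(1))$ with $D^2\ge b^2 q/n$.

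Finally I would combine these with the deterministic threshold $t^\star=q(q-1)/\nu+\tfrac12 b^2 q/n$ and the elementary bound $\beta_n\ge 1-\mathbb{P}(T_1(\mathbf X)<t^\star)-\mathbb{P}(N\ge\alpha(M+1)-1)$, where $N=\#\{m:T_1(\widetilde{\mathbf X}^{(m)})\ge t^\star\}$. Chebyshev on the copies bounds $p^\star$ by $\Theta(q^2/\nu^2)/(b^2 q/(2n))^2=O(n^2/(\nu^2 b^4))$, which equals $O(b^{-4})$ in the relevant regime $\nu\asymp n$ and is below $\alpha/4$ for large $b$; since $N\sim\mathrm{Bin}(M,p^\star)$ given $\mathcal F$, a Chernoff bound then forces $\mathbb{P}(N\ge\alpha(M+1)-1)\le\epsilon/2$ once $M\ge\max(2\alpha^{-1},\log(2\epsilon^{-1}))$. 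Symmetrically, the gap $\mathbb{E}[T_1(\mathbf X)]-t^\star\ge\tfrac12 D^2$ and Chebyshev give $\mathbb{P}(T_1(\mathbf X)<t^\star)\le\operatorname{Var}(T_1(\mathbf X))/(\tfrac12 D^2)^2\le\epsilon/2$ for $b$ large. I expect the main obstacle to be the second-moment control of $T_1(\mathbf X)$ under the alternative: unlike the copies the directions $u_i$ are now correlated through $\{r_{ij}\}$, so I must show $\operatorname{Var}(T_1(\mathbf X))=O(q^2/\nu^2+D^2/\nu)$ by bounding the index-sharing covariances $\operatorname{Cov}(\hat\gamma_{ij}^2,\hat\gamma_{ik}^2)$, and it is precisely here that the bounded-condition-number constraint $\lambda_{\max}(\mathbf{\Omega}_{\mc{C}})/\lambda_{\min}(\mathbf{\Omega}_{\mc{C}})\le b_0$ is needed to prevent the conditional correlations from aligning and inflating the variance.
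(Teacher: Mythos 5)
Your proposal reproduces the paper's architecture almost exactly up to the analytic core. Your residual-direction reduction (under the clique-star structure the $\mc{H}$-columns never move, the residual space is fixed, the copies' directions are i.i.d.\ uniform on the sphere and hence norm-free, while the observed directions are distributed as normalized rows of a Gaussian with correlation $\mathbf{R}(\mathbf{\Omega}_{\mc{C}}^{-1})$) is precisely the content of the paper's Lemma~\ref{lemma: joint distribution W ratio}; your threshold decoupling of the rejection event is the paper's display \eqref{eq: rejection probability general}; and your binomial tail bound on the number of exceeding copies plays the role of the paper's Lemma~\ref{lem: null tail bound}. Where you genuinely diverge is in how the two one-dimensional laws are controlled: the paper calibrates the threshold at a null quantile and lower-bounds the power via the CLT for sums of squared sample correlations (Theorem~\ref{thm: dense asymptotics}, adapted from Zheng et al.\ 2019), whereas you use two-sided Chebyshev. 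That substitution is legitimate and more elementary—asymptotic normality is overkill for a $1-\epsilon$ power statement—but note that your route consumes exactly the two moment bounds the paper extracts from the CLT machinery, namely $\mu_n \geq q(q-1)/n + (1-4/n)D^2$ and $\sigma_n^2 \leq b_0^4\bigl[(2q/n)^2 + 8n^{-1}D^2\bigr]$ (part 3 of Theorem~\ref{thm: dense asymptotics}). The ``main obstacle'' you flag, $\operatorname{Var}(T_1(\mathbf{X})) = O(q^2/\nu^2 + D^2/\nu)$ under the correlated alternative, is therefore not an incidental technicality: it is the nontrivial variance computation that constitutes the bulk of the paper's Theorem~\ref{thm: dense asymptotics}, and your proposal correctly localizes it (including the role of $b_0$) but does not carry it out. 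Your treatment of the copies, by contrast, is complete and correct: index-sharing covariances vanish by conditioning and rotational invariance, giving variance $\Theta(q^2/\nu^2)$.

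Two quantitative points need repair. First, driving $p^\star$ below $\alpha/4$ is not enough to conclude $\mathbb{P}\bigl(N \geq \alpha(M+1)-1\bigr) \leq \epsilon/2$ with only $M \geq \max(2\alpha^{-1}, \log(2\epsilon^{-1}))$: a Chernoff bound with mean $\alpha M/4$ against a threshold of roughly $\alpha M/2$ yields decay of order $e^{-c\alpha M}$, which is $O(1)$ when $M \asymp \alpha^{-1}$. What is actually needed—and what the paper's Kearns--Saul-based Lemma~\ref{lem: null tail bound} requires—is $2\bigl(p^\star + \sqrt{p^\star}\bigr) < \alpha$, i.e.\ roughly $p^\star \lesssim \alpha^2$, which then gives the $e^{-M}$ bound. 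The fix is free in your scheme, since $p^\star = O(b^{-4})$ and $b$ may depend on $(\alpha,\epsilon)$: simply take $b$ large enough that $p^\star \leq \alpha^2/16$. Second, your rates silently assume $\nu = n-1-|\mc{H}| \asymp n$; Condition~\ref{cond: G0 clique-star} alone only guarantees $\nu \geq 2$, so this should be made explicit—though the paper's own proof makes the same identification of the effective sample size with $n$ when it writes $t_n$ in terms of $n$, so this is a shared, not a new, imprecision.
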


\begin{remark}
In the proof, we specifically choose  $b=\sqrt{2z_1 + 4 b_0^4 \gamma z_2}$, where $z_1$ and $z_2$ are the ($\alpha^2/16$)- and ($\epsilon/2$)-upper quantiles of the standard normal distribution respectively. 
In other words, $b$  depends on $\epsilon$ at the order of $\sqrt{\log(\epsilon^{-1})}$. 
\end{remark}

Theorem~\ref{thm:power_dense} shows that the MC-GoF test can distinguish between the null \eqref{eq: GoF null} and the dense alternative \eqref{eq:dense alternative} with power tending to 1 when $b=b_n\rightarrow\infty$. 
In particular, we have the following corollary on the consistency of the MC-GoF test against the dense alternative.

\begin{corollary}\label{cor:dense consistent}
Suppose that Condition~\ref{cond: G0 clique-star} holds and that $q/n\rightarrow  \gamma \in(0, \infty)$. 
For a sequence of populations such that the condition number of $\mathbf{\Omega}_{\mc{C}}$ is bounded and 
$D(\mathbf{\Omega}_{\mc{C}}, F)$ diverges to infinity, 
if the MC-GoF test is conducted with level $\alpha_n$ at least $1/\log(n)$ and with $M$ at least $2\max(\alpha_n^{-1},\log n)$, then $$\lim_{n\rightarrow\infty}\beta_{n}(P; \alpha_n , T_1) = 1.$$ 
\end{corollary}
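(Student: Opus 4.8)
The plan is to derive the corollary from the \emph{finite-sample} power bound underlying Theorem~\ref{thm:power_dense}, rather than from its asymptotic statement: because the level $\alpha_n$ now varies with $n$ (and may tend to $0$), power cannot be lower-bounded by that at a fixed level, so a version of the theorem that is explicit in $(\alpha,\epsilon)$ is required. Concretely, I would first isolate from the proof of Theorem~\ref{thm:power_dense} the inequality that, for all sufficiently large $n$, one has $\beta_n(P;\alpha,T_1)\geq 1-\epsilon$ whenever (i) $\mathbf{\Omega}\in\mathbf{\Theta}_{n1}(b)$ with $b=\sqrt{2z_1+4b_0^4\gamma z_2}$, where $z_1=\Phi^{-1}(1-\alpha^2/16)$ and $z_2=\Phi^{-1}(1-\epsilon/2)$ are the quantiles identified in the remark after the theorem, and (ii) $M\geq\max(2\alpha^{-1},\log(2\epsilon^{-1}))$. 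The corollary then follows by checking that the prescribed sequence $(\alpha_n,M)$ meets (i) and (ii) for every fixed target $\epsilon$ once $n$ is large.

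Fix an arbitrary $\epsilon\in(0,1)$; it suffices to prove $\liminf_n\beta_n(P;\alpha_n,T_1)\geq 1-\epsilon$, since letting $\epsilon\downarrow 0$ then forces $\lim_n\beta_n=1$ (as $\beta_n\leq 1$). For condition (ii), the hypothesis $M\geq 2\max(\alpha_n^{-1},\log n)$ gives both $M\geq 2\alpha_n^{-1}$ and $M\geq 2\log n$; since $\epsilon$ is fixed, $2\log n\geq\log(2\epsilon^{-1})$ for all large $n$, so (ii) holds eventually. For condition (i), the assumed bound on the condition number of $\mathbf{\Omega}_{\mc{C}}$ (say by $b_0$) supplies the eigenvalue-ratio constraint in the definition of $\mathbf{\Theta}_{n1}$, leaving only the separation requirement $D(\mathbf{\Omega}_{\mc{C}},F)\geq b_n\sqrt{q/n}$ with $b_n^2=2z_1(\alpha_n)+4b_0^4\gamma\,z_2(\epsilon)$, equivalently
\[
D(\mathbf{\Omega}_{\mc{C}},F)^2\,\frac{n}{q}\;\geq\;2z_1(\alpha_n)+4b_0^4\gamma\,z_2(\epsilon).
\]
Here $z_2(\epsilon)$ is a fixed constant, so the whole task reduces to controlling the level-induced term $z_1(\alpha_n)$.

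The decisive point is that the floor $\alpha_n\geq 1/\log n$ keeps this quantile growing extremely slowly: since $z_1(\alpha_n)=\Phi^{-1}(1-\alpha_n^2/16)\leq\Phi^{-1}\!\big(1-\tfrac{1}{16(\log n)^2}\big)$ and the standard-normal upper quantile at tail probability $\delta$ is of order $\sqrt{2\log(1/\delta)}$, one obtains $z_1(\alpha_n)=O(\sqrt{\log\log n})$. Meanwhile $D(\mathbf{\Omega}_{\mc{C}},F)\to\infty$ together with $q/n\to\gamma\in(0,\infty)$ makes the left-hand side $D(\mathbf{\Omega}_{\mc{C}},F)^2(n/q)$ diverge, so for the regime under consideration it eventually overtakes the $O(\sqrt{\log\log n})$ right-hand side, and (i) holds for all large $n$. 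Combining the two verifications with the finite-sample bound yields $\beta_n(P;\alpha_n,T_1)\geq 1-\epsilon$ for all large $n$, i.e.\ $\liminf_n\beta_n\geq 1-\epsilon$, and the arbitrariness of $\epsilon$ completes the argument.

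I expect the main obstacle to be the first step: extracting and justifying the explicit finite-sample inequality with the stated $(\alpha,\epsilon)$ dependence from the proof of Theorem~\ref{thm:power_dense}, which is phrased asymptotically for \emph{fixed} $\alpha$. I must check that its ingredients---the Gaussian-type control of $T_1(\mathbf{X})$ under the alternative, the concentration of the empirical distribution of the copies $T_1(\widetilde{\mathbf{X}}^{(m)})$, and the Monte Carlo error from finite $M$---remain valid uniformly as $\alpha_n\to 0$ along the whole sequence. A secondary delicate point is confirming that the diverging signal genuinely dominates $2z_1(\alpha_n)$; this is precisely where the rate floor $\alpha_n\geq 1/\log n$ is used, as it caps $z_1(\alpha_n)$ at the negligible order $\sqrt{\log\log n}$, ensuring the divergence of $D(\mathbf{\Omega}_{\mc{C}},F)^2(n/q)$ carries the separation condition.
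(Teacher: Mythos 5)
Your proposal follows essentially the same route as the paper: the paper gives no separate proof of this corollary, treating it as an immediate consequence of the proof of Theorem~\ref{thm:power_dense} together with the remark that the constant there is $b=\sqrt{2z_1+4b_0^4\gamma z_2}$, with $z_1$ the $(\alpha^2/16)$-upper and $z_2$ the $(\epsilon/2)$-upper normal quantile. Your verification---that $M\geq 2\max(\alpha_n^{-1},\log n)$ eventually dominates $\max(2\alpha_n^{-1},\log(2\epsilon^{-1}))$ for each fixed $\epsilon$, and that the floor $\alpha_n\geq 1/\log n$ caps $z_1(\alpha_n)$ at $O(\sqrt{\log\log n})$ so the divergence of $D(\mathbf{\Omega}_{\mc{C}},F)^2\, n/q$ eventually absorbs $2z_1(\alpha_n)+4b_0^4\gamma z_2(\epsilon)$---is precisely the intended argument, made more explicit (including the uniformity-in-$\alpha_n$ caveat) than the paper itself spells out.
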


Although Theorem~\ref{thm:power_dense}  and Corollary \ref{cor:dense consistent} require $q$ to grow proportionally in $n$, they allow the limit of $q/n$ to be much larger than $1$ so they are applicable in high-dimensional settings where the dimension is much larger than the sample size. When $q/n$ is bounded, the result in Corollary~\ref{cor:dense consistent} cannot be improved.  
Specifically, the following theorem proves that if $D(\mathbf{\Omega}_{\mc{C}}, F)$ is at the order of $\sqrt{q/n}$, then no $\alpha$-level test for $H_0: P\in \mc{M}_{G_0}$ can differentiate between the null hypothesis and the dense hypothesis with power tending to one as $n$ and $q$ grow. 

\begin{theorem}\label{thm: lower bound dense}
Let $0<\alpha<\beta<1$. 
Suppose that as $n \rightarrow \infty, q \rightarrow \infty$ and that Condition~\ref{cond: G0 clique-star} holds. Suppose $q / n \leq \kappa$ for some constant $\kappa<\infty$ and all $n$. 
Then there exists a constant $b=b(\kappa, \beta-\alpha)$ such that
$$
\limsup _{n \rightarrow \infty}~~\left\{ \sup_{\phi} \inf_{\mathbf{\Omega} \in \mathbf{\Theta}_{n1}(b)} \mathbb{E}(\phi) \right\}< \beta.
$$
where $\sup_{\phi}$ is taken over any $\alpha$-level test $\phi$ for $H_0: P\in \mc{M}_{G_0}$. 
\end{theorem}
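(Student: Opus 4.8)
The plan is to prove the minimax lower bound by reducing the composite problem to a single tractable pair of hypotheses and then applying the mixture (Ingster--Suslina) technique. Since $\sup_\phi\inf_{\mathbf{\Omega}\in\mathbf{\Theta}_{n1}(b)}\mathbb{E}(\phi)$ can only decrease when the alternative is shrunk to a sub-family, it suffices to exhibit one null distribution and a family of alternatives inside $\mathbf{\Theta}_{n1}(b)$ that are statistically indistinguishable. First I would decouple the clique $\mc{H}$ from the independent set $\mc{C}$: take the full precision matrix to be block diagonal, with the $\mc{H}$-block equal to $\mathbf{I}_{|\mc{H}|}$, all cross-block entries zero (such edges are permitted under $G_0$), and the $\mc{C}$-block equal to $\mathbf{\Omega}_{\mc{C}}$. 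Under this embedding $X_{\mc{H}}\sim\mathbf{N}(0,\mathbf{I})$ is independent of $X_{\mc{C}}$ and has the same law under the null and under every alternative, so it is ancillary and the observation reduces to $\mathbf{X}_{\mc{C}}$, an $n\times q$ Gaussian matrix with i.i.d. rows $\mathbf{N}_q(0,\mathbf{\Omega}_{\mc{C}}^{-1})$. The problem becomes: test $\mathbf{\Omega}_{\mc{C}}=\mathbf{I}_q$ against correlation matrices separated by $D(\mathbf{\Omega}_{\mc{C}}, F)\ge b\sqrt{q/n}$ with bounded condition number.

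For the alternative prior I would set $\mathbf{\Omega}_{\mc{C}}^{-1}=\mathbf{I}_q+\delta\mathbf{A}$, where $\mathbf{A}$ is symmetric with zero diagonal and i.i.d.\ Rademacher off-diagonal entries and $\delta=b/\sqrt{n(q-1)}$. This makes the Frobenius separation deterministic, $\|\mathbf{R}(\mathbf{\Omega}_{\mc{C}}^{-1})-\mathbf{I}_q\|_F=\delta\sqrt{q(q-1)}=b\sqrt{q/n}$, exactly matching the boundary of $\mathbf{\Theta}_{n1}(b)$ and echoing the dense-but-weak example of Remark~\ref{rem:power-metric}. Writing $P_0$ for the null $\mathbf{N}_p(0,\mathbf{I}_p)^{\otimes n}$ and $\bar{P}_1$ for the mixture over these alternatives (restricted to the positive-definiteness event $\mathcal{G}$ defined below), the key quantity is the $\chi^2$-divergence, which for two independent draws $\mathbf{A}_1,\mathbf{A}_2$ reduces to leading order to the closed Gaussian form
$$
1+\chi^2(\bar{P}_1\,\|\,P_0)=\mathbb{E}_{\mathbf{A}_1,\mathbf{A}_2}\big[\det(\mathbf{I}_q-\delta^2\mathbf{A}_1\mathbf{A}_2)^{-n/2}\big]\cdot(1+o(1)).
$$
Expanding $-\tfrac{n}{2}\log\det(\mathbf{I}-\delta^2\mathbf{A}_1\mathbf{A}_2)=\tfrac{n}{2}\delta^2\operatorname{tr}(\mathbf{A}_1\mathbf{A}_2)+O(n\delta^4\|\mathbf{A}_1\mathbf{A}_2\|_F^2)$ and using that $\operatorname{tr}(\mathbf{A}_1\mathbf{A}_2)=\sum_{i\ne j}(\mathbf{A}_1)_{ij}(\mathbf{A}_2)_{ij}$ is a sum of roughly $q^2$ independent Rademacher products, the leading moment-generating factor evaluates to $\cosh(\tfrac{n}{2}\delta^2)^{q(q-1)/2}\to\exp(b^4/16)$, a finite constant that tends to $1$ as $b\downarrow 0$.

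To make this rigorous I would introduce the event $\mathcal{G}=\{\|\mathbf{A}_1\|_{\mathrm{op}},\|\mathbf{A}_2\|_{\mathrm{op}}\le 3\sqrt{q}\}$, which has probability tending to one, so that on $\mathcal{G}$ the perturbation satisfies $\|\delta\mathbf{A}_k\|_{\mathrm{op}}\approx 2b/\sqrt{n}\to 0$. This simultaneously guarantees positive definiteness and condition number $\le b_0$ (placing each alternative in $\mathbf{\Theta}_{n1}(b)$) and bounds the remainder of the log-determinant expansion by $O(b^4\kappa)$ via $n\delta^4\|\mathbf{A}_1\mathbf{A}_2\|_F^2\lesssim b^4 q/n\le b^4\kappa$ --- this is precisely where the hypothesis $q/n\le\kappa$ enters and forces $b$ to depend on $\kappa$. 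Conditioning the prior on $\mathcal{G}$ changes the mixture by at most $\mathbb{P}(\mathcal{G}^c)\to 0$ in total variation. Combining these, $\chi^2(\bar{P}_1\|P_0)$ stays bounded by a constant $C(b,\kappa)$ with $C(b,\kappa)\downarrow 0$ as $b\downarrow 0$; since any $\alpha$-level test obeys $\mathbb{E}_{P_0}(\phi)\le\alpha$ and $\inf_{\mathbf{\Omega}\in\mathbf{\Theta}_{n1}(b)}\mathbb{E}(\phi)\le\mathbb{E}_{\bar{P}_1}(\phi)\le\alpha+\tfrac{1}{2}\sqrt{\chi^2(\bar{P}_1\|P_0)}$, choosing $b=b(\kappa,\beta-\alpha)$ small enough that $\tfrac{1}{2}\sqrt{C(b,\kappa)}<\beta-\alpha$ yields the claimed $\limsup<\beta$.

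The main obstacle, I expect, is the rigorous control of the $\chi^2$-divergence: the determinant identity is exact only for a fixed pair $(\mathbf{A}_1,\mathbf{A}_2)$, and one must bound the second- and higher-order terms of the log-determinant expansion uniformly over the random spectra of $\mathbf{A}_1\mathbf{A}_2$ and then interchange the expansion with the Rademacher expectation. Handling this remainder while retaining the clean leading constant $\exp(b^4/16)$, and verifying that the truncation to $\mathcal{G}$ neither destroys the Frobenius separation nor inflates the divergence, is the technically delicate part; the reduction and the final total-variation bound are then routine.
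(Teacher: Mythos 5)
Your proposal is correct in outline and shares the paper's outer machinery (embed the alternatives in a block-diagonal precision so that only the $\mc{C}$-block matters, place a prior on that block, bound the $\chi^2$-divergence of the mixture against $\mathbf{N}_p(\mathbf{0},\mathbf{I}_p)^{\otimes n}$, and convert to a total-variation bound on the power of any $\alpha$-level test), but your least-favorable prior --- and hence the heart of the computation --- is genuinely different. The paper takes $\mathbf{\Omega}_{\mc{C}}^{-1}=(1-\rho)\mathbf{I}_q+\rho\,\bs{v}\bs{v}^\top$ with $\bs{v}\in\{-1,1\}^q$ and $\rho=b/\sqrt{n(q-1)}$: a rank-one sign perturbation adapted from Theorem~1 of \citet{cai2013optimal}. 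Its payoff is that every matrix in the prior is deterministically positive definite with explicitly bounded condition number (eigenvalues $1-\rho$ and $1+(q-1)\rho$), so no truncation event is needed, and the $\chi^2$-divergence collapses exactly to a one-dimensional expectation over the scalar $\mathbf{1}^\top V/q$, bounded by elementary means. Your prior $\mathbf{\Omega}_{\mc{C}}^{-1}=\mathbf{I}_q+\delta\mathbf{A}$ with i.i.d.\ Rademacher off-diagonal entries sits at the same Frobenius separation but is full rank; you then need the exact cross-term identity $\mathbb{E}_{P_0}\bigl[f_1f_2/f_0^2\bigr]=\det(\mathbf{I}_q-\delta^2\mathbf{A}_1\mathbf{A}_2)^{-n/2}$, an operator-norm truncation $\mathcal{G}$ (which simultaneously gives positive definiteness, membership in $\mathbf{\Theta}_{n1}(b)$, and convergence of the log-determinant series), and a uniform bound on the remainder, which is exactly where $q/n\le\kappa$ enters --- all feasible, and you correctly flagged this as the technical crux. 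Two small corrections: (i) your leading constant drops a factor of two, since each unordered pair $(i,j)$ appears twice in $\operatorname{tr}(\mathbf{A}_1\mathbf{A}_2)$; the moment-generating factor is $\cosh(n\delta^2)^{q(q-1)/2}\to e^{b^4/4}$, not $\cosh(\tfrac{n}{2}\delta^2)^{q(q-1)/2}\to e^{b^4/16}$ --- immaterial, since all that matters is that the bound tends to $1$ as $b\downarrow 0$; (ii) the worry that truncation might destroy the separation resolves itself, because $\|\mathbf{R}(\mathbf{I}_q+\delta\mathbf{A})-\mathbf{I}_q\|_F=\delta\sqrt{q(q-1)}$ is deterministic over the whole prior. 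Net comparison: the paper's rank-one prior buys a shorter, truncation-free proof by leaning on an existing covariance-testing result, while yours is more self-contained and arguably more faithful to the dense-but-weak regime, at the cost of a random-matrix truncation and a log-determinant expansion.
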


Theorem~\ref{thm: lower bound dense} shows that no level $\alpha$ test for \eqref{eq: GoF null} can distinguish between the two hypotheses with power tending to 1 as $n$ and $p$ grow when the separation rate $\varepsilon_n$ is of order $\sqrt{q / n}$.
Thus, it provides a lower bound on the separation rate. 
Comparing with the power guarantee given in Theorem~\ref{thm:power_dense}, we see that the MC-GoF test with $T_1$ is rate-optimal for dense alternatives.

\subsection{Strong alternative}\label{sec:strong alternative}
We consider the case where the true precision $\mathbf{\Omega}$ deviates from the model assumption of the null hypothesis with at least one large entry. 
In such a case, we choose the following test statistic
$$T_2(\mathbf X)= \max_{i,j\in \mc{C}, i\neq j}\hat{\gamma}_{ij}^2,$$
where $\hat{\gamma}_{ij}$ is the pairwise residual correlation statistic defined in \eqref{residual correlation}. 

To examine the power, we consider the quantity $D(\cdot, \infty)$ and define 
$$
\mathbf{\Theta}_{n2}(b)=\left\{\mathbf{A}\in \mathbb{S}_+^p: D(\mathbf{A}_C, \infty) \geq b\sqrt{\frac{\log q}{n}}\right\},
$$
and consider the following alternative:
$$H_a: \mathbf{\Omega}\in \mathbf{\Theta}_{n2}(b). $$

\begin{theorem}\label{thm:power_sparse}
Suppose that Condition~\ref{cond: G0 clique-star} holds and that $\log q/n\rightarrow 0$.
For any $\epsilon>0$, 
if the sample size $n>\max(10/\alpha, 32\log(16/\eps),8/\eps)$ and 
$M>\max(2\alpha^{-1},\log(2\epsilon^{-1})$, then 
   $$
  \inf_{\mathbf{\Omega}\in \mathbf{\Theta}_{n2}(16)} \beta_{n}(P;\alpha, T_2)\geq 1-\epsilon.
   $$ 
\end{theorem}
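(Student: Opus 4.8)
The plan is to exploit the clique--star structure to collapse the whole problem onto the $\mc{C}$-block, and then to run a threshold argument that separates the single strong signal in $\mathbf{X}$ from the \emph{distribution-free} fluctuations of the copies. First I would record the structural simplification forced by Condition~\ref{cond: G0 clique-star}: every $i\in\mc{C}$ has neighbourhood $N_i=\mc{H}$ with $|\mc{H}|\le n-3$, whereas every $i\in\mc{H}$ has neighbourhood of size $\ge p-1\ge n-1$, so Algorithm~\ref{alg: exchangeable} leaves all $\mc{H}$-columns unchanged and regenerates each $\mc{C}$-column by residual rotation against $[\bs{1}_n,\mathbf{X}_{\mc{H}}]$. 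Consequently $T_2$ only ever involves pairs inside $\mc{C}$, consistent with $T_2(\mathbf{X})=\max_{i,j\in\mc{C},\,i\ne j}\hat\gamma_{ij}^2$, and in every copy the residual of column $i\in\mc{C}$ retains the observed fitted part and residual norm but acquires a fresh direction, uniform on the unit sphere of the common $\nu$-dimensional residual space (the orthogonal complement of the column space of $[\bs{1}_n,\mathbf{X}_{\mc{H}}]$, $\nu=n-|\mc{H}|-1$), independently across $i$.

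Second, I would pin down the law of the copies. Conditional on the hub the $M$ backward chains are drawn independently, so the $T_2(\widetilde{\mathbf{X}}^{(m)})$ are i.i.d.; moreover each copy correlation $\tilde\gamma_{ij}$ is the cosine between two independent uniform directions in $\mathbb{R}^\nu$, whose law is free of the data and of $P$ (equivalently $\sqrt{\nu_0}\,\tilde\gamma_{ij}/\sqrt{1-\tilde\gamma_{ij}^2}\sim t_{\nu_0}$ with $\nu_0:=n-2-|\mc{H}|$). The crucial consequence is that $p_\tau:=\P(T_2(\widetilde{\mathbf{X}}^{(1)})\ge\tau)$ depends only on $(q,\nu_0)$, so that $\sum_{m}\One{T_2(\widetilde{\mathbf{X}}^{(m)})\ge\tau}\sim\mathrm{Bin}(M,p_\tau)$ exactly. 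A union bound over the $\binom{q}{2}$ pairs together with a Beta/$t$ upper-tail estimate then lets me select a deterministic threshold $\tau=\tau_{n,q}$ of order $\log q/\nu_0$ for which $p_\tau$ is as small as needed, so that $\P(\mathrm{Bin}(M,p_\tau)>k)\le\epsilon/2$ where $k:=\lfloor\alpha(M+1)\rfloor-1\ge1$ (using $M\ge 2\alpha^{-1}$).

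Third, the signal step. By Proposition~\ref{prop: Omega related to R}, $D(\mathbf{\Omega}_{\mc{C}},\infty)=\max_{i,j\in\mc{C}}|\operatorname{cor}(X_i,X_j\mid X_{\mc{H}})|$, so on $\mathbf{\Theta}_{n2}(16)$ there is a pair $(i^*,j^*)$ with $|\rho_{i^*j^*}|\ge 16\sqrt{\log q/n}$. Conditional on $\mathbf{X}_{\mc{H}}$ the residuals $(\mathbf{r}_{i^*},\mathbf{r}_{j^*})$ are jointly Gaussian with population correlation $\rho_{i^*j^*}$, so $\hat\gamma_{i^*j^*}$ concentrates around $\rho_{i^*j^*}$; a Gaussian/chi-square concentration bound, using the margin supplied by the constant $16$ and the lower bounds $n>\max(10/\alpha,32\log(16/\epsilon),8/\epsilon)$, yields $\P(\hat\gamma_{i^*j^*}^2>\tau)\ge1-\epsilon/2$, hence $T_2(\mathbf{X})\ge\hat\gamma_{i^*j^*}^2>\tau$. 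Finally I would combine the two events: on $\{T_2(\mathbf{X})>\tau\}$ any copy with $T_2(\widetilde{\mathbf{X}}^{(m)})\ge T_2(\mathbf{X})$ also has $T_2(\widetilde{\mathbf{X}}^{(m)})\ge\tau$, so if at most $k$ copies exceed $\tau$ the p-value is $\le\alpha$; a union bound gives $\beta_n\ge1-\P(T_2(\mathbf{X})\le\tau)-\P(\mathrm{Bin}(M,p_\tau)>k)\ge1-\epsilon$, uniformly over $\mathbf{\Theta}_{n2}(16)$ since both $p_\tau$ and the worst-case signal bound are uniform in $\mathbf{\Omega}$.

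I expect the main obstacle to be making the two thresholds compatible, that is, choosing a single $\tau$ that simultaneously dominates the null extreme value of the copies and is dominated by the observed signal. The copy maximum sits at scale $\sqrt{\log q/\nu_0}$, whereas the guaranteed signal is $16\sqrt{\log q/n}$, so the separation only closes when the residual degrees of freedom $\nu_0=n-2-|\mc{H}|$ are a fixed fraction of $n$ (as the clique--star reduction with a small clique provides), and the factor $16$ is precisely the constant-margin budget that fits $\tau$ strictly between the two scales. The supporting technical work is therefore a sharp uniform upper-tail bound for the $\binom{q}{2}$ distribution-free copy correlations (to control $p_\tau$ after the union bound) together with the matching lower-tail concentration of the single sample partial correlation $\hat\gamma_{i^*j^*}$ about $\rho_{i^*j^*}$; carrying these out with explicit constants compatible with the stated thresholds on $n$ and $M$ is where the bulk of the effort lies.
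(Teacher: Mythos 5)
Your proposal follows essentially the same route as the paper's proof: you reduce everything to the residual space orthogonal to $[\bs{1}_n,\mathbf{X}_{\mc{H}}]$ (the paper's Lemma~\ref{lemma: joint distribution W ratio}), use the fact that the copies' residual directions are i.i.d.\ uniform and hence distribution-free to control, via a union bound over the $\binom{q}{2}$ pairs, the number of copies exceeding a deterministic threshold of order $\log q$ over the residual dimension (the paper does this through Statement~1 of Theorem~\ref{thm: strong asymptotics} combined with the Kearns--Saul-type Binomial bound of Lemma~\ref{lem: null tail bound}, where you use an exact Binomial tail with $k=\lfloor\alpha(M+1)\rfloor-1$ --- an immaterial difference), and then show that the sample partial correlation of the strongest pair concentrates above that same threshold (Statement~2 of Theorem~\ref{thm: strong asymptotics}), combining the two events exactly as in the paper's three-step scheme. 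The obstacle you single out --- that the copies fluctuate at scale $\sqrt{\log q/\nu_0}$ with $\nu_0=n-2-|\mc{H}|$ while $\mathbf{\Theta}_{n2}(16)$ calibrates the signal only to $\sqrt{\log q/n}$, so the separation closes only when $\nu_0\asymp n$, even though Condition~\ref{cond: G0 clique-star} merely requires $|\mc{H}|\le n-3$ --- is genuine, and the paper's own proof does not resolve it but rather glosses over it by applying the tail bounds of Theorem~\ref{thm: strong asymptotics} with the sample size $n$ in place of the actual residual dimension $n-1-|\mc{H}|$ of the matrices $\mathbf{U}$ and $\widetilde{\mathbf{U}}^{(m)}$, so on this point your sketch is, if anything, more careful than the paper.
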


\begin{corollary}\label{cor:sparse-consistent}
Suppose that Condition~\ref{cond: G0 clique-star} holds and that $\log q/n\rightarrow 0$.
If the MC-GoF test is conducted with level $\alpha_n$ at least $1/\log(n)$ and with $M$ at least $2\max(\alpha_n^{-1},\log n)$, then $$\lim_{n\rightarrow\infty}\inf_{\mathbf{\Omega}\in \mathbf{\Theta}_{n2}(16)} \beta_{n}(P; \alpha_n , T_2) = 1.$$

\end{corollary}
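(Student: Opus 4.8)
The plan is to derive this consistency result directly from the finite-sample power bound in Theorem~\ref{thm:power_sparse} through a standard $\epsilon$-$N$ argument, exploiting the fact that the threshold conditions in that theorem are eventually met under the prescribed scaling of $\alpha_n$ and $M$. Since Theorem~\ref{thm:power_sparse} already delivers a bound that is uniform over $\mathbf{\Theta}_{n2}(16)$, no separate uniformity argument is required; the entire task is to convert a fixed-$\epsilon$, fixed-$\alpha$ statement into an asymptotic one where $\alpha=\alpha_n$ is allowed to decay and $M=M_n$ is allowed to grow.

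First I would fix an arbitrary $\epsilon>0$ and seek a threshold $N=N(\epsilon)$ such that the hypotheses of Theorem~\ref{thm:power_sparse} hold with this $\epsilon$ and with $\alpha=\alpha_n$, $M=M_n$ for all $n\geq N$. The key simplification is that the hypothesis $\alpha_n\geq 1/\log n$ forces $\alpha_n^{-1}\leq \log n$; consequently $M_n\geq 2\max(\alpha_n^{-1},\log n)=2\log n$, which diverges. The sample-size condition $n>\max(10/\alpha_n,\,32\log(16/\epsilon),\,8/\epsilon)$ then reduces, in its first term, to $n>10/\alpha_n$, and since $10/\alpha_n\leq 10\log n$ this is implied by $n>10\log n$, which holds for all large $n$; the two remaining terms are constants depending only on $\epsilon$. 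The condition $M_n>\max(2\alpha_n^{-1},\,\log(2\epsilon^{-1}))$ holds because $M_n\geq 2\log n\geq 2\alpha_n^{-1}$ and $M_n=2\log n$ eventually exceeds the constant $\log(2\epsilon^{-1})$. The standing assumption $\log q/n\to 0$ carries over directly. With $N$ in hand, Theorem~\ref{thm:power_sparse} yields $\inf_{\mathbf{\Omega}\in\mathbf{\Theta}_{n2}(16)}\beta_n(P;\alpha_n,T_2)\geq 1-\epsilon$ for every $n\geq N$, whence $\liminf_{n\to\infty}\inf_{\mathbf{\Omega}\in\mathbf{\Theta}_{n2}(16)}\beta_n(P;\alpha_n,T_2)\geq 1-\epsilon$. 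Because $\epsilon>0$ was arbitrary and any power is at most $1$, letting $\epsilon\downarrow 0$ gives the claimed limit equal to $1$.

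There is no genuinely new analytic difficulty here, and the proof mirrors that of Corollary~\ref{cor:dense consistent} (which specializes Theorem~\ref{thm:power_dense} in the same manner). The only point requiring care---and thus the main obstacle---is the interplay between the decay rate of $\alpha_n$ and the sample-size threshold $n>10/\alpha_n$: this is precisely the reason the hypothesis restricts $\alpha_n$ to decay no faster than $1/\log n$, so that $10/\alpha_n$ stays below $n$ asymptotically while $M_n\geq 2\log n$ simultaneously clears the copy-number requirement. Verifying this compatibility, together with the bookkeeping that every finite-sample threshold in Theorem~\ref{thm:power_sparse} is eventually satisfied, is the crux of the argument.
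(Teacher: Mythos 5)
Your proposal is correct and matches the paper's intended argument: the paper states Corollary~\ref{cor:sparse-consistent} without a separate proof precisely because it follows from Theorem~\ref{thm:power_sparse} by the $\epsilon$-$N$ bookkeeping you describe, using $\alpha_n^{-1}\leq\log n$ so that $M_n\geq 2\log n$ and $n>10/\alpha_n$ hold eventually, and exploiting that the theorem's bound is already uniform over $\mathbf{\Theta}_{n2}(16)$ with the constant $16$ independent of $\epsilon$. Your identification of the crux---the compatibility between the decay rate $\alpha_n\geq 1/\log n$ and the sample-size threshold $n>10/\alpha_n$---is exactly the point the hypothesis of the corollary is designed to handle.
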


The following theorem proves that if $D(\mathbf{\Omega}_{\mc{C}}, \infty)$ is not sufficiently large compared to $\sqrt{\log (q)/n}$, then any test for $H_0: P\in \mc{M}_{G_0}$ cannot differentiate between the null hypothesis and the sparse hypothesis as $n$ and $q$ grow. 

\begin{theorem}\label{thm: lower bound sparse}
Let $0<\alpha<1$. 
Suppose that Condition~\ref{cond: G0 clique-star} holds and that as $n \rightarrow \infty, q \rightarrow \infty$ and that $\log (q) / n \leq \kappa$ for some constant $\kappa<\infty$ and all $n$. 
Then for any constant $b\in (0,  \min(1,\kappa^{-1}))$, the following holds:
$$
\limsup_{n \rightarrow \infty}~~ \left\{ \sup_{\phi} ~~ \inf_{\mathbf{\Omega} \in \mathbf{\Theta}_{n2}(b)} \mathbb{E}(\phi) \right\}\leq \alpha, 
$$
where $\sup_{\phi}$ is taken over all $\alpha$-level test for $H_0: P\in \mc{M}_{G_0}$. 
\end{theorem}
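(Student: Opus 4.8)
The plan is to establish this minimax lower bound by the standard mixture (two-point) method: exhibit a least-favorable null together with a prior supported on $\mathbf{\Theta}_{n2}(b)$ whose induced marginal is asymptotically indistinguishable from the null. First I would reduce the ambient problem to a pure correlation-detection problem. Taking $\bmu=\mathbf{0}$ and the block-diagonal precision $\mathbf{\Omega}^{(0)}=\mathbf{I}_p$ as the null $P_0\in\mc{M}_{G_0}$, I construct alternatives indexed by a pair $\{i,j\}\subseteq\mc{C}$ and a sign $s\in\{\pm1\}$ by keeping $\mathbf{\Omega}_{\mc{H}}=\mathbf{I}$, all $\mc{H}$--$\mc{C}$ cross-blocks zero, and setting the conditional covariance of $X_{\mc{C}}$ given $X_{\mc{H}}$ to $\mathbf{\Sigma}_{ij}^{s}=\mathbf{I}_q+s\rho(\mathbf{e}_i\mathbf{e}_j^\top+\mathbf{e}_j\mathbf{e}_i^\top)$ with $\rho=b\sqrt{\log q/n}$. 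Because $b<\min(1,\kappa^{-1})$ forces $\rho\le b\sqrt{\kappa}<1$, each $\mathbf{\Sigma}_{ij}^{s}$ is positive definite, so the corresponding precision matrix is well defined; since its $\mc{C}$-block inverse has unit diagonal and a single off-diagonal entry $\rho$, Proposition~\ref{prop: Omega related to R} gives $D(\mathbf{\Omega}_{\mc{C}},\infty)=\rho=b\sqrt{\log q/n}$, so every alternative lies in $\mathbf{\Theta}_{n2}(b)$. With this block structure $X_{\mc{H}}\indp X_{\mc{C}}$ and the law of $X_{\mc{H}}$ is identical across all these distributions, so the testing problem collapses to detecting a single planted correlation among the $q$ coordinates of $X_{\mc{C}}$ from $n$ i.i.d. samples.

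Second I would pass from power to total variation. Let $\bar{P}_1=\frac{1}{N}\sum_{ij,s}P_{ij}^{s}$ be the uniform mixture over the $N=2\binom{q}{2}$ alternatives. For any level-$\alpha$ test $\phi$ we have $\inf_{\mathbf{\Omega}\in\mathbf{\Theta}_{n2}(b)}\mathbb{E}(\phi)\le\mathbb{E}_{\bar{P}_1}(\phi)\le\mathbb{E}_{P_0}(\phi)+\mathrm{TV}(\bar{P}_1,P_0)\le\alpha+\mathrm{TV}(\bar{P}_1,P_0)$, so it suffices to prove $\mathrm{TV}(\bar{P}_1,P_0)\to0$. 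I would control this through the chi-square divergence using the exact Gaussian identity: for zero-mean Gaussians with covariances $\mathbf{I}+\mathbf{A}_1$ and $\mathbf{I}+\mathbf{A}_2$, the product of their density ratios to $\mathbf{N}(\mathbf{0},\mathbf{I})$ integrates, per sample, to $\det(\mathbf{I}-\mathbf{A}_1\mathbf{A}_2)^{-1/2}$, which follows from the algebraic fact $\mathbf{\Sigma}_1+\mathbf{\Sigma}_2-\mathbf{\Sigma}_1\mathbf{\Sigma}_2=\mathbf{I}-\mathbf{A}_1\mathbf{A}_2$. Classifying index pairs by overlap, the products $\mathbf{A}_1\mathbf{A}_2$ vanish for disjoint pairs and are nilpotent (hence determinant-preserving) for pairs sharing exactly one index, so these contribute exactly $1$; only the diagonal same-pair terms survive, giving the closed form
\[
\chi^2(\bar{P}_1\,\|\,P_0)=\frac{1}{N}\Big[(1-\rho^2)^{-n}+(1+\rho^2)^{-n}-2\Big].
\]
Since $N\asymp q^2$ and $(1+\rho^2)^{-n}\le1$, the divergence vanishes precisely when $(1-\rho^2)^{-n}=o(q^2)$.

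The main obstacle is this last step in the regime where $\log q$ is comparable to $n$, which is allowed here since only $\log q/n\le\kappa$ is assumed rather than $\log q/n\to0$. There $\rho^2$ stays bounded away from $0$, and as $b$ approaches $\min(1,\kappa^{-1})$ the factor $(1-\rho^2)^{-n}=\exp\{-n\log(1-\rho^2)\}$ can grow faster than $q^2$, so the plain second moment diverges and the elementary computation above is insufficient near the boundary. To cover the full claimed range of $b$ I would replace the raw likelihood ratio by a truncated one, following the Ingster--Suslina device: restrict each single-pair ratio to the event that the corresponding empirical correlation does not exceed a threshold of order $\sqrt{(\log N)/n}$, show that the resulting truncated second moment still tends to $1$ because the heavy upper tail of the log-likelihood responsible for the blow-up is excised, and bound the contribution of the truncation event by a union bound over the $O(q^2)$ pairs together with a Gaussian tail estimate. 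Verifying that the truncation level can be chosen so that both pieces vanish for every $b<\min(1,\kappa^{-1})$ is the delicate part and is what pins down the stated constant; by comparison, the positive-definiteness check, membership in $\mathbf{\Theta}_{n2}(b)$, and the total-variation-to-power reduction are routine.
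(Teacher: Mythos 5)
Your construction and second-moment computation coincide with the paper's own proof: the paper plants a single off-diagonal entry $\rho=b\sqrt{\log q/n}$ in the $\mathcal{C}$-block (positive sign only; your $\pm$ symmetrization is immaterial), forms the uniform mixture over pairs, and evaluates the chi-square divergence by the same overlap trichotomy — disjoint pairs contribute $1$, pairs sharing one index give $\det(\mathbf{I}_q-\rho^2\mathbf{e}_i\mathbf{e}_{i'}^\top)^{-n/2}=1$ (your nilpotency remark), and only the diagonal terms survive, yielding exactly your closed form. That part of your proposal is correct and is in substance the paper's argument: when $\rho^2\le 1/2$ one has $(1-\rho^2)^{-n}\le e^{2n\rho^2}=q^{2b^2}=o(q^2)$, the divergence vanishes, and the total-variation-to-power reduction finishes the proof.

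The genuine gap is your deferred truncation step, and it is worse than "delicate": it cannot be carried out on the full range $b<\min(1,\kappa^{-1})$, because the statement itself fails near that boundary. (Your instinct that the plain second moment breaks down there is right; note that the paper's own proof silently inserts the hypotheses "$b^2\kappa<0.5$ and $b^2<1$", which are not implied by $b<\min(1,\kappa^{-1})$, so the paper also only covers the restricted range.) Concretely, take $\mathcal{H}=\emptyset$ and $p=q=\lfloor e^n\rfloor$, so that $\kappa=1$ and $\log q/n\to 1$, and take $b^2\in(1-e^{-4},1)$. The least favorable alternatives carry a planted correlation $\rho\to b$, while under any null in $\mathcal{M}_{G_0}$ the maximum of the $\binom{q}{2}$ pairwise sample correlations concentrates at a level $t^*$ with
\begin{equation*}
(t^*)^2 \;\to\; 1-e^{-4} \;<\; b^2,
\end{equation*}
since $\mathbb{P}(|\hat\gamma_{ij}|>t)\approx(1-t^2)^{n/2}$ and there are $e^{2n(1+o(1))}$ pairs. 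Hence the test that rejects when $\max_{i<j}|\hat\gamma_{ij}|$ exceeds a threshold strictly between $t^*$ and $b$ has size tending to $0$ and power tending to $1$ uniformly over $\mathbf{\Theta}_{n2}(b)$ (the sample correlation of the maximizing pair concentrates at its true value regardless of the rest of the matrix). So $\mathrm{TV}(\bar P_1,P_0)$ stays bounded away from zero, and no second-moment argument — truncated or not — can deliver the claimed conclusion for every $b<\min(1,\kappa^{-1})$. What your (and the paper's) plain computation actually establishes is the theorem under the additional restriction $\rho^2\le 1/2$, i.e. $b^2<\min\bigl(1,\,1/(2\kappa)\bigr)$; with that as the effective statement your proposal is complete, and the Ingster--Suslina refinement should be abandoned rather than pursued, since past that point you would be trying to prove something false.
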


Theorem~\ref{thm: lower bound sparse}  provides a lower bound on the separation rate between the null hypothesis and the sparse alternative hypothesis required by any consistent test. 
Together with the power guarantee in Theorem~\ref{thm:power_sparse}, we see that the MC-GoF test with $T_2$ is rate-optimal for sparse alternatives.

\subsection{Union Alternative}\label{sec:union-alternative}

In this section, we consider the case where the true precision $\mathbf{\Omega}$ has either many small entries or has at least one large entry but not both. In other words, we are interested in the union alternative defined as  
$$
H_a: \mathbf{\Omega}\in \mathbf{\Theta}_{n3}(b)=\mathbf{\Theta}_{n1}(b)\cup \mathbf{\Theta}_{n2}(b).
$$
For this alternative, we propose the following test statistic $T_3$ that incorporates both $T_1$ and $T_2$: 
$$
T_3 (\mathbf{X})=\max\left( \frac{n}{2q} \left(T_1(\mathbf{X})-q(q-1)/n\right),  nT_2(\mathbf{X})-4\log q+\log\log q  \right).
$$

\begin{theorem}\label{thm:power_union}
Suppose that Condition~\ref{cond: G0 clique-star} holds and $q/n\rightarrow\gamma\in (0, \infty)$. 
For any $\epsilon>0$, there exists a positive constant $b$ such that for any sequence of populations with $\mathbf{\Omega}\in \mathbf{\Theta}_{n3}(b)$, it holds 
   $$
  \liminf_{n} \beta_{n}(P;\alpha, T_3)\geq 1-\epsilon,
   $$ 
provided that the number of copies $M$ is at least $\max(2\alpha^{-1},\log(2\epsilon^{-1})$. 
\end{theorem}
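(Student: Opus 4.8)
The plan is to reduce Theorem~\ref{thm:power_union} to the two power guarantees already established for the dense and strong alternatives, namely Theorems~\ref{thm:power_dense} and~\ref{thm:power_sparse}, by exploiting the max structure of $T_3$. The key observation is that the two components of $T_3$ are centred and scaled versions of $T_1$ and $T_2$: the first component $\frac{n}{2q}(T_1 - q(q-1)/n)$ is an affine transformation of $T_1$, and the second component $nT_2 - 4\log q + \log\log q$ is an affine transformation of $T_2$. Since the MC-GoF p-value depends on a statistic only through the ordering of $T(\mathbf{X})$ relative to the $T(\widetilde{\mathbf{X}}^{(m)})$, and since these affine maps are strictly increasing (with coefficients $\frac{n}{2q}>0$ and $n>0$ that are common to $\mathbf{X}$ and all copies), the centring and scaling constants cancel inside the indicators $\One{T(\widetilde{\mathbf{X}}^{(m)}) \ge T(\mathbf{X})}$. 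Thus the first component alone yields exactly the same p-value as $T_1$, and the second alone yields the same p-value as $T_2$.

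First I would make precise the monotonicity of the MC-GoF rejection event in the test statistic. Because $T_3 = \max(g_1(T_1), g_2(T_2))$ with $g_1, g_2$ strictly increasing affine functions, for any realisation one has $T_3(\mathbf{X}) \ge T_3(\widetilde{\mathbf{X}}^{(m)})$ whenever $g_1(T_1(\mathbf{X})) \ge g_1(T_1(\widetilde{\mathbf{X}}^{(m)}))$, i.e.\ whenever $T_1(\mathbf{X}) \ge T_1(\widetilde{\mathbf{X}}^{(m)})$, and similarly for the $T_2$ branch. This gives a coupling-free domination: the event $\{\mathrm{pVal}_{T_3} \le \alpha\}$ contains both $\{\mathrm{pVal}_{T_1}\le\alpha\}$ and $\{\mathrm{pVal}_{T_2}\le\alpha\}$, because a large $T_1(\mathbf{X})$ forces a large $g_1(T_1(\mathbf{X}))$ hence a large $T_3(\mathbf{X})$, while the copies' $T_3$ values are driven by whichever branch is large for them. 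Care is needed here: the copies maximise over \emph{their own} two branches, so I would argue that the rank of $T_3(\mathbf{X})$ among the copies is at least as extreme as the better of its ranks under $T_1$ and $T_2$, which is what makes $\beta_n(P;\alpha,T_3) \ge \max\bigl(\beta_n(P;\alpha,T_1),\,\beta_n(P;\alpha,T_2)\bigr)$ plausible but not immediate.

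The hard part will be exactly this last domination step, because it is \emph{not} generally true that the rank of $\max(g_1(A), g_2(B))$ dominates the rank of $g_1(A)$: a copy can have a large $T_3$ value through its $T_2$-branch even when its $T_1$-branch is small, which could push $T_3(\mathbf{X})$ down in the combined ranking relative to the $T_1$-only ranking. The resolution is to centre the two branches correctly so that, under the null, both $g_1(T_1)$ and $g_2(T_2)$ are asymptotically standard (the $\frac{n}{2q}$-scaling and $q(q-1)/n$ centring normalise the sum-type statistic, while the $4\log q - \log\log q$ centring is the extreme-value normalisation for the max-type statistic). With both branches on a common null scale, I would show that for the copies $\widetilde{\mathbf{X}}^{(m)}$ the maximum of the two normalised branches has a controlled null distribution, so that the threshold that $T_3(\mathbf{X})$ must exceed is, up to $o(1)$ corrections, the same as the separate thresholds. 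Then if $\mathbf{\Omega}\in\mathbf{\Theta}_{n1}(b)$ the $T_1$-branch of $T_3(\mathbf{X})$ diverges and clears the combined threshold by the argument of Theorem~\ref{thm:power_dense}, while if $\mathbf{\Omega}\in\mathbf{\Theta}_{n2}(b)$ the $T_2$-branch diverges and clears it by Theorem~\ref{thm:power_sparse}. Choosing $b$ to be the maximum of the two constants from those theorems, and invoking the same condition $M \ge \max(2\alpha^{-1}, \log(2\epsilon^{-1}))$, yields $\liminf_n \beta_n(P;\alpha,T_3) \ge 1-\epsilon$ uniformly over $\mathbf{\Theta}_{n3}(b) = \mathbf{\Theta}_{n1}(b)\cup\mathbf{\Theta}_{n2}(b)$. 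The main technical burden is therefore controlling the null fluctuations of the losing branch so it does not inflate the combined threshold, which requires the extreme-value and Gaussian-approximation estimates for the copies that were already developed in the proofs of the two preceding theorems.
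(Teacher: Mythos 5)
Your final strategy is essentially the paper's. The first half of your proposal — the claim that $\{\mathrm{pVal}_{T_3}\le\alpha\}$ contains both $\{\mathrm{pVal}_{T_1}\le\alpha\}$ and $\{\mathrm{pVal}_{T_2}\le\alpha\}$, i.e.\ $\beta_n(P;\alpha,T_3)\ge\max(\beta_n(P;\alpha,T_1),\beta_n(P;\alpha,T_2))$ — is indeed false for exactly the reason you identify (a copy can rank highly under $T_3$ through its losing branch), and the paper never attempts it. What the paper actually does, and what you converge to in your last paragraph, is the generic Step 1--3 scheme from Appendix~C.5 with a single \emph{deterministic} threshold $t_n=\max\{z,u_0\}$, where $z$ is the $(\alpha^2/16)$-upper normal quantile calibrating the $T_1$-branch (via Theorem~\ref{thm: dense asymptotics}) and $u_0$ is the corresponding Gumbel quantile calibrating the $T_2$-branch (via the extreme-value limit of \citet{cai2011limiting}). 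Two points that your sketch leaves vague are handled concretely there: (i) your worry that the losing branch ``inflates the combined threshold'' is resolved not by an $o(1)$ correction but by slack in the constants — the union bound gives null exceedance probability $\theta\approx\alpha^2/8$ rather than $\alpha^2/16$, which still satisfies $2(\theta+\sqrt{\theta})<\alpha$, so Lemma~\ref{lem: null tail bound} applies unchanged and yields the $e^{-M}\le\epsilon/2$ bound; and (ii) in the strong-alternative case one must check that the threshold $\frac{16}{n}\log(96q^2/\alpha^2)$ from Theorem~\ref{thm:power_sparse} dominates $(\max\{z,u_0\}+4\log q-\log\log q)/n$, which requires the explicit quantile bounds of Lemma~\ref{lemma: order of z and u_0}; the dense case likewise needs $b$ enlarged to $\sqrt{2\max\{z,u_0\}+4b_0^4\gamma z_2}$ because the threshold is now $\max\{z,u_0\}$ rather than $z$. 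With those two ingredients your plan closes; without them the ``up to $o(1)$'' step is the only real gap.
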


The following theorem is a corollary of Theorem~\ref{thm: lower bound dense} and Theorem~\ref{thm: lower bound sparse}, which shows that the MC-GoF test with $T_3$ is rate-optimal for the union alternative. 

\begin{theorem}\label{thm: lower bound union}
Let $0<\alpha<\beta<1$. 
Suppose that Condition~\ref{cond: G0 clique-star} holds and that as $n \rightarrow \infty, q \rightarrow \infty$ and  $q / n \leq \kappa$ for some constant $\kappa<\infty$ and all $n$. 
Then there exists a constant $b=b(\kappa, \beta-\alpha)$ such that
$$
\limsup _{n \rightarrow \infty}~~\left\{ \sup_{\phi} \inf_{\mathbf{\Omega} \in \mathbf{\Theta}_{n3}(b)} \mathbb{E}(\phi) \right\}< \beta.
$$
where $\sup_{\phi}$ is taken over any $\alpha$-level test $\phi$ for $H_0: P\in \mc{M}_{G_0}$. 
\end{theorem}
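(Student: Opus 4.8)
The plan is to recognise Theorem~\ref{thm: lower bound union} as an immediate consequence of the dense lower bound in Theorem~\ref{thm: lower bound dense}, exploiting the set inclusion built into the union alternative. Since $\mathbf{\Theta}_{n3}(b)=\mathbf{\Theta}_{n1}(b)\cup\mathbf{\Theta}_{n2}(b)$, we have $\mathbf{\Theta}_{n1}(b)\subseteq\mathbf{\Theta}_{n3}(b)$ for every $b>0$, so the alternative class considered here is a superset of the one in the dense case. Enlarging the alternative class can only shrink the worst-case power, which is precisely the direction needed for an indistinguishability (lower-bound) statement.

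Concretely, first I would fix an arbitrary $\alpha$-level test $\phi$ and observe that the infimum of $\mathbb{E}(\phi)$ over a larger parameter set cannot exceed the infimum over a subset:
\[
\inf_{\mathbf{\Omega}\in\mathbf{\Theta}_{n3}(b)}\mathbb{E}(\phi)
=\min\Big(\inf_{\mathbf{\Omega}\in\mathbf{\Theta}_{n1}(b)}\mathbb{E}(\phi),\ \inf_{\mathbf{\Omega}\in\mathbf{\Theta}_{n2}(b)}\mathbb{E}(\phi)\Big)
\le \inf_{\mathbf{\Omega}\in\mathbf{\Theta}_{n1}(b)}\mathbb{E}(\phi).
\]
Since this holds pointwise in $\phi$, taking the supremum over all $\alpha$-level tests preserves the inequality, and then taking $\limsup_n$ yields
\[
\limsup_{n\to\infty}\Big\{\sup_{\phi}\inf_{\mathbf{\Omega}\in\mathbf{\Theta}_{n3}(b)}\mathbb{E}(\phi)\Big\}
\le \limsup_{n\to\infty}\Big\{\sup_{\phi}\inf_{\mathbf{\Omega}\in\mathbf{\Theta}_{n1}(b)}\mathbb{E}(\phi)\Big\}.
\]

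Finally, I would check that the hypotheses match those of Theorem~\ref{thm: lower bound dense}: Condition~\ref{cond: G0 clique-star}, $q\to\infty$, and $q/n\le\kappa$ are identical in both statements, so I can simply take $b=b(\kappa,\beta-\alpha)$ to be the very constant that theorem supplies. With this choice the right-hand side above is strictly less than $\beta$, which gives the claim. There is essentially no analytic obstacle here: the whole content is the monotonicity of the minimax power under enlargement of the alternative, combined with the already-established dense lower bound. The only point worth a remark is the choice of which component to bound against. Bounding instead by $\mathbf{\Theta}_{n2}(b)$ via Theorem~\ref{thm: lower bound sparse} would in principle give the sharper conclusion $\le\alpha$, but that result imposes the growth condition $\log q/n\le\kappa$ rather than $q/n\le\kappa$; under the hypotheses stated here the dense component is the appropriate one, and the weaker bound $<\beta$ is all that is required.
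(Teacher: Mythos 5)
Your proposal is correct and matches the paper's treatment: the paper states this result as an immediate corollary of Theorems~\ref{thm: lower bound dense} and~\ref{thm: lower bound sparse}, and your argument via $\mathbf{\Theta}_{n1}(b)\subseteq\mathbf{\Theta}_{n3}(b)$ and monotonicity of the infimum is exactly the intended (one-line) reasoning. Your observation that only the dense component is needed is accurate, and your choice of the constant $b=b(\kappa,\beta-\alpha)$ from Theorem~\ref{thm: lower bound dense} is the right one.
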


\subsection{Proof sketch for upper bounds}\label{sec:power-proof-sketch}

In this section, we provide an overview of our proof of the power guarantee of the MC-GoF test. 
For each of the tests being considered, the proof will follow the following three steps:
\begin{enumerate}
    \item[] \textbf{Step 1.} We first bound the power from below with some constant $t$ as 
    $$\mathbb P(\text{Reject }H_0)\geq \mathbb P(T(\mathbf{X}) >t) -\mathbb P \left( \frac{1}{M+1}\left[1+\sum_{m=1}^M \mathbf{1}\left\{T\left(\widetilde{\mathbf{X}}^{(m)}\right) \geq t\right\}\right]>\alpha \right),$$
    where the value of $t$ depends on the choice of the test statistic and the setting of the alternative hypothesis. 
    
    \item[] \textbf{Step 2.} Next, we show that  the distribution of the statistics $T(\mathbf{X})$ and $T(\mathbf{\tilde{X}}^{(m)})$ can be characterized using the  distribution of $T(\mathbf{U})$ and $T(\mathbf{\tilde{U}}^{(m)})$, where
    $\mathbf{U}$ is a $(n-1-|\mc{H}|)\times q$ matrix with i.i.d. rows sampled from $N_{q}(\mathbf{0}, \mathbf{R}(\mathbf{\Omega}_{\mc{C}}^{-1}))$, and $\mathbf{\tilde{U}}^{(m)}$ is a $(n-1-|\mc{H}|)\times q$ matrix with i.i.d. rows sampled from $N_{q}(\mathbf{0}, \mathbf{I}_{q})$.
    \item[] \textbf{Step 3.} 
    Based on the characterization in the last step, we derive an upper bound on the probability $\mathbb P \left( \frac{1}{M+1}\left[1+\sum_{m=1}^M \mathbf{1}\left\{T\left(\widetilde{\mathbf{U}}^{(m)}\right) \geq t\right\}\right]>\alpha \right)$ for each choice of statistic $T$. 
    Furthermore, we will analyze the asymptotic property of $T(\mathbf{U})$ to derive a tight lower bound on $\mathbb P(T(\mathbf{U}) >t)$.  
\end{enumerate}
The details of the proof are provided in Appendix~\ref{app: power proof scheme}. 
In the proof, Step 3 is the most important and it requires more technical analyses on the asymptotic properties of sample correlation statistics. 
Step 2 is proved by using the properties of Algorithm~\ref{alg: exchangeable}. 
This step is essential because it provides a tractable characterization that allows us to study the asymptotic properties of the MC-GoF test using the techniques developed for studying high-dimensional sample correlation tests.

Our theoretical results are related to high-dimensional hypothesis testing for correlation matrices \citep{zheng2019test,gao2017high,cai2011limiting}. 
Specifically, when the null graph $G_0$ does not have any edge, $\mc{H}=\emptyset$ and testing whether $P\in \mc{M}_{G_0} $ is equivalent to testing whether the correlation matrix equals to the identity matrix. 
In this scenario, our results imply that the MC-GoF test can achieve rate-optimal power, which matches the test for correlation matrices proposed in \cite{zheng2019test}.

\section{Details on Simulations in Section~\ref{sec: Simulation}}\label{app: full simulation}

We compare the power performance of our method against two baseline methods: the \nameVV procedure in \citet{verzelen_goodness--fit_2010} and the Bonferroni adjustment of multiple testing for the composite null hypothesis in \eqref{eq: hypotheses omega} as suggested in \citet{drton_multiple_2007}. These two methods are referred to as \nameVV and \nameDP. 
Specifically, the Bonferroni-adjusted p-value of the \nameDP method is defined as 
\begin{equation}\label{eqn:bonf}
 \min\left\{ 1,  K  p_{ij} : i < j,  (i,j)\notin \mc{E}  \right\},
 \end{equation}
where $K$ equals to the number of pairs of distinct $i$ and $j$ that are not connected in $G$, and $p_{ij}$ is the statistic defined in \eqref{eq:pcr pvalue exact}.

The MC-GoF test (Algorithm~\ref{alg:gof}) will be conducted using four test statistic functions discussed in Section~\ref{rem: stat}, including $\text{PRC}$ and $\text{ERC}$ in Appendix~\ref{sec: GoF test partial cor}, $\text{F}_{\Sigma}$ in Appendix~\ref{sec: gof test F}, GLR-$\ell_1$ in Appendix~\ref{sec: GoF test glr}. For simplicity, we fix the procedure parameters $M=100$ and $L=3$.

In Appendix~\ref{sec: GoF simulation setup}, we first explain the experimental settings. Appendix~\ref{app: Type-I error} demonstrates the theoretically valid Type-I error control for all methods. Then in Appendix~\ref{sec: GoF simulation power}, we compare power of each method for each setting with different configurations of population parameters. Appendix~\ref{app: M1P2} presents the numerical results of a baseline method, the $M^1P_2$ procedure, which performs similarly to the \nameVV procedure but incurs significantly higher computational costs as discussed in Remark~\ref{rem: m1p2}.

\subsection{Setup}\label{sec: GoF simulation setup}

For illustration, we consider three types of graphs, each with distinct topological properties. 
In each of the following items, we define the precision matrix $\bs{\Omega}=(\omega_{ij})_{i,j\in [p]}$ and the corresponding graph $G$ for the true distribution of the data, and define another graph $G_0$ for the null hypothesis.

\begin{itemize}
    \item \textbf{Band Graph}: The precision matrix $ \boldsymbol{\Omega} $ satisfies that for $i,j \in \left\{ 1, \cdots, p \right\}$:
$$ \omega_{i, j}=\left\{
\begin{array}{rcl}
1       &      & \text{if ~~} i=j, \\
s     &      & \text{if ~~} 1\leq |i-j|\leq K,\\
0       &      & \text{if ~~} |i-j|> K,
\end{array} \right. $$
where $K$ is the bandwidth and $s\in (0,0.2)$ is the signal magnitude.
We define the graph $G_0$ for a null hypothesis as a band graph with bandwidth $K_0$. 

    \item \textbf{Hub Graph}: The nodes are divided into separate groups of size $h=10$, and within each group, members are only connected to a hub node. Specifically,
    for each positive integer $k \leq p/h$, let $i=h (k-1) +1$, $\omega_{ij}=\omega_{ji}=1$ for $j \in \left\{ (i+1), \cdots, (i+h-1) \right\}$, and let all other off-diagonal entries equal to zero. 
    Then each of the diagonal entries is set to be the summation of absolute values along the row plus a parameter $\xi>0$: 
      $$
      \omega_{ii} = \sum_{i\neq j} |\omega_{ij}| + \xi, $$
    which guarantees that $\bOmega=(\omega_{ij})_{1\leq i,j\leq p}$ is positive definite. The parameter $\xi$ controls the signal magnitude: a larger value of $\xi$ corresponds to weaker signals.

    To define the graph for a null hypothesis, we consider a subgraph $G_0$ by randomly removing the existing edges independently with probability $0.7$.

    \item \textbf{\nameERG~Random Graph}: 
    There is an edge between a pair of distinct nodes independently and randomly with probability $q$. 
    We begin with constructing a random matrix $\bs{A}=(a_{ij})_{1\leq i,j\leq p}$. 
    Let $a_{ii}=1$, for $i \in \left\{ 1, \cdots, p \right\}$. 
    For any pair of $i<j$, let $a_{ij}=a_{ji}= u_{ij} \cdot \delta_{ij}$, where $u_{ij}$ is uniformly distributed on $(s/2, 3 s/2)$ and $\delta_{ij}$ is an independent Bernoulli random variable with success probability $q$. 
   Let $\lambda$ be the smallest eigenvalue of $\bs{A}$. The precision matrix is defined as 
   $$\bs{\Omega} = \bs{A}+\left( |\lambda| + 0.05 \right) \bs{I}.$$ 
   To define the graph for a null hypothesis, we consider a subgraph $G_0$ whose edges exist with probability $q_0$. This can be obtained by randomly removing the existing edges in $G$ independently with probability $1-q_0/q$. In our experiment, we fix $q_0=0.08$. 
\end{itemize}

In the construction of a band graph, the population parameter $s$ will be referred to as the \textit{signal magnitude}. This is because if the constructed normal distribution does not belong to $\mc{M}_{G_0}$, then the result given by \citet[Corollary 1]{jog_model_2015} ensures that the Kullback-Leibler (KL) divergence between this distribution and any distribution in $\mc{M}_{G_0}$ is at least $\eta_{-}(s) := - 2^{-1} \log \left(1-s^2\right)$, which is an increasing function of $s^2$.
For a hub graph model and an \nameERG random graph, similar lower bounds for the KL divergence between the true distribution and any distribution in the null model can be obtained, which are $\eta_{-}\left( 1/(h-1+\xi)  \right)$ and $\eta_{-}\left( s/(1+|\lambda|+0.05)  \right)$, respectively. Therefore, we will vary the value of $s$ (or $\xi$ for hub graph models) in the simulations to control the KL divergence between the true distribution and the null model. 

The simplicity of these graphs allows us to illustrate how the signal patterns affect the power of the tests. 
Appendix~\ref{app: more graphs} extends the experiments to more complex graph structures, including cases where the null graph is not a subgraph of the true graph. Additionally, Appendix~\ref{app: L} includes an example with moderately dense null graphs, where the graph degree is not small compared to the sample size. 
The results from these extended experiments agree with the findings presented in this section.

\subsection{Type I error control}\label{app: Type-I error}

We numerically illustrate that the proposed MC-GoF test controls the Type I error regardless of the dimension and the sample size. 
Specifically, we generate random samples from the GGM with mean 0 and precision matrix from the band graph $G$ with $K=6$ and $s=0.2$ as defined in Appendix~\ref{sec: GoF simulation setup} and conduct various GoF tests for the GGM w.r.t. $G$ --- the null hypothesis is true in this case. 
We vary the dimension $p$, the signal magnitude $s$, and the sample size $n$ for several configurations. 
We independently repeat each experiment 1200 times and record the p-values given by each method. 
We then estimate the size of each test at the significance level $\alpha=0.05$ by its rejection proportion across the replications.

We report the estimated sizes of the tests in Table~\ref{tab: GoF error control}. The first column (\nameVV) corresponds to the ${M^1}{P_1}$ procedure in \citet{verzelen_goodness--fit_2010}, the second (\nameDP) is a Bonferroni adjustment of multiple testing for partial correlation in \citet{drton_multiple_2007}.  
The other four columns correspond to our proposed MC-GoF testing procedure with different choices of test statistics in Section~\ref{sec: GoF test statistic}.

For each setting and each method, we also conduct the one-sample proportion test for the hypothesis that the size is no greater than $\alpha$ and report the p-value in a pair of parentheses. 
The smallest p-value appears in the result of the MC-GoF test using the GLR-$\ell_1$ statistic for the configuration with $(p,s,n)=(20, 0.2, 80)$. 
In this configuration, we further analyze the p-values from this GoF test and present the QQ plot to compare them with the uniform distribution in Figure~\ref{fig: qqplot-null-glr}. 
Note that Proposition~\ref{prop: GoF valid} only justifies the validity of any MC-GoF test, but the p-value could be stochastically larger than a uniform random variable. Nonetheless, the QQ plot in Figure~\ref{fig: qqplot-null-glr} shows that the distribution of the simulated p-values is very close to the uniform distribution. The exceedance of some estimated sizes in Table~\ref{tab: GoF error control} should be attributed to Monte Carlo errors and multiple comparisons.

\begin{table}[!btph]
    \caption{Estimated sizes of various GoF tests at the significance level $\alpha=0.05$ on band graphs ($K=6, K_0=6$). The p-value of the one-sample proportion test for each size is provided in parentheses. } \label{tab: GoF error control}
\begin{tabular}{lll|cccccc}
\hline
& & & \multicolumn{6}{c}{Method} \\ 
$p$ & $s$ & $n$ & \nameVV & \nameDP & PRC & ERC & F$_{\Sigma}$ & \multicolumn{1}{c}{GLR-${\ell_1}$} \\ 
\hline
\hline 20 & 0.2 & 20  & 0.037 (0.977) & 0.000 (1.000) & 0.053 (0.298) & 0.045 (0.787) & 0.049 (0.553) & 0.049 (0.553) \\
 &  & 40  & 0.053 (0.298) & 0.054 (0.254) & 0.060 (0.056) & 0.051 (0.447) & 0.055 (0.213) & 0.052 (0.396) \\
 &  & 80  & 0.044 (0.823) & 0.046 (0.746) & 0.052 (0.396) & 0.052 (0.396) & 0.050 (0.500) & 0.061 (0.043) \\
\hline 120 & 0.2 & 20  & 0.051 (0.447) & 0.000 (1.000) & 0.043 (0.883) & 0.059 (0.073) & 0.060 (0.056) & 0.047 (0.702) \\
 &  & 40  & 0.047 (0.702) & 0.040 (0.944) & 0.041 (0.927) & 0.046 (0.746) & 0.049 (0.553) & 0.043 (0.855) \\
 &  & 80  & 0.048 (0.604) & 0.043 (0.855) & 0.044 (0.823) & 0.032 (0.998) & 0.042 (0.907) & 0.052 (0.346) \\
\hline 
\end{tabular}

\end{table} 

\begin{figure}[hbtp]
    \centering
    \caption{QQ plot comparing the uniform distribution and the p-value of the MC-GoF test using the GLR-$\ell_1$ statistic in the setting $(p,s,n)=(20, 0.2, 80)$. }\label{fig: qqplot-null-glr}
\includegraphics[width=0.45\textwidth]{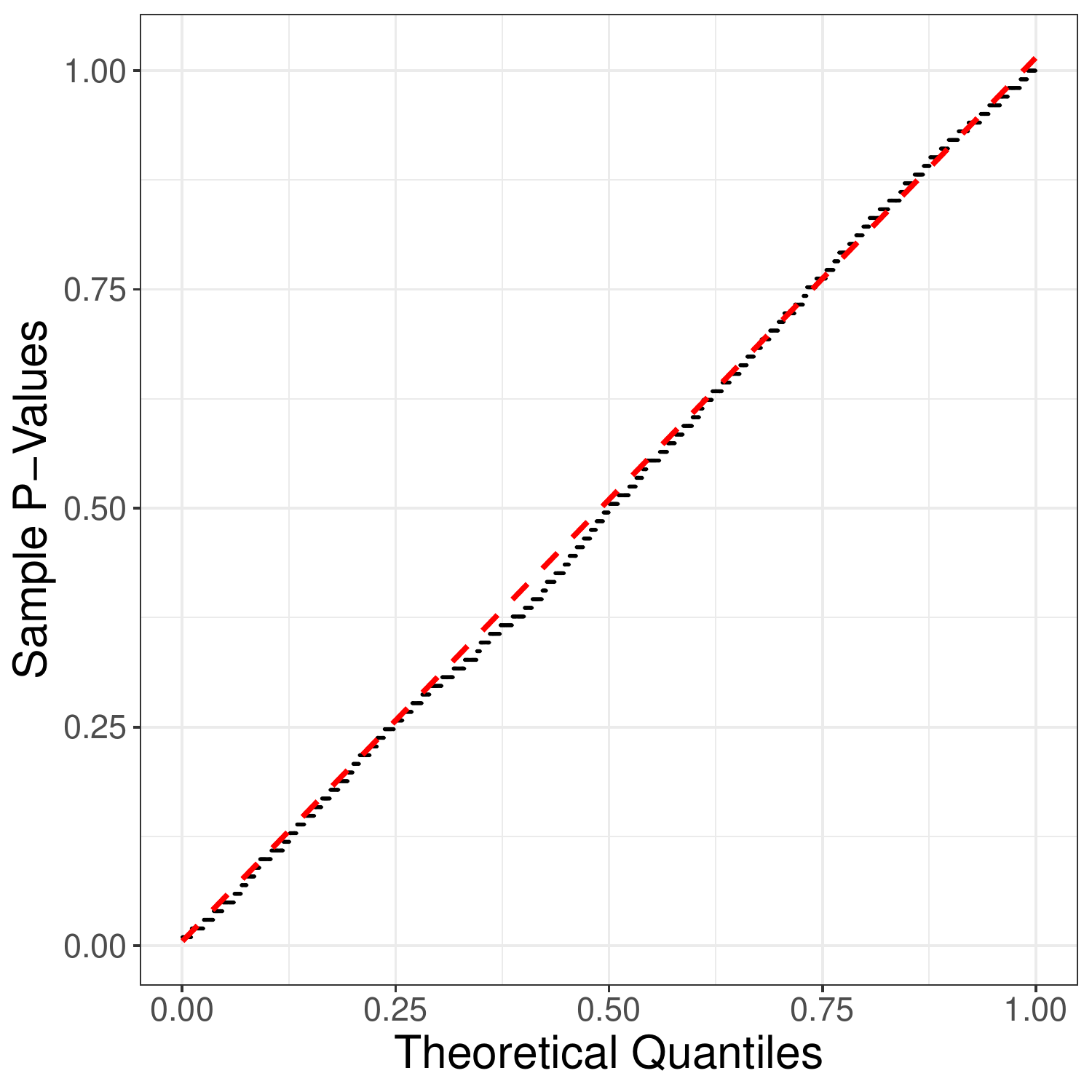}
\end{figure}

The MC-GoF test is guaranteed to control the Type I error when the true graph $G$ is a subgraph of $G_0$. As an example, we consider band graphs with $K = 4$ for $G$ and $K_0 = 6$ for $G_0$, setting $s = 0.2$ as defined in Appendix~\ref{sec: GoF simulation setup}. Table~\ref{tab: subset} presents the estimated sizes of the test. The MC-GoF test controls the sizes at the significance level $\alpha = 0.05$ in all cases, except for tests using PRC or ERC statistics in the configuration $(p, s, n) = (20, 0.2, 80)$---again, the deviations are likely caused by Monte Carlo errors and multiple comparisons.

\begin{table}[!btph]
\centering
\begin{tabular}{lll|cccccc}
\hline
& & & \multicolumn{6}{c}{Method} \\ 
$p$ & $s$ & $n$ & $M^1$$P_1$ & Bonf & PRC & ERC & F$_{\Sigma}$ & \multicolumn{1}{c}{GLR-${\ell_1}$} \\ 
\hline
\hline 20 & 0.2 & 20  & 0.047 (0.702) & 0.000 (1.000) & 0.053 (0.298) & 0.057 (0.145) & 0.053 (0.346) & 0.058 (0.117) \\
 &  & 40  & 0.047 (0.702) & 0.039 (0.957) & 0.053 (0.346) & 0.054 (0.254) & 0.047 (0.702) & 0.050 (0.500) \\
 &  & 80  & 0.052 (0.396) & 0.047 (0.702) & 0.061 (0.043) & 0.063 (0.023) & 0.060 (0.056) & 0.050 (0.500) \\
\hline 120 & 0.2 & 20  & 0.049 (0.553) & 0.000 (1.000) & 0.043 (0.883) & 0.058 (0.117) & 0.053 (0.298) & 0.046 (0.746) \\
 &  & 40  & 0.043 (0.855) & 0.041 (0.927) & 0.054 (0.254) & 0.053 (0.298) & 0.046 (0.746) & 0.043 (0.883) \\
 &  & 80  & 0.048 (0.654) & 0.028 (1.000) & 0.040 (0.944) & 0.042 (0.907) & 0.048 (0.604) & 0.058 (0.117) \\
\hline 
\end{tabular}
\caption{Estimated sizes of various GoF tests at the significance level $\alpha$ = 0.05 on band graphs ($K = 4, K_0 = 6$). The p-value of the one-sample proportion test for each size is provided in parentheses.}\label{tab: subset}
\end{table} 

\subsection{Power comparison}\label{sec: GoF simulation power}
We consider each of three graph types in Appendix~\ref{sec: GoF simulation setup} with different configurations of population parameters. 
In each configuration, we generate $n$ random samples from $\mathbf{N}_p( \bs{0}, \boldsymbol{\Omega}^{-1})$ and conduct various GoF tests for the null hypothesis $H_0: P\in \mc{M}_{G_0}$ at the significance level $\alpha=0.05$. 
We repeat each experiment 400 times and present the estimated power for different methods in tables. 
Standard errors are shown in parentheses, and the highest power in each case is indicated in bold type.

Table \ref{tab: band} presents the results for band graphs with $K=6$ for the true distribution and $K_0=1$ for the null hypothesis. 
The dimension $p$ is either 20 or 120, the sample size $n$ takes value among 20, 40, and 80, while the signal magnitude $s$ among 0.1, 0.15, and 0.2. 
As expected, the power of each test increases with the sample size $n$ and the signal magnitude $s$. 
We also observed that when $s$ is as large as $0.2$, F$_{\Sigma}$ consistently achieves the highest power in Table \ref{tab: band}. 
When $s$ is $0.1$ or $0.15$, PRC sometimes slightly outperforms F$_{\Sigma}$.  
ERC, designed as a computationally efficient alternative to PRC, has slightly lower power than PRC when $n$ is small, but the gap diminishes quickly as $n$ increases. 
The GLR-$\ell_1$ statistic does not perform as well as the other three statistics when $n$ is $40$. 
This may be due to the violation of the sparsity assumption for GLasso to work well since the degrees of nodes in this band graph are between $6$ and $12$, which are not negligible relative to $n=40$. 
Our GoF testing procedure with any of the four test statistics outperforms the baseline methods across all cases studied in this experiment.  
In some cases the advantages are significant; for example, when $p=120$, $s=0.15$, and $n=80$, all the four MC-GoF tests achieve power higher than $90\%$, while \nameVV and \nameDP have power not exceeding $15\%$. 
We also note that for any pair of fixed $s$ and $n$, increasing the dimension $p$ could boost the power of our MC-GoF test with F$_{\Sigma}$. In contrast, both \nameVV and \nameDP lose power when the dimension $p$ increases from $20$ to $120$. 
This indicates that our proposed method is more applicable to high-dimensional problems than existing methods.

\begin{table}[!btph]
    \caption{Power of various GoF tests on band graphs ($K=6, K_0=1$) at the significance level $\alpha=0.05$. The first two columns correspond to the baseline methods and the last four columns correspond to the MC-GoF test with four test statistic functions. Standard errors are in parentheses and the highest power in each case is in bold type.}\label{tab: band}
\begin{tabular}{lll|cccccc}
\hline
& & & \multicolumn{6}{c}{Method} \\ 
$p$ & $s$ & $n$ & \nameVV & \nameDP & PRC & ERC & F$_{\Sigma}$ & \multicolumn{1}{c}{GLR-${\ell_1}$} \\ 
\hline
\hline 20 & 0.1 & 20  & 0.050 (.011) & 0.033 (.009) & 0.062 (.012) & 0.060 (.012) & \textbf{0.098} (.015) & 0.085 (.014) \\
 &  & 40  & 0.058 (.012) & 0.083 (.014) & \textbf{0.217} (.021) & 0.190 (.020) & 0.212 (.020) & 0.165 (.019) \\
 &  & 80  & 0.098 (.015) & 0.170 (.019) & \textbf{0.542} (.025) & 0.517 (.025) & 0.510 (.025) & 0.352 (.024) \\
\hline  & 0.15 & 20  & 0.052 (.011) & 0.065 (.012) & 0.125 (.017) & 0.115 (.016) & 0.133 (.017) & \textbf{0.135} (.017) \\
 &  & 40  & 0.083 (.014) & 0.138 (.017) & 0.490 (.025) & 0.465 (.025) & \textbf{0.532} (.025) & 0.365 (.024) \\
 &  & 80  & 0.182 (.019) & 0.280 (.022) & \textbf{0.960} (.010) & 0.948 (.011) & 0.958 (.010) & 0.810 (.020) \\
\hline  & 0.2 & 20  & 0.075 (.013) & 0.080 (.014) & 0.265 (.022) & 0.207 (.020) & \textbf{0.395} (.024) & 0.273 (.022) \\
 &  & 40  & 0.195 (.020) & 0.268 (.022) & 0.875 (.017) & 0.815 (.019) & \textbf{0.910} (.014) & 0.752 (.022) \\
 &  & 80  & 0.497 (.025) & 0.642 (.024) & \textbf{1.000} (.000) & \textbf{1.000} (.000) & \textbf{1.000} (.000) & 0.995 (.004) \\
\hline 120 & 0.1 & 20  & 0.055 (.011) & 0.055 (.011) & 0.125 (.017) & \textbf{0.128} (.017) & 0.115 (.016) & 0.070 (.013) \\
 &  & 40  & 0.062 (.012) & 0.055 (.011) & 0.225 (.021) & 0.215 (.021) & \textbf{0.260} (.022) & 0.160 (.018) \\
 &  & 80  & 0.055 (.011) & 0.085 (.014) & \textbf{0.723} (.022) & 0.685 (.023) & 0.705 (.023) & 0.370 (.024) \\
\hline  & 0.15 & 20  & 0.062 (.012) & 0.060 (.012) & 0.185 (.019) & 0.172 (.019) & \textbf{0.247} (.022) & 0.160 (.018) \\
 &  & 40  & 0.060 (.012) & 0.098 (.015) & 0.647 (.024) & 0.593 (.025) & \textbf{0.725} (.022) & 0.427 (.025) \\
 &  & 80  & 0.085 (.014) & 0.147 (.018) & \textbf{0.995} (.004) & 0.988 (.006) & 0.993 (.004) & 0.932 (.013) \\
\hline  & 0.2 & 20  & 0.075 (.013) & 0.083 (.014) & 0.487 (.025) & 0.345 (.024) & \textbf{0.695} (.023) & 0.492 (.025) \\
 &  & 40  & 0.113 (.016) & 0.125 (.017) & 0.995 (.004) & 0.955 (.010) & \textbf{1.000} (.000) & 0.975 (.008) \\
 &  & 80  & 0.435 (.025) & 0.525 (.025) & \textbf{1.000} (.000) & \textbf{1.000} (.000) & \textbf{1.000} (.000) & \textbf{1.000} (.000) \\
\hline 
\end{tabular}

\end{table} 

Table \ref{tab: hub} shows the results for hub graphs. 
The dimension $p$ is either 20 or 120, the sample size $n$ takes value among 20, 40, and 80, and the noise level $\xi$ among 0.9, 1.4, and 2. 
A larger value of $\xi$ corresponds to a weaker deviation from the null hypothesis. 
We observe that F$_{\Sigma}$ and GLR-$\ell_1$ achieve the highest power across all settings. 
In general, the performance of GLR-$\ell_1$ tends to be better than F$_{\Sigma}$ with a large value of $n$ or a small value of $\xi$. 
Again, all our methods outperform the existing methods in every case, especially when the dimension $p=120$.  
In particular, when $p=120$, $n=80$ and $\xi=2$, the power of either \nameVV or \nameDP is around $20\%$ whereas our MC-GoF test using either F$_{\Sigma}$ or GLR-$\ell_1$ could achieve power as high as $70\%$.

\begin{table}[!btph]
\caption{Power of various GoF tests on hub graphs ($\alpha=0.05$).}\label{tab: hub}
\begin{tabular}{lll|cccccc}
\hline
& & & \multicolumn{6}{c}{Method} \\ 
$p$ & $\xi$ & $n$ & \nameVV & \nameDP & PRC & ERC & F$_{\Sigma}$ & \multicolumn{1}{c}{GLR-${\ell_1}$} \\ 
\hline
\hline 20 & 2 & 20  & 0.043 (.010) & 0.070 (.013) & 0.120 (.016) & 0.117 (.016) & \textbf{0.138} (.017) & 0.128 (.017) \\
 &  & 40  & 0.102 (.015) & 0.130 (.017) & 0.240 (.021) & 0.225 (.021) & \textbf{0.305} (.023) & 0.295 (.023) \\
 &  & 80  & 0.265 (.022) & 0.280 (.022) & 0.542 (.025) & 0.542 (.025) & \textbf{0.665} (.024) & 0.660 (.024) \\
\hline  & 1.4 & 20  & 0.077 (.013) & 0.092 (.015) & 0.163 (.018) & 0.150 (.018) & \textbf{0.193} (.020) & 0.182 (.019) \\
 &  & 40  & 0.142 (.017) & 0.163 (.018) & 0.335 (.024) & 0.318 (.023) & 0.455 (.025) & \textbf{0.463} (.025) \\
 &  & 80  & 0.490 (.025) & 0.472 (.025) & 0.787 (.020) & 0.792 (.020) & 0.873 (.017) & \textbf{0.897} (.015) \\
\hline  & 0.9 & 20  & 0.085 (.014) & 0.095 (.015) & 0.205 (.020) & 0.185 (.019) & 0.302 (.023) & \textbf{0.315} (.023) \\
 &  & 40  & 0.333 (.024) & 0.305 (.023) & 0.603 (.024) & 0.578 (.025) & 0.770 (.021) & \textbf{0.772} (.021) \\
 &  & 80  & 0.855 (.018) & 0.765 (.021) & 0.943 (.012) & 0.938 (.012) & \textbf{0.995} (.004) & \textbf{0.995} (.004) \\
\hline 120 & 2 & 20  & 0.037 (.010) & 0.048 (.011) & 0.090 (.014) & 0.095 (.015) & \textbf{0.115} (.016) & 0.105 (.015) \\
 &  & 40  & 0.065 (.012) & 0.062 (.012) & 0.215 (.021) & 0.212 (.020) & \textbf{0.292} (.023) & 0.255 (.022) \\
 &  & 80  & 0.203 (.020) & 0.212 (.020) & 0.603 (.024) & 0.580 (.025) & 0.700 (.023) & \textbf{0.782} (.021) \\
\hline  & 1.4 & 20  & 0.045 (.010) & 0.048 (.011) & 0.128 (.017) & 0.120 (.016) & \textbf{0.190} (.020) & 0.170 (.019) \\
 &  & 40  & 0.102 (.015) & 0.085 (.014) & 0.365 (.024) & 0.357 (.024) & 0.472 (.025) & \textbf{0.500} (.025) \\
 &  & 80  & 0.427 (.025) & 0.407 (.025) & 0.853 (.018) & 0.840 (.018) & 0.930 (.013) & \textbf{0.993} (.004) \\
\hline  & 0.9 & 20  & 0.060 (.012) & 0.065 (.012) & 0.200 (.020) & 0.198 (.020) & 0.335 (.024) & \textbf{0.352} (.024) \\
 &  & 40  & 0.177 (.019) & 0.135 (.017) & 0.672 (.023) & 0.652 (.024) & 0.860 (.017) & \textbf{0.900} (.015) \\
 &  & 80  & 0.897 (.015) & 0.748 (.022) & 0.988 (.006) & 0.985 (.006) & \textbf{1.000} (.000) & \textbf{1.000} (.000) \\
\hline 
\end{tabular}

\end{table} 

Table \ref{tab: ER} presents the results for \nameERG~random graphs. 
We vary the connection probability $q$ of the true graph $G$ between $0.2$ and $0.4$, and $s$ between $0.01$ and $0.02$. 
The dimension $p$ is either $40$ or $120$ and the sample size $n$ is either $50$ or $100$. 
We observe that F$_{\Sigma}$ performs consistently the best across all settings. 
In the high-dimensional cases, GLR-$\ell_1$ has the second best power, with a significant advantage over the rest four tests. 
Both tests become more powerful when the dimension $p$ increases from $40$ to $120$. In contrast, the baseline methods \nameVV and \nameDP suffer from a loss of power in the high-dimensional cases. Notably, when $p=120$,  $q=0.4$, $s=0.01$, and $n=50$, the power of F$_{\Sigma}$ is as high as $94\%$ while \nameVV and \nameDP have power lower than $10\%$.

\begin{table}[!btph]
\caption{Power of various GoF tests on  \nameERG random graphs ($\alpha=0.05$).}\label{tab: ER}
\begin{tabular}{llll|cccccc}
\hline
& & & & \multicolumn{6}{c}{Method} \\ 
$p$ & $q$ & $s$ & $n$ & \nameVV & \nameDP & PRC & ERC & F$_{\Sigma}$ & \multicolumn{1}{c}{GLR-${\ell_1}$} \\ 
\hline
\hline 40 & 0.2 & 0.01 & 50  & 0.075 (.013) & 0.095 (.015) & 0.245 (.022) & 0.215 (.021) & \textbf{0.323} (.023) & 0.312 (.023) \\
 &  &  & 100  & 0.165 (.019) & 0.190 (.020) & 0.677 (.023) & 0.670 (.024) & \textbf{0.797} (.020) & 0.713 (.023) \\
\hline  &  & 0.02 & 50  & 0.180 (.019) & 0.155 (.018) & 0.550 (.025) & 0.500 (.025) & \textbf{0.800} (.020) & 0.740 (.022) \\
 &  &  & 100  & 0.485 (.025) & 0.453 (.025) & 0.985 (.006) & 0.983 (.007) & \textbf{1.000} (.000) & 0.998 (.002) \\
\hline  & 0.4 & 0.01 & 50  & 0.107 (.016) & 0.113 (.016) & 0.492 (.025) & 0.497 (.025) & \textbf{0.715} (.023) & 0.532 (.025) \\
 &  &  & 100  & 0.220 (.021) & 0.245 (.022) & 0.973 (.008) & 0.965 (.009) & \textbf{0.995} (.004) & 0.970 (.009) \\
\hline  &  & 0.02 & 50  & 0.180 (.019) & 0.205 (.020) & 0.940 (.012) & 0.912 (.014) & \textbf{0.988} (.006) & 0.963 (.010) \\
 &  &  & 100  & 0.578 (.025) & 0.560 (.025) & \textbf{1.000} (.000) & \textbf{1.000} (.000) & \textbf{1.000} (.000) & \textbf{1.000} (.000) \\
\hline 120 & 0.2 & 0.01 & 50  & 0.070 (.013) & 0.055 (.011) & 0.230 (.021) & 0.182 (.019) & \textbf{0.613} (.024) & 0.445 (.025) \\
 &  &  & 100  & 0.107 (.016) & 0.092 (.015) & 0.863 (.017) & 0.818 (.019) & \textbf{0.998} (.002) & 0.953 (.011) \\
\hline  &  & 0.02 & 50  & 0.090 (.014) & 0.065 (.012) & 0.492 (.025) & 0.310 (.023) & \textbf{0.965} (.009) & 0.907 (.015) \\
 &  &  & 100  & 0.265 (.022) & 0.190 (.020) & 0.995 (.004) & 0.985 (.006) & \textbf{1.000} (.000) & \textbf{1.000} (.000) \\
\hline  & 0.4 & 0.01 & 50  & 0.080 (.014) & 0.075 (.013) & 0.530 (.025) & 0.355 (.024) & \textbf{0.938} (.012) & 0.755 (.022) \\
 &  &  & 100  & 0.138 (.017) & 0.163 (.018) & \textbf{1.000} (.000) & \textbf{1.000} (.000) & \textbf{1.000} (.000) & \textbf{1.000} (.000) \\
\hline  &  & 0.02 & 50  & 0.133 (.017) & 0.100 (.015) & 0.885 (.016) & 0.677 (.023) & \textbf{1.000} (.000) & 0.998 (.002) \\
 &  &  & 100  & 0.282 (.023) & 0.260 (.022) & \textbf{1.000} (.000) & \textbf{1.000} (.000) & \textbf{1.000} (.000) & \textbf{1.000} (.000) \\
\hline 
\end{tabular}

\end{table} 

For each graph type, we identified specific situations where the baseline methods, \nameVV and \nameDP, perform poorly while our method could achieve high power. 
These situations occur when the true precision matrix has many nonzero entries that are absent from the null model and the magnitudes of these entries are not large enough for \nameVV or \nameDP to reject the null hypothesis. 
In our experiments, we control these magnitudes using the parameter $s$ in band graphs and \nameERG random graphs,  and the inverse of $\xi$ in hub graphs. 
We refer to this pattern of deviation from the null model as \textit{dense-but-weak} pattern. 
Conversely, the \textit{strong-but-sparse} pattern occurs when the null model only misses a few entries of the precision matrix, but these entries are large enough for the baseline methods to reject the null hypothesis. Additional simulations in Appendix~\ref{app: simulation VV} confirm that our tests remain competitive and powerful in situations where the strong-but-sparse pattern occurs. 

By comparing these two empirical patterns to the theoretical regimes identified in Section~\ref{sec: power theory} of the main paper, we see a clear alignment between practice and theory. 
The theoretical results on the rate-optimality predict that the MC-GoF test is powerful when the signal pattern is either \textit{dense but weak} or \textit{strong but sparse}, which is confirmed by the empirical studies. 
Although our theoretical framework in Section~\ref{sec: power theory} focuses on clique-star-shaped graphs, the insights gained from the theory extend to a broader range of graph structures.

In summary, the MC-GoF test achieves higher power than the baseline methods when the signal pattern is dense and weak, and it remains competitive with the \nameVV procedure when the signal pattern is strong but sparse. 
Across all the studied settings, the MC-GoF test with the F$_{\Sigma}$ statistic consistently has superior power performance and is thus a promising method for GoF testing in high-dimensional GGMs.

\subsection{Performance of $M^1P_2$ in \cite{verzelen_tests_2009}} \label{app: M1P2}

In this section, we evaluate the power properties of the $M^1P_2$ procedure introduced by \citet{verzelen_tests_2009}. 
This analysis complements the discussion in Remark~\ref{rem: m1p2}, where we highlighted the marginal power improvement of $M^1P_2$ over $M^1P_1$ and its computational limitations.

We conducted simulations using the same data-generating process and null hypotheses for the band graphs introduced in Section~\ref{sec: Simulation} of the main text.  
In addition to the methods considered in the main text, we include $M^1P_2$ in this analysis.
For the $M^1P_2$ procedure, we computed the $\alpha_i$-quantile as described in Equation \eqref{eq:M1P2-quantile} for each variable $i$ using 2400 Monte Carlo replications  (compared to 1000 in the numerical studies of \cite{verzelen_tests_2009}).

\begin{sidewaystable}[!btph]
\centering
\begin{tabular}{lll|ccccccc}
\hline
& & & \multicolumn{7}{c}{Method} \\ 
$p$ & $s$ & $n$ & $M^1$$P_1$ & $M^1$$P_2$ & Bonf & PRC & ERC & F$_{\Sigma}$ & \multicolumn{1}{c}{GLR-${\ell_1}$} \\ 
\hline
\hline 20 & 0.2 & 20  & 0.037 (.977) & 0.040 (.944) & 0.000 (1.000) & 0.053 (.298) & 0.045 (.787) & 0.049 (.553) & 0.049 (.553) \\
 &  & 40  & 0.053 (.298) & 0.031 (.999) & 0.054 (.254) & 0.060 (.056) & 0.051 (.447) & 0.055 (.213) & 0.052 (.396) \\
 &  & 80  & 0.044 (.823) & 0.047 (.702) & 0.046 (.746) & 0.052 (.396) & 0.052 (.396) & 0.050 (.500) & 0.061 (.043) \\
 
\hline 120 & 0.2 & 20  & 0.051 (.447) & 0.044 (.823) & 0.000 (1.000) & 0.043 (.883) & 0.059 (.073) & 0.060 (.056) & 0.047 (.702)   \\
 &  & 40  & 0.047 (.702) & 0.036 (.988) & 0.040 (.944) & 0.041 (.927) & 0.046 (.746) & 0.049 (.553) & 0.043 (.855) \\
 &  & 80  & 0.048 (.604) & 0.043 (.883) & 0.043 (.855) & 0.044 (.823) & 0.032 (.998) & 0.042 (.907) & 0.052 (.346) \\
\hline 
\end{tabular}

\caption{Estimated sizes of various GoF tests, including the $M^1P_2$ procedure, at the significance level $\alpha= 0.05$ on band graphs ($K = 6$, $K_0 = 6$). The p-value of the one-sample proportion test for each size is provided in parentheses. 
}\label{tab:M1P2_size}
\end{sidewaystable}

\begin{sidewaystable}[!btph]
\centering
\begin{tabular}{lll|ccccccc}
\hline
& & & \multicolumn{7}{c}{Method} \\ 
$p$ & $s$ & $n$ & $M^1$$P_1$ & $M^1$$P_2$ & Bonf & PRC & ERC & F$_{\Sigma}$ & \multicolumn{1}{c}{GLR-${\ell_1}$} \\ 
\hline
\hline 20 & 0.1 & 20  & 0.050 (.011) & 0.058 (.012) & 0.033 (.009) & 0.062 (.012) & 0.060 (.012) & \textbf{0.098} (.015) & 0.085 (.014) \\
 &  & 40  & 0.058 (.012) & 0.055 (.011) & 0.083 (.014) & \textbf{0.217} (.021) & 0.190 (.020) & 0.212 (.020) & 0.165 (.019) \\
 &  & 80  & 0.098 (.015) & 0.093 (.014) & 0.170 (.019) & \textbf{0.542} (.025) & 0.517 (.025) & 0.510 (.025) & 0.352 (.024) \\
\hline  & 0.15 & 20  & 0.052 (.011) & 0.030 (.009) & 0.065 (.012) & 0.125 (.017) & 0.115 (.016) & 0.133 (.017) & \textbf{0.135} (.017) \\
 &  & 40  & 0.083 (.014) & 0.078 (.013) & 0.138 (.017) & 0.490 (.025) & 0.465 (.025) & \textbf{0.532} (.025) & 0.365 (.024) \\
 &  & 80  & 0.182 (.019) & 0.240 (.021) & 0.280 (.022) & \textbf{0.960} (.010) & 0.948 (.011) & 0.958 (.010) & 0.810 (.020) \\
\hline  & 0.2 & 20  & 0.075 (.013) & 0.063 (.012) & 0.080 (.014) & 0.265 (.022) & 0.207 (.020) & \textbf{0.395} (.024) & 0.273 (.022) \\
 &  & 40  & 0.195 (.020) & 0.195 (.020) & 0.268 (.022) & 0.875 (.017) & 0.815 (.019) & \textbf{0.910} (.014) & 0.752 (.022) \\
 &  & 80  & 0.497 (.025) & 0.475 (.025) & 0.642 (.024) & \textbf{1.000} (.000) & \textbf{1.000} (.000) & \textbf{1.000} (.000) & 0.995 (.004) \\
 
\hline 120 & 0.1 & 20  & 0.055 (.011) & 0.053 (.011) & 0.055 (.011) & 0.125 (.017) & \textbf{0.128} (.017) & 0.115 (.016) & 0.070 (.013) \\
 &  & 40  & 0.062 (.012) & 0.045 (.010) & 0.055 (.011) & 0.225 (.021) & 0.215 (.021) & \textbf{0.260} (.022) & 0.160 (.018) \\
 &  & 80  & 0.055 (.011) & 0.050 (.011) & 0.085 (.014) & \textbf{0.723} (.022) & 0.685 (.023) & 0.705 (.023) & 0.370 (.024) \\
\hline  & 0.15 & 20  & 0.062 (.012) & 0.048 (.011) & 0.060 (.012) & 0.185 (.019) & 0.172 (.019) & \textbf{0.247} (.022) & 0.160 (.018) \\
 &  & 40  & 0.060 (.012) & 0.035 (.009) & 0.098 (.015) & 0.647 (.024) & 0.593 (.025) & \textbf{0.725} (.022) & 0.427 (.025) \\
 &  & 80  & 0.085 (.014) & 0.073 (.013) & 0.147 (.018) & \textbf{0.995} (.004) & 0.988 (.006) & 0.993 (.004) & 0.932 (.013) \\
\hline  & 0.2 & 20  & 0.075 (.013) & 0.060 (.012) & 0.083 (.014) & 0.487 (.025) & 0.345 (.024) & \textbf{0.695} (.023) & 0.492 (.025) \\
 &  & 40  & 0.113 (.016) & 0.095 (.015) & 0.125 (.017) & 0.995 (.004) & 0.955 (.010) & \textbf{1.000} (.000) & 0.975 (.008) \\
 &  & 80  & 0.435 (.025) & 0.358 (.024) & 0.525 (.025) & \textbf{1.000} (.000) & \textbf{1.000} (.000) & \textbf{1.000} (.000) & \textbf{1.000} (.000) \\
\hline 
\end{tabular}

\caption{Power of various GoF tests, including the $M^1P_2$ procedure, on band graphs ($K = 6$, $K_0 = 1$) at the significance level $\alpha = 0.05$. 
See the caption of Table~\ref{tab: band} for details on methods. }\label{tab:M1P2_power}
\end{sidewaystable}

Table~\ref{tab:M1P2_size} extends Table~\ref{tab: GoF error control} by including the estimated sizes for the $M^1P_2$ procedure, and Table~\ref{tab:M1P2_power} extends Table~\ref{tab: band} to present the power of $M^1P_2$. 
As noted in Remark~\ref{rem: m1p2}, the power of $M^1P_2$ is close to that of $M^1P_1$ across all scenarios, and they are both less competitive compared to the MC-GoF tests. 
We conclude that the MC-GoF test remains the preferred method in scenarios where the signal pattern is the dense but weak.

Furthermore, a critical limitation of the $M^1P_2$ procedure is its substantial computational burden. The method relies on Monte Carlo simulations to compute the $\alpha_i$-quantile, which is computationally intensive, particularly in high-dimensional settings. This results in a total computation time that is significantly higher than that of other methods.
As the method does not provide sufficient practical benefits to justify its significant computational demands, we have opted not to include the $M^1P_2$ procedure in further experimental comparisons. 
\section{Proofs}
\label{sec: Proof}
This section provides the omitted proofs for the theoretical results in the main paper.

\subsection{Proofs in Section~\ref{sec: residual rotation}}\label{app: pf css}

Proposition~\ref{prop: residual rotation} is based on the following two lemmas, whose proof is given latter

\begin{lemma}\label{lem: conditional exchangeable}
    Suppose three random elements $X$, $Y$, and $Z$ satisfy that conditional on $Z$, $X$, and $Y$ are independent and identically distributed. Then for any bounded Borel function $g$ defined on the range of $X$, it holds that $\E\left( g(X)\mid Y,Z\right)=\E\left( g(Y)\mid X,Z\right)$. 
\end{lemma}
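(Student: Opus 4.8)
The plan is to show that both conditional expectations collapse to the common quantity $\E(g(X)\mid Z)$, after which the claim follows from the fact that $X$ and $Y$ share the same conditional law given $Z$. Concretely, I would establish the chain of almost-sure equalities
$$\E(g(X)\mid Y,Z)=\E(g(X)\mid Z)=\E(g(Y)\mid Z)=\E(g(Y)\mid X,Z),$$
where the two outer equalities use conditional independence (conditioning on the ``other'' variable is redundant) and the middle equality uses that $X$ and $Y$ are conditionally identically distributed.

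The first and main step is the reduction: if $X\indp Y\mid Z$, then $\E(g(X)\mid Y,Z)=\E(g(X)\mid Z)$ a.s.\ for bounded Borel $g$. Since $\E(g(X)\mid Z)$ is $\sigma(Z)$-measurable and hence $\sigma(Y,Z)$-measurable, it suffices to verify the defining averaging identity against test functions of the product form $h(Y)k(Z)$ with $h,k$ bounded and Borel. For such products, the tower property together with the factorization $\E(g(X)h(Y)\mid Z)=\E(g(X)\mid Z)\,\E(h(Y)\mid Z)$ implied by conditional independence shows that both $\E\left[g(X)h(Y)k(Z)\right]$ and $\E\left[\E(g(X)\mid Z)\,h(Y)k(Z)\right]$ equal $\E\left[k(Z)\,\E(g(X)\mid Z)\,\E(h(Y)\mid Z)\right]$, which is exactly the required identity.

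By symmetry, the same reduction gives $\E(g(Y)\mid X,Z)=\E(g(Y)\mid Z)$ a.s. The remaining middle equality $\E(g(X)\mid Z)=\E(g(Y)\mid Z)$ is then immediate from the hypothesis that $X$ and $Y$ have identical conditional distributions given $Z$: integrating the bounded Borel function $g$ against the two (a.s.\ equal) conditional laws yields the same value almost surely. Chaining the three equalities completes the argument.

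The lemma is elementary, so I do not expect a deep obstacle; the only point requiring genuine care is the measure-theoretic justification of the reduction $\E(g(X)\mid Y,Z)=\E(g(X)\mid Z)$. The delicate part there is the passage from product-form test functions $h(Y)k(Z)$ to all bounded $\sigma(Y,Z)$-measurable test functions, which I would handle by a monotone-class (functional) argument, noting that such products form a $\pi$-system generating $\sigma(Y,Z)$. Everything else is routine bookkeeping with the tower property and the definition of conditional independence.
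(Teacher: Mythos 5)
Your proposal is correct and follows essentially the same route as the paper's proof, which establishes exactly the chain $\E(g(X)\mid Y,Z)=\E(g(X)\mid Z)=\E(g(Y)\mid Z)=\E(g(Y)\mid X,Z)$ using conditional independence for the outer equalities and the identical conditional distributions for the middle one. The only difference is that you spell out the measure-theoretic justification of the reduction step (via product test functions and a monotone-class argument), which the paper simply asserts as straightforward.
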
 
\begin{lemma}\label{lem: residual rotation}
For any given $i\in [p]$, the distribution of the output $\widetilde{\mathbf{X}}_{i}$ from Algorithm~\ref{alg: residual rotation} is the uniform distribution on 
\begin{equation}\label{eq: lem residual rotation}
\{ 
\mx_{i}\in \mathbb{R}^{n}: \mx_{i}\tp \bs{1}_{n} = \mX_{i}\tp \bs{1}_{n}, \mx_{i}\tp\mx_{i} =\mX_{i}\tp \mX_{i},  \mx_{i}\tp \mX_{j} = \mX_{i}\tp \mX_{j}, \forall j \in N_{i}
\}.
\end{equation}
\end{lemma}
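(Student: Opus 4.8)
The plan is to identify the constraint set in \eqref{eq: lem residual rotation} as a sphere lying inside the orthogonal complement of the design space, and then to verify that Step~4 of Algorithm~\ref{alg: residual rotation} draws uniformly from that sphere. Write $W = [\bs{1}_n, \mX_{N_i}]$, let $\Pi$ be the orthogonal projection of $\mathbb{R}^n$ onto $\operatorname{col}(W)$, and set $\Pi^\perp = \mathbf{I}_n - \Pi$. With $h_i = |N_i| + 1$, the matrix $W$ has full column rank almost surely whenever $n > h_i$, which is the only nontrivial case: when $n \le h_i$ the algorithm returns $\mX_i$ and the set reduces almost surely to a single point, so the claim is immediate. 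By the definition of least squares, $\bs{F} = \Pi \mX_i$ and $\bs{R} = \Pi^\perp \mX_i$.

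First I would recast the three defining conditions geometrically. The conditions $\mx_i\tp \bs{1}_n = \mX_i\tp \bs{1}_n$ and $\mx_i\tp \mX_j = \mX_i\tp \mX_j$ for all $j \in N_i$ together read $W\tp(\mx_i - \mX_i) = \bs{0}$, that is $\mx_i - \mX_i \perp \operatorname{col}(W)$, equivalently $\Pi \mx_i = \Pi \mX_i = \bs{F}$. Writing $\mx_i = \bs{F} + \bs{v}$ with $\bs{v} = \Pi^\perp \mx_i$, the Pythagorean identity gives $\|\mx_i\|^2 = \|\bs{F}\|^2 + \|\bs{v}\|^2$ and likewise $\|\mX_i\|^2 = \|\bs{F}\|^2 + \|\bs{R}\|^2$, so the remaining condition $\mx_i\tp \mx_i = \mX_i\tp \mX_i$ is equivalent to $\|\bs{v}\| = \|\bs{R}\|$. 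Hence the set in \eqref{eq: lem residual rotation} equals $\{\bs{F} + \bs{v} : \bs{v} \in \operatorname{col}(W)^\perp,\ \|\bs{v}\| = \|\bs{R}\|\}$, a sphere of radius $\|\bs{R}\|$ centred at $\bs{F}$ inside the $(n - h_i)$-dimensional subspace $\operatorname{col}(W)^\perp$, and the target uniform law is $\bs{F}$ plus a draw from the uniform distribution on that sphere.

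Next I would show that Step~4 realises precisely this law. Let $\bs{Z} \sim \mathbf{N}(\bs{0}, \mathbf{I}_n)$ be the standard normal vector of Step~3, so that $\widetilde{\bs{R}} = \Pi^\perp \bs{Z}$. Fixing an orthonormal basis of $\operatorname{col}(W)^\perp$ and expressing $\widetilde{\bs{R}}$ in these coordinates, the rotational invariance of $\bs{Z}$ shows that the coordinate vector is $\mathbf{N}(\bs{0}, \mathbf{I}_{n - h_i})$; by the classical fact that the normalisation of an isotropic Gaussian is uniform on the sphere, $\widetilde{\bs{R}} / \|\widetilde{\bs{R}}\|$ is uniform on the unit sphere of $\operatorname{col}(W)^\perp$ (and $\|\widetilde{\bs{R}}\| > 0$ almost surely, since this subspace is nontrivial). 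Scaling by $\|\bs{R}\|$ and adding $\bs{F}$ then shows that $\widetilde{\mX}_i = \bs{F} + \widetilde{\bs{R}}\,\|\bs{R}\|/\|\widetilde{\bs{R}}\|$ follows the uniform law on the set in \eqref{eq: lem residual rotation}, as claimed.

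The main obstacle, though conceptually clean, is the final rotational-invariance step: one must argue that the direction of a Gaussian projected onto a fixed subspace is uniform on the sphere of that subspace, independently of the basis used to identify the subspace with $\mathbb{R}^{n-h_i}$. The cleanest route is the coordinate argument above, which reduces the statement to the isotropy of $\mathbf{N}(\bs{0}, \mathbf{I}_{n-h_i})$. One should also record the genericity of full column rank of $W$, so that $\Pi$, $\bs{F}$, and $\bs{R}$ are well defined for almost every $\mathbf{x}$, and confirm that the degenerate regime $n \le h_i$ is covered by the singleton observation made above.
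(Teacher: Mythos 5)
Your proof is correct, and it takes a genuinely different---and more self-contained---route than the paper's. You treat the lemma as a statement about the algorithm conditional on its input: with $W=[\bs{1}_n,\mX_{N_i}]$, the linear constraints force $\Pi\mx_i=\bs{F}$ and the quadratic constraint then forces $\|\Pi^{\perp}\mx_i\|=\|\bs{R}\|$, so the constraint set is exactly the sphere of radius $\|\bs{R}\|$ centred at $\bs{F}$ inside $\operatorname{col}(W)^{\perp}$, and Steps 3--4 output $\bs{F}$ plus $\|\bs{R}\|$ times the normalised projection of an independent isotropic Gaussian, which is uniform on that sphere by rotational invariance. This uses no distributional assumption on $\mathbf{X}$ beyond almost-sure rank genericity: no GGM, no sufficiency, no Basu's theorem, and the conditioning is on all of $\mathbf{X}$. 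The paper's proof instead works under the model $P\in\mc{M}_{G}$: it uses the Gaussian conditional law of $\mathbf{X}_i$ given $\mathbf{X}_{-i}$ (mean lying in $\operatorname{col}(W)$, isotropic covariance) to show that the input's own normalised residual direction $\bs{V}\tp\mathbf{X}_i/\|\bs{R}\|$ is uniform on the sphere and, via Basu's theorem, conditionally independent of $\Psi$ given $\mathbf{X}_{-i}$; matching this with the law of the fresh direction $\bs{V}\tp\bs{E}/\|\widetilde{\bs{R}}\|$ shows that $\mathbf{X}_i$ and $\widetilde{\mathbf{X}}_i$ are conditionally i.i.d.\ given $(\mathbf{X}_{-i},\Psi)$, and uniformity of the output is then inherited from the uniformity of the input's conditional law (the sufficiency fact established just before the lemma). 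The trade-off: your argument is more elementary, proves a stronger statement under finer conditioning, and cleanly isolates the purely algorithmic content of the lemma; the paper's argument is heavier but simultaneously delivers the conditional i.i.d.\ coupling of input and output given $(\mathbf{X}_{-i},\Psi)$, which is precisely the ingredient the proof of Proposition~\ref{prop: residual rotation} consumes---so if one adopts your route, that coupling must still be assembled afterwards from your lemma together with the sufficiency argument, meaning the model-based reasoning is deferred rather than eliminated.
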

It is straightforward to see that the set in \eqref{eq: lem residual rotation} is the same as $\mc{X}_{\Psi}^{i}$ in \eqref{eq: conditional support}. 
We can now prove Proposition~\ref{prop: residual rotation} as follows. 
\begin{proof}[Proof of Proposition~\ref{prop: residual rotation}]
Lemma~\ref{lem: residual rotation} shows that both  $\mathbf{X}_{i}$ and $\widetilde{\mathbf{X}}_{i}$ lie on $\mc{X}_{\Psi}^{i}$, and thus $\psi_{G}(\mathbf{X})=\psi_{G}(\widetilde{\mathbf{X}})$. 

We have argued that the conditional distribution of $\mathbf{X}_{i}$ given $(\mathbf{X}_{-i}, \Psi)$ is the uniform distribution on $\mc{X}_{\Psi}^{i}$ defined in \eqref{eq: conditional support}. 
In addition, when $\widetilde{\mathbf{X}}_{i}$ is generated, it only uses the information about $(\mathbf{X}_{-i}, \Psi)$. Therefore, Lemma~\ref{lem: residual rotation} implies that conditioning on $(\mathbf{X}_{-i}, \Psi)$, 
the output column $\widetilde{\mathbf{X}}_{i}$ and the input column $\mathbf{X}_{i}$ are independent and they follow the same distribution. 
We can then apply Lemma~\ref{lem: conditional exchangeable} to conclude that the conditional distribution of $\widetilde{\mathbf{X}}_{i}\mid (\mathbf{X}_{i}, \mathbf{X}_{-i}, \Psi)$ is almost surely the same as that of $\mathbf{X}_{i}\mid (\widetilde{\mathbf{X}}_{i},  \mathbf{X}_{-i}, \Psi)$. 
By definition of $\widetilde{\mathbf{X}}$, we proved the desired statement. 
\end{proof}


\begin{proof}[Proof of Lemma~\ref{lem: residual rotation}]
    
We only need to show that if the rows of $\mathbf{X}$ are i.i.d. samples from a distribution in $\mc{M}_{G}$, then conditional on $(\mathbf{X}_{-i}, \Psi)$, the observed data $\mathbf{X}_{i}$ and the output $\widetilde{\mathbf{X}}_{i}$ share the same conditional distribution.

By properties of multivariate normal distribution, the conditional distribution of $\mathbf{X}_{i}$ given $\mathbf{X}_{-i}$ is $\mathbf{N}_{n}(\mu_{i} \bs{1} + \mathbf{X}_{N_i}\bs{\alpha}, ~~\bOmega_{i,i}^{-1} \bs{I}_{n} )$ for some $|N_i|$-dimensional parameter $\bs{\alpha}$ that depends on $\bOmega$. 

Without loss of generality, we would assume in the following that $\left[\bs{1},\mathbf{X}_{N_{i}}\right]$ has full column rank and $n>1+|N_i|$. The first statement here is guaranteed to hold with probability 1 because of the continuity of the Gaussian measure. 
If $n\leq 1+|N_i|$, it is clear that the output $\widetilde{\mathbf{X}}_{i}=\mathbf{X}_{i}$ because the residual of the linear regression is zero. 

In the following, we condition on $\mathbf{X}_{-i}$. 
Let $\bs{U}\in \mathbb{R}^{n\times (1+|N_i|)}$ be a matrix whose columns are an orthonormal basis of the column space of $\left[\bs{1},\mathbf{X}_{N_{i}}\right]$. 
Let $l_i=n-1-|N_i|$ and $\bs{V}\in \mathbb{R}^{n\times l_i}$ be a matrix such that $\bs{V}\tp \bs{U}=0$ and $\bs{V}\tp \bs{V}=\bs{I}_{l_i}$. 
By properties of least squares estimation, the fitted vector in Step 2 of Algorithm \ref{alg: residual rotation} satisfies that $\bs{F}=\bs{U}\bs{U}\tp \mathbf{X}_{i}$ and the residual vector satisfies that $\bs{R}=\bs{V}\bs{V}\tp \mathbf{X}_{i}$. 
We also have $\mathbf{X}_{i}=\bs{F}+\bs{R}$. 
By properties of Gaussian linear regression models, the following statements hold: 
\begin{enumerate}
    \item $\bs{F}$ can be expressed by $\mathbf{X}_{-i}$ and $\Psi$;
    \item the distribution of $\bs{V}\tp \mathbf{X}_{i}$ is $\mathbf{N}_{l_i}(\bs{0},\bOmega_{i,i}^{-1} \bs{I}_{l_i} )$. 
\end{enumerate}

By the statement 1 and the fact that $\Psi$ contains $\mathbf{X}_{i}\tp \mathbf{X}_{i}$, we can write
$\|\bs{V}\tp \mathbf{X}_{i}\|^2 = \|\bs{R}\|^2=\|\mathbf{X}_{i}\|^2-\|\bs{F}\|^2$ as a function of $\mathbf{X}_{-i}$ and $\Psi$. 

By the statement 2, $\bs{V}\tp \mathbf{X}_{i} / \|\bs{R}\|$ follows the uniform distribution on $\mathbb{S}^{l_i-1}$, the unit sphere in $\mathbb{R}^{l_i}$. 
By Basu's theorem and the properties of exponential families of full rank, we can conclude that $\bs{V}\tp \mathbf{X}_{i} / \|\bs{R}\|$ and $\Psi$ are conditionally independent given $\mathbf{X}_{-i}$.

Let $\bs{E}$ be the standard normal $n$-vector draw in Step 3 of Algorithm~\ref{alg: residual rotation}. We can use a similar argument to show that $\bs{V}\tp \bs{E} / \|\widetilde{\bs{R}}\|$ also follows the uniform distribution on $\mathbb{S}^{l_i-1}$.  Therefore, $\bs{V}\tp \mathbf{X}_{i} / \|\bs{R}\|$ and $\bs{V}\tp \bs{E} / \|\widetilde{\bs{R}}\|$ are conditionally i.i.d. given $(\mathbf{X}_{-i}, \Psi)$. 
Furthermore, if $l_i=n-1-|N_i|\geq 2$, then w.p. 1, $\bs{V}\tp \mathbf{X}_{i} / \|\bs{R}\|$ and $\bs{V}\tp \bs{E} / \|\widetilde{\bs{R}}\|$ are different. 

Conditional on $(\mathbf{X}_{-i},\Psi)$, we have
\begin{align*}
    \mathbf{X}_{i} & =\bs{F}+\bs{R}\\
    & = \bs{F}+\|\bs{R}\| \bs{V} \left(  \bs{V}\tp \mathbf{X}_{i}/\|\bs{R}\| \right) \\
    & \eqd \bs{F}+\|\bs{R}\| \bs{V} \left(\bs{V}\tp \bs{E} / \|\widetilde{\bs{R}}\| \right) \\ 
    & = \bs{F}+ \frac{\|\bs{R}\| }{\|\widetilde{\bs{R}}\|} \widetilde{\bs{R}} \\ 
    & = \widetilde{\mathbf{X}}_{i}, 
\end{align*}
and $\mathbf{X}_{i}\neq \widetilde{\mathbf{X}}_{i}$ w.p. 1 if $l_i\geq 2$. 
This completes the proof. 
\end{proof}

\begin{proof}[Proof of Lemma~\ref{lem: conditional exchangeable}]
    It is straightforward to see that 
    \begin{align*}
        \E\left( g(X)\mid Y,Z\right)=& \E\left( g(X)\mid Z\right)\\
        =& \E\left( g(Y)\mid Z\right)\\
        =& \E\left( g(Y)\mid X,Z\right), \text{a.s.}
    \end{align*}
    where the first and last equations are due to the conditional independence between $X$ and $Y$, and the second one is due to the identical conditional distribution. 
\end{proof}

\subsection{Proofs in Section~\ref{sec: sample MC}}

\begin{proof}[Proof of Proposition~\ref{prop: exchangeable}]
Denote by $P$ the distribution of the input $\mathbf{X}$,  by $\mc{C}_0$ the (time-inhomogeneous) Markov chain from $\mathbf{X}$ to the hub $\mathbf{X}^{(hub)}$, by $K_{i}(\mathbf{x}, \mathbf{y})$ the transition kernel of $\mc{C}_0$ at the $i$-th step, 
and by $\overleftarrow{\mc{C}}_0$ the time-reversed chain of $\mc{C}_0$.

By Proposition~\ref{prop: residual rotation}, $K_{i}(\mathbf{x}, \mathbf{y})$ is either 0 or $K_{i}(\mathbf{x}, \mathbf{y})=K_{i}(\mathbf{y}, \mathbf{x})$ with $\psi_{G}(\mathbf{x})= \psi_{G}(\mathbf{y})$. 
If $P\in \mc{M}_{G}$, we can conclude that 
$$
f_{\bmu, \bOmega}(\mathbf{x})K_{i}(\mathbf{x}, \mathbf{y}) =  f_{\bmu, \bOmega}(\mathbf{y})K_{i}(\mathbf{y}, \mathbf{x}), \quad \forall \mathbf{x},~ \mathbf{y} \in \R^{n\times p}. 
$$
This means that $P$ satisfies the detailed balance for Markov chain $\mc{C}_0$. 
As a result, the end point of $\mc{C}_0$ (i.e. $\mathbf{X}^{(hub)}$) follows the distribution $P$ and $K_{i}$ is the transition kernel of the reversed chain $\overleftarrow{\mc{C}}_0$ at the $(L|\mc{I}|-i+1)$-th step. 

Consider the Markov chain $\overleftarrow{\mc{C}}_{m}$ that starts from the hub $\mathbf{X}^{(hub)}$ and ends at a generated $\widetilde{\mathbf{X}}^{(m)}$ in Step 2 of Algorithm~\ref{alg: exchangeable}. 
By the procedure of the algorithm and the property of the transition kernel $K_i$ we proved above, $\overleftarrow{\mc{C}}_{m}$ follows the same distribution as $\overleftarrow{\mc{C}}_0$. 

Furthermore, when conditioning on the initial point $\mathbf{X}^{(hub)}$, the Markov chains $\overleftarrow{\mc{C}}_{m}$ $(m=0,1,\ldots,M)$ are mutually independent. Therefore, $\overleftarrow{\mc{C}}_{m}$ $(m=0, 1,\ldots,M)$ are i.i.d. conditional on $\mathbf{X}^{(hub)}$. As a result, the endpoints of these chains are conditionally exchangeable given $\mathbf{X}^{(hub)}$, which can be expressed as follows: for any permutation $\pi$ of $(m+1)$ elements and any $(m+1)$-variate integral function $g$, it holds that 
\begin{equation}\label{eq: expectation permuate}
\E \left[ g(\mathbf{X},\widetilde{\mathbf{X}}_{1}, \ldots, \widetilde{\mathbf{X}}^{(M)})   \mid \mathbf{X}^{(hub)}\right]=\E \left[ g\circ \pi (\mathbf{X},\widetilde{\mathbf{X}}_{1}, \ldots, \widetilde{\mathbf{X}}^{(M)})   \mid \mathbf{X}^{(hub)}\right], \mbox{a.s.} 
\end{equation}
Taking the expectation on both sides of Equation~\eqref{eq: expectation permuate}, we conclude that $\mathbf{X},\widetilde{\mathbf{X}}_{1}, \ldots, \widetilde{\mathbf{X}}^{(M)}$ are exchangeable. 

To prove the second statement, we point out that the coordinates in $\mc{T}^{c}$ remain still during the generation, so $\mathbf{X}_{-\mc{T}}=\mathbf{X}_{-\mc{T}}^{(hub)}=\widetilde{\mathbf{X}}_{-\mc{T}}^{(m)}$. 
We then take the conditional expectation given $\mathbf{X}_{-\mc{T}}^{(hub)}$ on both sides of Equation~\eqref{eq: expectation permuate}, which implies the conditional exchangeability as desired. 
\end{proof}

\subsection{Validity of the MC-GoF Test}\label{app:Proof of Lemma 3}

Throughout this section, we simplify the notation $T(\mathbf{X}), T(\widetilde{\mathbf{X}}^{(1)}), T(\widetilde{\mathbf{X}}^{(2)}), \cdots, T(\widetilde{\mathbf{X}}^{(M)}) $  as $T_0, T_1, T_2, \cdots, T_M$.

By Proposition~\ref{prop: exchangeable}, $\mathbf{X}, \widetilde{\mathbf{X}}^{(1)}, \widetilde{\mathbf{X}}^{(2)}, \cdots, \widetilde{\mathbf{X}}^{(M)}$ are exchangeable. Therefore, $T_0, T_1, T_2, \cdots, T_M$ are exchangeable. 
This already guarantees the validity of the p-value of Algorithm~\ref{alg:gof} as stated in Proposition~\ref{prop: GoF valid} using classic results in statistical inference; see for example \cite[Chapter 15.2.1]{lehmannTesting2022}. For completeness, we prove the validity of the p-values mentioned in Appendix~\ref{remark:p-value}. 

In the proof,  $ S $ is a random integer between $ 1 $ and $ 1 + \sum_{m=1}^{M} \mathbb{I}(T_{m} = T_0) $.

We use $ V_1, V_2, \ldots, V_K $ to denote the unique values of $ T_0, T_1, \ldots, T_M $ sorted in descending order. Let $ n_1, n_2, \ldots, n_K $ be the respective frequencies of these values. The following two conditions hold:
\[
\begin{aligned}
    &1. \quad n_i \geq 1 \quad \text{for all } i \\
    &2. \quad \sum_{i=1}^{K} n_i = 1 + M
\end{aligned}
\]

Denote by $\tilde{\mathbf{P}}(\cdot)=\mathbf{P}(\cdot \mid  V_1, \ldots, V_K, n_1, \ldots, n_K)$ the conditional probability of $ T_0 $ being equal to $ V_j $ given all distinct values and their frequencies. 
By exchangeability of $T_{m}$'s, it holds that 
\[
\tilde{\mathbf{P}}(T_0 = V_j ) = \frac{n_j}{1+M}, \quad j = 1,2,\ldots,K.
\]

\subsubsection{One-sided randomized test}

Consider the p-value defined as
\[
\text{pVal} = \frac{1}{1+M} \left( S + \sum_{m=1}^{M} \mathbb{I}(T_m > T_0) \right). 
\]

For $ \alpha \in (0,1) $, suppose $ E $ is an integer between $1$ and $K$ such that
\[
\sum_{i=1}^{E-1} n_i < \alpha (1+M) \leq \sum_{i=1}^{E} n_i,
\]
where the $\sum_{i=1}^0$ is regarded as $0$ when $E$ is 1. 
We have
\begin{align*}
\tilde{\mathbf{P}}(\text{reject } H_0) & = \tilde{\mathbf{P}}\left( S + \sum_{m=1}^{M} \mathbb{I}(T_{m} > T_0) \leq \alpha (1+M) \right) \\
& = \tilde{\mathbf{P}}\left( S \leq \alpha (1+M) - \sum_{m=1}^{M} \mathbb{I}(T_{m} > T_0)\right). 
\end{align*}
We further condition on  the event that $T_0=V_{k}$ for any $k\in [K]$, so that 
\begin{align*}
&\quad \tilde{\mathbf{P}}\left( S \leq \alpha (1+M) - \sum_{m=1}^{M} \mathbb{I}(T_{m} > T_0) \mid  T_0=V_{k} \right)\\
& = \begin{cases}
  1, & \text{when } k=1, \ldots, E-1, \\
  \frac{\lfloor \alpha (1+M) \rfloor - \sum_{j=1}^{E-1} n_j}{n_{E}}, & \text{when } k=E, \\
  0, & \text{when } k=E+1, \ldots, K.
\end{cases}
\end{align*}
Therefore, we can compute the rejection probability as follows:
\begin{align*}
\tilde{\mathbf{P}}(\text{reject } H_0) & = \sum_{k=1}^{K} \tilde{\mathbf{P}}\left( S \leq \alpha (1+M) - \sum_{m=1}^{M} \mathbb{I}(T_{m} > T_0), T_0=V_k\right) \\
&= \sum_{j=1}^{K} \frac{n_k}{1+M}\tilde{\mathbf{P}}\left( S \leq \alpha (1+M) - \sum_{m=1}^{M} \mathbb{I}(T_{m} > T_0) \mid  T_0=V_{k} \right) \\
&=(1+M)^{-1}\left( \sum_{k=1}^{E-1}n_k + \lfloor \alpha (1+M) \rfloor - \sum_{j=1}^{E-1} n_j \right)\\
&= (1+M)^{-1}\lfloor \alpha (1+M) \rfloor \\
&\leq \alpha. 
\end{align*}

This proves the validity of the randomized p-value defined in Remark~\ref{remark:p-value}.

\subsubsection{Two-sided randomized p-value}
 The two-sided p-value is defined as
\[
\frac{2}{1+M}\left[ S+\min\left(  \sum^M_{m=1}  I\left\{T_{m} >T_0\right\},\sum^M_{m=1}I\left\{T_{m} <T_0\right\}  \right)   \right].
\]

For $ \alpha \in (0,1) $, suppose $ E $ and $F$ are two integers between $1$ and $K$ such that
\[
\begin{aligned}
\sum_{i=1}^{E-1} n_i & < \frac{\alpha}{2} (1+M) \leq \sum_{i=1}^{E} n_i, \\
\sum_{i=F+1}^{K} n_i & < \frac{\alpha}{2} (1+M) \leq \sum_{i=F}^{K} n_i,
\end{aligned}
\]
where the $\sum_{i=1}^0$ is regarded as $0$ when $E$ is 1 and the $\sum_{i= K+1}^{K}$ is regarded as 0 when $F=K$. 
Summing up the left-hand sides of the two inequalities, we have $\sum_{i=1}^{E-1} n_i + \sum_{i=F+1}^{K} n_i< \alpha(1+M) < (1+M) =\sum_{i=1}^{K} n_i$. 
So $E\leq F$. 

We have
\begin{align*}
\tilde{\mathbf{P}}(\text{reject } H_0) & = \tilde{\mathbf{P}}\left( S+\min\left(  \sum^M_{m=1}  I\left\{T_{m} >T_0\right\},\sum^M_{m=1}I\left\{T_{m} <T_0\right\}  \right)  \leq \frac{\alpha}{2} (1+M) \right). 
\end{align*}

We first consider the case where $E<F$. 
If we further condition on $T_0=V_{k}$ for any $k\in [K]$, we have 
\begin{align*}
&\tilde{\mathbf{P}}\left( S+\min\left(  \sum^M_{m=1}  I\left\{T_{m} >T_0\right\},\sum^M_{m=1}I\left\{T_{m} <T_0\right\}  \right)  \leq \frac{\alpha}{2} (1+M) \right)\\
= & \begin{cases}
  1, & \text{when } k=1, \ldots, E-1, \\
  \frac{\lfloor \alpha (1+M) /2 \rfloor - \sum_{j=1}^{E-1} n_j}{n_{E}}, & \text{when } k=E, \\
  0, & \text{when } k=E+1, \ldots, F-1, \\
  \frac{\lfloor \alpha (1+M) /2 \rfloor - \sum_{j=F+1}^{K} n_j}{n_{F}}, & \text{when } k=F, \\
1,  & \text{when } k=F+1, \ldots, K.
\end{cases}
\end{align*}
Therefore, we can compute the rejection probability as follows:

\begin{align*}
&\quad \tilde{\mathbf{P}}(\text{reject } H_0)\\
& = \sum_{k=1}^{K} \tilde{\mathbf{P}}\left( S \leq \frac{\alpha}{2} (1+M) - \min\left(  \sum^M_{m=1}  I\left\{T_{m} >T_0\right\},\sum^M_{m=1}I\left\{T_{m} <T_0\right\}  \right), T_0=V_k\right) \\
&= \sum_{j=1}^{K} \frac{n_k}{1+M}\tilde{\mathbf{P}}\left( S \leq \frac{\alpha}{2} (1+M) - \min\left(  \sum^M_{m=1}  I\left\{T_{m} >T_0\right\},\sum^M_{m=1}I\left\{T_{m} <T_0\right\}  \right) \mid  T_0=V_{k} \right) \\
&=(1+M)^{-1}\left( \sum_{k=1}^{E-1}n_k + \lfloor \alpha (1+M) /2 \rfloor - \sum_{j=1}^{E-1} n_j  + \lfloor \alpha (1+M) /2 \rfloor - \sum_{j=F+1}^{K} n_j + \sum_{k=F+1}^{K} n_k \right)\\
&= (1+M)^{-1}2 \lfloor \alpha (1+M)/2 \rfloor \\
&\leq \alpha. 
\end{align*}

Next, we consider the case where $E=F$. We have 
\begin{align*}
&\tilde{\mathbf{P}}\left( S+\min\left(  \sum^M_{m=1}  I\left\{T_{m} >T_0\right\},\sum^M_{m=1}I\left\{T_{m} <T_0\right\}  \right)  \leq \frac{\alpha}{2} (1+M) \right)\\
= & \begin{cases}
  1, & \text{when } k=1, \ldots, E-1, \\
  \frac{\lfloor \alpha (1+M) /2 \rfloor - \min\left( \sum_{j=1}^{E-1} n_j , \sum_{j=F+1}^{K} n_j \right) }{n_{E}}, & \text{when } k=E=F, \\
1,  & \text{when } k=F+1, \ldots, K.
\end{cases}
\end{align*}
Similar to the previous case, the rejection probability can be computed as follows: 
\begin{align*}
\tilde{\mathbf{P}}(\text{reject } H_0) 
&= \sum_{j=1}^{K} \frac{n_k}{1+M}\tilde{\mathbf{P}}\left( S \leq \frac{\alpha}{2} (1+M) - \min\left(  \sum^M_{m=1}  I\left\{T_{m} >T_0\right\},\sum^M_{m=1}I\left\{T_{m} <T_0\right\}  \right) \mid  T_0=V_{k} \right) \\
&=(1+M)^{-1}\left( \sum_{k=1}^{E-1}n_k + \lfloor \alpha (1+M) /2 \rfloor -  \min\left( \sum_{j=1}^{E-1} n_j , \sum_{j=F+1}^{K} n_j \right) + \sum_{k=F+1}^{K} n_k \right)\\
&=(1+M)^{-1}\left( \lfloor \alpha (1+M) /2 \rfloor + \max\left( \sum_{j=1}^{E-1} n_j , \sum_{j=F+1}^{K} n_j \right)  \right)\\
&< (1+M)^{-1}2 \lfloor \alpha (1+M)/2 \rfloor \\
&\leq \alpha, 
\end{align*}
where the second to last inequality is due to the definition of $E$ and $F$. 

Combining the two cases, we complete the proof of the validity of the randomized two-sided p-value defined in Remark~\ref{remark:p-value}.

\subsection{Validity of local MC-GoF tests in Section~\ref{sec: GoF local}} \label{app: pf GoF local}

In the proof of Proposition~\ref{prop: residual rotation}, we only make use of the neighborhood $N_i$ defined by $G$ (or equivalently, the edges connecting $i$ and any other nodes). 
Therefore, the result of Proposition~\ref{prop: residual rotation} still holds for any $i\in \mc{T}$ under the hypothesis in \eqref{eq: local null GoF}. 
Since only $\mathbf{X}_i$ with $i\in \mc{T}$ will be updated when running Algorithm~\ref{alg: exchangeable}, the same argument in Proposition~\ref{prop: exchangeable} shows that the generated copies $\widetilde{\mathbf{X}}^{(1)}, \ldots, \widetilde{\mathbf{X}}^{(M)}$ are jointly exchangeable with the input $\mathbf{X}$ under the hypothesis in \eqref{eq: local null GoF}. As a result, the p-value computed as in \eqref{eqn:pval-1} remains valid for testing the local graph structure.

\subsection{Metric of deviation}\label{app: pf metric}

\begin{proof}[Proof of Proposition~\ref{prop: Omega related to R}]

The equation for $D(\mathbf{\Omega}_{\mc{C}}, \infty)$ is obvious once we recognize that $\mathbf{R}(\mathbf{A})$ is the correlation matrix of $X_C$ given $X_H$. 
We focus on the lower bound on $ D(\mathbf{\Omega}_{\mc{C}}, F)$. 

Without loss of generality, we rearrange the indices so that $\mc{C}=\{1,2,\ldots, q\}$ and $\mc{H}=\{q+1, \ldots, p\}$. 
Furthermore, the desired inequality is implied by the following result:  Suppose $0<c_0\leq \operatorname{var}(X_i\mid X_{\mc{H}}) \leq c_1$ and  $\|\mathbf{\Omega}_{\mc{C}}\|\leq c_2$, then it holds that 
    $$
   D(\mathbf{\Omega}_{\mc{C}}, F)^2\geq  \frac{c_0^2}{c_1 c_2} \sum_{i,j\in \mc{C}}\mathbf{\Omega}_{i,j}^2. 
    $$

Let $\mathbf{A} = \mathbf{\Omega}_{\mc{C}}^{-1}$ is the conditional covariance matrix of $X_C$ given $X_H$.
Let $\mathbf{D} = \operatorname{diag}(\mathbf{A})$ is the diagonal matrix of $\mathbf{A}$, whose diagonal elements are the conditional variance of $X_i$ ($i\in \mc{C}$) given $X_{\mc{H}}$. 
By assumption, we have $c_0\leq \mathbf{D}_{ii}\leq c_1$.

Define
$\mathbf{K} = \mathbf{R}(\mathbf{A}) - \mathbf{I}_q=\mathbf{D}^{-1/2} \mathbf{A} \mathbf{D}^{-1/2} - \mathbf{I}_q$. 
The diagonal elements of $\mathbf{K}$ are
     $$
     \mathbf{K}_{ii} = \mathbf{D}_{ii}^{-1} \mathbf{A}_{ii} - 1 = 1 - 1 = 0,
     $$
while the off-diagonal elements are 
$$
     \mathbf{K}_{ij} = \mathbf{D}_{ii}^{-1/2} \mathbf{A}_{ij} \mathbf{D}_{jj}^{-1/2}, \quad \text{for } i \ne j.
$$
Since $
     \mathbf{A} = \mathbf{D}^{1/2} (\mathbf{K} + \mathbf{I}_q) \mathbf{D}^{1/2}$, we have
     $$
     \mathbf{\Omega}_{\mc{C}} = \mathbf{A}^{-1} = \left( \mathbf{D}^{1/2} (\mathbf{K} + \mathbf{I}_q) \mathbf{D}^{1/2} \right)^{-1} = \mathbf{D}^{-1/2} (\mathbf{K} + \mathbf{I}_q)^{-1} \mathbf{D}^{-1/2}.
     $$
Therefore, for $i \ne j$, we have 
     $$
     \mathbf{\Omega}_{i,j} =  \mathbf{D}_{ii}^{-1/2} \left[ (\mathbf{K} + \mathbf{I}_q)^{-1} \right]_{ij} \mathbf{D}_{jj}^{-1/2}.
     $$

     Using the identities $(\mathbf{K} + \mathbf{I}_q)^{-1} - \mathbf{I}_q = - \mathbf{K} (\mathbf{K} + \mathbf{I}_q)^{-1}$ and $(\mathbf{I}_q)_{ij}=0$, we have:
     $$
     \mathbf{\Omega}_{i,j} = - \mathbf{D}_{ii}^{-1/2} \left[ \mathbf{K} (\mathbf{K} + \mathbf{I}_q)^{-1} \right]_{ij} \mathbf{D}_{jj}^{-1/2},  ~~ i\neq j.
     $$

The sum of squares of $\mathbf{\Omega}_{i,j}$ over all pairs of distinct $i$ and $j$ in $\mc{C}$ is
$$
     \begin{aligned}
     \sum_{i \ne j} \left( \mathbf{\Omega}_{i,j} \right)^2 &= \sum_{i \ne j} \left( \mathbf{D}_{ii}^{-1/2} \left[ - \mathbf{K} (\mathbf{K} + \mathbf{I}_q)^{-1} \right]_{ij} \mathbf{D}_{jj}^{-1/2} \right)^2 \\
     &\leq \|\mathbf{D}^{-1 / 2} \mathbf{K}\left(\mathbf{K}+\mathbf{I}_q\right)^{-1} \mathbf{D}^{-1 / 2}\|_{F}^2\\
    & \leq \| \mathbf{D}^{-1/2} \|_{}^4 \| (\mathbf{K} + \mathbf{I}_q)^{-1} \|_{}^2 \| \mathbf{K} \|_F^2.
     \end{aligned}
     $$

Since $D_{ii} \geq c_0$, we have $\| \mathbf{D}^{-1/2} \|_{}^4\leq c_0^{-2}$. 
By definition, we have $(\mathbf{K} + \mathbf{I}_q)^{-1}=\mathbf{D}^{1/2}\mathbf{\mathbf{\Omega}_{\mc{C}}}\mathbf{D}^{1/2}$. 
Since $D_{ii}\leq c_1$ and , we have 
$$
\| (\mathbf{K} + \mathbf{I}_q)^{-1} \|_{}^2\leq \|\mathbf{D}^{1/2}\|^2\|\mathbf{\mathbf{\Omega}_{\mc{C}}}\|\leq c_1 c_2, 
$$
where we have used the assumption that $\|\mathbf{\mathbf{\Omega}_{\mc{C}}}\|\leq c_2$. We thus conclude that 
$$
\sum_{i \ne j} \left( \mathbf{\Omega}_{i,j} \right)^2\leq \frac{c_1 c_2}{c_0^2}  \| \mathbf{K} \|_F^2. 
$$
By definition, it holds that $\| \mathbf{K} \|_F=D(\mathbf{\Omega}_{\mc{C}}, F)$ and thus we complete the proof. 
\end{proof}

\subsection{Power Guarantee}\label{app: power proof scheme}

Recall the assumption that $p>n$, for any $j\in \mc{H}$, the size of its neighborhood $N_j$ is larger than or equal to $n$. Therefore, when running Algorithm~\ref{alg: exchangeable}, the $j$th column of every generated data matrix remains the same as the observed one. In other words, $\mathbf{\tilde{X}}_{,\mc{H}}^{(m)}=\mathbf{X}_{,\mc{H}}$.

Denote by $\Pi_1$ the projection matrix onto the column space of the matrix $[\mathbf{1}_n, \mathbf{X}_{,\mc{H}}]$. 
Recall that $n\geq 2+|\mc{H}|$, we have the decomposition
$\Pi_2:=\mathbf{I}_n-\Pi_1=\Gamma\Gamma^\top$ where $\Gamma$ is an orthonormal matrix of dimension $n\times (n-1-|\mc{H}|)$. 
Let $\mathbf{W}=\Gamma^\top\mathbf{X}_{,\mc{C}}$, where $\mc{C}=\mc{H}^c$.
Then we can express the data matrix as
$$
\mathbf{X}_{,\mc{C}}=\Pi_1 \mathbf{X}_{,\mc{C}} + \Gamma \mathbf{W}, 
$$
and express the partial residual correlation statistics as 
$$
\hat{\gamma}_{ij}=\frac{(\mathbf{W}_{,i}^\top \mathbf{W}_{,j})}{\|\mathbf{W}_{,i}\|_2\|\mathbf{W}_{,j}\|_2} = \left\langle \frac{\mathbf{W}_{,i}}{\|\mathbf{W}_{,i}\|_2}, \frac{\mathbf{W}_{,j}}{\|\mathbf{W}_{,j}\|_2} \right\rangle, \qquad i,j\in \mc{C}. 
$$
For the generated copies,  we similarly define $\widetilde{\mathbf{W}}^{(m)}=\Gamma^\top\widetilde{\mathbf{X}}^{(m)}_{,\mc{C}}$ and express the corresponding partial residual correlations as  
$$
\tilde{\gamma}_{ij}^{(m)}= \left\langle \frac{\widetilde{\mathbf{W}}^{(m)}_{,i}}{\|\widetilde{\mathbf{W}}^{(m)}_{,i}\|_2}, \frac{\widetilde{\mathbf{W}}^{(m)}_{,j}}{\|\widetilde{\mathbf{W}}^{(m)}_{,j}\|_2} \right\rangle, \qquad i,j\in \mc{C}. 
$$

Our first lemma characterizes the joint distribution of $\{\hat{\gamma}_{ij},\tilde{\gamma}_{ij}^{(m)}: i,j \in \mc{C}, m=1,\ldots, M\}$.

\begin{lemma}\label{lemma: joint distribution W ratio}
Consider the notations and assumptions in Section~\ref{sec: power theory}.
Consider independent random matrices $\mathbf{U}$ and $\mathbf{\tilde{U}}^{(m)}$ ($m=1,2,\ldots, M)$ constructed as follows. 
$\mathbf{U}$ is a $(n-1-|\mc{H}|)\times q$ matrix with i.i.d. rows sampled from $N_{q}(\mathbf{0}, \mathbf{R}(\mathbf{\Omega}_{\mc{C}}^{-1}))$, and $\mathbf{\tilde{U}}^{(m)}$ is a $(n-1-|\mc{H}|)\times q$ matrix with i.i.d. rows sampled from $N_{q}(\mathbf{0}, \mathbf{I}_{q})$. 
Then the joint distribution of 
$$
\left(\frac{\mathbf{W}_{,i}}{\|\mathbf{W}_{,i}\|_2}\right)_{i\in \mc{C}}, 
\left(\frac{\widetilde{\mathbf{W}}^{(m)}_{,i}}{\|\widetilde{\mathbf{W}}^{(m)}_{,i}\|_2}\right)_{i\in \mc{C}}, m=1, \ldots, M, 
$$
is the same as the joint distribution of  
$$
\left(\frac{\mathbf{U}_{i}}{\|\mathbf{U}_{i}\|}\right)_{i\leq q},
\left(\frac{\mathbf{\tilde{U}}^{(m)}_{i}}{\|\mathbf{\tilde{U}}^{(m)}_{i}\|}\right)_{i\leq q}, m=1,\ldots, M.
$$
Furthermore, if $\lambda_{\max}(\mathbf{\Omega}_{\mc{C}})/\lambda_{\min}(\mathbf{\Omega}_{\mc{C}})\leq b_0$, then the spectral norm of $\mathbf{R}(\mathbf{\Omega}_{\mc{C}}^{-1}))$ is bounded by $b_0$. 
\end{lemma}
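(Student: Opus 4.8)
The plan is to transport the entire problem into the coordinate system defined by $\Gamma$ and exploit the defining feature of a clique-star-shaped null graph: every node $i\in\mc{C}$ has the identical neighborhood $N_i=\mc{H}$, so the projection $\Pi_1$ onto the span of $[\mathbf{1}_n,\mathbf{X}_{,\mc{H}}]$ — and therefore $\Gamma$ — is common to all columns in $\mc{C}$ and is never altered when Algorithm~\ref{alg: exchangeable} updates a $\mc{C}$-column (updating a $\mc{C}$-column leaves the $\mc{H}$-columns fixed). First I would pin down the law of the observed $\mathbf{W}=\Gamma^\top\mathbf{X}_{,\mc{C}}$. Conditioning on $\mathbf{X}_{,\mc{H}}$, the Gaussian conditioning identity gives $\mathbf{X}_{,\mc{C}}=[\mathbf{1}_n,\mathbf{X}_{,\mc{H}}]\mathbf{B}+\mathbf{E}$, where the rows of $\mathbf{E}$ are i.i.d. $N_q(\mathbf{0},\bOmega_{\mc{C}}^{-1})$, since the conditional covariance of $X_{\mc{C}}$ given $X_{\mc{H}}$ is the inverse of the $\mc{C}$-block of the precision matrix. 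Because $\Gamma^\top[\mathbf{1}_n,\mathbf{X}_{,\mc{H}}]=\mathbf{0}$, the deterministic part vanishes and $\mathbf{W}=\Gamma^\top\mathbf{E}$ is matrix-normal with among-row covariance $\Gamma^\top\Gamma=\mathbf{I}_{n-1-|\mc{H}|}$ and among-column covariance $\bOmega_{\mc{C}}^{-1}$; equivalently its rows are i.i.d. $N_q(\mathbf{0},\bOmega_{\mc{C}}^{-1})$. This law is free of $\mathbf{X}_{,\mc{H}}$, so it is also the marginal law and $\mathbf{W}\indp\mathbf{X}_{,\mc{H}}$. Writing $\bOmega_{\mc{C}}^{-1}=\mathbf{D}^{1/2}\mathbf{R}(\bOmega_{\mc{C}}^{-1})\mathbf{D}^{1/2}$ with $\mathbf{D}=\mathrm{diag}(\bOmega_{\mc{C}}^{-1})$, the $i$-th column of $\mathbf{W}$ is distributed as $\mathbf{D}_{ii}^{1/2}$ times the $i$-th column of $\mathbf{U}$; the positive scalar cancels after normalization, giving $\mathbf{W}_{,i}/\|\mathbf{W}_{,i}\|\eqd\mathbf{U}_i/\|\mathbf{U}_i\|$ jointly over $i\in\mc{C}$.

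Next I would show that residual rotation acts on $\mathbf{W}$-directions as an independent uniform resampling. For $i\in\mc{C}$ the fitted part $\mathbf{F}=\Pi_1\mathbf{X}_i$ lies in $\ker\Gamma^\top$ and the residual is $\Gamma\mathbf{W}_{,i}$, so the output $\widetilde{\mathbf{X}}_i=\mathbf{F}+\Gamma\widetilde{\mathbf{w}}\,\|\mathbf{W}_{,i}\|/\|\widetilde{\mathbf{w}}\|$, with $\widetilde{\mathbf{w}}=\Gamma^\top\mathbf{E}$ for a fresh standard normal $\mathbf{E}$, reads in $\mathbf{W}$-coordinates as $\widetilde{\mathbf{w}}\,\|\mathbf{W}_{,i}\|/\|\widetilde{\mathbf{w}}\|$. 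Its direction $\widetilde{\mathbf{w}}/\|\widetilde{\mathbf{w}}\|$ is uniform on the unit sphere of $\mathbb{R}^{n-1-|\mc{H}|}$ and independent of the current state, while the norm is preserved; and the other $\mathbf{W}$-columns are untouched because $\Gamma$ and the $\mc{H}$-columns are unchanged. Tracing Algorithm~\ref{alg: exchangeable}: the forward pass resamples each $\mc{C}$-direction exactly once, so the hub directions are i.i.d. uniform and independent of $\mathbf{X}$; each backward chain resamples again from the hub, so for every $m$ the directions of $\widetilde{\mathbf{X}}^{(m)}$ are i.i.d. uniform, mutually independent across $m$, and independent of $\mathbf{X}$.

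To close part one, I would observe that i.i.d. uniform directions are exactly the normalized columns of a matrix with i.i.d. $N_q(\mathbf{0},\mathbf{I}_q)$ rows, so the copy directions match $\mathbf{\tilde{U}}^{(m)}_i/\|\mathbf{\tilde{U}}^{(m)}_i\|$; assembling this with the observed-column identity and the mutual independence of $\mathbf{U},\mathbf{\tilde{U}}^{(1)},\dots,\mathbf{\tilde{U}}^{(M)}$ yields the claimed joint-distribution equality. For the spectral-norm bound, set $\mathbf{M}=\bOmega_{\mc{C}}^{-1}$. Since $\mathbf{M}\preceq\lambda_{\max}(\mathbf{M})\mathbf{I}$ we have $\mathbf{R}(\mathbf{M})=\mathbf{D}^{-1/2}\mathbf{M}\mathbf{D}^{-1/2}\preceq\lambda_{\max}(\mathbf{M})\mathbf{D}^{-1}$, and every diagonal entry of the positive semidefinite $\mathbf{M}$ satisfies $\mathbf{M}_{ii}\geq\lambda_{\min}(\mathbf{M})$; hence $\|\mathbf{R}(\mathbf{M})\|\leq\lambda_{\max}(\mathbf{M})/\min_i\mathbf{M}_{ii}\leq\lambda_{\max}(\mathbf{M})/\lambda_{\min}(\mathbf{M})$, which is the condition number of $\mathbf{M}$ and equals that of $\bOmega_{\mc{C}}$, hence is at most $b_0$.

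The main obstacle is the second step: rigorously verifying that residual rotation collapses to an independent uniform resampling of directions and that the forward/backward construction renders the copy directions independent of $\mathbf{X}$ while the norms (which carry all of $\mathbf{X}$'s remaining information) are irrelevant to the normalized statistics. Everything here hinges on the clique-star structure keeping $\Gamma$ fixed and shared across all $\mc{C}$-columns throughout the chain; without this invariance the projections would vary column-by-column and across iterations, and the clean product-of-uniforms form — which is exactly what makes the downstream power analysis tractable — would break down.
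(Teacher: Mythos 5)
Your proposal is correct and follows essentially the same route as the paper's proof: you transport everything into $\Gamma$-coordinates, show that each residual rotation of a $\mc{C}$-column (with $N_i=\mc{H}$ fixed, so $\Gamma$ never changes) replaces its $\mathbf{W}$-direction by a fresh uniform unit vector independent of everything else, so the forward/backward passes make the copies' directions i.i.d. uniform and independent of $\mathbf{X}$, while diagonal-scaling invariance identifies the observed normalized columns with those of $\mathbf{U}$. The only cosmetic difference is in the final norm bound, where you use the elementary fact $\mathbf{M}_{ii}\ge\lambda_{\min}(\mathbf{M})$ for $\mathbf{M}=\bOmega_{\mc{C}}^{-1}$ whereas the paper invokes the Schur-complement inequality $[\bOmega_{\mc{C}}^{-1}]_{ii}^{-1}\le \bOmega_{ii}$; both yield the same condition-number bound $b_0$.
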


We defer the proof of Lemma~\ref{lemma: joint distribution W ratio} in Appendix~\ref{sec: pf W ratio} but discuss the implication. 
Lemma~\ref{lemma: joint distribution W ratio} allows us to analyze the distribution of the statistics $T(\mathbf{X})$ and $T(\mathbf{\tilde{X}}^{(m)})$ using an equivalent version of joint distribution defined using $\mathbf{U}$ and $\mathbf{\tilde{U}}^{(m)}$, provided that if the function $T(\mathbf{x})$ only depends on $\Gamma^\top \mathbf{x}$. 

In particular, the distributional equivalence guaranteed by Lemma~\ref{lemma: joint distribution W ratio} implies we can represent $\hat{\gamma}_{ij}$ as 
$$
\hat{\gamma}_{ij}
\overset{d}{=}
\left\langle\frac{\mathbf{U}_{i}}{\|\mathbf{U}_{i}\|},\frac{\mathbf{U}_{j}}{\|\mathbf{U}_{j}\|}\right\rangle
$$
and similarly for $\tilde{\gamma}_{ij}^{(m)}$ using $\mathbf{\tilde{U}}^{(m)}$. 
In the following proofs, we abuse the notation $T(\mathbf{U})$ to denote the statistic computed by substituting  $\hat{\gamma}_{ij}$ by $\left\langle\frac{\mathbf{U}_{i}}{\|\mathbf{U}_{i}\|},\frac{\mathbf{U}_{j}}{\|\mathbf{U}_{j}\|}\right\rangle$ and $T(\mathbf{\tilde{U}}^{(m)})$ to denote the statistic computed by substituting $\hat{\gamma}_{ij}$ by $\left\langle\frac{\mathbf{\tilde{U}}^{(m)}_{i}}{\|\mathbf{\tilde{U}}^{(m)}_{i}\|},\frac{\mathbf{\tilde{U}}^{(m)}_{j}}{\|\mathbf{\tilde{U}}^{(m)}_{j}\|}\right\rangle$.

Before we get into the proof, we provide an overview of our strategy. 

We will find some constant $t$ depending on the choice of test statistic and we will write 
\begin{align}
\label{eq: rejection probability general}   & \quad \mathbb P(\text{Reject }H_0)\\
   &=\mathbb P \left( \frac{1}{M+1}\left[1+\sum_{m=1}^M \mathbf{1}\left\{T\left(\widetilde{\mathbf{U}}^{(m)}\right) \geq T(\mathbf{U})\right\}\right]\leq \alpha   \right)\nonumber \\
    &\geq \mathbb P \left( T(\mathbf{U}) > t  ,  \frac{1}{M+1}\left[1+\sum_{m=1}^M \mathbf{1}\left\{T\left(\widetilde{\mathbf{U}}^{(m)}\right) \geq T(\mathbf{U})\right\}\right]\leq \alpha  \right)\nonumber \\
    &=\mathbb P(T(\mathbf{U}) >t )  -\mathbb P\left( T(\mathbf{U}) > t,\frac{1}{M+1}\left[1+\sum_{m=1}^M \mathbf{1}\left\{T\left(\widetilde{\mathbf{U}}^{(m)}\right) \geq T(\mathbf{U})\right\}\right]>\alpha\right)\nonumber \\
    &\geq \mathbb P(T(\mathbf{U}) >t) -\mathbb P \left( \frac{1}{M+1}\left[1+\sum_{m=1}^M \mathbf{1}\left\{T\left(\widetilde{\mathbf{U}}^{(m)}\right) \geq t\right\}\right]>\alpha \right), \nonumber 
\end{align}
where the last inequality is because if $b>c$ then $\mathbf{1}\left\{a\geq b\right\}\leq \mathbf{1}\left\{a\geq c\right\}$. 

Our proof will be dedicated to (1) derive an upper bound on 
\begin{equation}\label{eq: null tail}
    \mathbb P \left( \frac{1}{M+1}\left[1+\sum_{m=1}^M \mathbf{1}\left\{T\left(\widetilde{\mathbf{U}}^{(m)}\right) \geq t\right\}\right]>\alpha \right),
\end{equation}
and (2) derive a lower bound for $\mathbb P(T(\mathbf{U}) >t) $. 

The following lemma is useful in the first task. 

\begin{lemma}\label{lem: null tail bound}
If $M>2/\alpha$ and 
if the probability $\theta=P\left(T\left(\widetilde{\mathbf{U}}^{(m)}\right) \geq t_n\right)$ satisfies $2(\theta+\sqrt{\theta})<\alpha$ and $\theta<0.5$, then 
$$
P \left( \frac{1}{M+1}\left[1+\sum_{m=1}^M \mathbf{1}\left\{T\left(\widetilde{\mathbf{U}}^{(m)}\right) \geq t_n\right\}\right]>\alpha \right) \leq e^{-M}. 
$$
\end{lemma}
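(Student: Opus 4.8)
The plan is to reduce the statement to a standard binomial upper‑tail estimate and then control that tail with an exponential (Chernoff‑type) inequality. Since the copies $\widetilde{\mathbf{U}}^{(1)},\dots,\widetilde{\mathbf{U}}^{(M)}$ are independent and identically distributed (by the construction in Lemma~\ref{lemma: joint distribution W ratio}), the indicators $\mathbf{1}\{T(\widetilde{\mathbf{U}}^{(m)})\geq t_n\}$ are i.i.d. Bernoulli random variables with success probability $\theta$. Writing $S=\sum_{m=1}^M \mathbf{1}\{T(\widetilde{\mathbf{U}}^{(m)})\geq t_n\}$, we have $S\sim\mathrm{Binomial}(M,\theta)$, and the event to be bounded is $\{(1+S)/(M+1)>\alpha\}=\{S>c\}$ with $c:=\alpha(M+1)-1$. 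The hypothesis $M>2/\alpha$ ensures $c>1$, so this is a genuine upper‑tail event whose threshold $c\approx\alpha M$ lies strictly above the mean $\mathbb{E}[S]=M\theta$, which is small because $\theta<\alpha/2$. The whole problem thus collapses to showing $\mathbb{P}(S>c)\leq e^{-M}$.

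First I would quantify the separation between $c$ and the mean. From $2(\theta+\sqrt{\theta})<\alpha$ one extracts both $\theta<\alpha/2$ and $\sqrt{\theta}<\alpha/2$, whence $\alpha-\theta>\theta+2\sqrt{\theta}>2\sqrt{\theta}$ and therefore
\[
c-M\theta=M(\alpha-\theta)-(1-\alpha)>2M\sqrt{\theta}-(1-\alpha).
\]
Comparing this gap to the standard deviation $\sqrt{M\theta(1-\theta)}\leq\sqrt{M\theta}$, the threshold sits on the order of $\sqrt{M}$ standard deviations above the mean; this is the quantitative content the condition is engineered to supply. I would then invoke a Chernoff bound for the binomial in the relative‑entropy form
\[
\mathbb{P}(S>c)\leq\exp\!\left(-M\,\mathrm{kl}(c/M\,\|\,\theta)\right),\qquad \mathrm{kl}(p\,\|\,\theta)=p\log\tfrac{p}{\theta}+(1-p)\log\tfrac{1-p}{1-\theta},
\]
together with the elementary lower bound $\mathrm{kl}(p\,\|\,\theta)\geq p\log(p/\theta)+\theta-p$ (obtained from $\log x\geq 1-1/x$ applied to the second summand) evaluated at $p=c/M$.

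The crux — and the step I expect to be the main obstacle — is to show that this exponent is at least one, so that the bound becomes $e^{-M}$. Substituting $p=c/M$ and $\theta<\alpha^2/4$ into the lower bound reduces the claim to an inequality of the shape $p\log(p/\theta)\geq 1+p-\theta$, and verifying it is delicate precisely because the binomial upper tail is sub‑Poisson rather than sub‑Gaussian: when $\theta$ is small the decay is governed by the logarithmic factor $\log(p/\theta)$ and cannot be captured by a variance‑only (sub‑Gaussian) argument such as Hoeffding. This is exactly where the factor $2$ and the $\sqrt{\theta}$ term in $2(\theta+\sqrt{\theta})<\alpha$ must be tracked carefully, and I anticipate that the honest calibration of constants here is the heart of the proof. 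The remaining hypotheses play supporting roles: $\theta<1/2$ keeps $c/M$ bounded away from both $\theta$ and $1$ so that the relative‑entropy lower bound stays valid, while $M>2/\alpha$ absorbs the lower‑order term $(1-\alpha)$ in the separation estimate. Once the relative entropy is shown to be at least one, the conclusion $\mathbb{P}\big((1+S)/(M+1)>\alpha\big)\leq e^{-M}$ follows immediately.
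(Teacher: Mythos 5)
Your reduction to a binomial tail, the separation estimate, and the choice of the relative-entropy Chernoff bound are all sound, and you correctly isolate the decisive step: showing $\mathrm{kl}(c/M\,\|\,\theta)\ge 1$. But you never prove that step, and it cannot be proven---it is false under the stated hypotheses, and so is the lemma itself. Take $\alpha=0.05$, $M=41>2/\alpha$, and $\theta=5.9\times10^{-4}$, which satisfies $2(\theta+\sqrt{\theta})\approx 0.0498<\alpha$ and $\theta<0.5$. The event $\bigl\{(1+S)/(M+1)>\alpha\bigr\}$ is then $\{S\ge 2\}$ for $S\sim\mathrm{Bin}(41,\theta)$, and already
$$
\mathbb{P}(S\ge 2)\;\ge\;\binom{41}{2}\,\theta^{2}(1-\theta)^{39}\;\approx\;2.8\times10^{-4}\;\gg\; e^{-41}\approx 1.6\times10^{-18}.
$$
More generally, in the regime where the paper applies the lemma ($\theta\approx\alpha^{2}/16$, threshold $c/M\approx\alpha$), one has $\mathrm{kl}(c/M\,\|\,\theta)\approx\alpha\log(16/\alpha)-\alpha$, which is about $0.24$ at $\alpha=0.05$ and tends to $0$ as $\alpha\to0$; since the Chernoff exponent is tight for binomial tails up to polynomial factors, the best achievable bound has the form $\exp\bigl(-cM\alpha\log(1/\alpha)\bigr)$, never $e^{-M}$. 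So the missing step in your write-up is not a matter of delicate constant calibration: it is an impossibility, and your own sub-Poisson observation is exactly the reason why.

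The comparison with the paper is instructive. The paper's proof takes a different route: it invokes a Kearns--Saul moment-generating-function bound to claim, in its auxiliary Bernoulli-concentration lemma, that $\mathbb{P}(\bar{\xi}-\theta\ge x)\le\exp(-\theta^{-1}Mx^{2})$ for $\theta<1/2$, and then checks $x^{2}>\theta$ via the same separation estimate you derived. That concentration claim is also false, for precisely the reason you flagged: Kearns--Saul yields the exponent $-Mx^{2}/g(\theta)$ with $g(\theta)=(1-2\theta)/\log((1-\theta)/\theta)$, and upgrading this to $-\theta^{-1}Mx^{2}$ would require $g(\theta)\le\theta$, whereas the paper verifies only $g(\theta)\le\theta^{-1}$ (which gives the useless bound $\exp(-\theta Mx^{2})$); for small $\theta$ one has $g(\theta)\approx 1/\log(1/\theta)\gg\theta$, i.e., the Bernoulli upper tail is sub-Poisson, not sub-Gaussian with variance proxy $\theta/2$. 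In short, your approach is the correct and essentially optimal one; carried to completion it yields a disproof rather than a proof, showing that the lemma (and its downstream use, where $M\ge\log(2\epsilon^{-1})$ is assumed to suffice) needs to be restated with a weaker conclusion such as $\exp(-cM\alpha\log(1/\alpha))$ and a correspondingly larger requirement on $M$.
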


For the second task, 
it is sufficient to find a value $t_n$ and derive a lower bound on the probability $P(T(\mathbf{U}) >t_n)$ 
in the proof of either the dense or sparse alternative.

\subsubsection{Proof for dense alternative}

\begin{proof}[Proof of Theorem \ref{thm:power_dense}]

We first consider the test statistic $T_1$. 
Following the argument in the last subsection, we consider
$$
T_1(\mathbf U)= \sum_{i,j\in \mc{C}, i\neq j} \left\langle\frac{\mathbf{U}_{i}}{\|\mathbf{U}_{i}\|},\frac{\mathbf{U}_{j}}{\|\mathbf{U}_{j}\|}\right\rangle^2. 
$$

For this statistic, we first state a result regarding its asymptotic distribution and its first two moments.

\begin{theorem}\label{thm: dense asymptotics}
Suppose $\mathbf{U}$ is a $n \times q$ random matrix with i.i.d. rows sampled from $N_q(\mathbf{0}, \mathbf{R}_n)$ where $\mathbf{R}_n$ is a correlation matrix with spectral norm bounded by a constant $b_0$. 
Suppose $q$ increases along with $n$ and $\lim \frac{q}{n}\in (0, \infty)$. 
Let $\mu_{n}=\mathbb{E} T_1(\mathbf{U})$. 
We can choose a sequence of positive numbers $\sigma_{n}^2$ such that the followings hold:
\begin{enumerate}
    \item As $n,q\rightarrow \infty$, it holds that
$$
\frac{T_1(\mathbf{U})-\mu_{n}}{\sigma_{n}} \xrightarrow{D} N(0,1);
$$
\item When $\mathbf{R}_n =\mathbf I_q$, we have
$$
\mu_n = \frac{q(q-1)}{n}, \qquad \sigma_{n}^2= (2q/n)^2. 
$$

\item For a general $\mathbf{R}_n$, 
$$
\mu_n\geq \frac{q(q-1)}{n} + \left(1-\frac{4}{n}\right)\|\mathbf R_n -\mathbf I_q\|^2_{F}
$$
and 
$$
\sigma_{n}^2\leq b_0^4\left[ (2 q/n)^2 + 8 n^{-1} \|\mathbf R_n -\mathbf I_q\|^2_{F} \right]. 
$$
\end{enumerate}
\end{theorem}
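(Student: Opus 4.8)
The plan is to treat $T_1(\mathbf{U})=2\sum_{i<j}\hat\gamma_{ij}^2$ as a sum over the $\binom{q}{2}$ squared cosines $\hat\gamma_{ij}=\langle \mathbf{U}_i/\|\mathbf{U}_i\|,\mathbf{U}_j/\|\mathbf{U}_j\|\rangle$, where each column $\mathbf{U}_i$ is an isotropic $N_n(\mathbf{0},\mathbf{I}_n)$ vector and any two columns have coordinatewise cross-correlation $\rho_{ij}=(\mathbf{R}_n)_{ij}$. Parts~2 and 3 (the moment statements) will be obtained by exact Gaussian and rotational-invariance computations, while Part~1 (asymptotic normality) will be obtained by a martingale central limit theorem after centering and rescaling by $\sigma_n$. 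Throughout, the bound $\|\mathbf{R}_n\|\le b_0$ is what keeps all the relevant moments of order one.

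For Part~2, set $\mathbf{R}_n=\mathbf{I}_q$, so the columns are i.i.d.\ isotropic. By rotational invariance, conditioning on $\mathbf{U}_i$ shows $\hat\gamma_{ij}^2\sim\mathrm{Beta}(1/2,(n-1)/2)$, whence $\mathbb{E}[\hat\gamma_{ij}^2]=1/n$ and $\mathrm{Var}(\hat\gamma_{ij}^2)=\tfrac{2(n-1)}{n^2(n+2)}$. The same conditioning shows that any two squared cosines over \emph{distinct} unordered pairs are uncorrelated: given the single shared column (if any), the two cosines are conditionally independent with a law that does not depend on that column. Summing over the $\binom{q}{2}$ unordered pairs, with the factor $4$ from $T_1=2\sum_{i<j}\hat\gamma_{ij}^2$, then yields $\mu_n=q(q-1)/n$ and a variance that is asymptotically $(2q/n)^2$; I would take $\sigma_n^2=(2q/n)^2$ as the normalizer, which is equivalent to the exact variance under $q/n\to\gamma$ and so legitimizes the CLT scaling.

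For Part~3, I would first establish the per-pair lower bound $\mathbb{E}[\hat\gamma_{ij}^2]\ge 1/n+(1-4/n)\rho_{ij}^2$ from the exact moment of a cosine between two correlated Gaussian vectors, and sum it using $\sum_{i\ne j}\rho_{ij}^2=\|\mathbf{R}_n-\mathbf{I}_q\|_F^2$ to obtain the stated lower bound on $\mu_n$. For the variance I would split $\mathrm{Var}(T_1)$ into the diagonal terms $\mathrm{Var}(\hat\gamma_{ij}^2)$, the one-shared-index covariances, and the disjoint-pair covariances; each is a higher-order Gaussian moment that, after the $\|\mathbf{U}_i\|$ normalizations are handled via concentration of the column norms, can be expanded by Wick's formula and bounded by powers of $\|\mathbf{R}_n\|\le b_0$. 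Collecting these contributions produces the null piece $(2q/n)^2$ together with a signal piece $8n^{-1}\|\mathbf{R}_n-\mathbf{I}_q\|_F^2$, all multiplied by $b_0^4$.

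The main obstacle is Part~1. I would filter by columns, $\mathcal{F}_k=\sigma(\mathbf{U}_1,\dots,\mathbf{U}_k)$, and write $T_1-\mu_n=\sum_{k=1}^q D_k$ with $D_k=\mathbb{E}[T_1\mid\mathcal{F}_k]-\mathbb{E}[T_1\mid\mathcal{F}_{k-1}]$; since only the pairs joining column $k$ to a previously revealed column change their conditional expectation, $D_k=2\sum_{i<k}\bigl(\hat\gamma_{ik}^2-\mathbb{E}[\hat\gamma_{ik}^2\mid\mathcal{F}_{k-1}]\bigr)$, a genuine martingale-difference array. Asymptotic normality then follows from a martingale CLT once I verify (i) that the normalized conditional variances $\sigma_n^{-2}\sum_k\mathbb{E}[D_k^2\mid\mathcal{F}_{k-1}]$ converge to $1$ in probability, and (ii) a conditional Lindeberg condition, which I would secure through a fourth-moment bound $\sigma_n^{-4}\sum_k\mathbb{E}[D_k^4]\to 0$. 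Both verifications are the heavy part: the conditional second moment couples cosines sharing a column and, for general $\mathbf{R}_n$, also involves the off-diagonal $\rho_{ij}$, so controlling it requires careful bookkeeping of the dependence together with the spectral bound $\|\mathbf{R}_n\|\le b_0$. The null case ($\mathbf{R}_n=\mathbf{I}_q$) is considerably cleaner thanks to the exact pairwise uncorrelatedness from Part~2, and it reduces to the sum-type correlation statistic studied in the high-dimensional correlation-testing literature; the contribution here is to push the argument through under a general bounded-spectrum $\mathbf{R}_n$.
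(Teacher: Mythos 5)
Your moment computations are essentially correct and, in fact, follow the paper's own route: for Part~2 the squared cosine is Beta-distributed with mean $1/n$ and distinct pairs are uncorrelated (indeed independent), and for the mean bound in Part~3 the paper likewise reduces $\mu_n$ to per-pair moments, obtaining $\mathbb{E}[\hat\gamma_{ij}^2]\geq 1/n+(1-4/n)\rho_{ij}^2$ from the exact moment series in Anderson's Theorem~4.4.5 and then summing over pairs. Where you genuinely diverge from the paper is Part~1 and the variance bound: the paper does \emph{not} prove the CLT at all — it adapts Theorem~2.2 of \cite{zheng2019test}, takes $\sigma_n^2$ to be that paper's asymptotic variance formula, and obtains the Part~3 bound by trace-identity manipulations of that formula — whereas you propose to prove the CLT from scratch by a martingale argument and to bound the exact variance by Wick calculus.

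This is where there is a genuine gap. Your martingale-difference identity $D_k=2\sum_{i<k}\bigl(\hat\gamma_{ik}^2-\mathbb{E}[\hat\gamma_{ik}^2\mid\mathcal{F}_{k-1}]\bigr)$ holds only when the columns of $\mathbf{U}$ are independent, i.e., only when $\mathbf{R}_n=\mathbf{I}_q$. For general $\mathbf{R}_n$ the columns are dependent, with $\operatorname{Cov}(\mathbf{U}_i,\mathbf{U}_j)=\rho_{ij}\mathbf{I}_n$, so revealing $\mathbf{U}_k$ changes the conditional law of every unrevealed column: $\mathbb{E}[\hat\gamma_{ij}^2\mid\mathcal{F}_k]\neq\mathbb{E}[\hat\gamma_{ij}^2\mid\mathcal{F}_{k-1}]$ for pairs with $j>k$, including pairs with \emph{both} indices beyond $k$. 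Hence $D_k$ contains forward-looking terms that your decomposition omits, and your claim that "only the pairs joining column $k$ to a previously revealed column change their conditional expectation" is precisely the statement that fails; the conditional-variance and Lindeberg verifications in your plan would therefore be carried out on the wrong object. This matters because the theorem is needed, and is used in the power analysis, exactly for $\mathbf{R}_n\neq\mathbf{I}_q$ (the statistic is evaluated under the alternative). The general-$\mathbf{R}_n$ CLT is the hard core of the result — it is essentially the main content of \cite{zheng2019test}, where the dependence is handled with substantially heavier machinery — so your proposal both starts from an identity that is false in the relevant regime and defers the real difficulty to unspecified "bookkeeping." To close the gap you would either need to retain and control all terms of the true martingale differences (e.g., after writing $\mathbf{U}=\mathbf{Z}\mathbf{R}_n^{1/2}$ with independent columns of $\mathbf{Z}$, at the cost of the cosines becoming functions of all columns), or simply invoke the existing CLT as the paper does.
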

We defer the proof of Theorem~\ref{thm: dense asymptotics} at the end of this subsection.

Denote by $z$ the $ (\alpha^2/16)$-upper quantile of the standard normal distribution. 
For any $n$, we take $$t_n= z \frac{2 q}{n} + \frac{q(q-1)}{n}.$$
Theorem~\ref{thm: dense asymptotics} implies that 
$$
\mathbb{P}(T_1(\mathbf{\tilde{U}})\geq t_n)\rightarrow \alpha^2/16, \quad (q, n\rightarrow \infty). 
$$

Since $2(\alpha^2/16 + \sqrt{\alpha^2/16})<\alpha$, when $n$ is sufficiently large, the probability $\theta=\mathbb P\left(T_1\left(\widetilde{\mathbf{U}}^{(m)}\right) \geq t_n\right)$ satisfies $2(\theta+\sqrt{\theta})<\alpha$ and $\theta<0.5$.

Therefore, we can apply Lemma~\ref{lem: null tail bound} to get 
\begin{align}
  \mathbb P \left( \frac{1}{M+1}\left[1+\sum_{m=1}^M \mathbf{1}\left\{T_1\left(\widetilde{\mathbf{U}}^{(m)}\right) \geq t_n\right\}\right]>\alpha \right)
    &\leq e^{-M}\leq \epsilon/2,  \label{eq: null bound}
\end{align} 
since $M>\log(2\epsilon^{-1})$.

Recall that the rows of $\mathbf{U}$ are i.i.d. sampled from $N_{q}(\mathbf{0}, \mathbf{R}(\mathbf{\Omega}_{\mc{C}}^{-1}))$. 
Let $\mu_n$ and $\sigma_n$ be the quantities define in Theorem~\ref{thm: dense asymptotics} with $\mathbf{R}_n$ being $\mathbf{R}(\mathbf{\Omega}_{\mc{C}}^{-1})$.
If $(1-4/n) b^2 > 2 z $, then Theorem~\ref{thm: dense asymptotics} implies that $\mu_n\geq t_n$. 
Furthermore, if we denote by $\delta=\|\mathbf R_n -\mathbf I_q\|^2_{F}/(q/n)$, we have $\delta\geq b^2$ and 
\begin{align*}
      \quad & \frac{ \mu_n - t_n  }{\sigma_n} \\
    \geq \; & \frac{ \left(1-\frac{4}{n}\right)\delta- 2 z   }{\sigma_n} \times \frac{q}{n}\\
    \geq \; & \frac{ \left(1-\frac{4}{n}\right)\delta- 2 z  }{b_0^4 \left[ (2 q/n)^2 + 8 n^{-1} \delta \right]}\times \frac{q}{n}.
\end{align*}
Note that the right hand side of the above display is an increasing function in $\delta$, so it is lower bounded by 
$$
\frac{ \left(1-\frac{4}{n}\right) b^2 - 2 z  }{b_0^4 \left[ (2 q/n)^2 + 8 n^{-1} b^2 \right]}\times \frac{q}{n}, 
$$
whose limit is 
$$
\frac{ b^2 - 2 z  }{4 b_0^4\gamma}. 
$$
It then follows from the asymptotic normality in Theorem~\ref{thm: dense asymptotics} that
\begin{align*}
   & \quad \liminf_{n} \mathbb P(T_1(\mathbf{U}) >t_n)\\
   &=\liminf_{n}  \mathbb P\left( 
 \frac{T_1(\mathbf{U})-\mu_n}{\sigma_n} > \frac{ t_n - \mu_n }{\sigma_n}\right) \\
 &\geq 1-\Phi\left(-\frac{ b^2 - 2 z  }{4 b_0^4\gamma}\right),  
\end{align*}
where $\Phi$ denotes the standard normal CDF. 
For any given $\epsilon$, if we choose $b^2 > 2z + 4b_0^4 \gamma \tilde{z}$ where $\tilde{z}$ is the $(\epsilon/2)$-upper quantile of the standard normal distribution, we can conclude that 
\begin{align*} 
  & \quad \liminf_{n}  \mathbb P(\text{Reject }H_0)\\
&\geq \liminf_{n} \left\{ \mathbb P(T_1(\mathbf{U}) >t_n) - \mathbb P \left( \frac{1}{M+1}\left[1+\sum_{m=1}^M \mathbf{1}\left\{T_1\left(\widetilde{\mathbf{U}}^{(m)}\right) \geq t_n\right\}\right]>\alpha \right) \right\}\\
&\geq 1-\epsilon/2 - \epsilon/2, 
\end{align*}
where we have used \eqref{eq: null bound}. 

\end{proof}

\begin{proof}[Proof of Theorem~\ref{thm: dense asymptotics}]
The asymptotic normality and the expression for $\sigma_n^2$ are adapted from Theorem 2.2 in \cite{zheng2019test} with $\mathbf{R}_{*}$ there being $\mathbf{I}_q$ and $\mathbf{R}$ being $\mathbf{R}_n$. 
Furthermore, using the elementary identities that $\operatorname{tr}(\mathbf{A} \operatorname{diag}(\mathbf{B}))=\operatorname{tr}(\operatorname{diag}(\mathbf{A}) \operatorname{diag}(\mathbf{B}))=\operatorname{tr}(\operatorname{diag}(\mathbf{A}) \mathbf{B})$ and $\operatorname{diag}(\mathbf{R})=\mathbf{I}_q$,
we can reduce the expression of the asymptotic variance given in \cite[Section~A.1]{zheng2019test}  to 
$$
\begin{aligned}  \sigma_{n}^2= & 4 n^{-1}\left[ 2 \operatorname{tr}\left(\mathbf{R}^4\right)+n^{-1}\left[\operatorname{tr}\left(\mathbf{R}^2\right)\right]^2 \right.\\ 
& \qquad\quad +2 \operatorname{tr}\left(\operatorname{diag}\left(\mathbf{R}^2\right) \mathbf{R}^2 \operatorname{diag}\left(\mathbf{R}^2\right)\right) \\ 
&\left.  \qquad\quad -4 \operatorname{tr}\left(\mathbf{R}^2 \operatorname{diag}\left(\mathbf{R}^3\right)\right)
\right].
\end{aligned}
$$
The last expression can be further reduce as follows:
\begin{align}\label{eq: dense variance expression 2}
&  8^{-1} n \sigma_{n}^2-(2 n)^{-1}\left[\operatorname{tr}\left(\mathbf{R}^2\right)\right]^2
\nonumber \\
 = & \operatorname{tr}\left(\mathbf{R}^4\right)+\operatorname{tr}\left(\operatorname{diag}\left(\mathbf{R}^2\right)^2\mathbf{R}^2\right)-2 \operatorname{tr}\left(\operatorname{diag}\left(\mathbf{R}^2\right) \operatorname{diag}\left(\mathbf{R}^3\right)\right) \nonumber\\ 
  = & \operatorname{tr}\left(\left[\mathbf{R}^2-\operatorname{diag}\left(\mathbf{R}^2\right)\right]^2\right) + 2\operatorname{tr}\left(\mathbf{R}^2\operatorname{diag}\left(\mathbf{R}^2\right)\right) -\operatorname{tr}\left(\operatorname{diag}\left(\mathbf{R}^2\right)^2\right) \nonumber\\
  & 
  -2 \operatorname{tr}\left(\operatorname{diag}\left(\mathbf{R}^2\right) \operatorname{diag}\left(\mathbf{R}^3\right)\right) +\operatorname{tr}\left(\operatorname{diag}\left(\mathbf{R}^2\right)^2\mathbf{R}^2\right) \nonumber\\
 = & \operatorname{tr}\left(\left[\mathbf{R}^2-\operatorname{diag}\left(\mathbf{R}^2\right)\right]^2\right)\nonumber\\
& + \operatorname{tr}\left(\mathbf{R}^2\operatorname{diag}\left(\mathbf{R}^2\right)\right) -2 \operatorname{tr}\left(\operatorname{diag}\left(\mathbf{R}^2\right) \mathbf{R}^3\right)+ \operatorname{tr}\left(\mathbf{R}^4\operatorname{diag}\left(\mathbf{R}^2\right)\right)\nonumber \\
  &  -\operatorname{tr}\left(\mathbf{R}^4\operatorname{diag}\left(\mathbf{R}^2\right)\right) +\operatorname{tr}\left(\operatorname{diag}\left(\mathbf{R}^2\right)^2\mathbf{R}^2\right)
 \nonumber \\
=& \operatorname{tr}\left(\left[\mathbf{R}^2-\operatorname{diag}\left(\mathbf{R}^2\right)\right]^2\right)+\operatorname{tr}\left(\mathbf{R}^2[\mathbf{R}-\mathbf{I}]^2 \operatorname{diag}\left(\mathbf{R}^2\right)\right)\nonumber\\
& -\operatorname{tr}\left(\mathbf{R}^2\left[ \mathbf{R}^2-\operatorname{diag}\left(\mathbf{R}^2\right)\right]\operatorname{diag}\left(\mathbf{R}^2\right)\right)\nonumber \nonumber\\
=& \operatorname{tr}\left(\left[\mathbf{R}^2-\operatorname{diag}\left(\mathbf{R}^2\right)\right]^2\right)+\operatorname{tr}\left(\mathbf{R}^2[\mathbf{R}-\mathbf{I}]^2 \operatorname{diag}\left(\mathbf{R}^2\right)\right)\nonumber\\
& -\operatorname{tr}\left(\left[ \mathbf{R}^2-\operatorname{diag}\left(\mathbf{R}^2\right)\right]^2\operatorname{diag}\left(\mathbf{R}^2\right)\right)\nonumber\\
&+ \operatorname{tr}\left(\left[ \mathbf{R}^2-\operatorname{diag}\left(\mathbf{R}^2\right)\right]\operatorname{diag}\left(\mathbf{R}^2\right)^2\right). 
\end{align}

Since the diagonal entries of $\mathbf{R}^2-\operatorname{diag}\left(\mathbf{R}^2\right)$ are zero while $\operatorname{diag}\left(\mathbf{R}^2\right)^2$ is a diagonal matrix, we have $\operatorname{tr}\left(\left[ \mathbf{R}^2-\operatorname{diag}\left(\mathbf{R}^2\right)\right]\operatorname{diag}\left(\mathbf{R}^2\right)^2\right)=0$. 
Furthermore, the diagonal entries of $\operatorname{diag}\left(\mathbf{R}^2\right)$ are all greater than or equal to $1$. 
Therefore, the right hand side of \eqref{eq: dense variance expression 2} can be bounded as
\begin{align*}
8^{-1} n \sigma_{n}^2-(2 n)^{-1}\left[\operatorname{tr}\left(\mathbf{R}^2\right)\right]^2
  = &~ \operatorname{tr}\left(\mathbf{R}^2[\mathbf{R}-\mathbf{I}]^2 \operatorname{diag}\left(\mathbf{R}^2\right)\right)\nonumber\\
&~ -\operatorname{tr}\left(\left[ \mathbf{R}^2-\operatorname{diag}\left(\mathbf{R}^2\right)\right]^2\left[\operatorname{diag}\left(\mathbf{R}^2\right)-\mathbf{I}\right]\right) \\
\leq &~ \operatorname{tr}\left(\mathbf{R}^2[\mathbf{R}-\mathbf{I}]^2 \operatorname{diag}\left(\mathbf{R}^2\right)\right) \\
\leq &~ \|\mathbf{R}^2\|_{2}\|\operatorname{diag}\left(\mathbf{R}^2\right)\|_2 \operatorname{tr}\left([\mathbf{R}-\mathbf{I}]^2\right) \\
\leq &~ b_0^4\|\mathbf{R}-\mathbf{I}\|_{F}^2,
\end{align*}
since $\|\mathbf{R}\|_2\leq b_0$ by assumption. Furthermore, since $\operatorname{tr}\left(\mathbf{R}^2\right)\leq q \|\mathbf{R}^2\|_2\leq q b_0^2$, we have 
$$
\sigma_n^2\leq 4  b_0^4 (q/n)^2+ 8 n^{-1}b_0^4\|\mathbf{R}-\mathbf{I}\|_{F}^2 . 
$$

It remains to derive the expression for $\mu_{n}=\mathbb{E} T_1(\mathbf{U})$ using properties of normal distributions. By linearity of expectation, we only need to compute the expectation for each $\left\langle\frac{\mathbf{U}_{i}}{\|\mathbf{U}_{i}\|},\frac{\mathbf{U}_{j}}{\|\mathbf{U}_{j}\|}\right\rangle^2$. We denote this by $r^2$, where $r$ is the sample correlation coefficient for a pair of bivariate normal variables with correlation $\rho$. 

When $\mathbf{R}_n =\mathbf I_q$, it is well-known that  $r^2$ follows a Beta($1,n-1$) distribution and the expectation is $1/n$. 
For a general $\mathbf{R}_n$, one can compute the second moment of  using Theorem 4.4.5 in \cite{anderson_introduction_2007} to lower bound $\mathbb{E} r^2$. 
More concretely, $$\operatorname E(r^2)=1-\frac{n-1}{n}(1-\rho^2)\left[1+\sum_{k=1}^{\infty}\frac{\rho^{2k}\cdot \prod_{i=1}^k (2i) }{\prod_{i=1}^k (2i-1+n) } \right].$$ 
For $n>10$ and $|\rho|<1$, we have
\begin{align*}
	\sum_{k=1}^{\infty}\frac{\rho^{2k}\cdot \prod_{i=1}^k (2i) }{\prod_{i=1}^k (2i-1+n) }&=\frac{2\rho^2}{n+1} +\frac{1}{n+1}\sum_{k=2}^{\infty}\frac{\rho^{2k}\cdot \prod_{i=1}^k (2i) }{\prod_{i=2}^k (2i-1+n) }\\
	&\leq \frac{2\rho^2}{n+1} +\frac{1}{n+1}\sum_{k=2}^{\infty}\rho^{2k}\\
	&= \frac{2\rho^2}{n+1} + \frac{\rho^4}{(n+1)(1-\rho^2)}.
\end{align*}
Then we have 
\begin{align*}
	\operatorname E(r^2)&=\frac{1}{n}+\frac{n-1}{n}\rho^2-\frac{n-1}{n}(1-\rho^2)\sum_{k=1}^{\infty}\frac{\rho^{2k}\cdot \prod_{i=1}^k (2i) }{\prod_{i=1}^k (2i-1+n) }\\
	&\geq \frac{1}{n}+\frac{n-1}{n}\rho^2-\frac{n-1}{n}(1-\rho^2)\left[\frac{2\rho^2}{n+1} + \frac{\rho^4}{(n+1)(1-\rho^2)}\right]\\
	&\geq \frac{1}{n} + \rho^2- \frac{4\rho^2}{n}
\end{align*}

Summing up the expectation of $\left\langle\frac{\mathbf{U}_{i}}{\|\mathbf{U}_{i}\|},\frac{\mathbf{U}_{j}}{\|\mathbf{U}_{j}\|}\right\rangle^2$ for all distinct pairs of $(i,j)$, we obtain the desired equation and inequality for $\mu_n$. 
\end{proof}

\subsubsection{Proof for strong alternative}

We consider the test statistic $T_2$. 
Following the argument in the Appendix \ref{app: power proof scheme}, we consider
$$T_2(\mathbf{U})= \max_{i,j\leq q, i\neq j}\left\langle\frac{\mathbf{U}_{i}}{\|\mathbf{U}_{i}\|},\frac{\mathbf{U}_{j}}{\|\mathbf{U}_{j}\|}\right\rangle^2.$$

For this statistic, we first state a result regarding the tail behavior and the concentration of $T_2(\mathbf U)$. In the following, $(q\vee n)$ is a shorthand for $\max(q,n)$.

\begin{theorem}\label{thm: strong asymptotics}
Suppose $\mathbf{U}$ is a $n \times q$ random matrix with i.i.d. rows sampled from $N_q(\mathbf{0}, \mathbf{R}_n)$ where $\mathbf{R}_n$ is a correlation matrix. 
Suppose $q$ increases along with $n$, then the followings hold:

\begin{enumerate}
\item When $\mathbf{R}_n =\mathbf I_q$, for $t\in (0,1)$, we have
$$
\mathbb P\left(T_2(\mathbf{U})>t\right) \leq 6\exp\left(-\frac{nt}{16}+2\log q \right);
$$
 \item When $\mathbf{R}_n \neq \mathbf I_q$, if $|\max_{i<j} |\mathbf{R}_n(i,j)| \geq \sqrt{\frac{64}{n} \log \left(\frac{96 (q\vee n)^2}{\alpha^2}\right)}$, then we have  
$$
 \mathbb P\left(T_2(\mathbf{U})\leq  \frac{16}{n} \log \left(\frac{96 q^2}{\alpha^2}\right) \right) \leq \frac{2}{(q\vee n)}+4\exp\left(-\frac{n}{32}\right) .
$$

\end{enumerate}
\end{theorem}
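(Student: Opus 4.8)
The plan is to reduce both parts to statements about the pairwise sample correlations $\hat r_{ij}=\langle \mathbf{U}_i/\|\mathbf{U}_i\|,\mathbf{U}_j/\|\mathbf{U}_j\|\rangle$, since $T_2(\mathbf U)=\max_{i\neq j}\hat r_{ij}^2$. Because the rows of $\mathbf U$ are mean-zero Gaussians, each $\hat r_{ij}$ is the cosine of the angle between two columns of $\mathbf U$, and the key representation I would use is the following: fixing column $i$ and splitting column $j$ into its component along $\mathbf U_i$ and its orthogonal complement gives $\hat r_{ij}^2 \overset{d}{=} a^2/(a^2+W)$ with $a\sim N(0,1)$ and $W\sim\chi^2_{n-1}$ independent in the null case where the two columns are independent, and an analogous but correlated decomposition in the alternative case.

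For Part 1, under $\mathbf R_n=\mathbf I_q$ the columns $\mathbf U_1,\dots,\mathbf U_q$ are i.i.d.\ $N_n(\mathbf 0,\mathbf I_n)$, so the normalized columns are i.i.d.\ uniform on $\mathbb S^{n-1}$. First I would bound the single-pair upper tail $\mathbb P(\hat r_{ij}^2>t)$: via the representation above, $\{\hat r_{ij}^2>t\}=\{(1-t)a^2>tW\}$, which I control by combining the Gaussian tail of $a$ with a lower-tail concentration bound for $W$ around $n-1$, yielding a bound of the form $c\exp(-nt/16)$ for $t\in(0,1)$. A union bound over the $\binom q2<q^2$ pairs then produces the $2\log q$ term, and the universal constants $1/16$ and $6$ follow by keeping the intermediate estimates deliberately loose.

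For Part 2, I would exploit monotonicity. Letting $(i_0,j_0)$ attain $\rho_{\max}=\max_{i<j}|\mathbf R_n(i,j)|$ (taking $\rho_{\max}>0$ after a sign flip, which leaves $\hat r^2$ unchanged), we have $T_2(\mathbf U)\ge \hat r_{i_0 j_0}^2$, so it suffices to show $\hat r_{i_0 j_0}$ stays near $\rho_{\max}$, concretely that $\hat r_{i_0 j_0}\ge \rho_{\max}/2$ with high probability. Writing $\mathbf U_{j_0}=\rho_{\max}\mathbf U_{i_0}+\sqrt{1-\rho_{\max}^2}\,Z$ with $Z$ an independent standard Gaussian column, I would show that the cross term $S_{XY}=\mathbf U_{i_0}^\top\mathbf U_{j_0}$ stays above $\tfrac34 n\rho_{\max}$ by a conditional-Gaussian tail bound, and that $\|\mathbf U_{i_0}\|^2$ and $\|\mathbf U_{j_0}\|^2$ stay below $\tfrac32 n$ by $\chi^2$ concentration; together these force $\hat r_{i_0 j_0}\ge \rho_{\max}/2$, hence $\hat r_{i_0 j_0}^2\ge \rho_{\max}^2/4$. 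The hypothesis $\rho_{\max}^2\ge \frac{64}{n}\log\!\big(\tfrac{96(q\vee n)^2}{\alpha^2}\big)$ makes the Gaussian tail term at most $2/(q\vee n)$, since its exponent scales like $n\rho_{\max}^2$, while the two $\chi^2$ bounds contribute $4\exp(-n/32)$; the same hypothesis gives $\rho_{\max}^2/4\ge \frac{16}{n}\log(\tfrac{96q^2}{\alpha^2})$, so $\hat r_{i_0 j_0}^2$ clears the stated threshold.

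The step I expect to be the main obstacle is controlling the ratio $\hat r=S_{XY}/\sqrt{S_{XX}S_{YY}}$ with the correct constants, since numerator and denominator are dependent and, in Part 2, the signal $\rho_{\max}$ itself shrinks with $n$; the delicate point is arranging the deviation thresholds so that the alternative signal provably exceeds the null critical value. It is worth noting that the threshold $\frac{16}{n}\log(\tfrac{96q^2}{\alpha^2})$ in Part 2 is exactly the value obtained by inverting the Part~1 tail bound $6\exp(-nt/16+2\log q)$ at level $\alpha^2/16$, which is precisely the alignment that makes the MRC-based test powerful against strong alternatives.
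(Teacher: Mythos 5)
Your Part~1 is correct and takes a genuinely different route from the paper. The paper proves the per-pair tail bound by intersecting sub-exponential concentration events for $\|\mathbf{U}_i\|^2$, $\|\mathbf{U}_j\|^2$ and $\mathbf{U}_i^\top\mathbf{U}_j$ (its Lemma~\ref{lemma:concentration}), whereas you use the exact null representation $\hat r_{ij}^2\overset{d}{=}a^2/(a^2+W)$ with $a\sim N(0,1)$ independent of $W\sim\chi^2_{n-1}$. Splitting on $\{W\ge (n-1)/2\}$ gives a per-pair bound of roughly $2e^{-t(n-1)/4}+e^{-(n-1)/16}\le 3e^{1/16}e^{-nt/16}$, and the union bound over fewer than $q^2/2$ pairs lands comfortably inside $6\exp(-nt/16+2\log q)$. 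Your route is exact rather than approximate and delivers the constants with room to spare.

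Part~2 contains a genuine gap in the key concentration step. Your architecture---reduce to the maximizing pair $(i_0,j_0)$, force $\hat r_{i_0j_0}\ge\rho_{\max}/2$ via $S_{XY}\ge\tfrac34 n\rho_{\max}$ together with $S_{XX},S_{YY}\le\tfrac32 n$, then clear the threshold using the hypothesis---is the same as the paper's, and the final arithmetic is right. But proving $S_{XY}\ge\tfrac34 n\rho_{\max}$ ``by a conditional-Gaussian tail bound'' cannot yield the stated probabilities. Conditioning on $\mathbf{U}_{i_0}$ gives $S_{XY}=\rho_{\max}\|\mathbf{U}_{i_0}\|^2+\sqrt{1-\rho_{\max}^2}\,\mathbf{U}_{i_0}^\top\mathbf{Z}$ with $\mathbf{U}_{i_0}^\top\mathbf{Z}\mid\mathbf{U}_{i_0}\sim N(0,\|\mathbf{U}_{i_0}\|^2)$, so you also need a lower bound on $\|\mathbf{U}_{i_0}\|^2$ strictly above $\tfrac34 n$: the $\chi^2$ events you budgeted give only upper bounds (and their natural two-sided version at precision $\tfrac12$ gives the lower bound $n/2<\tfrac34 n$, under which the cross term would have to be strictly positive and the argument cannot close). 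A lower bound at the required precision, say $\|\mathbf{U}_{i_0}\|^2\ge\tfrac78 n$, costs at best on the order of $\exp(-n/256)$, which already exceeds the $4\exp(-n/32)$ budget; and the residual conditional Gaussian tail for a deviation of $\tfrac18 n\rho_{\max}$ has exponent at most $\frac{(n\rho_{\max}/8)^2}{2\cdot\frac32 n}=\frac{n\rho_{\max}^2}{192}$, which the hypothesis only guarantees to be $\ge\tfrac13\log\bigl(96(q\vee n)^2/\alpha^2\bigr)$, i.e.\ a bound of order $(q\vee n)^{-2/3}$ rather than the claimed $2/(q\vee n)$. So with your margins the conditional route falls short of the theorem's bound in both terms, and this is not repairable by re-balancing the margins.

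The fix keeps all of your thresholds but changes the tool: do not decompose $S_{XY}$ at all. Treat $\frac1n S_{XY}=\frac1n\sum_{k=1}^n U_{k,i_0}U_{k,j_0}$ as an average of i.i.d.\ sub-exponential variables with mean $\rho_{\max}$ and apply a Bernstein-type bound---exactly statement~2 of the paper's Lemma~\ref{lemma:concentration}, $\mathbb{P}\bigl(|\frac1n\sum_k X_kY_k-\rho|\ge t\bigr)\le 2e^{-nt^2/8}$---with the relative deviation $t=\rho_{\max}/4$. This gives
\begin{align*}
\mathbb{P}\left(S_{XY}\le\tfrac34 n\rho_{\max}\right)\le 2\exp\left(-\frac{n\rho_{\max}^2}{128}\right)\le 2\exp\left(-\tfrac12\log\frac{96(q\vee n)^2}{\alpha^2}\right)=\frac{2\alpha}{\sqrt{96}\,(q\vee n)}\le\frac{2}{q\vee n},
\end{align*}
after which the rest of your argument goes through verbatim and recovers $\frac{2}{q\vee n}+4\exp(-n/32)$. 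This is essentially how the paper proceeds; the only difference is that the paper controls the absolute deviation $\sqrt{8\log(q\vee n)/n}$ of the inner product instead of the relative deviation $\rho_{\max}/4$, and both choices clear the threshold precisely because of the constant $64$ in the hypothesis.
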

We defer the proof of Theorem~\ref{thm: strong asymptotics} at the end of this subsection. 

For any $n$, we take $$t_n= \frac{16}{n} \log \left(\frac{96 q^2}{\alpha^2}\right).$$
Recall that the rows of $\mathbf{\tilde{U}}$ are i.i.d. sampled from $N_{q}(\mathbf{0}, \mathbf{I}_q)$. 
Statement 1 in Theorem~\ref{thm: strong asymptotics} implies that 
$$
\mathbb{P}(T_2(\mathbf{\tilde{U}})\geq t_n)\leq \alpha^2/16. 
$$
Since $2(\alpha^2/16 + \sqrt{\alpha^2/16})<\alpha$, the probability $\theta=\mathbb P\left(T_2\left(\widetilde{\mathbf{U}}^{(m)}\right) \geq t_n\right)$ satisfies $2(\theta+\sqrt{\theta})<\alpha$ and $\theta<0.5$.

Therefore, we can apply Lemma~\ref{lem: null tail bound} to get
\begin{align}
  \mathbb P \left( \frac{1}{M+1}\left[1+\sum_{m=1}^M \mathbf{1}\left\{T_2\left(\widetilde{\mathbf{U}}^{(m)}\right) \geq t_n\right\}\right]>\alpha \right)
    &\leq e^{-M}\leq \epsilon/2, \label{eq: null bound sparse}
\end{align} 
since $M>\log(2\epsilon^{-1})$.

Recall that the rows of $\mathbf{U}$ are i.i.d. sampled from $N_{q}(\mathbf{0}, \mathbf{R}(\mathbf{\Omega}_{\mc{C}}^{-1}))$ and we are about to apply Statement 2 in Theorem~\ref{thm: strong asymptotics}. 
Consider $n>\max(10/\alpha, 32\log(16/\eps),8/\eps)$. 
We have $\log \left(\frac{96 (q\vee n)^2}{\alpha^2}\right)\leq 4\log(q\vee n)$. 
As a result, for $\mathbf{R}_n$ being $\mathbf{R}(\mathbf{\Omega}_{C,C}^{-1})$, we have $\max_{i<j} |\mathbf{R}_n(i,j)|\geq 16\sqrt{\frac{\log (q\vee n)}{n}}\geq \sqrt{\frac{64}{n} \log \left(\frac{96 (q\vee n)^2}{\alpha^2}\right)}$. 
Theorem~\ref{thm: strong asymptotics} implies that  
\begin{equation}\label{eq: sparse tail bound}
P(T_2(\mathbf{U}) < t_n)\leq  \frac{2}{(q\vee n)}+4\exp\left(-\frac{n}{32}\right) \leq \eps/4. 
\end{equation}
Combining the inequalities in (\ref{eq: rejection probability general}, \ref{eq: null bound sparse},\ref{eq: sparse tail bound}), we have 
\begin{align*} 
    \mathbb P(\text{Reject }H_0) &\geq 1-\epsilon/2 - \epsilon/2=1-\epsilon.
\end{align*}

\begin{proof}[Proof of Theorem~\ref{thm: strong asymptotics}]

When $\mathbf{R}_n =\mathbf I_q$,  we first 
introduce an event which will happen with high probability. 
For any two distinct column indices $i$ and $j$, define
\begin{align*}
    &\mathcal{E}_1(i,j)=\{|\|\mathbf U_i\|^2-n|\leq \sqrt{2}/2n,\|\mathbf U_j\|^2-n|\leq \sqrt{2}/2n\}.
\end{align*}
According to Statement 1 in Lemma \ref{lemma:concentration},  we have $\mathbb P(\mathcal{E}_1(i,j))\geq 1-4\exp(-n/16)$.
For any $t\in (0,1)$, we have
\begin{align*}
    \mathbb P\left( T_2(\mathbf U)>t \right)
    &\leq \sum_{1\leq i<j\leq q}\mathbb P\left( \frac{(\mathbf{U}_{i}^\top \mathbf{U}_{j})^2}{\|\mathbf{U}_{i}\|^2_2\|\mathbf{U}_{j}\|^2_2}  >t\right)\\
    &\leq  \sum_{1\leq i<j\leq q}\left[\mathbb P\left( \frac{(\mathbf{U}_{i}^\top \mathbf{U}_{j})^2}{\|\mathbf{U}_{i}\|^2_2\|\mathbf{U}_{j}\|^2_2}  >t,\mathcal{E}_1(i,j)\right)+ \mathbb P \left(\mathcal{E}_1(i,j)^c\right)\right]\\
    &\stackrel{(1)}{\leq} 2q^2 \mathbb P\left(\frac{1}{n^2}(\mathbf{U}_{1}^\top \mathbf{U}_{2})^2>\frac{1}{2}t \right)+2q^2\exp(-n/16)\\
    &= 2q^2\mathbb P\left(\left|\frac{1}{n}\sum_{k=1}^n U_{k,1}U_{k,2}  \right| >\sqrt{\frac{1}{2}t}  \right)+2q^2\exp(-n/16)\\
    &\stackrel{(2)}{\leq}  4\exp\left(-\frac{nt}{16}+2\log q \right)+2\exp\left(-\frac{n}{16}+2\log q \right) \\
    &\leq 6\exp\left(-\frac{nt}{16}+2\log q \right),
\end{align*}
where (1) is due to the symmetry across the pairs of $(i,j)$ and (2) follows from Statement 2 in Lemma \ref{lemma:concentration} with $\rho=0$.

For a general $\mathbf{R}_n$, suppose $i_*\in [q], j_*\in [q], i_*\neq j_*$ such that  $|\mathbf{R}_n(i_*,j_*)|=\max_{i<j} |\mathbf{R}_n(i,j)| $. 
We introduce the following events:
\begin{align*}
    &\mathcal{E}_1 =\{|\|\mathbf U_{i_*}\|^2-n|\leq 1/2n,\|\mathbf U_{j_*}\|^2-n|\leq 1/2n\},\\
    &\mathcal{E}_2 =\left\{\left| \frac{(\mathbf{U}_{i_*}^\top \mathbf{U}_{j_*})}{n}-\mathbf{R}_n(i_*,j_*)\right|\leq \sqrt{\frac{8\log (q\vee n) }{n}} \right\}.
\end{align*}
Based on Lemma \ref{lemma:concentration}, we have $\mathbb P (\mathcal{E}_1)\geq 1-4\exp(-n/32)$ and $\mathbb P (\mathcal{E}_2)\geq 1-2/(q\vee n)$. 
Note that the event $\mathcal{E}_1$ implies $\left\|\mathbf{U}_{i_*}\right\|_2\left\|\mathbf{U}_{j_*}\right\|_2\leq 3/2$. 
When both $\mathcal{E}_1$ and $\mathcal{E}_2$ occurs, 
we have 
\begin{align*}
\left|\frac{\mathbf{U}_{i_*}^{\top} \mathbf{U}_{j_*}}{\left\|\mathbf{U}_{i_*}\right\|_2\left\|\mathbf{U}_{j_*}\right\|_2}\right| 
&\geq \frac{2}{3}\left|\mathbf{U}_{i_*}^{\top} \mathbf{U}_{j_*}\right|\\
&\geq \frac{2}{3} \left(\mathbf{R}_n(i_*,j_*) - \sqrt{\frac{8\log (q\vee n)}{n}}\right)\\
&> \sqrt{\frac{16}{n} \log \left(\frac{96 q^2}{\alpha^2}\right)},
\end{align*}
where the last inequality is because by assumption  $\mathbf{R}_n(i_*,j_*)\geq 8\sqrt{\frac{1}{n}\log \left(\frac{96 (q\vee n)^2}{\alpha^2}\right)} > \sqrt{\frac{8\log (q\vee n)}{n}}+ \frac{3}{2}\sqrt{\frac{16}{n} \log \left(\frac{96 q^2}{\alpha^2}\right)}$. 
Therefore, we have
\begin{align*}
    &  \mathbb P \left(T_2(\mathbf{U}) \leq \frac{16}{n} \log \left(\frac{96 q^2}{\alpha^2}\right)  \right)\\
    &\stackrel{(1)}{\leq}\mathbb P \left(\left| \frac{(\mathbf{U}_{i_*}^\top \mathbf{U}_{j_*})}{\|\mathbf{U}_{i_*}\|_2\|\mathbf{U}_{j_*}\|_2}\right|  \leq  \sqrt{\frac{16}{n} \log \left(\frac{96 q^2}{\alpha^2}\right)} \right)\\
     &\leq \mathbb P( \mathcal{E}^c_1\cup \mathcal{E}^c_2)\\ 
     &\leq \mathbb P(\mathcal{E}^c_1)+\mathbb P(\mathcal{E}^c_2) \leq 4e^{-n/32}+2/(q\vee n), 
\end{align*}
where the inequality (1) follows from $T_2(\mathbf{U})\geq \left( \frac{(\mathbf{U}_{i_*}^\top \mathbf{U}_{j_*})}{\|\mathbf{U}_{i_*}\|_2\|\mathbf{U}_{j_*}\|_2}\right)^2$.

\end{proof}

\begin{lemma}   
\label{lemma:concentration}
Let $\bs{X}=\left(X_1, X_2, \ldots, X_n\right)$ and $\bs{Y}=\left(Y_1, Y_2, \ldots, Y_n\right)$ be $n$-dimensional random vectors. Assume that each pair $\left(X_i, Y_i\right)$ for $i=1,2, \ldots, n$, is independently sampled from a zero mean bivariate normal distribution. Specifically, $\operatorname{Var}\left(X_i\right)=\operatorname{Var}\left(Y_i\right)=1, \mathbb E(X_iY_i)=\rho$. Then we have:  
 \begin{enumerate} 
     \item For $\bs{W}$ equals to either $\bs{X}$ or $\bs{Y}$, \begin{equation}\label{lemma:col_norm_concentrate}
         \mathbb P\left(\left|\frac{\|\bs{W}\|^2_2}{n}-1 \right|>t \right)\leq 2\exp\left(-\frac{nt^2}{8} \right)\quad \text{ for all }  t\in(0,1) 
     \end{equation}
     \item  
     \begin{equation}\label{lemma:bound_inner_product}
         \mathbb P\left(\left|\frac{1}{n} \sum_{i=1} X_iY_i-\rho\right| \geq t \right)\leq 2\exp\left(-\frac{nt^2}{8} \right)\quad \text{ for all }  t\in(0,1) 
     \end{equation}
 \end{enumerate}
    
\end{lemma}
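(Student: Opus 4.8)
The plan is to reduce both statements to a single Chernoff argument for a normalized sum of independent, centered chi-square random variables. The cornerstone is the elementary moment bound
\begin{equation*}
\mathbb{E}\left[e^{s(G^2-1)}\right]=e^{-s}(1-2s)^{-1/2}\le e^{2s^2},\qquad |s|\le \tfrac14,
\end{equation*}
for $G\sim N(0,1)$, which I would verify by expanding $-s-\tfrac12\log(1-2s)=\sum_{k\ge 2}\tfrac{2^{k-1}}{k}s^k$ and checking that the resulting tail is dominated by $2s^2$ on $[-\tfrac14,\tfrac14]$ (the negative branch is even easier, giving exponent at most $s^2$).

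For Statement 1, writing $W_i^2\sim\chi^2_1$ gives $\tfrac{\|\bs{W}\|_2^2}{n}-1=\tfrac1n\sum_{i=1}^n(W_i^2-1)$, a sum of i.i.d.\ centered chi-squares. Independence yields $\mathbb{E}[e^{\lambda\sum_i(W_i^2-1)}]\le e^{2n\lambda^2}$ for $|\lambda|\le \tfrac14$, and the Chernoff bound gives $\mathbb{P}(\tfrac1n\sum_i(W_i^2-1)\ge t)\le \exp(-\lambda nt+2n\lambda^2)$. Taking the unconstrained optimizer $\lambda=t/4$, which is admissible precisely because $t<1$, produces the exponent $-nt^2/8$; the lower tail is handled symmetrically, and a union bound supplies the factor $2$.

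For Statement 2 I would diagonalize the product. Setting $U_i=X_i+Y_i$ and $V_i=X_i-Y_i$, these are independent Gaussians (their covariance is $\operatorname{Var}(X_i)-\operatorname{Var}(Y_i)=0$) with $U_i\sim N(0,2(1+\rho))$ and $V_i\sim N(0,2(1-\rho))$, so the polarization identity gives $X_iY_i-\rho=\tfrac{1+\rho}{2}(G_{1,i}^2-1)-\tfrac{1-\rho}{2}(G_{2,i}^2-1)$ for independent standard normals $G_{1,i},G_{2,i}$. The per-term MGF then factors; applying the cornerstone bound with $s=\pm\lambda(1\pm\rho)/2$, each of magnitude at most $|\lambda|\le\tfrac14$, gives $\mathbb{E}[e^{\lambda(X_iY_i-\rho)}]\le e^{\lambda^2(1+\rho^2)}\le e^{2\lambda^2}$ using $|\rho|\le 1$. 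From here the argument is identical to Statement 1: the same Chernoff step with $\lambda=t/4$ and a two-sided union bound complete the proof.

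The main obstacle is bookkeeping rather than conceptual: I must track constants carefully so that the sub-exponential MGF bound carries the precise constant $2$, ensuring the Chernoff exponent lands exactly at $nt^2/8$, and I must confirm that the admissibility constraint $|\lambda|\le\tfrac14$ is compatible with the stated range $t\in(0,1)$ (it is, since the optimal $\lambda=t/4$ then satisfies $\lambda\le\tfrac14$). The only genuinely model-specific step is the orthogonal decomposition handling the correlation $\rho$ in Statement 2; the inequality $1+\rho^2\le 2$ is what keeps the variance proxy uniform in $\rho$ and yields a tail bound free of $\rho$.
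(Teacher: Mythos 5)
Your proposal is correct and follows essentially the same route as the paper, whose proof simply asserts the two moment-generating-function bounds $\mathbb E(e^{\lambda (W_i^2-1)})\leq e^{2\lambda^2}$ and $\mathbb E(e^{\lambda (X_iY_i-\rho)})\leq e^{2\lambda^2}$ for $|\lambda|\le 1/4$ and leaves the rest implicit. You supply exactly the missing details — the chi-square MGF computation, the polarization identity $X_iY_i-\rho=\tfrac{1+\rho}{2}(G_{1,i}^2-1)-\tfrac{1-\rho}{2}(G_{2,i}^2-1)$ to verify the second bound, and the Chernoff optimization at $\lambda=t/4$ — all of which check out.
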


\begin{proof}[Proof of Lemma \ref{lemma:concentration}]
    Direct consequence of $\mathbb E(e^{\lambda (W_i^2-1)})\leq e^{2\lambda^2}$  and $\mathbb E(e^{\lambda (X_iY_i-\rho)})\leq e^{2\lambda^2}$, for all $|\lambda|<1/4$.
\end{proof}

\subsubsection{Proof for union alternative}

\begin{proof}[Proof of Theorem~\ref{thm:power_union}]
    Denote by $z $ the $( \alpha^2/16)$-upper quantile of the standard normal distribution.  Denote by $u_0$ the solution of  $\exp \left[-(8 \pi)^{-1 / 2} \exp \left(-u_0  / 2\right)\right]=1-\alpha^2 / 16$. Note that both $z$ and $u_0$ are constants.
 
For any $n$, we take $$t_n=\max\{z,u_0\}.$$

Theorem~\ref{thm: dense asymptotics} and Theorem 3 in \cite{cai2011limiting}  implies that 
\begin{align*}
   & \mathbb P\left(\frac{n}{2q} \left(T_1(\widetilde{\mathbf{U}}^{(m)})-q(q-1)/n\right) \geq z\right) \rightarrow \frac{\alpha^2}{16}, \quad (q, n\rightarrow \infty),\\
    &\mathbb P\left( nT_2(\widetilde{\mathbf{U}}^{(m)})-4\log q+\log\log q   \geq u_0\right)\rightarrow \frac{\alpha^2}{16}, \quad (q, n\rightarrow \infty).\\
\end{align*}
Since $2(\alpha^2/8 + \sqrt{\alpha^2/8})<\alpha$, when $n$ is sufficiently large together, we can use the union bound to see that the probability $\theta=\mathbb P\left( T_3(\widetilde{\mathbf{U}}^{(m)})  \geq t_n\right)$ satisfies $2(\theta+\sqrt{\theta})<\alpha$ and $\theta<0.5$. 
Therefore, we can apply Lemma~\ref{lem: null tail bound} to get 
\begin{align*}
  \mathbb P \left( \frac{1}{M+1}\left[1+\sum_{m=1}^M \mathbf{1}\left\{T_3\left(\widetilde{\mathbf{U}}^{(m)}\right) \geq t_n\right\}\right]>\alpha \right)
    &\leq e^{-M}\leq \epsilon/2,  \label{eq: null bound}
\end{align*} 
since $M>\log(2\epsilon^{-1})$.  

Next, we will show 
\begin{equation}\label{eq:T3U_power}
    \liminf_{n} \mathbb P(T_3(\mathbf{U}) >t_n)\geq 1-\epsilon/2.
\end{equation}

Since it is assumed that $\mathbf{\Omega}\in \mathbf{\Theta}_{n3}(b)=\mathbf{\Theta}_{n1}(b)\cup \mathbf{\Theta}_{n2}(16)$, we can divide the sequence of populations into two subsequences (possibly finite) so that either of the followings \begin{itemize}
    \item \textbf{Case 1:} $\mathbf{\Omega}\in \mathbf{\Theta}_{n1}(b)$,
    \item \textbf{Case 2:} $\mathbf{\Omega}\in \mathbf{\Theta}_{n2}(16)$
\end{itemize}
holds for the whole subsequence.
Therefore, it suffices to prove \eqref{eq:T3U_power} for both cases.

For Case 1, following a similar argument as in the proof of Theorem~\ref{thm:power_dense}, and setting $b=\sqrt{2\max\{z,u_0\} + 4 b_0^4 \gamma z_2}$, where $z_2$ is the $(\epsilon/2)$-upper quantile of the standard normal distribution, we can conclude that 
$$  \liminf_{n} \mathbb P(T_3(\mathbf{U}) >t_n)\geq \liminf_{n} \mathbb P\left(\frac{n}{2q}\left(T_1( \mathbf{U} )-q(q-1)/n\right) >\max\{z,u_0\}\right)  \geq  1-\epsilon/2. $$
For Case 2, 
\begin{align*}
     \liminf_{n} \mathbb P(T_3(\mathbf{U}) >t_n)&\geq \liminf_{n} \mathbb P\left(nT_2(\mathbf{U} )-4\log q+\log\log q >\max\{z,u_0\}\right) \\
     &=\liminf_{n}\mathbb P\left(T_2(\mathbf{U} )\geq \frac{\max\{z,u_0\}+4\log q-\log\log q }{n}  \right).
\end{align*}
Recall from Theorem~\ref{thm:power_sparse}, we have
$$\liminf_{n} \mathbb P \left(T_2(\mathbf{U}) \geq \frac{16}{n} \log \left(\frac{96 q^2}{\alpha^2}\right)  \right)\geq  1-\epsilon/2.$$
Then, it suffices to show that 
$$ \frac{16}{n} \log \left(\frac{96 q^2}{\alpha^2}\right)\geq \frac{\max\{z,u_0\}+4\log q-\log\log q}{n},$$
which is a direct consequence of Lemma~\ref{lemma: order of z and u_0}.
\end{proof}

\begin{lemma}\label{lemma: order of z and u_0}

Let $0<\alpha<1$. Denote by $z $ the $(\alpha^2/16)$-upper quantile of the standard normal distribution. and let $u_0$ be the solution of  $\exp \left[-(8 \pi)^{-1 / 2} \exp \left(-u_0  / 2\right)\right]=1-\alpha^2 / 16$. Then we have
     \begin{align}
         & u_0\leq -2\log\left(\frac{\alpha^2}{16}\right),\label{eq:u0 order}\\
         & z\leq \sqrt{2\log\left( \frac{16}{\alpha^2\sqrt{2\pi}}\right)}
     \end{align}
\end{lemma}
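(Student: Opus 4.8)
The plan is to prove the two inequalities separately, as each reduces to an elementary tail estimate.

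For the bound on $u_0$, I would rearrange the defining equation $\exp[-(8\pi)^{-1/2}\exp(-u_0/2)] = 1-\alpha^2/16$ by taking logarithms twice. A single logarithm gives $(8\pi)^{-1/2}\exp(-u_0/2) = -\log(1-\alpha^2/16)$, and solving for $u_0$ yields $u_0 = -2\log\big[\sqrt{8\pi}\,(-\log(1-\alpha^2/16))\big]$. To reach the claimed upper bound I would lower-bound the inner term via the elementary inequality $-\log(1-x)\ge x$ for $x\in(0,1)$, so that $-\log(1-\alpha^2/16)\ge \alpha^2/16$. Since $\log$ is increasing and $\sqrt{8\pi}>1$, this gives $u_0 \le -2\log(\sqrt{8\pi}\,\alpha^2/16) = -2\log(\alpha^2/16) - 2\log\sqrt{8\pi} \le -2\log(\alpha^2/16)$, which is the first inequality.

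For the bound on $z$, recall that $z$ is defined by $1-\Phi(z)=\alpha^2/16$, where $\Phi$ is the standard normal CDF and $\phi$ its density. I would set $t=\sqrt{2\log(16/(\alpha^2\sqrt{2\pi}))}$ and show $1-\Phi(t)\le \alpha^2/16$; since $1-\Phi$ is strictly decreasing, this forces $z\le t$, which is exactly the claim. The key identity is that by construction $e^{-t^2/2}=\alpha^2\sqrt{2\pi}/16$, hence $\phi(t)=\frac{1}{\sqrt{2\pi}}e^{-t^2/2}=\alpha^2/16$. The main subtlety is that the crude Chernoff bound $1-\Phi(t)\le \tfrac12 e^{-t^2/2}$ is too weak to reach the stated constant; instead I would invoke the sharper Mills-ratio inequality $1-\Phi(t)\le \phi(t)/t$. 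To discard the factor $1/t$ I must verify $t\ge 1$: since $\alpha<1$ we have $16/(\alpha^2\sqrt{2\pi})>16/\sqrt{2\pi}>6$, so $t^2=2\log(16/(\alpha^2\sqrt{2\pi}))>1$. Then $1-\Phi(t)\le \phi(t)/t\le \phi(t)=\alpha^2/16$, completing the argument.

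I expect the $z$ bound to be the only real obstacle: it requires the tight Mills-ratio tail estimate rather than a Chernoff bound, together with the verification (from $\alpha<1$ alone) that $t\ge 1$ so that the $1/t$ factor may be dropped. The $u_0$ bound is a direct algebraic rearrangement combined with $-\log(1-x)\ge x$ and poses no genuine difficulty.
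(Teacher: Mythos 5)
Your proposal is correct and follows essentially the same route as the paper: the $u_0$ bound via the explicit solution $u_0=-2\log\bigl[\sqrt{8\pi}\,(-\log(1-\alpha^2/16))\bigr]$ combined with $-\log(1-x)\ge x$, and the $z$ bound via the Mills-ratio estimate $1-\Phi(x)\le \phi(x)/x$ with the $1/x$ factor discarded after checking the argument exceeds $1$. The only cosmetic difference is that you apply the Mills bound at the candidate value $t$ and invoke monotonicity of $1-\Phi$, whereas the paper applies it at $z$ itself (using $z>1$, which follows from $\alpha^2/16<1/16$); the two are interchangeable one-line variants.
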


\begin{proof}[Proof of Lemma \ref{lemma: order of z and u_0}]

First, we solve the equation $\exp \left[-(8 \pi)^{-1 / 2} \exp \left(-u_0 / 2\right)\right]=1-\alpha^2 / 16$. This yields

$$
u_0=-2 \log \left(-\log \left(1-\frac{\alpha^2}{16}\right) \sqrt{8 \pi}\right) .
$$

Using the inequality $\log (1-x) \leq-x$ for $0<x<1$ and the monotonicity of $-\log (x)$, we find that
\begin{align*}
    u_0&=-2\log \left(-\log\left(1-\frac{\alpha^2}{16} \right)  \right)-2\log \left( \sqrt{8\pi}  \right)\\
    &\leq -2\log \left(-\log\left(1-\frac{\alpha^2}{16} \right)  \right) \\
    &\leq -2\log \left( \frac{\alpha^2}{16}   \right). 
\end{align*}
Next, we establish an inequality for the tail probability of the standard normal distribution. Let $X\sim N(0,1)$ and for   $t\geq x>0$, we have that $1\leq t/x$ and 
$$
\mathbb{P}(X>x)=\frac{1}{\sqrt{2 \pi}} \int_x^{\infty} 1 \cdot e^{-t^2 / 2} \mathrm{~d} t \leq \frac{1}{\sqrt{2 \pi}} \int_x^{\infty} \frac{t}{x} e^{-t^2 / 2} \mathrm{~d} t=\frac{e^{-x^2 / 2}}{x \sqrt{2 \pi}} .
$$
Since $\frac{\alpha^2}{16}<1/16$, we have $z>1$, which implies that $\exp(z^2/2)\leq z\exp(z^2/2)\leq \frac{16}{\alpha^2\sqrt{2\pi}} $.

\end{proof}

\subsubsection{Proof of Lemma~\ref{lemma: joint distribution W ratio}}\label{sec: pf W ratio}

The proof is divided into three steps. 

\textbf{Step 1.} We characterize the randomness of any sampled $\mathbf{\tilde{X}}$ when running Algorithm~\ref{alg: residual rotation}. Let $\mathbf{\tilde{W}}:=\Gamma^\top \mathbf{\tilde{X}}_{,\mc{C}}$. Then we have 
$$
\mathbf{\tilde{X}}_{,\mc{C}}=\Pi_1 \mathbf{\tilde{X}}_{,\mc{C}} + \Gamma \mathbf{\tilde{W}}. 
$$
Consider running Algorithm~\ref{alg: residual rotation} with index $i$. (1) if $i\in \mc{H}$, then $|N_i|>n$ and we have $\mathbf{\tilde{X}}_{i}=\mathbf{X}_{i}$. (2) If $i\in \mc{C}$, then $N_i=\mc{H}$, so we have
$$
\bs{F}+\widetilde{\bs{R}}\frac{\|\bs{R}\|}{\|\widetilde{\bs{R}}\| },
$$
where $\bs{F}=\Pi_1 \mathbf{X}_{i}$ and $\|\bs{R}\|=\|\Gamma\mathbf{X}_{i}\|$ both remain unchanged over iterations, and $\|\widetilde{\bs{R}}\|^{-1}\widetilde{\bs{R}}$ is an uniformly distributed unit vector that is orthogonal to the column space of $[\mathbf{1}_n, \mathbf{X}_{,\mc{H}}]$. Therefore, we have 
$$
\mathbf{\tilde{W}}_{i}=\|\bs{R}\| \Gamma^{\tp}\frac{\widetilde{\bs{R}}}{\|\widetilde{\bs{R}}\| }
$$
is uniformly distributed on the sphere in $\mathbb{R}^{n-1-|\mc{H}|}$ with radius $\|\Gamma\mathbf{X}_{i}\|$ and is independent with $\{\mathbf{\tilde{X}}_{j}: j\in \mc{C}\setminus\{i\} \}$. 
It follows that 
$$
\frac{\mathbf{\tilde{W}}_{i}}{\|\mathbf{\tilde{W}}_{i}\|}\sim \text{Unif}(\mathbb{S}^{n-1-|\mc{H}|-1}),
$$
that is, the uniform distribution w.r.t. the Haar measure on the unit the sphere in $\mathbb{R}^{n-1-|\mc{H}|}$ and is independent with all other random variables. 

\textbf{Step 2.}
Now we characterize the sampled copies in Algorithm~\ref{alg:gof}. Since $\mc{I}=(1,2,\ldots, p)$, every column of any copies $\mathbf{\tilde{X}}$ in Step 1 has been updated by Algorithm~\ref{alg: residual rotation} at least one. By the analysis above for Algorithm~\ref{alg: residual rotation}, we know that for every $i\in \mc{C}$, $\frac{\mathbf{\tilde{W}}_{i}}{\|\mathbf{\tilde{W}}_{i}\|}\sim \text{Unif}(\mathbb{S}^{n-1-|\mc{H}|-1})$ and they are mutually independent and independent with all other random variables. 
In fact, if $\mathbf{\tilde{U}}$ is a $(n-1-|\mc{H}|)\times q$ matrix with i.i.d. rows sampled from $N_{q}(\mathbf{0}, \mathbf{I}_{q})$, then 
$$
\left(\frac{\mathbf{\tilde{W}}_{i}}{\|\mathbf{\tilde{W}}_{i}\|}\right)_{i\in \mc{C}} 
\overset{d}{=}
\left(\frac{\mathbf{\tilde{U}}_{i}}{\|\mathbf{\tilde{U}}_{i}\|}\right)_{i\leq q}.
$$

\textbf{Step 3.} 

Since $\Gamma^\top [\mathbf{1}_n, \mathbf{X}_{,\mc{H}}]=0$, it is easy to show that $\mathbf{W}:= \Gamma^\top \mathbf{X}_{,\mc{C}}\sim  N(0,\mathbf{I}_{n-1-|\mc{H}|}\otimes \mathbf{\Omega}_{\mc{C}}^{-1})$. 
Recall the diagonal normalization $\mathbf{R}(\mathbf M)=\text{diag}(\mathbf M)^{-1/2}\mathbf{\mathbf M}\text{ diag}(\mathbf M)^{-1/2}$. 
We claim that if $\mathbf{U}$ is a $(n-1-|\mc{H}|)\times q$ matrix with i.i.d. rows sampled from $N_{q}(\mathbf{0}, \mathbf{R}(\mathbf{\Omega}_{\mc{C}}^{-1}))$, then 
\begin{equation}\label{eq: observed W direction}
\left(\frac{\mathbf{W}_{i}}{\|\mathbf{W}_{i}\|}\right)_{i\in \mc{C}} 
\overset{d}{=}
\left(\frac{\mathbf{U}_{i}}{\|\mathbf{U}_{i}\|}\right)_{i\leq q}.
\end{equation}

This is because for any $q\times q$ diagonal matrix $\mathbf{D}$ with $\mathrm{D}_{i, i}>0$, the value of $\frac{\mathbf{U}_{i}}{\|\mathbf{U}_{i}\|}$ remains invariant under the transformation of $\mathbf{U}$ to $\mathbf{U D}$. In particular, we set $\mathbf{D}=\text{diag}(\mathbf{\Omega}_{\mc{C}}^{-1})^{1/2}$ and  $\mathbf{U}=\mathbf{W}\mathbf{D}^{-1}$ to get \eqref{eq: observed W direction}. 

Lastly, the spectral norm of $\mathbf{R}(\mathbf{\Omega}_{\mc{C}}^{-1}))$ can be  bounded as 
\begin{align*}
& \lambda_{\max}\left(\mathbf{R}(\mathbf{\Omega}_{\mc{C}}^{-1}))\right)\\
\leq & 
\lambda_{\max}\left(\mathbf{\Omega}_{\mc{C}}^{-1}\right)
\lambda_{\max}\left(\mathbf{D}^{-1}\right)^2\\
= & 
\frac{1}{\lambda_{\min}\left(\mathbf{\Omega}_{\mc{C}}\right)}
\max_{i\in \mc{C}}\frac{1}{\left(\mathbf{D}_{ii}\right)^2}\\
= & 
\frac{1}{\lambda_{\min}\left(\mathbf{\Omega}_{\mc{C}}\right)}
\max_{i\in \mc{C}}\left([\mathbf{\Omega}_{\mc{C}}^{-1}]_{ii}^{-1}  \right).
\end{align*}
Using Schur complement, it is easy to see that $[\mathbf{\Omega}_{\mc{C}}^{-1}]_{ii}^{-1}\leq \mathbf{\Omega}_{ii}\leq \lambda_{\max}(\mathbf{\Omega}_{\mc{C}})$ for any $i\in \mc{C}$. Therefore, if $$\lambda_{\max}(\mathbf{\Omega}_{\mc{C}})/\lambda_{\min}(\mathbf{\Omega}_{\mc{C}})\leq b_0,$$ then $\lambda_{\max}\left(\mathbf{R}(\mathbf{\Omega}_{\mc{C}}^{-1}))\right)$ is bounded by $b_0$.

\subsubsection{Proof of Lemma~\ref{lem: null tail bound}}

We first state the following concentration inequality resulting from Kearns--Saul inequality \cite{kearns1998large}. 

\begin{lemma}\label{lem: concentration Bernoulli}
    Suppose $\{\xi_i\}_{i=1}^M$ are i.i.d. Bernoulli random variables with probability $\theta$. 
    If $\theta<0.5$, then it holds that 
    $$
    \mathbb{P}\left( \frac{1}{M}\sum_{i=1}^M \xi_i - \theta \geq x \right)\leq \exp \left( - \theta^{-1} M  x^2 \right), \forall x>0. 
    $$
\end{lemma}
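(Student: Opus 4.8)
The plan is to control the upper tail of the empirical mean by the exponential (Chernoff) method, using the Kearns--Saul inequality to bound the moment generating function of a centred Bernoulli variable. First I would recentre: writing $S_M=\sum_{i=1}^M(\xi_i-\theta)$, the event in the statement is exactly $\{S_M\ge Mx\}$, and for any tilting parameter $s>0$, Markov's inequality applied to $e^{sS_M}$ together with the independence of the $\xi_i$ gives
\[
\mathbb{P}\!\left(\tfrac1M\textstyle\sum_{i=1}^M\xi_i-\theta\ge x\right)=\mathbb{P}(S_M\ge Mx)\le e^{-sMx}\big(\mathbb{E}\,e^{s(\xi_1-\theta)}\big)^M .
\]
This reduces the problem to controlling the single-coordinate log-moment generating function and then optimising the tilt.

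Next I would invoke the Kearns--Saul inequality \cite{kearns1998large}, which asserts that for $\xi_1\sim\text{Bernoulli}(\theta)$ and every real $s$,
\[
\log\mathbb{E}\,e^{s(\xi_1-\theta)}\le c(\theta)\,s^2,\qquad c(\theta)=\frac{1-2\theta}{4\log\frac{1-\theta}{\theta}},
\]
with $c(\theta)$ interpreted as its limiting value $1/8$ at $\theta=1/2$. Substituting this into the Chernoff bound yields $\mathbb{P}(S_M\ge Mx)\le\exp\!\big(M(c(\theta)s^2-sx)\big)$, and minimising the quadratic $c(\theta)s^2-sx$ over $s>0$ at $s^\star=x/(2c(\theta))$ produces the sub-Gaussian tail $\mathbb{P}(S_M\ge Mx)\le\exp\!\big(-\tfrac{Mx^2}{4c(\theta)}\big)$. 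Note that the Kearns--Saul proxy $c(\theta)$ is what makes this route worthwhile: the cruder Hoeffding proxy $c\equiv 1/8$, valid for any $[0,1]$-bounded variable, is insensitive to $\theta$ and cannot produce a $\theta$-dependent exponent.

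It then remains to compare the Kearns--Saul variance proxy against the coefficient appearing in the claimed bound. Concretely, the stated conclusion follows as soon as one verifies the one-variable inequality $\tfrac{1}{4c(\theta)}\ge \theta^{-1}$, i.e. $c(\theta)\le \theta/4$, equivalently
\[
\frac{1-2\theta}{\theta}\le \log\frac{1-\theta}{\theta},\qquad \theta\in(0,\tfrac12).
\]
My plan for this step would be to analyse the auxiliary function $g(\theta)=\log\frac{1-\theta}{\theta}-\frac{1-2\theta}{\theta}$, combining its boundary behaviour as $\theta\to\tfrac12^-$ with a sign analysis of $g'(\theta)$, or equivalently to study the monotonicity of $\theta\mapsto c(\theta)/\theta$.

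The main obstacle is precisely this final scalar comparison. The entire sharpness of the exponent $\theta^{-1}$ is dictated by how tightly the Kearns--Saul proxy $c(\theta)$ can be dominated by a multiple of $\theta$ as $\theta\downarrow 0$, where $\log\frac{1-\theta}{\theta}$ grows only logarithmically while $(1-2\theta)/\theta$ blows up; controlling this tension is the delicate part of the argument and determines whether the displayed constant can be attained or must be replaced by a slightly larger one that is still amply sufficient for the slack available in Lemma~\ref{lem: null tail bound} (where the condition $2(\theta+\sqrt{\theta})<\alpha$ leaves room in the exponent). Everything upstream—the recentring, the Chernoff tilt, and the moment-generating-function bound—is routine once Kearns--Saul is granted.
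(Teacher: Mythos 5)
Your reduction is correct and is in fact exactly the paper's route: the recentring, the Chernoff tilt, the Kearns--Saul bound on the log-moment generating function, and the optimisation of the tilt leading to the tail $\exp\bigl(-Mx^2/(4c(\theta))\bigr)$ all match the paper's proof (the paper's $g(\theta)$ equals your $4c(\theta)$). The genuine gap is the final scalar comparison that you defer to the end: $c(\theta)\le\theta/4$, equivalently $\frac{1-2\theta}{\theta}\le\log\frac{1-\theta}{\theta}$ on $(0,\tfrac12)$. This is not a delicate inequality awaiting a sign analysis --- it is false for every $\theta\in(0,\tfrac12)$. Indeed, the elementary bound $\log y\le y-1$ applied to $y=(1-\theta)/\theta$ gives
$$
\log\frac{1-\theta}{\theta}\;\le\;\frac{1-\theta}{\theta}-1\;=\;\frac{1-2\theta}{\theta},
$$
with strict inequality unless $\theta=\tfrac12$; that is, the exact reverse of what you need holds throughout the relevant range. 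Equivalently $c(\theta)\ge\theta/4$ everywhere, so the Kearns--Saul proxy can never be dominated by $\theta/4$, and your auxiliary function $g(\theta)=\log\frac{1-\theta}{\theta}-\frac{1-2\theta}{\theta}$ is strictly negative on $(0,\tfrac12)$ rather than nonnegative.

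Moreover, no alternative route can close this gap, because the lemma as stated is false: the claimed bound asserts sub-Gaussianity with variance proxy $\theta/2$ per observation, which is smaller than the true variance $\theta(1-\theta)$ whenever $\theta<\tfrac12$. Concretely, take $\theta=0.4$, $x=0.1$, $M=1000$: then $\mathbb{P}\bigl(\mathrm{Bin}(1000,0.4)\ge 500\bigr)$ is of order $10^{-10}$ (its large-deviation rate is $\mathrm{KL}(0.5\,\|\,0.4)\approx 0.0204$ per sample), while the claimed bound equals $\exp(-Mx^2/\theta)=e^{-25}\approx 1.4\times 10^{-11}$. You should also know that the paper's own proof fails at precisely the step you flagged: after deriving $\exp(-Mx^2/g(\theta))$, it asserts that ``it suffices to show $g(\theta)\le\theta^{-1}$,'' whereas what is actually needed is $g(\theta)\le\theta$; the inequality it then proves ($\theta g(\theta)\le 1$) yields only the much weaker tail $\exp(-\theta Mx^2)$. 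So your instinct about where the difficulty lives is sharper than the paper's treatment, but the honest conclusion is that the displayed constant $\theta^{-1}$ is unattainable: a correct statement must use an achievable exponent, such as the Chernoff--Kullback--Leibler bound $\exp\bigl(-M\,\mathrm{KL}(\theta+x\,\|\,\theta)\bigr)$ or the Kearns--Saul exponent $Mx^2\log\bigl(\tfrac{1-\theta}{\theta}\bigr)/(1-2\theta)$, and the downstream Lemma~\ref{lem: null tail bound}, which uses $x^2>\theta$ to conclude a bound of $e^{-M}$, must be adjusted accordingly.
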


We apply Lemma~\ref{lem: concentration Bernoulli} with $\xi_{m}=\mathbf{1}\left\{T\left(\widetilde{\mathbf{U}}^{(m)}\right) \geq t\right\}$ and $x=[(1+M)\alpha-1]/M - \theta$. 
Since $M>2/\alpha$ and $2(\theta+\sqrt{\theta})<\alpha$, we can show $x^2>\theta$. 
Therefore, we have 
\begin{align*}
  &\quad   \mathbb P \left( \frac{1}{M+1}\left[1+\sum_{m=1}^M \mathbf{1}\left\{T\left(\widetilde{\mathbf{U}}^{(m)}\right) \geq t_n\right\}\right]>\alpha \right) \\
    &=\mathbb{P}\left( \frac{1}{M}\sum_{i=1}^M \xi_i - \theta \geq x \right) \\
    &\leq \exp(-Mx^2/\theta)\\
    &\leq e^{-M}.
\end{align*}
The proof of Lemma~\ref{lem: null tail bound} is completed.

\begin{proof}[Proof of Lemma~\ref{lem: concentration Bernoulli}]
By Lemma 1 in \citet{kearns1998large}, the moment generating function of $(\xi_i-\theta)$ satisfies

$$
\ln \left[(1-\theta) e^{-t \theta}+ \theta e^{\theta(1-\theta)}\right] \leq \frac{t^2}{4} g(\theta), \quad t\in \mathbb{R}, 
$$
where $g(\theta)=\frac{(1-2 \theta)}{  \log ((1-\theta) / \theta)}$. 
By Markov inequality, for any $t > 0$ and $x>0$, we have
$$
    \mathbb{P}\left( \frac{1}{M}\sum_{i=1}^M \xi_i - \theta \geq x \right)\leq  \exp \left( - t x M + M \frac{t^2}{4} g(\theta)\right). 
$$
In particular, taking $t=2x/g(\theta)$, we have 
$$
    \mathbb{P}\left( \frac{1}{M}\sum_{i=1}^M \xi_i - \theta \geq x \right)\leq  \exp \left( - M x^2/g(\theta) \right), \qquad x>0. 
$$
To prove the lemma, it suffices to show $g(\theta)\leq \theta^{-1}$. 
This can be proved using elementary calculus. 
Consider $h(x)=\log((1-x)/x)-x(1-2x)$ for any $x\in (0,1/2]$. We can show that the derivative of $h$ is $-(1-x)^{-1}-x^{-1}-1+4x<0$, since $x(1-x)<1/4<1/3$. So $h(x)$ attains the minimal value $0$ at $x=1/2$. From $\log((1-\theta)/\theta)-\theta(1-2\theta)\geq 0$, we have $\theta g(\theta)\leq 1$ for $\theta\in (0,1/2]$.

\end{proof}

\subsection{Lower bound of separation rate}\label{app: power lower bound}

\begin{proof}[Proof of Theorem~\ref{thm: lower bound dense}]
Denote by $\rho=b/\sqrt{n(q-1)}$. 
Consider the following subset of $\mathbf{\Theta}_{n1}(b)$:
\begin{align}\label{eq:least favorable dense}
\mathbf{\Theta}_{n1}^{*}(b)=&\left\{\mathbf{A}^{(\bs{v})}: 
\mathbf{A}^{(\bs{v})}_{\mc{H},\mc{H}}=\mathbf{I}_{p-q},\mathbf{A}^{(\bs{v})}_{\mc{H},\mc{C}}=\left(\mathbf{A}^{(\bs{v})}_{\mc{C},\mc{H}}\right)^\top=\mathbf{0}_{(p-q)\times q}, \right. \\
& 
\left. \qquad 
\mathbf{A}^{(\bs{v})}_{\mc{C},\mc{C}}=\frac{1}{1-\rho} \mathbf{I}_{q}-[\frac{1}{1-\rho}-\frac{1}{1+(q-1)\rho}]\frac{1}{q} \bs{v} \bs{v}'
,  \bs{v}\in \{-1, 1\}^{q} \right\}. \nonumber
\end{align}
For any $\mathbf{A}^{(\bs{v})}\in \mathbf{\Theta}_{n1}^{*}(b)$, 
it is straightforward to see that $\left[\mathbf{A}^{(\bs{v})}_{\mc{C},\mc{C}}\right]^{-1}=(1-\rho) I_{q}+ \rho \bs{v} \bs{v}^{\prime}$ and thus $D(\mathbf{A}^{(\bs{v})}_{\mc{C},\mc{C}},F)=\sqrt{\rho^2 q(q-1)}=b\sqrt{q/n}$. 
Since the eigenvalues of $\mathbf{A}^{(\bs{v})}_{\mc{C},\mc{C}}$ are either $1/(1-\rho)$ or $1/(1+(q-1)\rho)$, the condition number is $1+q\rho/(1-\rho)$ and is bounded by $1+2b\sqrt{\kappa}$ (for $q$ large, we have $1-\rho>1/2$). 
Therefore, $\mathbf{\Theta}_{n1}^{*}(b)\subset \mathbf{\Theta}_{n1}(b)$.

Let $P_0$ be the probability measure when the rows of $\mathbf{X}$ are i.i.d. $\mathbf{N}_p(\mathbf{0}, \mathbf{I}_p)$ and let $P_{\bs{v}}$ be the probability measure when the rows of $\mathbf{X}$ are i.i.d. $\mathbf{N}_p(\mathbf{0}, \left(\mathbf{A}^{(\bs{v})}\right)^{-1} )$ for any $\bs{v}\in \{-1, 1\}^{q}$. 
Let $f_0$ and $f_{\bs{v}}$ be the respective Lebesgue density functions. 
A direct computation yields
\begin{align*}
\forall \mathbf{x}\in \mathbb{R}^{n\times p}, \qquad    \frac{f_{\bs{v}}(\mathbf{x})}{f_{0}(\mathbf{x})} 
   & = \left(\det \mathbf{A}^{(\bs{v})}_{\mc{C},\mc{C}} \right)^{n/2}\exp \left( -\frac{1}{2}\mathbf{x}_{,\mc{C}}^\top\left( \mathbf{A}^{(\bs{v})}_{\mc{C},\mc{C}} - \mathbf{I}_{q}\right)\mathbf{x}_{,\mc{C}} \right)\\
  & =: g_{\bs{v}}(\mathbf{x}_{,\mc{C}}).
\end{align*}

Let $P_1$ be the mixture measure $P_1=\frac{1}{2^q}\sum_{\bs{v}}P_{\bs{v}}$ and let $f_1$ be the Lebesgue density function. 
Since $f_1=\frac{1}{2^q}\sum_{\bs{v}}f_{\bs{v}}$, we have 
\begin{equation}\label{eq: lower bound dense density ratio}
\forall \mathbf{x}\in \mathbb{R}^{n\times p}, \qquad 
\left(\frac{f_{1}(\mathbf{x})}{f_{0}(\mathbf{x})}\right)^2 = \frac{1}{2^{2q}} \sum_{\bs{u},\bs{v}}  g_{\bs{u}}(\mathbf{x}_{,\mc{C}})g_{\bs{v}}(\mathbf{x}_{,\mc{C}}). 
\end{equation}
The rest of the proof is an adaption of the proof of Theorem~1 in \cite{cai2013optimal} in our setting. 
We highlight the key steps below to be self-contained.

Note that the marginal distribution of $\mathbf{X}_{,\mc{C}}$ under $P_0$ is $N_{n\times q}\left( \mathbf{0}_{q}, \mathbf{I}_{q}\right)$. 
For this marginal distribution, we can follow the proof of Theorem~1 in \cite{cai2013optimal} and use the expression in \eqref{eq: lower bound dense density ratio} to compute
\begin{equation}\label{eq: lower bound dense chisq divergence}
\mathbb{E}_{P_0}\left(\frac{f_{1}(\mathbf{X})}{f_{0}(\mathbf{X})}\right)^2
=
\frac{\left(1-\rho^2\right)^{n-n q / 2}}{\left[1+(q-1) \rho^2\right]^n} \mathbb{E}_{V}\left[1-\left(\frac{q \rho}{1+(q-1) \rho^2}\right)^2\left(\frac{\mathbf{1}^{\prime} V}{q}\right)^2\right]^{-n / 2},
\end{equation}
where the random vector $V\in \{-1,1\}^q$ has i.i.d. symmetric Rademacher entries. 
The condition needed is 
$$
b\leq 1/\sqrt{2\kappa}, \quad b<1. 
$$
This condition implies $2(q-1)>b^2 q \log 2$ when $q$ is large enough. 
Then the proof of Theorem~1 in \cite{cai2013optimal} also suggests that 
\begin{equation}\label{eq: lower bound dense average over V}
\mathbb{E}_{V}\left[1-\left(\frac{q \rho}{1+(q-1) \rho^2}\right)^2\left(\frac{\mathbf{1}^{\prime} V}{q}\right)^2\right]^{-n / 2} 
\leq 
1+\frac{2 b^2 q \log 2}{2(q-1)-b^2 q \log 2}.
\end{equation}
 Since $b^2\leq 1$, we can see by calculus that there is some constant $x_0$ such that for any $x,y\geq x_0$, it holds that 
$$
(1 - b^2 / x)^{-x} < e^{b^2}\sqrt{1+b^2\log 2} ,   \quad (1+ b^2 /y)^y > e^{b^2}/\sqrt{1+b^2\log 2}. 
$$
Therefore, $n\geq 2x_0$, we have 
\begin{equation}
    \label{eq: lower bound dense chisq divergence scalar}
    \frac{\left(1-\rho^2\right)^{n-n q / 2}}{\left[1+(q-1) \rho^2\right]^n}\leq (1+b^2\log 2).
\end{equation}
Combining the inequalities \eqref{eq: lower bound dense chisq divergence scalar} and \eqref{eq: lower bound dense average over V}, we obtain from \eqref{eq: lower bound dense chisq divergence} that 
$$
\mathbb{E}_{P_0}\left(\frac{f_{1}(\mathbf{X})}{f_{0}(\mathbf{X})}\right)^2 -1 
< \frac{8 b^2 q \log 2}{2(q-1)-b^2 q \log 2}. 
$$
Given any $0<\alpha<\beta<1$, there is some $b_0$ such that for all $b\leq b_0$, we have 
\begin{equation}\label{eq: lower bound dense chisq divergence upper bound}
    \mathbb{E}_{P_0}\left(\frac{f_{1}(\mathbf{X})}{f_{0}(\mathbf{X})}\right)^2 -1 < 4(\beta-\alpha)^2.
\end{equation}

Finally, for any test $\phi$, the sum of probabilities of its two types of errors satisfies
$$
\begin{aligned}
\sup_{\bs{v}} \left(\mathbb{E}_{P_0} \phi+\mathbb{E}_{P_{\bs{v}}}(1-\phi)  \right)
& \geq \frac{1}{2^p} \sum_{\bs{v}} \left( \mathbb{E}_{P_0} \phi+\mathbb{E}_{P_{\bs{v}}}(1-\phi) \right)\\
& = \mathbb{E}_{P_0} \phi+ \mathbb{E}_{P_1}(1-\phi)\\
& \geq \inf_{\psi }\left( \mathbb{E}_{P_0} \psi+ \mathbb{E}_{P_1}(1-\psi) \right)\\
& =  1 -  d_{TV}(P_0, P_1),
\end{aligned}
$$
where $d_{TV}(P_0, P_1)$ is the total variation distance, which is in turn bounded by the chi-square divergence as 
$$
d_{TV}^2(P_0, P_1)\leq \frac{1}{4} \left[\mathbb{E}_{P_0}\left(\frac{f_{1}(\mathbf{X})}{f_{0}(\mathbf{X})}\right)^2 -1 \right]< (\beta-\alpha)^2, 
$$
where the last inequality is due to \eqref{eq: lower bound dense chisq divergence upper bound}. 
Hence, for any test $\phi$, we have 
$$\sup_{\bs{v}} \left(\mathbb{E}_{P_0} \phi+\mathbb{E}_{P_{\bs{v}}}(1-\phi)  \right)> 1-\beta+\alpha,$$ which implies 
$$
\inf_{\mathbf{\Omega} \in \mathbf{\Theta}_{n1}(b)} \mathbb{E}(\phi)\leq \inf_{\mathbf{\Omega} \in \mathbf{\Theta}_{n1}^*(b)} \mathbb{E}(\phi) \leq \beta.
$$

\end{proof}


\begin{proof}[Proof of Theorem~\ref{thm: lower bound sparse}]
Our strategy of this proof is the same as the proof of Theorem~\ref{thm: lower bound dense} except that we need to construct another least favorable subset for the sparse alternative and compute the chi-square divergence.

Denote by $\rho=b\sqrt{\frac{\log q}{n}}$. 
Let $\mc{C}^2=\{ (i,j): i<j, ~~, i, j\in \mc{C} \}$. 
For any $(i,j)\in \mc{C}^2$, let $\mathbf{E}_{ij}=\mathbf{e}_{i}\mathbf{e}_{j}' + \mathbf{e}_{j} \mathbf{e}_{i}'$. 
Consider the following subset of $\mathbf{\Theta}_{n2}(b)$:
\begin{align}\label{eq:least favorable sparse}
\mathbf{\Theta}_{n2}^{*}(b)=&\left\{\mathbf{A}^{(ij)}: 
\mathbf{A}^{(ij)}_{\mc{H},\mc{H}}=\mathbf{I}_{p-q},\mathbf{A}^{(ij)}_{\mc{H},\mc{C}}=\left(\mathbf{A}^{(ij)}_{\mc{C},\mc{H}}\right)^\top=\mathbf{0}_{(p-q)\times q}, \right. \\
& 
\left. \qquad 
\mathbf{A}^{(ij)}_{\mc{C},\mc{C}}=\left[\mathbf{I}_{q}+\rho \mathbf{E}_{ij}
\right]^{-1}
,  (ij)\in \mc{C}^2 \right\}. \nonumber
\end{align}
For any $\mathbf{A}^{(ij)}\in \mathbf{\Theta}_{n2}^{*}(b)$, 
it is straightforward to see that $D(\mathbf{A}^{(ij)}_{\mc{C},\mc{C}},\infty)=\rho=b\sqrt{\log(q)/n}$.
Therefore, $\mathbf{\Theta}_{n2}^{*}(b)\subset \mathbf{\Theta}_{n2}(b)$. 

Let $P_0$ be the probability measure when the rows of $\mathbf{X}$ are i.i.d. $\mathbf{N}_p(\mathbf{0}, \mathbf{I}_p)$ and let $P_{ij}$ be the probability measure when the rows of $\mathbf{X}$ are i.i.d. $\mathbf{N}_p(\mathbf{0}, \left(\mathbf{A}^{(ij)}\right)^{-1} )$ for any $(i,j)\in \mc{C}^2$. 
Let $f_0$ and $f_{ij}$ be the respective Lebesgue density functions. 
A direct computation yields
\begin{align*}
\forall \mathbf{x}\in \mathbb{R}^{n\times p}, \qquad    \frac{f_{ij}(\mathbf{x})}{f_{0}(\mathbf{x})} 
   & = \left(\det \mathbf{A}^{(ij)}_{\mc{C},\mc{C}} \right)^{n/2}\exp \left( -\frac{1}{2}\sum_{k=1}^n\mathbf{x}_{k,\mc{C}}^\top\left( \mathbf{A}^{(ij)}_{\mc{C},\mc{C}}-\mathbf{I}_q\right)\mathbf{x}_{k,\mc{C}}  \right)\\
   & = \left(\det \mathbf{A}^{(ij)}_{\mc{C},\mc{C}} \right)^{n/2}\exp \left( -\frac{1}{2}\sum_{k=1}^n\mathbf{x}_{k,\mc{C}}^\top\left( \mathbf{A}^{(ij)}_{\mc{C},\mc{C}}-\mathbf{I}_q\right)\mathbf{x}_{k,\mc{C}}  \right)\\
  & =: g_{ij}(\mathbf{x}_{,(i,j)}), 
\end{align*}
since the only nonzero entries of  $\mathbf{A}^{(ij)}_{\mc{C},\mc{C}}-\mathbf{I}_q$ are on the $(i,j)$ row and the $(i,j)$ columns.

Let $P_1$ be the mixture measure $P_1=\frac{2}{q(q-1)}\sum_{(i,j)\in \mc{C}^2}P_{ij}$ and let $f_1$ be the Lebesgue density function. 
Since $f_1=\frac{2}{q(q-1)}\sum_{(i,j)\in \mc{C}^2}f_{ij}$, we have 
\begin{equation}\label{eq: lower bound sparse density ratio}
\forall \mathbf{x}\in \mathbb{R}^{n\times p}, \qquad 
\left(\frac{f_{1}(\mathbf{x})}{f_{0}(\mathbf{x})}\right)^2 
= \left(\frac{2}{q(q-1)}\right)^2 \sum_{(i,j), (i',j')}  g_{ij}(\mathbf{x}_{,(i,j)})g_{i'j'}(\mathbf{x}_{,(i',j')}). 
\end{equation}

Note that the marginal distribution of $\mathbf{X}_{,\mc{C}}$ under $P_0$ is $N_{n\times q}\left( \mathbf{0}_{q}, \mathbf{I}_{q}\right)$. We will compute the expectation under $P_0$ of each term in \eqref{eq: lower bound sparse density ratio}. 

\textbf{Case 1:}
For any $(i,j)$ and $(i',j')$ without overlap, the submatrices $\mathbf{X}_{,(i,j)}$ and $\mathbf{X}_{,(i',j')}$ are independent under $P_0$, and thus 
$$
\mathbb{E}_{P_0}[g_{ij}(\mathbf{X}_{,(i,j)})g_{i'j'}(\mathbf{X}_{,(i',j')})] = \mathbb{E}_{P_0}[g_{ij}(\mathbf{X}_{,(i,j)})]\mathbb{E}_{P_0}[g_{i'j'}(\mathbf{X}_{,(i',j')})]=1,
$$
since $\mathbb{E}_{P_0}[g_{ij}(\mathbf{X}_{,(i,j)})]=\mathbb{E}_{P_0}[\frac{f_{ij}(\mathbf{X})}{f_{0}(\mathbf{X})}]=1$ for any $(i,j)$.

\textbf{Case 2:}
For $(i,j)=(i',j')$, by property the normal distribution, we can compute  as
\begin{align*}
\mathbb{E}_{P_0}[g_{ij}(\mathbf{X}_{,ij})^2] 
&= \left( \frac{\det (\mathbf{A}^{(ij)}_{\mc{C},\mc{C}})^2}{\det(2\mathbf{A}^{(ij)}_{\mc{C},\mc{C}}-\mathbf{I}_q) } \right)^{n/2} \\
&= \left( \frac{\det (\mathbf{A}^{(ij)}_{\mc{C},\mc{C}})}{\det(2\mathbf{I}_q-\left[\mathbf{A}^{(ij)}_{\mc{C},\mc{C}}\right]^{-1} ) } \right)^{n/2}  \\
&=\left( \frac{1}{\det (\mathbf{I}_q+\rho\mathbf{E}_{ij}))\det(\mathbf{I}_q-\rho\mathbf{E}_{ij}) } \right)^{n/2}\\
&=(1-\rho^2)^{-n}.
\end{align*}

\textbf{Case 3:}
If $(i,j)$ and $(i',j')$ has one common element, (without loss of generality, assume $j=j'$ and $i\neq i'$), we have

\begin{align*}
& \qquad \mathbb{E}_{P_0}[g_{ij}(\mathbf{X}_{,\mc{C}})g_{i'j}(\mathbf{X}_{,\mc{C}})] \\
&= \int_{\mathbb{R}^{n\times q}} \left( (2\pi)^{-q/2}\det (\mathbf{A}^{(ij)}_{\mc{C},\mc{C}})\det (\mathbf{A}^{(i'j)}_{\mc{C},\mc{C}})  \right)^{n/2}\exp \left( -\frac{1}{2}\sum_{k=1}^n\mathbf{x}_{k,}^\top\left(\mathbf{A}^{(ij)}_{\mc{C},\mc{C}}+\mathbf{A}^{(i'j)}_{\mc{C},\mc{C}}-\mathbf{I}_q\right)\mathbf{x}_{k,}  \right)\mathbf{d}\mathbf{x} \\
&= \left( \frac{\det (\mathbf{A}^{(ij)}_{\mc{C},\mc{C}})\det (\mathbf{A}^{(i'j)}_{\mc{C},\mc{C}})}{\det\left(\mathbf{A}^{(ij)}_{\mc{C},\mc{C}}+\mathbf{A}^{(i'j)}_{\mc{C},\mc{C}}-\mathbf{I}_q \right) } \right)^{n/2}  \\
&=\left(\frac{1}{\det \left([\mathbf{A}^{(ij)}_{\mc{C},\mc{C}}]^{-1}+[\mathbf{A}^{(i'j)}_{\mc{C},\mc{C}}]^{-1}-[\mathbf{A}^{(ij)}_{\mc{C},\mc{C}}]^{-1}[\mathbf{A}^{(i'j)}_{\mc{C},\mc{C}}]^{-1}  \right)} \right)^{n/2}\\
&=\left(\det(\mathbf{I}_q-\rho^2\mathbf{E}_{ij}\mathbf{E}_{i'j}  ) \right)^{-n/2}\\
&= (\det\left(\mathbf{I}_q-\rho^2 \mathbf{e}_i \mathbf{e}_{i'}^{\prime}\right))^{-n/2}\\
&=1.
\end{align*}

We now combine the three cases together and note that if
$$
b^2\kappa<0.5, \text{and }  b^2<1, 
$$
we have $\rho^2<0.5$ and 
we can then compute
\begin{align}\label{eq: lower bound sparse chisq divergence}
&\mathbb{E}_{P_0}\left(\frac{f_{1}(\mathbf{X})}{f_{0}(\mathbf{X})}\right)^2 \\
= & 1 + \left(\frac{2}{q(q-1)}\right)^2\frac{q(q-1)}{2}  [(1-\rho^2)^{-n}-1]\nonumber \\
\leq & 1 + \frac{2}{q(q-1)} [(1+2\rho^2)^{n}-1]\nonumber\\
\leq & 1 +  \frac{2}{q(q-1)} \exp ( 2b^2 \log q )\nonumber \\
\leq & 1+ 4 q^{-2 (1- b^2)} ,\nonumber
\end{align}
where the first inequality is because $(1-\rho^2)^{-1}\leq 1+2\rho^2$ since $\rho^2<0.5$, the second inequality is because the inequality $1+x\leq e^{x}$, the last inequality is because $q>2$ implies $q-1>q/2$. 

Since $b^2<1$, we have 
$$
\limsup_{n,q\rightarrow\infty}\mathbb{E}_{P_0}\left(\frac{f_{1}(\mathbf{X})}{f_{0}(\mathbf{X})}\right)^2\leq 1. 
$$
Following the same argument at the end of the proof for Theorem~\ref{thm: lower bound dense}, we can derive an lower bound for the sum of error rates of two types for any test $\phi$, and obtain
$$
\liminf_{n,q\rightarrow\infty} ~~ \inf_{\phi} ~~ \sup_{(i,j)\in \mc{C}^2}  ~~ \left( \mathbb{E}_{P_0} \phi+\mathbb{E}_{P_{ij}}(1-\phi)  \right)\geq 1, 
$$
which implies 
\begin{align*}
\alpha
&\geq\limsup_{n,q\rightarrow\infty} ~~ \sup_{\phi} ~~ \inf_{(i,j)\in \mc{C}^2}  ~~ \left( \mathbb{E}_{P_{ij}}\phi \right)\\
&\geq \limsup_{n,q\rightarrow\infty} ~~ \sup_{\phi} ~~ \inf_{\mathbf{\Omega} \in \mathbf{\Theta}_{n2}^*(b)}  ~~ \left( \mathbb{E}\phi \right).
\end{align*}
This completes the proof. 
\end{proof}


\section{Additional Simulations}
\label{sec: Appendix Simulation}
In this section, we provide additional simulation results. 
Appendix~\ref{app: L} examines the effect of the parameter $L$ in Algorithm~\ref{alg: exchangeable}. 
Appendix~\ref{app: more graphs} explores more complicated graph structures. 
Appendix~\ref{app: simulation VV} conducts our methods in the setting of an experiment in \cite{verzelen_tests_2009}.
Appendix~\ref{app: PCR_W} explores the use of prior information for constructing test statistics in the framework of the MC-GoF test.

\subsection{Selection of $L$}\label{app: L}

The number of iterations $L$ in Algorithm \ref{alg: exchangeable} determines the length of the generated Markov chains. 
The main text argues that $L=1$ is sufficient for both theoretical and practical purposes. 
This appendix provides more empirical evidence to support this claim. 

Specifically, we present two experiments: 
\begin{enumerate}
    \item Analyze the dependence structure using the autocorrelation function for the generated Markov chain with $L=2,000$.
    \item Compare the power performance of the MC-GoF test with $L=1$, $L=3$, and $L=20$.
\end{enumerate}

In both experiments, we consider running the algorithm with $G_0$ being a cyclic graph with $p=120$ nodes, where each node is connected to its $d=20$ nearest neighbors ($10$ on each side). 
Concretely, the adjacency matrix of this graph is constructed such that:
$$
\mathbf{A}_{G_0}(i,j) = \begin{cases}1 & \text { if } \min (|i-j|, p-|i-j|) \leq d/2, i \neq j \\ 0 & \text { otherwise }\end{cases}
$$
The sample size $n$ is set to $50$, which is around the same magnitude as $d$--this is a high-dimensional scenarios with a dense null graph. 
The true precision matrix $\mathbf{\Omega}$ has nonzero off-diagonal entries at $(i,j)$ if $5\leq \min(|i-j|, p-|i-j|)\leq 14$. 
In other words, on the true graph, each node is connected to its $6$-th to $15$-th neighboring nodes on each side. 
These nonzero entries are randomly and independently assigned values of either $1$ or $-1$. 
The diagonal entries of $\mathbf{\Omega}$ are set to be a constant such that the smallest eigenvalue of $\mathbf{\Omega}$ is 1. We then normalize $\mathbf{\Omega}$ so that the resulting covariance matrix $\mathbf{\Sigma}$ has unit diagonal entries. 

\paragraph{Dependence of Markov Chains.}
Given $n$ samples from $N(\mathbf{0}, \mathbf{\Sigma})$, we implement Step 1 of Algorithm~\ref{alg: exchangeable} with $L=2000$ and obtain the Markov chain $\mc{C}=\left\{\mathbf{X}^{(t)}\right\}_{t=1}^{L}$ where each $\mathbf{X}^{(t)}$ is the output of Algorithm~\ref{alg: residual rotation} according to $\mc{I}=\{1, 2,\ldots, p\}$. Based on the algorithm,  $\mc{C}$ is a stationary process and all $\mathbf{X}^{(t)}$ share the same sufficient statistic for the GGM $\mc{M}_{G_0}$. 

We observe that the process $\mc{C}$ behaves as if it is i.i.d., rather than correlated. 
Specifically, we compute the Gram matrices $\mathbf{S}^{t}=\left(\mathbf{X}^{(t)}\right)^\top\mathbf{X}^{(t)}$ and for each fixed pair of $(i,j)$ absent from $G_0$, we examine the autocorrelation of the process $\{\mathbf{S}^{t}(i,j)\}_{t=1}^{L}$. 
See Figure~\ref{fig:L-trace-acf} for some examples of these processes. 
We note that there is no significant autocorrelation for $\{\mathbf{S}^{t}(i,j)\}_{t=1}^{L}$: we examine the first 1188 pairs of $(i,j)$ absent from $G_0$, among which only 2\% of the calculated lag-1 autocorrelation coefficients exceed 0.05 in absolute value.  
This suggests that the $\mathbf{S}^{t}$ and $\mathbf{S}^{t-1}$ exhibit minimal dependence and the generated data are almost independent.

\begin{figure}[htbp]
    \centering
\includegraphics[width=1\linewidth,angle=270,origin=c]{trace_autocorrelation_plots.pdf}
\caption{Generated values of $\left(\mathbf{X}^{(t)}\right)^\top\mathbf{X}^{(t)}(i,j)$ for selected pairs of $(i,j)$ absent from $G_0$. Left: Trace plots. Right: Autocorrelation functions. }
    \label{fig:L-trace-acf}
\end{figure}

\paragraph{Power performance.}
We further analyze the power properties of the resulting MC-GoF test with various values of $L$. 
For simplicity, we focus on the $F_{\Sigma}$ statistic. 
We implement the tests with $L\in \{1,3,20\}$ as well as the two baseline methods discussed in Section~\ref{sec: Simulation} of the main paper.

The results, computed based on 400 replications, are summarized in Table~\ref{tab:GoF_L}. 
It is clear that the MC-GoF test with any $L$ outperforms the baseline methods, and there is no significant difference in power performance across different values of $L$. 
These observations, along with additional unreported numerical results, suggest that choosing $L$ as small as 1 is often sufficient.

\begin{table}[ht]
\centering
\begin{tabular}{|c|c|c|}
\hline
Method & Power & SE \\
\hline
$M^1P_1$ & 0.062 & 0.012 \\
Bonf & 0.043 & 0.010 \\
MC-GoF ($L=1$) & 0.762 & 0.021 \\
MC-GoF ($L=3$) & 0.770 & 0.021 \\
MC-GoF ($L=20$) & 0.748 & 0.022 \\
\hline
\end{tabular}
\caption{Power of different goodness-of-fit tests for the dense cycle graph $G_0$ with degree $20$. 
The dimension $p=120$ and the sample size $n=50$.
Power and standard errors are estimated based on 400 replications. 
}
\label{tab:GoF_L}
\end{table}

\subsection{More Complicated Graph Structures}\label{app: more graphs}  

We extend the experiments in Section~\ref{sec: Simulation} of the main paper to evaluate the power of the proposed MC-GoF tests on more diverse and challenging graph structures. 
Unlike the scenarios in Section~\ref{sec: Simulation}, where true graphs were supergraphs of the null, we consider cases where the null graph may contain many false edges not present in the true graph. 
This null graph could have a dense graph structure relative to the sample size which is challenging for GoF testing. 
Furthermore, we will explore various graph structures, including tree graphs, spatial graphs, small-world graphs, and scale-free graphs, which may be more commonly seen in real-world applications.

All experiments use $N_p(\bs{0}, \mathbf{\Omega}^{-1})$ as the data-generating population, with $p = 120$, $n = 50$, and are repeated for 400 times. 
The null graph $G_0$ and the true graph associated with the precision matrix $\mathbf{\Omega}$ are specified as follows so that they have different graphical structures.

    \begin{itemize}

\item \textbf{Tree vs. Star}

We begin with the case where $G_0$ is a tree graph where each parent has 3 children, and $G$ is a star graph. 
The tree graph is a block graph where each pair of a parent and one of its child form a block. 
A star graph is a special tree graph where there is only one parent and all other nodes are its children. 
Note that $G_0$ and $G$ are not nested: most of the edges in one graph are absent from the other graph. 
This experiment is designed to examine the ability of GoF tests to detect structural differences between fundamentally distinct graph topologies.
    
\item \textbf{Spatial Graph}

A spatial lattice graph captures local dependencies typical in physical systems. 
We consider a 2D spatial lattice graph, where nodes are arranged in a $10\times 12$ grid on $\mathbb{Z}^2$ and distance between nodes is defined as the $\ell_1$-distance. 
The null graph $G_0$ connects the pairs of nodes with distance equals to 1, and the true graph $G$ connects the pairs of nodes with distance equal to 2.

        \item \textbf{Small-World Graph}

A small-world graph is a graph with high clustering and short path lengths \citep{watts1998collective}, which are two of the characteristics shared by many real-world networks. 
We can use a small-world graph using the Watts-Strogatz algorithm with $K$ being an even number and $q$ being the rewiring probability: 1. Construct a regular ring lattice with each node connected to its $K$ nearest neighbors ($K/2$ on each side); 
2. At the $j$th iteration ($1\leq j\leq K/2$), consider each node $i$ and its $j$th rightmost neighbor $k$. With probability $q$, the edge $(i,k)$ is replaced by another edge $(i,k')$ where $k'$ is chosen uniformly random over all nodes except for $i$, $k$, and those connected to $i$. 
We refer to Figure 1 in \cite{watts1998collective} for the specific procedure. 

We implement this algorithm with $K=8$ and $q=0.3$ to get the null graph $G_0$. 
The true graph $G$ is obtained by removing edges in $G_0$ with probability 0.2 and adding edges absent in $G_0$ with probability 0.1.

        \item \textbf{Scale-free Graph}

A scale-free graph is a graph whose node degrees follow a power-law distribution \citep{barabasi1999emergence}. 
We can use the preferential attachment algorithm to generate a scale-free graph.  
For two integers $m$ and $m_0\geq m$, we start with a completely connected subgraph with $m_0$ nodes, and update the graph incrementally. 
At each time, we add a new node and $m$ edges connecting this node with $m$ of the existing nodes with probability proportional to the current node degrees. 
This preferential attachment ensures that high-degree nodes emerge so that the degree distribution is heavy tail. 
We implement this algorithm with $m=2$ and $m_0=3$ to get the null graph $G_0$. 
The true graph $G$ is obtained in the same way as in the setting of Small-World Graph.

\end{itemize}

Given the true graph $G$, the precision matrix $\mathbf{\Omega}$ is obtained by assigning symmetric Bernoulli random variables on the entries associated with $G$ and set the diagonal to be a positive constant so that the smallest eigenvalue of $\mathbf{\Omega}$ is 2. 
The results are summarized in Table~\ref{tab:moregraphs_power_table}. 
We recognize that the MC-GoF tests with any of the four test statistics significantly outperform the baseline methods across all types of graphs.
In particular, $F_\Sigma$ consistently achieves the highest power. These experiments consolidate the superior performance of our proposed method for various graphical structures.

\begin{sidewaystable}[!bpth]
\centering
\begin{tabular}{lcccccc}
  \hline
Model & $M^1P_1$ & Bonf & PRC & ERC & F$_\Sigma$ & GLR-$\ell_1$ \\ 
  \hline
Tree vs. Star & 0.040 (0.010) & 0.058 (0.012) & 0.395 (0.024) & 0.385 (0.024) & \textbf{0.532} (0.025) & 0.362 (0.024) \\ 
  Lattice & 0.117 (0.016) & 0.117 (0.016) & 0.700 (0.023) & 0.672 (0.023) & \textbf{0.917} (0.014) & 0.782 (0.021) \\ 
  Small World & 0.117 (0.016) & 0.068 (0.013) & 0.590 (0.025) & 0.407 (0.025) & \textbf{0.995} (0.004) & 0.953 (0.011) \\ 
  Scale Free & 0.095 (0.015) & 0.020 (0.007) & 0.177 (0.019) & 0.877 (0.016) & \textbf{0.993} (0.004) & 0.948 (0.011) \\ 
   \hline
\end{tabular}
\caption{Power of various GoF tests in for different graphical models at the significance level $\alpha= 0.05$. 
The dimension $p=120$ and the sample size $n=50$. See the text in Appendix~\ref{app: more graphs} for detailed description of the null graphs and the true population. 
The first two columns correspond to the baseline methods and the last four columns correspond to the MC-GoF test with four test statistic functions. Standard errors are in parentheses.} 
\label{tab:moregraphs_power_table}
\end{sidewaystable}

\subsection{Experiments in \citet{verzelen_tests_2009} }\label{app: simulation VV}

To test the GoF of GGMs,  we additionally consider an experimental setup in \citet{verzelen_tests_2009} for a balanced comparison with existing methods. 
According to the setting of the first simulation experiment in \citet[Section 3]{verzelen_tests_2009}, the parameter $\eta$ controls the percentages of edges in the true graph $G$, and the positions of edges are chosen uniformly. \footnote{\citet{verzelen_tests_2009} did not mention how a non-integer is rounded, so we decide to set the number of edges equal to $\lfloor \eta \times  C_{p}^2 \rfloor$.} 
Then we generate a random symmetric $p \times p$ matrix $U$, where for each pair of $i<j$, the element $U[i, j]$ is either sampled uniformly from the interval [-1, 1] if an edge connects nodes $i$ and $j$, or it is set to zero otherwise. 
Each diagonal entry of $U$ is set to the sum of the absolute values of its row's off-diagonal entries plus a small constant, ensuring that $U$ is diagonally dominant and positive definite. 
Finally, $U$ is standardized so that the diagonal entries all equal 1 to obtain the partial correlation matrix $\Pi$, which is then used for sample generation.

Their null graph $G_0$ is built from $G$ by randomly deleting the existing edges at the proportion of $q$. In other words, $\lceil q~ \eta~ C_p^2 \rceil$ edges are randomly removed from $G$ to define $G_0$. 
In their simulations, the dimension $p$ is fixed at $15$, the sample size $n$ varies among 10, 15, and 30,  $\eta$ is either $0.1$ or $0.15$, and $q$ ranges from 0.1 to 1.

In this experiment, the sample size is so small that the GLasso algorithm does not provide any meaningful estimates, so we do not consider using the GLR-$\ell_1$ statistic for the MC-GoF test here. 
Instead, we replace it with the $\text{F}_{\max}$ statistic, which resembles the test statistic used in the $M^1 P_1$ procedure proposed by \citet{verzelen_tests_2009}.

The significance level of each test is set to $\alpha=0.05$. 
We repeat the experiment 400 times and report the estimated power in Table~\ref{tab: verzelen} in a similar format as in Table~\ref{tab: band}. 
As the sample size $n$ increases, the power of every method improves as expected. 
Furthermore, we observe that when $q$ is $1$ (i.e., the null graph $G_0$ is an empty graph), all testing methods, including the Bonferroni adjustment of partial correlation tests (\nameDP), work exceptionally well even with a small sample size $n=10$. 
This is very different from the experiments in Section~\ref{sec: Simulation} of the main paper, where \nameDP substantially underperforms our methods regardless of the sample size. 

The parameter $q$ controls the number of different edges between the true graph $G$ and the null graph $G_0$. 
We observe that as $q$ increases, every method works better because the deviation between $G$ and $G_0$ increases. 
In addition, the $M^1 P_1$ procedure (VV) stands out in every setting and can achieve higher power than our MC-GoF test with the statistics PRC and ERC. 
The gap is especially large in the cases with $q=0.1$; in these cases, with either value of $\eta$, $G_0$ has two fewer edges than $G$. 
Although the number of edges omitted by $G_0$ is so small in these cases, the signal is strong enough for \nameVV to reject the null hypothesis with a probability of at least $36\%$ (the minimum appears when $\eta=0.15$ and $n=10$). 
We call this pattern as \textit{strong but sparse} to describe the deviation of the true distribution from the null hypothesis in these cases. 
This is in contrast to some of the experimental settings in Section~\ref{sec: Simulation}, where the pattern of the deviation is dense but weak, and both \nameVV and \nameDP underperform our methods by a large amount. 

Overall, the MC-GoF tests with the F$_{\Sigma}$ and F$_{\max}$ statistics remain competitive to \nameVV in every case. These comparisons further justify our recommendations in Appendix~\ref{sec: gof test F} of using F$_{\Sigma}$ as the default test statistic for the MC-GoF test, because the simulation results here and the ones in Section~\ref{sec: Simulation} of the main paper together demonstrate its efficiency in power and its robustness to signal patterns.

\begin{table}[ht]
    \centering
    \caption{Power of various GoF tests for the first experiment in \citet{verzelen_tests_2009}. $p=15$.  Significance level $\alpha=0.05$. }\label{tab: verzelen}
    \begin{minipage}{1\textwidth}
\caption*{(a) $\eta=0.1$}
 \centering
 \begin{tabular}{ll|cccccc}
\hline
& & \multicolumn{6}{c}{Method} \\ 
$q$ & $n$ & \nameVV & \nameDP & PRC & ERC & F-${\sum}$ & \multicolumn{1}{c}{F-max} \\ 
\hline
\hline 0.1 & 10  & \textbf{0.852 (.032)} & 0.352 (.042) & 0.289 (.040) & 0.367 (.043) & \textbf{0.852 (.032)} & 0.844 (.032) \\
 & 15  & 0.922 (.024) & 0.523 (.044) & 0.453 (.044) & 0.391 (.043) & 0.922 (.024) & \textbf{0.930 (.023)} \\
 & 30  & \textbf{0.977 (.013)} & 0.758 (.038) & 0.609 (.043) & 0.453 (.044) & 0.961 (.017) & \textbf{0.977 (.013)} \\
\hline 0.4 & 10  & 0.961 (.017) & 0.867 (.030) & 0.883 (.029) & 0.930 (.023) & \textbf{0.969 (.015)} & 0.961 (.017) \\
 & 15  & \textbf{0.992 (.008)} & 0.977 (.013) & 0.953 (.019) & 0.930 (.023) & \textbf{0.992 (.008)} & \textbf{0.992 (.008)} \\
 & 30  & 0.992 (.008) & \textbf{1.000 (.000)} & 0.977 (.013) & 0.945 (.020) & 0.992 (.008) & 0.992 (.008) \\
\hline 1 & 10  & 0.992 (.008) & 0.992 (.008) & \textbf{1.000 (.000)} & 0.992 (.008) & 0.992 (.008) & 0.992 (.008) \\
 & 15  & 0.992 (.008) & \textbf{1.000 (.000)} & \textbf{1.000 (.000)} & \textbf{1.000 (.000)} & \textbf{1.000 (.000)} & \textbf{1.000 (.000)} \\
 & 30  & \textbf{1.000 (.000)} & \textbf{1.000 (.000)} & \textbf{1.000 (.000)} & \textbf{1.000 (.000)} & \textbf{1.000 (.000)} & \textbf{1.000 (.000)} \\
\hline 
\end{tabular}

    \end{minipage}
    \hfill
    \begin{minipage}{1\textwidth}
\caption*{(b) $\eta=0.15$}
      \centering
      \begin{tabular}{ll|cccccc}
\hline
& & \multicolumn{6}{c}{Method} \\ 
$q$ & $n$ & \nameVV & \nameDP & PRC & ERC & F-${\sum}$ & \multicolumn{1}{c}{F-max} \\ 
\hline
\hline 0.1 & 10  & 0.367 (.043) & 0.031 (.015) & 0.055 (.020) & 0.125 (.029) & \textbf{0.414 (.044)} & 0.305 (.041) \\
 & 15  & 0.570 (.044) & 0.195 (.035) & 0.164 (.033) & 0.148 (.032) & \textbf{0.586 (.044)} & 0.578 (.044) \\
 & 30  & 0.781 (.037) & 0.406 (.044) & 0.258 (.039) & 0.180 (.034) & 0.789 (.036) & \textbf{0.797 (.036)} \\
\hline 0.4 & 10  & 0.750 (.038) & 0.469 (.044) & 0.430 (.044) & 0.500 (.044) & \textbf{0.852 (.032)} & 0.680 (.041) \\
 & 15  & 0.930 (.023) & 0.758 (.038) & 0.805 (.035) & 0.695 (.041) & \textbf{0.984 (.011)} & 0.930 (.023) \\
 & 30  & \textbf{0.992 (.008)} & 0.969 (.015) & 0.953 (.019) & 0.883 (.029) & \textbf{0.992 (.008)} & \textbf{0.992 (.008)} \\
\hline 1 & 10  & 0.883 (.029) & 0.914 (.025) & \textbf{1.000 (.000)} & \textbf{1.000 (.000)} & 0.992 (.008) & 0.898 (.027) \\
 & 15  & 0.969 (.015) & 0.992 (.008) & \textbf{1.000 (.000)} & \textbf{1.000 (.000)} & \textbf{1.000 (.000)} & \textbf{1.000 (.000)} \\
 & 30  & \textbf{1.000 (.000)} & \textbf{1.000 (.000)} & \textbf{1.000 (.000)} & \textbf{1.000 (.000)} & \textbf{1.000 (.000)} & \textbf{1.000 (.000)} \\
\hline 
\end{tabular}

\end{minipage}
 
\end{table}

\subsection{Additional simulations on GoF tests with prior information} \label{app: PCR_W}
We examine the power of the MC-GoF test using the statistic function PRC-w formulated in
Equation~\ref{eq: partial cor ss weight} by extending the simulation shown in Table \ref{tab: band}. 
Here we focus on a small signal magnitude $s=0.1$ since our methods already perform sufficiently well when $s$ takes a larger value as shown in Table~\ref{tab: band}. 
Assuming that we have the prior information that the true graph $G$ might be a band graph with $K=6$, we design a weight matrix $\bs{W}=(w_{ij})_{1\leq i,j\leq p}$, where $w_{ij}=0.8$ for $i \in \left\{1, \cdots, p \right\}$ and $1\leq |i-j|\leq 6$, and the other off-diagonal entries are equal to 0.2.  
Note that this prior information does not depend on $\mathbf{X}$, so the MC-GoF test using the PRC-w statistic remains valid. 

To investigate the assistance of the prior information, we repeat the experiment in Table~\ref{tab: band} using the MC-GoF test with the PRC-w statistic 400 times and report the results in Table~\ref{tab: weight}, where we keep the results of the other MC-GoF tests in Table~\ref{tab: band} for the same settings. We omit the results for \nameVV and \nameDP since they underperform in these settings. 
The results demonstrate that the MC-GoF tests using statistics incorporated with the weight matrix $\bs{W}$ achieve higher power than the original ones. 
In most cases, either PRC-w or ERC-w is the most powerful among these tests. 

\begin{table}[!btph]
    \caption{Boosting power by incorporating prior information in MC-GoF tests using the weighted statistics. The graphs are band graphs with $K=6, K_0=1, s=0.1$. Significance level $\alpha=0.05$. } \label{tab: weight}
\begin{tabular}{ll|cccccc}
\hline
& & \multicolumn{6}{c}{Method} \\ 
$p$ & $n$ & PRC & PRC-w & ERC & ERC-w & F$_{\Sigma}$ & \multicolumn{1}{c}{GLR-${\ell_1}$} \\ 
\hline
\hline 20 & 20  & 0.062 (.012) & 0.085 (.014) & 0.060 (.012) & 0.095 (.015) & \textbf{0.098} (.015) & 0.085 (.014) \\
 & 40  & 0.217 (.021) & 0.280 (.022) & 0.190 (.020) & \textbf{0.320} (.023) & 0.212 (.020) & 0.165 (.019) \\
 & 80  & 0.542 (.025) & 0.725 (.022) & 0.517 (.025) & \textbf{0.785} (.020) & 0.510 (.025) & 0.352 (.024) \\
 
\hline 120 & 20  & 0.125 (.017) & \textbf{0.268} (.022) & 0.128 (.017) & 0.263 (.022) & 0.115 (.016) & 0.070 (.013) \\
 & 40  & 0.225 (.021) & 0.830 (.019) & 0.215 (.021) & \textbf{0.888} (.016) & 0.260 (.022) & 0.160 (.018) \\
 & 80  & 0.723 (.022) & \textbf{1.000} (.000) & 0.685 (.023) & \textbf{1.000} (.000) & 0.705 (.023) & 0.370 (.024) \\
\hline 
\end{tabular}

\end{table} 

\section{Details of Real-World Examples}\label{sec: Appendix Application}
In this section, we provide detailed information about the applications of our testing methods to real-world datasets. 

\subsection{Average Daily Precipitation in the United States}\label{app: US data}
Average daily precipitation, measured in mm, is collected from the North America Land Data Assimilation System (see \cite{NLDAS} or the system's homepage: \url{https://wonder.cdc.gov/NASA-Precipitation.html}). 
Our dataset spans from 1979 to 2011 and comprises 48 states, excluding Alaska, Hawaii, and Puerto Rico due to their geographical isolation, and the District of Columbia due to its encirclement by Maryland and Virginia. 
The data range has sample size $n=33$ and dimension $p=48$.  
Figure~\ref{fig:US} displays the map of the average daily precipitation in the contiguous United States in 2011 generated by \cite{NLDAS}. 

\begin{figure}[H]
\centering
    \includegraphics[width=0.8\textwidth]{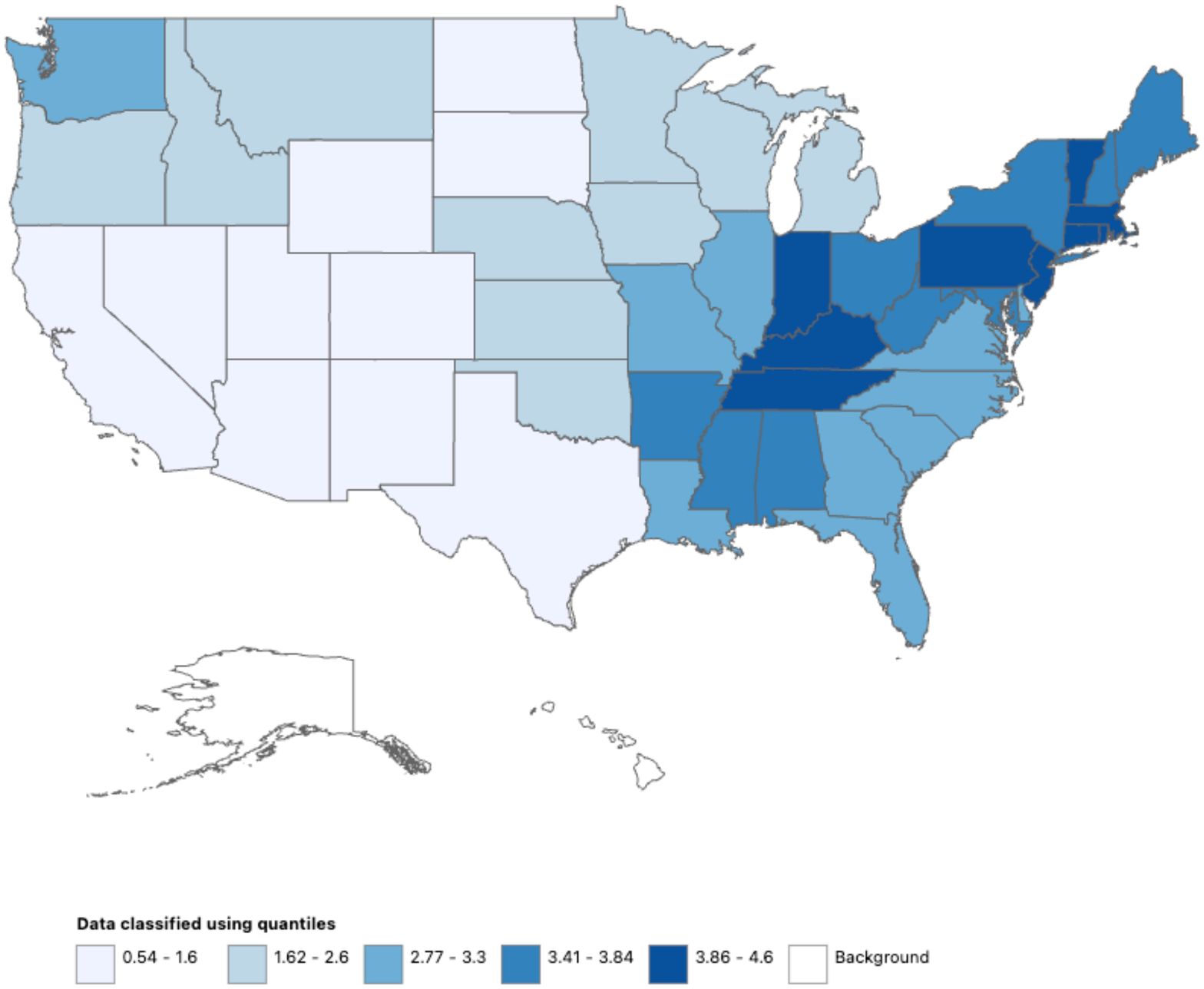}
    \caption{Map of Contiguous United States average daily precipitation in 2011 by \cite{NLDAS}}
    \label{fig:US}
\end{figure}

Assuming the yearly averaged data are independent and identically distributed, we conduct the Box--Cox transformation to normalize the data. 
For any state, suppose the data are $Z_{i}$ ($i\in [n]$). we first shift the data using $\tilde{Z}_i=Z_i - Z_{\min} + 0.01 \times (Z_{\max} -Z_{\min}) $, where $Z_{\max}$ and $Z_{\min}$ are the maximum and the minimum of $Z_{i}$'s. 
We then transform the shifted data through the following function 
\begin{equation}
\varphi(z) = 
\begin{cases}
    \frac{z^\lambda - 1}{\lambda}, & \text{if } \lambda \neq 0, \\
    \ln(z), & \text{if } \lambda = 0,
\end{cases}
\end{equation}
where the parameter $\lambda$ is tuned by using the \texttt{boxcox()} function in the \textbf{R} package \texttt{MASS}. 
For each state, the transformed data pass the Shapiro--Wilk test with a p-value larger than the significance level $\alpha=0.05$. 
This ensures that the distributions of the transformed data are approximately following some normal distributions.

Before our analysis, we perform a simulation study to assess the validity of these tests. 
We simulate data from the GGM with parameters estimated from the data under the null hypothesis. The precision matrix is estimated using a modified GLasso as in \eqref{modified glasso} but with $\lambda$ set to be extremely large. 
We repeat this experiment 400 times and report the results in Table \ref{tab: US error control}, which confirms that all methods control the Type-I error at the nominal level $\alpha=0.05$. 

\begin{table}[!btph]
\caption{Estimated sizes of GoF tests using simulated data from the fitted model for US precipitation data ($\alpha=0.05$).  }\label{tab: US error control}
 \centering

\begin{tabular}{l|cccccc}
\hline
 & \multicolumn{6}{c}{Method} \\ 
  & \nameVV & \nameDP & PRC & ERC & F$_{\Sigma}$ & \multicolumn{1}{c}{GLR-${\ell_1}$} \\ 
\hline
\hline Size  & 0.062 (0.012) & 0.050 (0.011) & 0.052 (0.011) & 0.055 (0.011) & 0.055 (0.011) & 0.045 (0.010) \\
\hline 
\end{tabular}

\end{table} 

The testing procedures considered here are identical to those in Section~\ref{sec: Simulation}, specifically, our MC-GoF tests with statistics F$_{\Sigma}$, GLR-${\ell_1}$, PRC, and ERC, as well as the benchmark methods \nameVV and \nameDP. 
Table~\ref{tab: US pvalue in appendix} summarizes the p-values of the GoF tests. 
The MC-GoF test using the F$_{\Sigma}$ and the GLR-${\ell_1}$ statistic reject the null hypothesis with p-values smaller than 0.05, which suggests that a GGM defined by the geographic adjacency of the states falls short of modeling the precipitation. 
A plausible explanation is that the data are strongly associated with precipitation in neighboring regions outside the U.S. such as Canada, Mexico, and the oceans, and the absence of the precipitation data in these regions breaks the conditional independence between non-adjacent states. 
Besides, the other tests fail to reject the null hypothesis. 
This is consistent with our simulation results in Section~\ref{sec: Simulation} that the MC-GoF test using F$_{\Sigma}$ or GLR-${\ell_1}$ can often achieve the highest power.

\begin{table}[hbtp]
    \centering
     \caption{Results of GoF tests for modeling precipitation using geographic adjacency. }
    \label{tab: US pvalue in appendix}
\begin{tabular}{l|cccccc}
\hline
 & \multicolumn{6}{c}{Method} \\ 
  & \nameVV & \nameDP & PRC & ERC & F$_{\Sigma}$ & \multicolumn{1}{c}{GLR-${\ell_1}$} \\ 
\hline
\hline p-value  & 1.000 & 0.252 & 0.059 & 0.317 & 0.030 & 0.010 \\
\hline 
\end{tabular}

\end{table}

We further investigate the influence of the sample size on the power of the GoF tests. To do this, we repeatedly draw a subsample of size $n_s$ from the full dataset and perform the tests on this smaller dataset. 
For each subsample size $n_s$, we repeat the experiment 400 times and compute the fraction of times each test rejects the null hypothesis. 
Since these subsamples are not independently drawn from the population, the rejection rate in this experiment cannot be interpreted as power. 
Nonetheless, the rejection rate is an informative indicator of the influence of the sample size. 
The results are shown in Table~\ref{tab: US subsample}. As $n_s$ increases, the rejection rate of the MC-GoF test using the F$_{\Sigma}$ statistic exhibits a clear increasing trend, whereas the MC-GoF test using the GLR-${\ell_1}$ statistic almost always rejects the null hypothesis. 
The rejection rate resulting from the PRC statistic increases to $22\%$ when $n_s$ increases to 21 but remains lower than 30\%. The other three tests all have rejection rates lower than $10\%$. In conclusion, the rejection decision given by the MC-GoF test using the GLR-${\ell_1}$ statistic is robust with respect to the sample size, and the MC-GoF test using the F$_{\Sigma}$ statistic is able to recover its decision more often when the subsample size increases.

\begin{table}[hbtp]
    \centering
     \caption{Rejection rates of GoF tests for modeling U.S. precipitation by geographic adjacency using a subsample of size $n_s$ across 400 replications ($\alpha=0.05$). Standard errors are placed in parentheses. }
    \label{tab: US subsample}
\begin{tabular}{l|cccccc}
\hline
 & \multicolumn{6}{c}{Method} \\ 
$n_{s}$  & \nameVV & \nameDP & PRC & ERC & F$_{\Sigma}$ & \multicolumn{1}{c}{GLR-${\ell_1}$} \\ 
\hline
\hline 18  & 0.065 (.012) & 0.000 (.000) & 0.060 (.012) & 0.068 (.013) & 0.310 (.023) & \textbf{0.973} (.008) \\
21  & 0.085 (.014) & 0.022 (.007) & 0.215 (.021) & 0.065 (.012) & 0.420 (.025) & \textbf{0.993} (.004) \\
24  & 0.080 (.014) & 0.028 (.008) & 0.255 (.022) & 0.055 (.011) & 0.530 (.025) & \textbf{1.000} (.000) \\
27  & 0.072 (.013) & 0.043 (.010) & 0.278 (.022) & 0.030 (.009) & 0.645 (.024) & \textbf{1.000} (.000) \\
\hline 
\end{tabular}

\end{table}

\subsection{Dependence of fund return}\label{app: stock detail}

In the context of stock market analysis, building a graphical model for stock returns can uncover dependencies among stocks and identify clusters or sector-specific interactions. 
Such insights are valuable for portfolio management, risk assessment, and market structure analysis.

We consider the weekly average returns of 103 large-cap U.S. stocks, comprising 101 stocks from the Standard \& Poor's 100 Index (SP100) and the Dow Jones Industrial Average (DJIA). 
The return is defined as $\left(P_t-P_{t-1}\right) / P_{t-1}$, where $P_t$ and $P_{t-1}$ are the adjusted closing prices of the current day and the last trading day on the stock market, respectively. 
We then apply a carefully chosen transformation to enhance the normality shown in the data (details on the transformation are deferred to the end). 
This results in $n=92$ observations of averaged stock return $X$ and the dimension $p$ of $X$ is 103.

Our analysis begins with a graphical modeling for the stock returns $X$. Suppose the observations of $X$ are i.i.d. sampled from a normal distribution, and we want to find a graph $G$ so that the distribution belongs to a GGM w.r.t. $G$.  
Here we aim at getting a super-graph $G$ of the true graph so that even if the $G$ includes some false edges, the model remains adequate for modeling the data. 
To facilitate the selection of such a super-graph, we incorporate the prior knowledge that the stocks can be classified into 11 sectors based on their industries 
(details on the sectors are deferred to the end). 
We estimate $G$ using GLasso with regularization parameter $\lambda=0.15$ and with no penalty on edges between stocks in the same sector. 
This approach yields a graph $\widehat{G}$ (Figure~\ref{fig: sp100_G} in Appendix~\ref{app: stock detail}), with a maximal degree of $30$, a median degree of 19, and around $18.4\%$ pairs of nodes are connected.

To validate the adequacy of the estimated graph $\widehat{G}$ for modeling the stock returns, we perform the MC-GoF test (Algorithm~\ref{alg:gof}) with $G_0=\widehat{G}$ and statistic F$_{\Sigma}$ (the overall winner in the simulation studies) as well as the two existing benchmarks \nameVV and \nameDP to test the goodness-of-fit of the GGM with $\widehat{G}$. 
The p-values are listed in Table~\ref{tab: GoF stock} and are all above 0.3, which indicates that the GGM with $\widehat{G}$ is sufficient for modeling the stock data. 
This validation supports the suitability of the estimated graph for downstream analysis. 
For instance, if we are interested in a stock-related quantity, such as the return of a particular fund, we can leverage the GGM to conduct model-X inference.

\begin{table}[h]
    \centering
    \caption{Results of GoF tests for modeling the stock returns by the GGM with the estimated graph. }
    \label{tab: GoF stock}
    \begin{tabular}{l|ccc}
\hline
 & \multicolumn{3}{c}{Method} \\ 
  & \nameVV & \nameDP & \multicolumn{1}{c}{F$_{\Sigma}$} \\ 
\hline
\hline p-value  & 1.000 & 0.352 & 0.321 \\
\hline 
\end{tabular}

\end{table}

\paragraph{Information of data collection, transformation, and sectors. }

An exchange-traded fund (ETF) is a basket of securities that tracks a specific index. 
The SPDR Dow Jones Industrial Average ETF Trust (DIA) was introduced in January 1998 and has been proven popular in the market \citep{HEGDE20041043}. 
The Standard and Poor's 100 Index (SP100) includes 101 stocks (there are two classes of stock corresponding to one of the component companies). The holdings of the SP100 can be found on the homepage of the index-linked product iShares S\&P 100 ETF (\url{https://www.ishares.com/us/products/239723/ishares-sp-100-etf}). We used the version of the holdings on September 29, 2023 for our analysis. We define sectors of these stocks almost the same as shown on the webpage but put the three stocks \verb|MA|, \verb|PYPL|, and \verb|V| of payment technology companies in the sector of Information Technology rather than Financial, and put \verb|TGT| in Consumer Discretionary rather than Consumer Staples. In our analysis, we have also included the two stocks \verb|TRV| and \verb|WBA| that are included by DJIA but not by the SP100. In total, we consider 103 stocks and they are divided into 11 sectors: 
\begin{enumerate}
\item Communication Services (10): CHTR, CMCSA, DIS, GOOG, GOOGL, META, NFLX, T, TMUS, VZ
\item Consumer Discretionary (12): AMZN, BKNG, F, GM, HD, LOW, MCD, NKE, SBUX, TGT, TRV, TSLA
\item Consumer Staples (10): CL, COST, KHC, KO, MDLZ, MO, PEP, PG, PM, WMT
\item Energy (3): COP, CVX, XOM
\item Financials (15): AIG, AXP, BAC, BK, BLK, BRK-B, C, COF, GS, JPM, MET, MS, SCHW, USB, WFC
\item Health Care (15): ABBV, ABT, AMGN, BMY, CVS, DHR, GILD, JNJ, LLY, MDT, MRK, PFE, TMO, UNH, WBA
\item Industrials (13): BA, CAT, DE, EMR, FDX, GD, GE, HON, LMT, MMM, RTX, UNP, UPS
\item Information Technology (17): AAPL, ACN, ADBE, AMD, AVGO, CRM, CSCO, IBM, INTC, MA, MSFT, NVDA, ORCL, PYPL, QCOM, TXN, V
\item Materials (2): DOW, LIN
\item Real Estate (2): AMT, SPG
\item Utilities (4): DUK, EXC, NEE, SO
\end{enumerate}

For any stock, denoted by $Z_{i}$'s the 5-day averages of returns. 
We then consider transformation using a piece-wise function 
\begin{equation}
\psi(z) = 
\begin{cases}
    \text{sign}(z) \cdot |z|^\lambda, & \text{if } |z| < c, \\
    \text{sign}(z) \cdot ( a  + 0.01 \cdot \log(|z|)), & \text{otherwise},
\end{cases}
\end{equation}
where $a = c^\lambda - 0.01 \cdot \log(c)$. The parameter $\lambda$ is tuned over a grid from 0.1 to 2, and the parameter $c$ is tuned over a grid between the 80\% quantile and the maximum of the observed values of $Z_i$. The criteria to be maximized is the p-value of the Shapiro--Wilk test on the transformed data. 
After this transformation, only two stocks have p-values from the Shapiro--Wilk tests smaller than 0.05, and we alternatively use the Box--Cox transformation for these two stocks as described in Appendix~\ref{app: US data}.

The graph $\widehat{G}$ we estimated using GLasso is shown in Figure \ref{fig: sp100_G}, where nodes are colored according to their sectors. 
\begin{figure}[!t]
\centering
    \caption{Estimated Graph with Color for Sector}
    \includegraphics[scale=0.9]{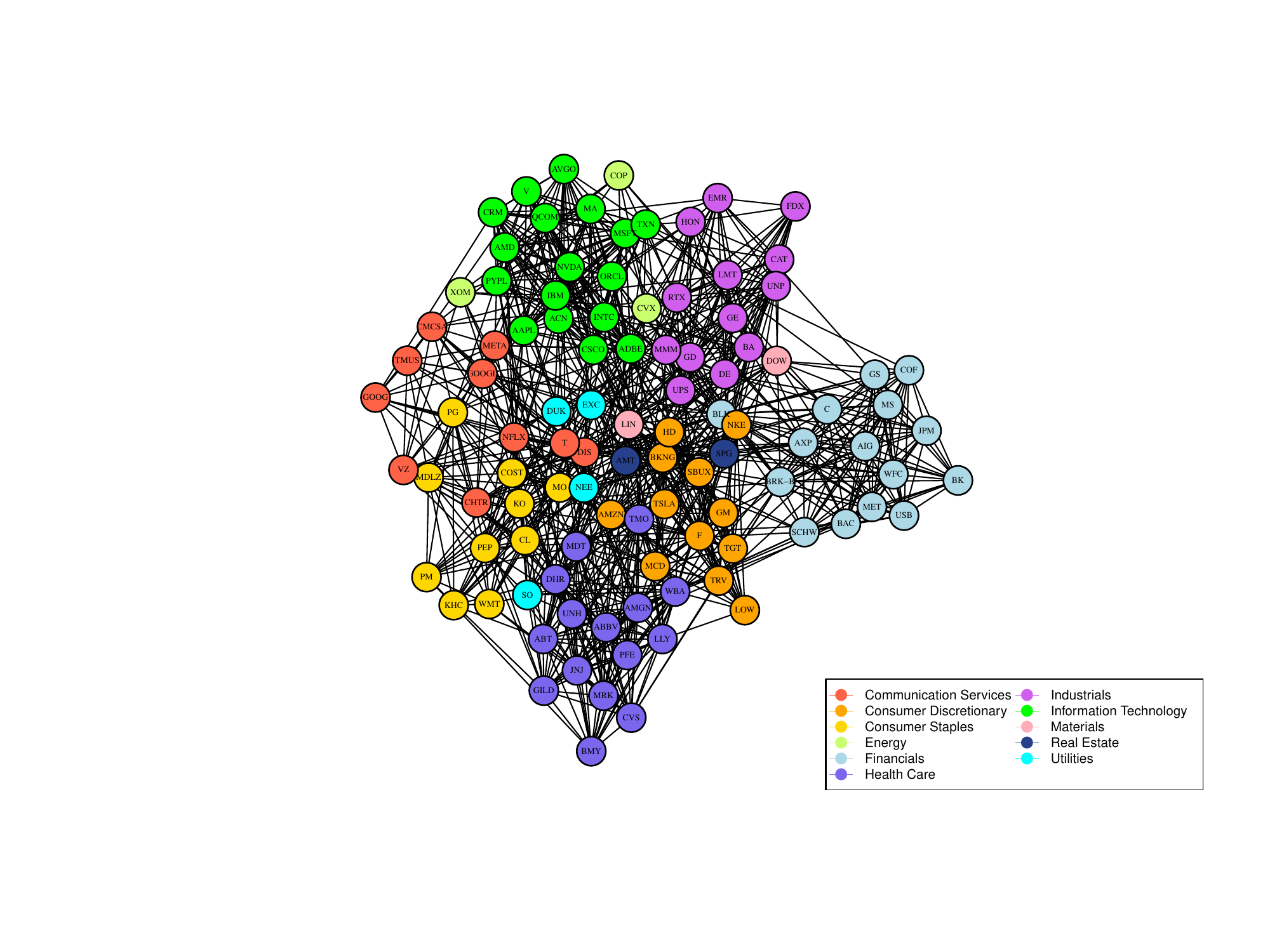}
    \label{fig: sp100_G}
\end{figure}

\end{document}